\documentclass[11pt,a4paper]{article}

\usepackage{xy,amsmath,amssymb,slashed,url,bm,textgreek,upgreek}
\usepackage[table]{xcolor}
\usepackage{bbm,array,amsfonts,wrapfig,lscape,float,mathtools,multirow,longtable}
\usepackage{graphicx}
\usepackage{scalerel}
\usepackage{epstopdf}
\usepackage{comment}
\usepackage{braket}
\usepackage{tikz}
\usepackage{booktabs}
\usepackage{tikz-cd}
\usepackage[safe]{tipa}
\usepackage{ulem}
\usepackage{stmaryrd}
\usepackage{quiver}
\usepackage{mathalpha}
\usepackage{tensor}

\usepackage{amsthm}
\usepackage{hyperref}
\usepackage{enumitem}
\usepackage{tabularx}
\usetikzlibrary{arrows.meta, positioning,shapes}

\numberwithin{equation}{subsection}

\usepackage{newtxtext,newtxmath}
\usepackage{xparse}

\NewDocumentCommand{\dslash}{s}{
  \IfBooleanTF{#1}
    {\big/\mkern-7mu\big/}
    {/\mkern-6mu/}
}

\font\teneurm=eurm10 \font\seveneurm=eurm7  \font\fiveeurm=eurm5
\newfam\eurmfam
\textfont\eurmfam=\teneurm \scriptfont\eurmfam=\seveneurm
\scriptscriptfont\eurmfam=\fiveeurm

\newcommand{\G}{\mathcal{G}}
\newcommand{\Z}{\mathbb{Z}}

\newcommand{\R}{\mathbb{R}}

\newcommand{\X}{\mathcal{X}}

\newcommand{\T}{\mathcal{T}}

\newcommand{\Bnab}[1]{\mathbf{B}^{#1}_{\nabla}U(1)}
\newcommand{\Ad}{A_d}
\newcommand{\Map}{\mathrm{Map}}
\newcommand{\uU}{\underline{U(1)}}

\newcommand{\Hh}{\textbf{H}}

\newcommand{\B}{\mathbf{B}}
\newcommand{\Disc}{\mathrm{Dsc}}

\newcommand{\Grpd}{\mathrm{Grpd}}
\newcommand{\Grp}{\mathrm{Grp}}

\newcommand{\Fun}{\mathrm{Fun}}
\newcommand{\pr}{\mathrm{pr}}

\newcommand{\ihom}[2]{[#1,#2]}
\newcommand{\Maps}[2]{\textbf{H}(#1,#2)}
\newcommand{\ev}{\mathrm{ev}}
\newcommand{\const}{\mathrm{const}}
\newcommand{\cofib}{\mathrm{cofib}}
\newcommand{\fib}{\mathrm{fib}}
\newcommand{\Mag}{\mathrm{Mag}}
\newcommand{\Sp}{\mathrm{Sp}}
\newcommand{\Spcn}{\mathrm{Sp}^{\mathrm{cn}}}
\newcommand{\Pin}{\mathrm{Pin}}
\newcommand{\red}{\mathrm{red}}
\newcommand{\Hom}{\mathrm{Hom}}
\newcommand{\coker}{\mathrm{coker}}
\newcommand{\colim}{\mathrm{colim}}
\newcommand{\CartSp}{\mathrm{CartSp}}

\newcommand{\Ext}{\mathrm{Ext}}
\newcommand{\DK}{\mathrm{DK}}

\newcommand{\uR}{\underline{\mathbb{R}}}
\newcommand{\uZ}{\underline{\mathbb{Z}}}
\newcommand{\Cinf}{C^\infty}

\allowdisplaybreaks

\newcommand{\shape}{
  \raisebox{1pt}{\rm\normalfont\textesh}
}
\newcommand{\smallshape}{
  \raisebox{1pt}{\rm\scriptsize\textesh}
}
\newcommand{\corho}{\text{\scalebox{-1}[1]{$\rho$}}}

\tikzset{
  no line/.style={draw=none,
    commutative diagrams/every label/.append style={/tikz/auto=false}}}

\theoremstyle{definition}
\newtheorem{theorem}{Theorem}[section]
\newtheorem{proposition}[theorem]{Proposition}
\newtheorem{lemma}[theorem]{Lemma}
\newtheorem{corollary}[theorem]{Corollary}
\newtheorem{definition}[theorem]{Definition}
\newtheorem{remark}[theorem]{Remark}
\newtheorem{example}[theorem]{Example}

\newlist{remarkparts}{enumerate}{1}
\setlist[remarkparts,1]{
    label=\textup{(\alph*)}, 
    ref=\theremark(\alph*),  
    leftmargin=*,           
    wide 
}

\title{
\fontsize{18pt}{20pt}\selectfont\bfseries
Magnetic Higher-Form Symmetries as Dual Fundamental $\infty$-Groupoids}

\author[a]{Alonso Perez-Lona}

\affiliation[a]{Max Planck Institute for Mathematics in the Sciences, Inselstraße 22, 04103 Leipzig, Germany}

\emailAdd{perezlona@mis.mpg.de}

\abstract{

We develop a higher-geometric framework for magnetic higher-form symmetries of pre-quantum field theories. The magnetic (topological) charges of a sigma-model with stacky target are encoded by the shape of the target, equivalently of its stack of fields, which is op-dual to the flat automorphism group controlling electric symmetries. Magnetic symmetry backgrounds are obtained by dualizing these charges against a chosen moduli stack of topological field theories, generalizing Pontryagin duality, and act on the theory by stacking. Cohesive modalities allow for targets with non-finite and non-discrete data. When the dualizing stack is a grouplike $E_\infty$-object, we prove a fibration theorem expressing the magnetic symmetries of an electrically gauged theory in terms of those of the ungauged theory and of the gauge group, computed by a homotopy-fixed-point spectral sequence with an explicit first page. For moduli stacks of higher $U(1)$-gerbes with connection, encoding bulk topological local Lagrangians, we prove a factorization theorem: the magnetic symmetry backgrounds of any connected target split, non-canonically, into Pontryagin duals of its integral homology groups, and are necessarily flat, as a consequence of dualizing discrete charges against a target with connections. We illustrate both theorems with pure $\mathrm{Pin}^\pm(2)$ gauge theory: $O(2)$ carries invertible magnetic $\Z_2$ symmetries in form degrees $d-3$, $d-4$ and $d-5$ that are absent for $\mathrm{Pin}^-(2)$, the first two distinguished by a single differential of the spectral sequence. Gauging and physical applications are developed in upcoming work.
}
\begin{document}
\setlength{\abovedisplayskip}{2pt}
\setlength{\belowdisplayskip}{2pt}
\tikzset{Rightarrow/.style={double equal sign distance,>={Implies},->},
triple/.style={-,preaction={draw,Rightarrow}},
quadruple/.style={preaction={draw,Rightarrow,shorten >=0pt},shorten >=1pt,-,double,double
distance=0.2pt}}

\maketitle

\flushbottom

\section{Introduction}\label{sec:intro}

Symmetries are a fundamental guiding principle for the development of physics, specifically for that of classical and quantum field theory. Their classical definition, built on seminal work by Klein \cite{kleinVergleichendeBetrachtungenUber1893}, Cartan \cite{ASENS_1904_3_21__153_0}, Noether \cite{noetherInvarianteVariationsprobleme1918}, and Weyl \cite{weylQuantenmechanikUndGruppentheorie1927}, established a profound link between mathematics, particularly then-emerging branches of group theory and differential geometry, and physics, providing a clear characterization of conserved observables. On the one hand, mathematics has since then seen substantial developments, notably in the areas of geometry, topology, and algebra. Physics, on the other hand, has witnessed renewed interest in symmetries due to an envisioned redefinition in \cite{Gaiotto:2014kfa} under the name of \textit{generalized symmetries} in terms of topological quantum operators. While this immediately connects physics to topology specifically, it has been steadily realized that this reconceptualization leads to wider and more numerous connections with mathematics. This is chiefly because many branches of mathematics have been placed on a common ground via the language of category and topos theory, to a significant extent thanks to insightful contributions by Grothendieck (e.g. \cite{grothendieckFondementsGeometrieAlgebrique1961,grothendieckPoursuiteChamps2022}), followed by Lawvere \cite{lawvere1979categorical,lawvere1986categories,lawvere1997toposes}. While direct applications of higher category theory to high-energy physics and condensed matter theory have rapidly increased in number in the last few years (e.g. \cite{Alfonsi:2023pps,Carqueville:2023jhb,Bhardwaj:2023kri}), the natural connections between quantum field theory and topos theory in particular, as opposed to category theory in general, have received comparatively less attention. Substantial evidence \cite{Schreiber:2013pra,Schreiber_2014,Schreiber:2023nrr,Sati:2023ein,Myers:2023vjt,Giotopoulos:2025obs} of the fruitful interplay between topos theory and quantum field theory motivates a more careful investigation of applications to generalized symmetries, a topic which on its own has already led to numerous insights \cite{Cordova:2022ruw,Luo:2023ive,Bhardwaj:2023kri,Schafer-Nameki:2023jdn,Shao:2023gho}.

In our previous work \cite{Perez-Lona:2025xxm}, the role of higher geometry, itself modeled via a sheaf topos, for a concrete mathematical understanding of generalized global symmetries in field theory was highlighted. There, it was observed that the invertible \textit{electric} higher-form global symmetries of field theories which admit a pre-quantum description in terms of $\sigma$-models with higher stacky target spaces are realized as certain higher bundles over spacetime. Related work is found in \cite{Bunketal2023,Bunk:2026bha}, which focuses on a more systematic description of automorphisms of higher principal bundles.

While such a formulation of higher-form symmetries is a direct and natural generalization of the more classical notion of symmetries in terms of actions by groups, one notable fact is that this only describes the so-called ``electric'' symmetries. Even when restricting to invertible higher symmetries, there is, however, a complementary kind of symmetry, known as ``magnetic'' \cite{Sheckler:2025rlk}, or ``solitonic'' \cite{Chen:2022cyw,Chen:2023czk}. As the names suggest, these symmetries are associated with defects corresponding to topologically nontrivial field configurations called solitons. In general, field configurations are not related to some (higher) geometric group action. Therefore, it is not obvious how useful higher geometry can be to describe magnetic symmetries. To be sure, the relation between the topology of $\sigma$-models and magnetic symmetries has already been explored \cite{Chen:2023czk,Sheckler:2025rlk,Pace:2023mdo,Pace:2023kyi}. These approaches work with targets whose homotopy groups are finite in the relevant range, with continuous magnetic symmetries either approximated by finite quotients \cite[Section 3.3]{Chen:2023czk} or left as an open problem \cite[page~9]{Pace:2023mdo}. This is an important limitation, given that magnetic symmetries are agreed to be in some sense \textit{dual} to electric symmetries (a relation we make precise below as a composite of two dualities), the understanding of which is not limited to finite higher groups. The authors of \cite{Chen:2023czk,Pace:2023mdo} moreover identify obstructions, in the Postnikov data of the target, to the existence of an invertible Pontryagin-dual symmetry. Our results are complementary: we construct the magnetic symmetry background stack explicitly and compute its invertible part, also in the presence of such obstructions (Section~\ref{sssec:pin2}). We compare the two approaches in Section~\ref{ssec:review-magnetic}.

Similarly close in spirit to the present work is the article \cite{Alfonsi:2026awv}, which refines the charge-quantization postulate of Sati and Schreiber (summarized in \cite{Sati:2026scd}) using $L_{\infty}$-algebras and rational homotopy theory. There, invertible $k$-form symmetries are identified with $U(1)$-valued cohomology, equivalently Pontryagin duals of the integral homology, of the charge-quantization homotopy type \cite[Section 3.4, Postulate 3]{Alfonsi:2026awv}. At the level of homotopy sheaves, this agrees with our Theorem~\ref{thm:pontryagin-factorization}, which refines it to a (non-canonical) equivalence of abelian $\infty$-group stacks and places it inside a general dualization framework. The background stack is naturally a connective $H\mathbb Z$-module,
as follows from its presentation by the additive Deligne complex.
A choice of splitting of the reduced integral chain complex of the
target then yields the non-canonical decomposition of this
$H\mathbb Z$-module into contributions from the individual integral
homology groups. It is this stack-level decomposition, rather than
the homotopy groups alone, that warrants treating magnetic symmetries
of different form degrees separately, as is common in physics.
The decomposition, however, is not natural in the target and therefore need not
be compatible with maps of targets.

The goal of this article is to construct magnetic higher-form symmetries by systematically capturing them as smooth dual fundamental $\infty$-groupoids, going beyond qualitative characterizations of their properties. Concretely, for $\X$ a smooth $\infty$-groupoid serving as the target space of a $\sigma$-model, we define its dual fundamental $\infty$-groupoid to be the stack of maps $[\shape \X, \T]$ (more precisely, \textit{relative} maps, as we shall explain in the main text) from the homotopy type $\shape \X$ of $\X$ (which encodes the topological, or magnetic, charges\footnote{\label{footnote:topological-vs-magnetic} Strictly speaking,
\textit{topological charge} is the more accurate term, since which charges are labeled ``electric'' versus ``magnetic'' depends on the choice of $\sigma$-model target space \cite{Sati:2023mta}. For instance, the quantization proposal by Freed--Moore--Segal \cite[Section 1]{Freed:2006ya}
of Maxwell theory takes the flux target to be
$\mathbf{B}^2\Z \times \mathbf{B}^2\Z$ rather than
$\mathbf{B}^2\Z$, thus encoding electric and magnetic fluxes at once. We retain the electric/magnetic terminology throughout this article, both because it is well-established in the symmetries literature and because it helps emphasize the op-duality between electric symmetries and magnetic co-symmetries that is crucial for our framework.} of $\X$-valued field configurations) into a stack $\T$ that classifies topological field theories. Intuitively, the dual encodes all possible ways to get a topological field theory of a chosen kind out of topological charges, that is, $\T$-valued $\infty$-local systems on the charges (Remark~\ref{rmk:local-systems}). For fields on $\Sigma$, these charges form the shape of the field stack, namely $\shape\,\textbf{Fields}(\Sigma,\X)\cong[\Sigma,\shape\X]$ (Proposition~\ref{prop:mch-shape-fields}). In carrying out this program we obtain, among others, the following results:
\begin{itemize}[nosep]
    \item a construction of magnetic higher-form symmetries via smooth mapping $\infty$-groupoids $[\shape \X, \T]$, obtained by \textit{hom}-dualizing co-acting magnetic co-symmetries (see below), which are in turn \textit{op}-dual, in the sense of the adjunction $\shape\dashv\flat$ (Section~\ref{ssec:magnetic-cosymmetries}), to the electric symmetries of \cite{Perez-Lona:2025xxm},
    \item an $\infty$-mapping class group $\shape\mathrm{Aut}(\X)$ (Definition~\ref{def:infty-mcg}), a generalization of the classical mapping class group, through which electric symmetries are shown to act canonically on magnetic charges,
    \item a hom-dualization framework against a chosen moduli stack $\T$ of topological field theories (Definition~\ref{def:magnetic-backgrounds}) that does not require $\T$ to carry a composition law, thereby accommodating non-composable magnetic $(-1)$-form symmetry parameter spaces,
    \item a fibration theorem (Theorem~\ref{thm:descent-magnetic-symmetries}) that builds the magnetic symmetries of an electrically gauged theory out of the magnetic symmetries of the ungauged theory and of the gauge group itself, together with a homotopy-fixed-point spectral sequence with an explicit $E_1$ page,
    \item a factorization theorem (Theorem~\ref{thm:pontryagin-factorization}) that explicitly constructs the higher Pontryagin magnetic symmetry background of any connected target as a (non-canonical) product of contributions from each of its integral homology groups, and shows it to be necessarily flat, refining the identification of its homotopy sheaves with Pontryagin duals of integral homology \cite{Alfonsi:2026awv} to an equivalence of abelian $\infty$-group stacks,
    \item a computation for pure $\Pin^\pm(2)$ gauge theory in which the two signs share the same $E_1$ page and, in the stems computed, differ by a single differential, so that $\Pin^+(2)\cong O(2)$ carries invertible magnetic $\Z_2$ symmetries in form degrees $d-3$ and $d-4$ that are absent for $\Pin^-(2)$, confirmed and extended to form degree $d-5$ by a second route using neither the fibration theorem nor the Postnikov invariant (Section~\ref{sssec:pin2}).
\end{itemize}
The fibration and factorization theorems are proven under the assumption that the dualizing stacks are endowed with an abelian composition law. We explain in Section~\ref{ssec:dualization-against-tft} why this stronger requirement, while not needed for the general theory, is nonetheless natural and effectively unavoidable for results of this kind.

Let us comment on the two senses in which we use the term \textit{dual} throughout, and how they are related. The first is an \textit{op-duality} between electric symmetries and what we call magnetic \textit{co-symmetries} (Section~\ref{ssec:magnetic-cosymmetries}). The duality relation expresses a correspondence of \textit{universal properties}: an electric symmetry is controlled by a map \textit{into} $\flat\mathrm{Aut}(\X)$, the terminal discrete action on the target space $\X$, whereas a magnetic co-symmetry is controlled by a map \textit{out of} $\shape \X$, the initial discrete co-action on $\X$. The two universal objects arise from the two halves of the adjunction $\shape\dashv\flat$. The second sense is as a \textit{hom}-relation, entering via the construction of $[\shape \X, \T]$ out of the charges $\shape \X$ against a chosen $\T$. This is a generalization of the familiar recipe of Pontryagin duality\footnote{Notably, $\shape \X$ and $[\shape \X, \T]$ come with a natural $\T$-valued evaluation map which is nevertheless not necessarily non-degenerate. This last condition will play an important role in the discussion of \textit{gauging}, the subject of our upcoming \cite{PerezLona:gauging}.}, $\widehat A = \mathrm{Hom}(A,U(1))$, and of linear duality, $V^\vee=\mathrm{Hom}(V,k)$, with $\T$ now playing the role of the fixed \textit{dualizing object}. The second duality, $[-,\T]_{\T_0}$, is a contravariant functor: it turns a co-symmetry $\corho\colon\shape\X\to\mathsf Y$ into a map $[\mathsf Y,\T]_{\T_0}\to[\shape\X,\T]_{\T_0}$ \textit{into} the universal object $[\shape\X,\T]_{\T_0}$ (Remark~\ref{rmk:variance-magnetic}). Magnetic symmetries are therefore the $\T$ hom-duals of the co-symmetries that are op-dual to electric symmetries. The two reversals of variance cancel, so electric and magnetic symmetries sit on the same side: both are classified by maps into a universal object, and both act on the fields, the magnetic ones up to bulk TFT's (Remark~\ref{rmk:co-module-dualized-module}). 

In a broader perspective, the framework demonstrates how smooth generalized cohomology operations emerge physically as field theory symmetries, and sheds light on the centrality of cohesive topos-theoretic structures, notably cohesive modalities (reviewed in Appendix~\ref{app:sec:structures}, cf. Definition~\ref{app:def:cohesive}), for providing a rigorous and broad-ranging formulation of higher global symmetries.

In our earlier article \cite{Perez-Lona:2025xxm}, we also addressed gaugings of electric symmetries. Here, however, we do not discuss the gauging of magnetic symmetries. Electric gauging enters our discussion only through its effect on magnetic symmetries, as captured by the fibration theorem (Theorem~\ref{thm:descent-magnetic-symmetries}). The op-duality between electric symmetries and magnetic co-symmetries nevertheless already provides a hint as to how to construct gaugings of magnetic symmetries. This is the subject of \cite{PerezLona:gauging}.

The structure of the present article is as follows. In Section~\ref{sec:symmetries-and-cosymmetries}, we present the main proposal for magnetic higher-form symmetries via higher geometry and topos theory. There, we review and refine the higher-geometric description of electric symmetries (Section~\ref{ssec:review-electric}), introduce magnetic co-symmetries and characterize them as co-modules of the co-monoid $\shape\X$ (Section~\ref{ssec:magnetic-cosymmetries}), and construct the $\infty$-mapping class group through which electric symmetries act on magnetic charges (Section~\ref{ssec:infinity-mapping-class-group}). In Section~\ref{sec:magnetic-symmetries}, after reviewing existing algebraic and geometric approaches to generalized magnetic symmetries (Section~\ref{ssec:review-magnetic}), we introduce our dualization framework, in which magnetic symmetry backgrounds are constructed by dualizing magnetic charges against a chosen moduli stack of topological field theories (Section~\ref{ssec:dualization-against-tft}). Their interpretation as coupled topological field theories, by which we mean the stacking/tensoring operation between a quantum field theory and a topological field theory, is made manifest along the way. We study how this dualization interacts with Postnikov towers (Section~\ref{ssec:postnikov-towers}), and prove a fibration theorem for the magnetic symmetries of electrically gauged theories (Section~\ref{ssec:magnetic-symmetries-of-electric-gauged-theories}). In Section~\ref{sec:higher-pontryagin-symmetries}, we specialize the dualizing target to moduli stacks of higher $U(1)$-gerbes with connection and prove a factorization theorem for the resulting higher Pontryagin magnetic symmetries, which we illustrate with examples including ${\rm Pin}^{\pm}(2)$ gauge theory. 

In Appendix~\ref{app:sec:structures}, we provide a streamlined compilation of definitions and known results from, broadly speaking, topos theory that we make use of throughout this article. Appendix~\ref{app:sec:opmn} briefly reviews some key facts concerning the double covers ${\rm Pin}^{\pm}(N)$ of $O(N)$. These are leveraged in the example of Section~\ref{ssec:pontryagin-examples}.

\paragraph{Notation.} The context in which our setup takes place is the cohesive $\infty$-topos of smooth $\infty$-groupoids (cf. Definition~\ref{app:def:cohesive})
\begin{equation}
    \textbf{H}:= {\rm Sh}\left({\rm CartSp},\infty{\rm Grpd}\right),
\end{equation}
modeled as a topos of sheaves on the site of Cartesian spaces valued in $\infty$-groupoids. We will interchangeably refer to these as higher smooth stacks. By definition as sheaves, any object $\X \in {\rm ob}\,\textbf{H}$ can be evaluated at a Cartesian space $U\in {\rm ob\, CartSp}$, naturally yielding an $\infty$-groupoid $\X(U)\in {\rm ob\, \infty Grpd}$.

In our framework, discrete objects, which are smooth $\infty$-groupoids in the essential image of the functor
\begin{equation}
    {\rm Dsc}: {\infty \rm Grpd} \to \textbf{H}
\end{equation}
that equips a bare $\infty$-groupoid with the discrete topology (see Definitions~\ref{app:def:cohesive} and \ref{app:def:discrete} for more), will play a prominent role. To highlight this, we will use the notation $\sf X$ for a discrete object, and $\cal X$ for a general object in $\textbf{H}$.

We use $H^n_{\rm sing}$ and $H^n_{\rm gp}$ for singular (purely topological) and group cohomology, respectively.

\section{Electric symmetries and magnetic co-symmetries}\label{sec:symmetries-and-cosymmetries}

In this section, we begin extending the higher-geometric framework of \cite{Perez-Lona:2025xxm} to include magnetic symmetries. This section focuses on highlighting the specific similarities between electric symmetries and magnetic charges or ``\textit{co-symmetries}'', a name we explain below. In Section~\ref{ssec:review-electric}, we review and refine the framework for electric symmetries, bringing cohesive topos modalities to the forefront. In Section~\ref{ssec:magnetic-cosymmetries}, we properly introduce magnetic co-symmetries of stacky $\sigma$-models. By identifying the specific ways in which these mirror structures of electric symmetries, we motivate the higher geometric treatment of magnetic co-symmetries. Finally, to take a first step towards a description of the relation of electric and magnetic symmetries, Section~\ref{ssec:infinity-mapping-class-group} derives the canonical action of electric symmetries on magnetic charges of $\sigma$-models valued in smooth manifolds. This is accomplished by proving that the higher generalization of the mapping class group acts on magnetic charges. The reader is encouraged to consult Appendix~\ref{app:sec:structures} for relevant topos-theoretic definitions.

\subsection{Refining higher-form electric symmetries as higher automorphism bundles}\label{ssec:review-electric}

In this section, we review the characterization of electric higher-form symmetries of pre-quantum $\sigma$-models given in \cite{Perez-Lona:2025xxm}, and provide a few refinements. 

A field theory defined on a smooth manifold $\Sigma$ describes the dynamics of fields given by a \textit{mapping higher stack}
\begin{equation}\label{eq:field-stack}
    \textbf{Fields}(\Sigma,\X):= \left[\Sigma, {\cal X}\right]
\end{equation}
valued in some higher stack $\cal X \in \mathrm{ob}(\textbf{H})$. For example, the moduli stack $\textbf{B}_{\nabla}G$ of principal $G$-bundles with connection, for $G$ some Lie group, describes the fields of a $G$-gauge theory. Here, the brackets $[\Sigma, \X] \in \mathrm{ob}
     (\textbf{H})$ denote the internal hom, which is again an object in \textbf{H}. The \textit{higher-form global electric symmetries} are defined as \textit{flat} maps
\begin{equation}\label{eq:symmetry-stack}
    \mathbf{Sym}_e\left(\Sigma,{\cal X}\right):=  \left[\Sigma,\flat {\rm Aut}\left({\cal X}\right)\right].
\end{equation}
The target higher stack $\flat{\rm Aut}({\cal X})$  of \eqref{eq:symmetry-stack} is the \textit{flat higher automorphism group} of $\cal X$, which consists of the flat endomorphisms of $\cal X$
\begin{equation}
    \flat{\rm Aut}\left({\cal X}\right) \to [\X,\X] 
\end{equation}
which are invertible under the composition functor
\begin{equation}\label{eq:end-composition-functor}
    \circ: [\X,\X] \times [\X,\X] \to [\X,\X].
\end{equation}
In particular, these automorphisms factor through the full (non-flat) higher automorphism group $\flat \mathrm{Aut}(\X) \to \mathrm{Aut}(\X) \to [\X,\X]$, via a group object homomorphism constructed from the counit of the flat modality (Proposition~\ref{app:prop:counits})
\begin{equation}
     \mathbf{B}\flat\mathrm{Aut}(\X) \cong \flat \mathbf{B}\mathrm{Aut}(\X)\xrightarrow{\epsilon_{\mathbf{B}\mathrm{Aut}(\X)}^{\flat}} \mathbf{B}\mathrm{Aut}(\X),
\end{equation}
where the equivalence follows from the fact that $\mathbf{B}\mathrm{Aut}(\X)$, as an object in $\textbf{H}$, is a colimit (see Remark~\ref{app:rmk:stacky-gp}) over a diagram of products (limits), and $\flat$ preserves both limits and colimits by virtue of being a left and a right adjoint.

The composition functor \eqref{eq:end-composition-functor} restricts to a composition morphism on $\flat{\rm Aut}({\cal X})$
\begin{equation}\label{eq:aut-composition-functor}
    \circ: \flat{\rm Aut}\left({\cal X}\right) \times \flat{\rm Aut}\left({\cal X}\right) \to \flat {\rm Aut}\left({\cal X}\right).
\end{equation}
Both $\flat {\rm Aut}(\X)$ and ${\rm Aut}(\X)$ are group objects in $\textbf{H}$ (Definition~\ref{app:def:gp-obj}, Remark~\ref{app:rmk:stacky-gp}). By restriction of the internal hom evaluation map, the group object $\flat {\rm Aut}(\X)$ has a canonical action on $\X$ (Definition~\ref{app:def:gp-action}), namely, the evaluation morphism
\begin{equation}\label{eq:evaluation-target}
    \textbf{ev}: \flat{\rm Aut}\left({\cal X}\right) \times {\cal X}\to {\cal X}.
\end{equation}
Applying the internal hom-functor $\left[\Sigma,-\right]:\textbf{H}\to \textbf{H}$ induces a composition (\textit{fusion}) of higher-form global symmetries \eqref{eq:symmetry-stack}
\begin{equation}\label{eq:fusion}
    \left[\Sigma,\circ\right]: \textbf{Sym}_e\left(\Sigma,{\cal X}\right)\times \textbf{Sym}_e\left(\Sigma,{\cal X}\right) \to \textbf{Sym}_e\left(\Sigma,{\cal X}\right),
\end{equation}
and a global higher-form symmetry action on the field stack \eqref{eq:field-stack} as
\begin{equation}\label{eq:sym-action}
    \left[\Sigma,\textbf{ev}\right]
    : \textbf{Sym}_e\left(\Sigma,{\cal X}\right)\times \textbf{Fields}(\Sigma,\X) \to \textbf{Fields}(\Sigma,\X).
\end{equation}
For concrete constructions, this setup was specialized in \cite{Perez-Lona:2025xxm} to $\infty$-groupoids
\begin{equation}
    \textbf{Fields}(\Sigma, \X) (*) = {\rm Pnt}\left(  \textbf{Fields}(\Sigma, \X) \right) = \textbf{H}(\Sigma, \X)
\end{equation}
via the global sections functor (cf. Definition~\ref{app:def:cohesive})
\begin{equation}\label{eq:global-section}
    {\rm Pnt}: \textbf{H} \to \infty\mathrm{Grpd}.
\end{equation}
Note that even though the domain $\Sigma$ is usually taken to be a smooth manifold, which by itself does not have ``higher'' information, the mapping stacks \eqref{eq:field-stack}, \eqref{eq:symmetry-stack} do encode nontrivial higher data. This is because $\Sigma$ is replaced by a \textit{cofibrant object} $U(\Sigma)$, for example the one given by the $\infty$-groupoid induced by a good cover $\{ U_i\}_{i\in {\cal I}}$. We can think of $U(\Sigma)$ as a simplicial presheaf (in fact, an objectwise Kan complex, see also Definition~\ref{app:def:simplicial-cosimplicial-object})
\begin{equation}\label{eq:cech-groupoid}
U(\Sigma)=  \left(  \begin{tikzcd}
\cdots \arrow[r, shift left] \arrow[r, shift left=3] \arrow[r, shift right] \arrow[r, shift right=3] & {\coprod_{i,j,k \in {\cal I}} U_{ijk}} \arrow[r] \arrow[r, shift left=2] \arrow[r, shift right=2] & {\coprod_{i,j \in {\cal I}} U_{ij}} \arrow[r, shift left] \arrow[r, shift right] & \coprod_{i \in {\cal I}} U_i
\end{tikzcd}\right),
\end{equation}
for $U_{ij}:= U_i\cap U_j$ and so forth.

In a similar way, the global sections $\infty$-groupoid of ${\cal X}$ and of its automorphism $\infty$-group $\flat{\rm Aut}\left({\cal X}\right)$ can be \textit{schematically} presented as
\begin{equation}\label{eq:simplicial-x-aut}
    {\cal X} = \left( \begin{tikzcd}
\cdots \arrow[d, shift left] \arrow[d, shift left=3] \arrow[d, shift right] \arrow[d, shift right=3] \\
\{\alpha:f'\circ f \to f''\} \arrow[d] \arrow[d, shift left=2] \arrow[d, shift right=2]              \\
\{f:x \to x'\} \arrow[d, shift left] \arrow[d, shift right]                                          \\
\{x \in {\rm ob}({\cal X})\}                                                                        
\end{tikzcd} \right), \: \: \:     \flat{\rm Aut}\left({\cal X}\right)= 
    \left( \begin{tikzcd}
\cdots \arrow[d, shift left] \arrow[d, shift left=3] \arrow[d, shift right] \arrow[d, shift right=3] \\
\{m:\eta'\circ\eta \Rrightarrow \eta''\} \arrow[d] \arrow[d, shift left=2] \arrow[d, shift right=2]  \\
\{\eta:F \xRightarrow{\cong} F'\} \arrow[d, shift left] \arrow[d, shift right]                               \\
\{F: {\cal X}\to {\cal X}\}                                                                         
\end{tikzcd}\right),
\end{equation}
for invertible endofunctors, natural isomorphisms, modifications, and higher \textit{transfors} \cite{crans1999tensor,crans2003localizations}.
Then, for instance, again very schematically, an object $P\in {\rm ob}\, {\rm Pnt}\left(\textbf{Sym}_e\left(\Sigma,\X\right) \right)$ of the symmetry groupoid consists of assignments that are locally constant (as a consequence of $\flat \mathrm{Aut}(\X)$ being discrete):
\begin{equation}\label{eq:symmetry-map-p}
   P: \left(  \begin{tikzcd}
\cdots \arrow[d, shift left] \arrow[d, shift left=3] \arrow[d, shift right] \arrow[d, shift right=3]          \\
{\coprod_{i,j,k \in {\cal I}} U_i\cap U_j\cap U_k} \arrow[d] \arrow[d, shift left=2] \arrow[d, shift right=2] \\
{\coprod_{i,j \in {\cal I}} U_i\cap U_j} \arrow[d, shift left] \arrow[d, shift right]                         \\
\coprod_{i \in {\cal I}} U_i                                                                                 
\end{tikzcd}\right) \to \left( \begin{tikzcd}
\cdots \arrow[d, shift left] \arrow[d, shift left=3] \arrow[d, shift right] \arrow[d, shift right=3] \\
\{m:\eta'\circ\eta \Rrightarrow \eta''\} \arrow[d] \arrow[d, shift left=2] \arrow[d, shift right=2]  \\
\{\eta:F \Rightarrow F'\} \arrow[d, shift left] \arrow[d, shift right]                               \\
\{F: {\cal X}\to {\cal X}\}                                                                         
\end{tikzcd}\right).
\end{equation}
From the perspective of the field theory on $\Sigma$, we have a tower of electric global symmetries, consisting of electric $0$-form symmetries parameterized by functors $\{F_i\}_{i \in {\cal I}}$, electric $1$-form symmetries parameterized by natural isomorphisms $\{\eta_{ij}:F_{i}\Rightarrow F_j\}_{i,j\in {\cal I}}$, electric $2$-form symmetries parameterized by modifications $\{m_{ijk}:\eta_{jk}\circ \eta_{ij} \Rrightarrow \eta_{ik}\}_{i,j,k\in {\cal I}}$, and so forth, with nontrivial mappings and actions amongst them determined by the simplicial maps \eqref{eq:simplicial-x-aut} of $\flat{\rm Aut}({\cal X})$ as well as by its fusion product \eqref{eq:fusion}. In this description, the term $k$-form symmetry refers to the fact that the symmetry is associated with the action of a $(k+1)$-group. Physically, for $k$-form symmetries, one expects to have quantum operators supported at defects of codimension $(k+1)$ responsible for implementing the action.

\begin{remark}[Electric $(-1)$-form symmetries]\label{rmk:-1-formsymmetries}
We highlight that this definition of electric global symmetries includes only symmetries of non-negative degree. Nowadays, it is common to speak of \textit{$(-1)$-form symmetries}, which are associated with \textit{families} of theories. The setting of \cite{Perez-Lona:2025xxm} summarized here, by contrast, focuses on a single theory, namely an $\X$-valued $\sigma$-model. In our study of magnetic symmetries in the present article, we will however see the emergence of $(-1)$-form symmetry backgrounds.

Extending the present framework to include electric global symmetries of degree $k\geq -1$ is straightforward, but since we will not need this here, we only describe it briefly as follows: it amounts to passing from group object actions to group\textit{oid} object actions, as recognized from a different perspective in \cite[Example 1.4]{Stockall:2025ngz}. In our setting, the role of the acting group object, realized as a simplicial object $(\mathbf{B}\mathcal{G})_{\bullet}$ (Definition~\ref{app:def:gp-obj}), is replaced by a groupoid object $(\mathscr{G})_{\bullet}$ (Definition~\ref{app:def:groupoid-object}). The action groupoid $(\X \dslash \mathscr{G})_{\bullet}$, encoding the action, is adjusted appropriately: since $(\mathscr{G})_{0} = \mathscr{G}_0 \in \mathrm{ob}\, \textbf{H}$ is not the terminal object anymore, a $\mathscr{G}$-action on $\X$ now features a map valued in $\mathscr{G}_0$. This is the familiar \textit{anchor map} of groupoid actions (e.g. \cite[Section 1]{Mackenzie2005}).
\end{remark}

The stack $\flat{\rm Aut}({\cal X})$ controls the electric higher-form symmetries in the sense that electric higher-form symmetries are described by higher-group homomorphisms
\begin{equation}\label{eq:electric-g-symmetry-homomorphism}
    \rho: \mathsf{G} \to \flat{\rm Aut}({\cal X}),
\end{equation}
for $\mathsf{G}$ a higher group realized as a discrete higher stack with invertible composition. The $\sf G$-symmetries are therefore parameterized by diagrams
\begin{equation}\label{eq:electric-symmetry-diagram}
    \begin{tikzcd}
                               &    & \sf G \arrow[dd, "\rho"] \arrow[ldd, "", Rightarrow] \\
                               &    &                                                             \\
\Sigma \arrow[rruu] \arrow[rr] & {} & \flat{\rm Aut}({\cal X})                                   
\end{tikzcd}.
\end{equation}
\begin{remark}\label{rmk:electric-flat-comodal}
  We emphasize that electric symmetries \eqref{eq:electric-g-symmetry-homomorphism} are defined as actions of a group object in \textbf{H} that is \textit{discrete}, equivalently $\flat$-\textit{comodal}, meaning there is an equivalence
  \begin{equation}
      {\sf G} \cong \flat {\sf G}
  \end{equation}
  between $\sf G$ and its image under the $\flat$-comodality of \textbf{H}. This is because, since the group acts via flat maps, its action, and hence the field theory, cannot distinguish any smooth structure. This will play an important role in Section~\ref{ssec:magnetic-cosymmetries}, where it is
the discreteness of $\sf G$ that places electric symmetries and magnetic co-symmetries on
the two sides of the adjunction $\shape\dashv\flat$.
\end{remark}
\begin{remark}\label{rmk:terminal-electric}
  A clear way to characterize the universal property \eqref{eq:electric-symmetry-diagram} satisfied by $\flat {\rm Aut}(\X)$ is that it is the \textit{terminal} object in the category of \textit{discrete group actions} on $\X$. The objects in this category are pairs $(\mathsf G,\rho:\mathsf{G}\times \X \to \X)$, where $\mathsf G$ is a discrete group object and $\rho$ is a group action, and morphisms are stacky group homomorphisms compatible with the actions.

  Incidentally, the condition that a map of stacks \eqref{eq:electric-g-symmetry-homomorphism} is a homomorphism of stacky groups is equivalent to requiring that the map admits a \textit{delooping}
    \begin{equation}
        \rho: \mathbf{B}\mathsf{G} \to \mathbf{B}\flat \mathrm{Aut}(\X) \cong \flat \mathbf{B}\mathrm{Aut}(\X).
    \end{equation}
    Since in this presentation the coherence data is automatically taken care of, we will switch between these two presentations when needed. 
\end{remark}
\paragraph{Gauging.} This perspective allows for a straightforward description of gauging at
the level of pre-quantum fields\footnote{We emphasize this is only valid at the pre-quantum
level. As we are not quantizing, we cannot generally guarantee there are no quantum anomalies
obstructing either the existence of the symmetry itself, or of its gauging.}. Given an action
$\hat{\rho}:\mathcal{G}\to\mathrm{Aut}(\X)$ of a (not necessarily discrete) group object $\mathcal G$ of $\textbf{H}$ on
$\X$ (Definition~\ref{app:def:gp-action}), the fields of the $\mathcal G$-gauged theory are
defined as
\begin{equation}\label{eq:electric-gauged-fields}
    \textbf{Fields}_{\mathcal G}(\Sigma,\X) := \left[\Sigma, \X\dslash\mathcal G\right],
\end{equation}
where $\X\dslash\mathcal G$ is the \textit{stacky} or \textit{homotopy quotient}. Rigorously,
this quotient is computed as the \textit{homotopy colimit} of a diagram defined by a simplicial
object in $\textbf{H}$ (Definition~\ref{app:def:stacky-quotient}), and fits in the sequence
\begin{equation}\label{eq:electric-gauging-fibration}
    \X \to \X\dslash \mathcal G \to \mathbf{B}\mathcal G,
\end{equation}
which exhibits $\X$ as the total space of a principal $\mathcal G$-bundle whose base is
$\X\dslash\mathcal G$. When $\mathcal G=\mathsf G$ is discrete, the parameters $\Sigma\to\mathsf G$
are necessarily locally constant, so that $\hat\rho$ is equivalently a global electric symmetry
\eqref{eq:electric-g-symmetry-homomorphism}, by the universal property of
Remark~\ref{rmk:terminal-electric}. For general $\mathcal G$, the parameters are instead allowed to vary continuously over $\Sigma$, so that the action describes an electric local symmetry. This is all we will need concerning gauging in this article. A systematic treatment of gauging is deferred to the forthcoming \cite{PerezLona:gauging}.

\subsection{Magnetic co-symmetries}\label{ssec:magnetic-cosymmetries}
The summary presented in Section~\ref{ssec:review-electric} highlights two main mathematical objects at play for electric symmetries. First, we have the stack $\flat {\rm Aut}({\cal X})$, which controls the electric symmetries in the sense of the universal property \eqref{eq:electric-g-symmetry-homomorphism}, Remark~\ref{rmk:terminal-electric}. Second, we have that the fields of the $\cal G$-gauged theory are computed via the stacky quotient $\cal X\dslash G$. In this section, by rigorously describing the magnetic charges, or ``\textit{co-symmetries}'', of a stacky $\sigma$-model, we identify the counterpart to $\flat {\rm Aut}({\cal X})$ in the sense of its universal property.
The counterpart of the second, the gauging, is treated in \cite{PerezLona:gauging}. We comment on its form at the end of this section.

Magnetic symmetries arise from the topologically nontrivial field configurations (see e.g. \cite{Brennan:2023mmt,Chen:2023czk,Sheckler:2025rlk}). In the context of $\sigma$-models valued in a smooth manifold $X$ specifically, these configurations depend on the homotopy groups of the target space $X$. The $n$th homotopy group $\pi_n(X)$ of the $\sigma$-model target space $X$ is, by definition, the group of equivalence classes of maps of $n$-spheres into $X$. Physically, these classes describe nontrivial ``charges'' enclosed by said spheres, and are pictured as topological defects. As highlighted in e.g. \cite{Chen:2023czk,Sheckler:2025rlk}, these homotopy groups cannot be considered in isolation. This is because as soon as there are two nontrivial homotopy groups, there is generally some \textit{extension data} which describes, for example, how topological defects can support other topological defects of lower dimension at their intersections, as well as how the former may act on the latter. It follows that magnetic charges of $\sigma$-models on smooth manifolds are more precisely determined by the \textit{homotopy type} of the target space, which encodes the homotopy groups with their extension data.

Now, the concept of homotopy type is not exclusive to smooth manifolds but is in fact defined for general, higher smooth stacks. Therefore, to describe magnetic charges of stacky $\sigma$-models, we start by constructing the \textit{fundamental $\infty$-groupoid} $\Pi_{\infty}\vert {\cal X}\vert$ of the geometric realization $\vert \cal X \vert$ (a topological space) of the target stack $\cal X$. Since we wish to remain in the topos of higher smooth stacks $\textbf{H}$, we must include this $\infty$-groupoid back into \textbf{H}. This is the \textit{internal} fundamental $\infty$-groupoid \cite[Definition 5.2.13]{Schreiber:2013pra} \cite{carchedi2016homotopy}, also known as the \textit{shape} $\shape \X$ of $\X$ (cf. Definition~\ref{app:def:cohesive}). Crucially, by construction, the shape of any stack is a \textit{discrete} $\infty$-groupoid. 

Since we are not studying the target space $\cal X$ in isolation but as the target space of a $\sigma$-model, we focus on how $\Sigma$ probes the shape $\shape\X$ of $\X$. This motivates the following definition.
\begin{definition}[Magnetic charges]\label{def:mch} Let $\Sigma \to \mathcal{X}$ be a $\sigma$-model. We call the collection of maps
\begin{equation}\label{eq:def-mch}
    \textbf{MCh}(\Sigma,{\cal X}):= [\Sigma,\shape{\cal X}]
\end{equation}
the stack of \textit{magnetic charges} (cf.\ the treatment of charge sectors via shapes of classifying spaces in \cite{Sati:2023mta}).
\end{definition}
These are not quite the magnetic symmetries yet, but as we explain later, they are the foundational component.

The objects $\X$ and $\shape \X$, while generally distinct, are not simply related in some vague sense. Rather, they are connected by a canonical morphism
\begin{equation}\label{eq:shape-unit}
    \eta^{\smallshape}_{\X}: \X \to \shape \X,
\end{equation}
the $\shape$-unit (see \eqref{app:eq:counits} in Definition~\ref{app:def:cohesive}). This morphism allows one to extract the magnetic charges underlying the fields at the level of stacks
\begin{equation}\label{eq:fields-to-mch}
    [\Sigma, \eta^{\smallshape}_{\X}]: \textbf{Fields}(\Sigma,\X) \to \textbf{MCh}(\Sigma, \X).
\end{equation}
\begin{proposition}[{Magnetic charges as the shape of fields \cite[Corollary 2.3]{Sati:2025vjw} \cite[Proposition 4.3]{Sati:2026scd} }]\label{prop:mch-shape-fields}
Let $\Sigma$ be a smooth manifold. The map \eqref{eq:fields-to-mch} is the $\shape$-unit of $\textbf{Fields}(\Sigma,\X)$, so that
\begin{equation}\label{eq:mch-shape-fields}
    \shape\,\textbf{Fields}(\Sigma,\X)\cong\textbf{MCh}(\Sigma,\X).
\end{equation}
\end{proposition}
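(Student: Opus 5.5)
The plan is to show that $\shape[\Sigma,\X]\cong[\Sigma,\shape\X]$ via the map $[\Sigma,\eta^{\smallshape}_\X]$, using that $\shape$ of a smooth manifold is its underlying homotopy type. I would proceed in three steps.

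\textbf{Step 1: identify $\shape$ with localization along the real line.} In $\textbf{H}=\mathrm{Sh}(\CartSp,\infty\mathrm{Grpd})$ the shape modality is $\R$-localization: $\shape\X\cong\mathrm{Dsc}\,\colim_{n\in\Delta^{op}}\X(\R^n)$, the ``smooth singular complex''. Equivalently, $\eta^{\smallshape}_\X$ is the universal $\R$-local equivalence to a discrete object. So it suffices to prove two things: that $[\Sigma,\shape\X]$ is discrete ($\shape$-modal), and that $[\Sigma,\eta^{\smallshape}_\X]$ is a $\shape$-equivalence. Together these exhibit $[\Sigma,\eta^{\smallshape}_\X]$ as the $\shape$-unit of $\textbf{Fields}(\Sigma,\X)$.

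\textbf{Step 2: discreteness of the target.} For a discrete $\mathsf Y=\mathrm{Dsc}\,Y$, the adjunction $\shape\dashv\flat$ (with $\mathrm{Dsc}$ fully faithful) gives, for every $U\in\CartSp$,
\[
[\Sigma,\mathsf Y](U)\cong\textbf{H}(U\times\Sigma,\mathsf Y)\cong\infty\mathrm{Grpd}(\shape(U\times\Sigma),Y)\cong\infty\mathrm{Grpd}(\shape\Sigma,Y).
\]
The last step uses that $\shape$ preserves finite products and that $\shape U\simeq *$. This is independent of $U$, so $[\Sigma,\mathsf Y]\cong\mathrm{Dsc}\,\mathrm{Map}(\shape\Sigma,Y)$ is discrete. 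The smooth-manifold hypothesis enters here: $\shape\Sigma$ is the homotopy type of $\Sigma$, computed from a good cover as in \eqref{eq:cech-groupoid}.

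\textbf{Step 3: $[\Sigma,-]$ preserves $\shape$-equivalences.} I expect this to be the main obstacle, since $[\Sigma,-]$ does not obviously commute with the geometric realization defining $\shape$. I would first reduce to $\Sigma=\R^k$ using a good open cover: $\Sigma\simeq\colim U(\Sigma)$ over a simplicial diagram of finite products of Cartesian spaces, hence $[\Sigma,\X]\simeq\lim_{\Delta}[U(\Sigma)_\bullet,\X]$. For $\Sigma=\R^k$, the map $\X\to[\R^k,\X]$ (pullback along $\R^k\to *$) is an $\R$-homotopy equivalence: the contraction $\R\times\R^k\to\R^k$ provides the homotopy. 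Hence $[\R^k,\X]\to\shape\X\cong[\R^k,\shape\X]$ is the unit.

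The delicate point is that $\shape$ does not commute with the infinite limit $\lim_\Delta$. To handle it I would invoke the known result that $\shape$ commutes with pullbacks along maps to discrete objects, or rather that $\Sigma$ is finite-dimensional and that sheaves on $\CartSp$ satisfy a concordance/homotopy-invariance property. Concretely, the approach is Berwick-Evans–Boavida de Brito–Pavlov's theorem that $\shape[\Sigma,\X]\simeq[\Sigma,\shape\X]$ via concordance sheaves, which is precisely \cite[Corollary 2.3]{Sati:2025vjw}. Taking this as the input closes the argument.
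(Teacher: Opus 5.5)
Your argument is correct and is essentially the paper's proof. Both rest on the smooth Oka principle $\shape[\Sigma,\X]\cong[\shape\Sigma,\shape\X]$ as a black box, and combine it with exactly your Step~2 computation $[\Sigma,\mathsf Y](U)\cong\infty\mathrm{Grpd}(\shape\Sigma,Y)$, independent of $U$, to identify $[\shape\Sigma,\shape\X]\cong[\Sigma,\shape\X]$ and show it is discrete. The $\R$-localization framing of Step~1 and the good-cover reduction of Step~3 become unnecessary once that principle is invoked, and the manifold hypothesis enters only through the Oka principle, not through Step~2, which holds for any $\Sigma\in\textbf{H}$.
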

\begin{proof}
The smooth Oka principle \cite[Theorem 4.3.54(i)]{Sati_Schreiber_2026} gives $\shape[\Sigma,\X]\cong[\shape\Sigma,\shape\X]$. Since $\shape\X$ is discrete, and $\shape$ preserves finite products with $\shape U\cong *$ for every Cartesian space $U$, both $[\shape\Sigma,\shape\X]$ and $[\Sigma,\shape\X]$ have $U$-points $\infty\mathrm{Grpd}(\shape\Sigma,\shape\X)$, hence $[\shape\Sigma,\shape\X]\cong[\Sigma,\shape\X]$. In particular, $\textbf{MCh}(\Sigma,\X)$ is discrete, and the composite equivalence is the map induced by \eqref{eq:fields-to-mch} through the $\shape$-unit.
\end{proof}
Proposition~\ref{prop:mch-shape-fields} shows that Definition~\ref{def:mch} is not a choice. The magnetic charges are precisely the topological content of the fields. Connected components of $\textbf{MCh}(\Sigma,\X)$ are topological sectors, and its morphisms are \textit{large gauge transformations}, the gauge transformations of the fields that survive passing to shape.
\begin{remark}[{$\X$ as a pointed, connected stack}]\label{rmk:x-pointed-connected}
    Given that we will always assume $\Sigma$ is path-connected, that ultimately the constructions depend only on how $\Sigma$ maps into the stacks involved, and that $\Sigma$ always lands in a single connected component of the target space, we will assume that the target stack $\X$ is also path-connected, in the appropriate sense for stacks. Technically, this means we assume $\X$ has a single concordance class (cf. \cite[Example 4.3.50]{Sati_Schreiber_2026}):
    \begin{equation}\label{eq:connected-stack}
        \tau_{\leq 0}\shape \X = *,
    \end{equation}
    where $\tau_{\leq 0}$ is the $0$-truncation as in Definition~\ref{app:def:n-truncation}.
    For instance, if $\X=X$ is some smooth manifold, the requirement becomes $\tau_{\leq 0}\shape X = \pi_0(X)=*$. Similarly, for $\X= \mathbf{B}G$ we have $\tau_{\leq 0}\shape \mathbf{B}G = \tau_{\leq 0}\mathbf{B}\shape G=\pi_0(BG)=*.$

    With this condition, we will also assume an arbitrary choice of basepoint $x: * \to \X$, making $\X$ a \textit{pointed} stack $(\X,x)$ (cf. Example~\ref{app:ex:pointed-slice}), and the $\shape$-unit \eqref{eq:shape-unit} a map of pointed stacks.

    Notice that this makes $\shape \X$ pointed too:
    \begin{equation}
        \eta^{\smallshape}_{\X} \circ x: * \to \X \to \shape \X.
    \end{equation}
    This may be denoted as $\mathsf{x}$ or $\shape x$, depending on whether we want to emphasize it is a point of a discrete stack, or that it is the point coming from a $\shape$-unit.
\end{remark}
Let us spell out a couple of examples to help clarify what kind of information is encoded in Definition~\ref{def:mch}. As in \cite{Perez-Lona:2025xxm}, we work here at the level of $\infty$-groupoids via the global sections \eqref{eq:global-section}.

\begin{example}[Magnetic charges of the circle]\label{ex:magnetic-charges-s1} Let $\X = S^1$. This is the target space for many physically relevant theories, notably the two-dimensional compact boson, which we shall analyze in greater detail in the upcoming \cite{PerezLona:applications}. The homotopy type of the circle $S^1$ is well-known: it corresponds to the Eilenberg--MacLane space $K(\Z,1)$. However, we will often find explicit path-based presentations more useful in the smooth context than the equivalent but purely topological skeletons. Therefore, \textit{at the level of $1$-groupoids}, we have
\begin{equation}
    {\rm ob}\,\shape S^1 = S^1, \quad {\rm mor}\,\shape S^1 = \R\times S^1.
\end{equation}
Then, choosing a good cover $\{U_i\}_{i\in \mathcal{I}}$ for $\Sigma$, an object in (the global sections of) $\textbf{MCh}(\Sigma,S^1)$ consists of locally constant maps
\begin{equation}
    z_i:U_i \to S^1,\quad x_{ij}:U_{ij} \to \R,
\end{equation}
satisfying the condition
\begin{equation}
    z_j = \exp(2\pi \mathrm{i}\,x_{ij})z_i,\quad x_{ik} = x_{ij}+x_{jk}.
\end{equation}  
Similarly, a morphism $y_i:(z_i,x_{ij}) \to (\tilde{z}_i,\tilde{x}_{ij})$ in $\textbf{MCh}(\Sigma,S^1)$ consists of locally constant maps
\begin{equation}
    y_i:U_i \to \R,
\end{equation}
such that
\begin{equation}
\tilde{z}_i = \exp(2\pi \mathrm{i}\,y_i)z_i,  \quad y_j + x_{ij} = \tilde{x}_{ij} + y_i.
\end{equation}
Thus, we see that if we pick a basepoint e.g. $z \in S^1$, any object $(z_i,x_{ij})$ is isomorphic to an object of the form $(z_i=z,m_{ij}\in \Z)$, that is, a principal $\Z$-bundle, as expected. Note that we have not used the fact that $S^1$ can be equipped with a group structure, only that it has a translation action by the additive group $\R$.
\end{example}
\begin{example}[Magnetic charges of Maxwell theory]
Let $\X = \mathbf{B}_{\nabla}U(1)$ be the moduli stack of principal $U(1)$-bundles with connection, which we take to be the gauge fields of Maxwell theory. This can be understood as the first categorification of Example~\ref{ex:magnetic-charges-s1}. The magnetic charges are then
\begin{equation}
    \textbf{MCh}(\Sigma, \X) = \left[\Sigma, \shape \mathbf{B}_{\nabla}U(1)\right]= \left[\Sigma, \mathbf{B}^2\Z\right].
\end{equation}
This describes ordinary integral cohomology on $\Sigma$, and physically these are the Dirac charge-quantized magnetic charges for Maxwell theory.

One should keep in mind the terminology issue highlighted in Footnote~\ref{footnote:topological-vs-magnetic} in Section~\ref{sec:intro}, i.e. that what is physically speaking labeled as the ``magnetic charges'' of a theory depends on the choice of quantization. As mentioned in said footnote, the Freed--Moore--Segal \cite[Section 1]{Freed:2006ya} quantization proposal for Maxwell theory lets $\mathbf{B}^2\Z \times \mathbf{B}^2\Z$ be the classifying space for electric and magnetic fluxes. For this proposal, our Definition~\ref{def:mch} gives the mapping higher groupoid $[\Sigma, \mathbf{B}^2\Z \times \mathbf{B}^2\Z]$ as the higher groupoid of ``magnetic charges'', though physically these include both electric and magnetic charges. Hence, strictly speaking, what \eqref{eq:def-mch} encodes are \textit{topological} charges, but due to the electric/magnetic duality picture the term evokes, which is central to our framework in the precise form of the op-duality between electric symmetries and magnetic co-symmetries established below, we will keep using the widely-used term ``magnetic charges'' despite the unfortunate clash.
\end{example}

The two cases just covered are special in the sense that the shape groupoids are truncated. For a general stack $\X$, however, this is not necessarily the case. Even for $0$-truncated smooth stacks such as $\X = X$ a smooth manifold, the shape $\shape X$ is often a discrete, non-truncated $\infty$-groupoid. This can easily make the construction of magnetic symmetries a difficult problem, and we would like to have better control over what charges are taken into consideration in a consistent way. This situation is analogous to that of electric symmetries, where the discrete higher group $\flat {\rm Aut}(\X)$, which has a canonical action on $\X$, is generally very large, so that in practice we study electric symmetries by constructing \textit{actions} as homomorphisms of discrete higher groups $\rho: \mathsf{G} \to \flat {\rm Aut}(\X)$, which by composing with the canonical action ${\bf ev}: \flat {\rm Aut}(\X) \times \X \to \X$ yields a $\sf G$-action on $\X$. Moreover, this is a universal property, in that any $\sf G$-action on $\X$ factors via a stacky group homomorphism valued in $\flat {\rm Aut}(\X)$. Importantly, $\sf G$ is a general discrete group, not necessarily a stacky subgroup of $\flat {\rm Aut}(\X)$.

In the present case, the first important property of $\shape \X$ is that it is a discrete object that comes with a single pointed morphism from $\X$ \eqref{eq:shape-unit}. This would appear to be at odds with the electric case, where we have a single action morphism ${\bf ev}: \flat {\rm Aut}(\X) \times \X \to \X$ plus a collection of coherences. The crucial point is that a single morphism $\X \to \shape \X$ encodes not an action but a \textit{co-action}, as we now explain. We will first work at the general level of stacks and then specialize to pointed stacks.

In the Cartesian monoidal $\infty$-category $\textbf{H}$, every object $\mathcal Y$ carries a unique co-monoid structure given by the diagonal and the terminal map. A co-module over $\mathcal Y$ is equivalently an object over $\mathcal Y$ (the dual statement of \cite[Proposition 2.4.3.9]{lurie2017higher}). Dually to a $\mathcal G$-action being encoded by a simplicial action groupoid (Definition~\ref{app:def:gp-action}), a co-action of $\mathcal Y$ on $\X$ is encoded by the corresponding cosimplicial object, and is therefore the same as a morphism $\corho: \X \to \mathcal Y$, encoded by the co-module map $(\mathrm{id}_\X,\corho): \X \to \X \times \mathcal Y$. The dual notion of a co-group object is, however, not very useful in this case: with respect to the co-Cartesian structure of \textbf{H}, a counit $\mathcal Y\to\emptyset$ into the initial object exists only for $\mathcal Y\cong\emptyset$, since the initial object of an $\infty$-topos is strict, so the only co-group object is the initial one. Therefore, we must consider general co-monoid objects and co-actions.

This shows that $\X$ has a canonical co-module structure by the $\shape \X$ discrete object and that, furthermore, the involved maps are pointed maps. Thus, generalizing from $\shape \X$ to an arbitrary discrete pointed stack $\sf (Y,y)$, we obtain the equivalent but conceptually useful characterization of \textit{\textbf{magnetic charges as co-actions on the target stack $\X$ by discrete pointed objects $\sf Y$}}. The co-action is in particular a pointed map. We will therefore say that a \textit{higher \textbf{magnetic co-symmetry}} is a map of $\shape$-modal, equivalently discrete, pointed stacks (see Example~\ref{app:ex:pointed-slice} for more on these morphisms):
\begin{equation}
   \corho: (\shape\X,\eta_{\X}^{\smallshape}\circ x) \to (\mathsf{Y},\mathsf{y}),
\end{equation}
(we justify taking $(\shape\X,\eta_{\X}^{\smallshape}\circ x)$ as the domain below in \eqref{eq:shape-flat-univ-equiv}).

From the perspective of the $\sigma$-model defined on $\Sigma$, this is a diagram
\begin{equation}
 \begin{tikzcd}
\Sigma \arrow[rr] \arrow[rrdd] &    & \shape\X \arrow[dd, "\corho"] \arrow[ld, Rightarrow] \\
                               & {} &                                                   \\
                               &    & \mathsf{Y}                                      
\end{tikzcd}.
\end{equation}
This is the op-dual analogue of a higher electric symmetry \eqref{eq:electric-g-symmetry-homomorphism}, \eqref{eq:electric-symmetry-diagram}: a map \textit{out of}, rather than \textit{into}, the universal object. As is clear from comparing to Definition~\ref{def:mch}, a magnetic co-symmetry is nothing but a map of magnetic charges, for which reason we take it not as a definition but as a convention. It is nevertheless its interpretation as the op-dual of an electric symmetry, and its role as the input of the dualization of Section~\ref{ssec:dualization-against-tft}, that provide the fulcrum of our discussion.

\begin{remark}
Mathematically, magnetic co-symmetries are \textit{nonabelian cohomology operations} (see \cite[Section 2]{Fiorenza:2020dcp}).
\end{remark}
\begin{remark}
We will mostly assume $\mathsf{Y}$ is of the form $\mathsf{Y} = \mathbf{B}\mathsf{G}$ for $\mathsf{G}$ some $\shape$-modal group object in \textbf{H}. This is a consequence of requiring $\sf Y$ to be path-connected, $\tau_{\leq 0}\mathsf{Y} = *$. The point $\mathsf{y}$ is thus the essentially unique point $\bullet$ of $\mathbf{B}\mathsf{G}$. This assumption allows us to keep a clear parallel with electric symmetries of non-negative degree. We nevertheless formulate our framework in terms of pointed path-connected stacks (in the sense of \eqref{eq:connected-stack}) without appealing to the equivalent presentation as a delooping of a discrete $\infty$-group. Thus, a choice of basepoint is necessary but can be arbitrary.
\end{remark}
\begin{remark}[Co-monoid stacks vs co-monoid pointed stacks]\label{rmk:pointed-co-monoids}
In the pointed setting $\mathbf H_{*/}$ with the smash product, co-monoid structures are no longer automatic. The reduced diagonal $\mathcal Y\to\mathcal Y\wedge\mathcal Y$ exists, but for $\mathcal Y$ connected the only pointed map $\mathcal Y\to S^0$ to the monoidal unit is null, so no counit is available (the null map fails counitality unless $\mathcal Y\cong *$) and $\mathcal Y$ admits no $\wedge$-co-monoid structure. Objects of the form $\mathcal Y_+$ are co-monoids via the diagonal. Co-$H$-structures, such as those on suspensions, are instead taken with respect to the wedge sum, the coproduct of $\mathbf H_{*/}$. We do not use either here but see Section~\ref{sec:conclusion}.
\end{remark}

The second important property of $\shape \X$ is that it is \textit{universal} among discrete objects co-acting on $\X$. Indeed, given the $\shape \dashv \flat$ adjunction, combined with the fact that for $\sf Y$ discrete we have $\sf Y = \flat Y$, we have that
\begin{equation}\label{eq:shape-flat-univ-equiv}
    \textbf{H}_{*/}(\X, \mathsf{Y}) \cong  \textbf{H}_{*/}(\X, \flat \mathsf{Y}) \cong \textbf{H}_{*/}(\shape \X, \mathsf{Y}).
\end{equation}
These equivalences indicate that any pointed map from $\X$ valued in $(\sf Y,y)$ pointed discrete necessarily factors through the $\shape$-unit \eqref{eq:shape-unit}. This is precisely the analogue of the universal property of $\flat {\rm Aut}(\X)$ in the electric case in Remark~\ref{rmk:terminal-electric}. More precisely, the two universal properties are the two halves of the adjunction $\shape\dashv\flat$: the magnetic object $\shape \X$ arises via the unit of the $\shape$-modality applied to $\X$, while the electric object $\flat\mathrm{Aut}(\X)$ arises via the counit of the $\flat$-comodality applied to $\mathrm{Aut}(\X)$. It is in this sense, together with the formal op-duality between actions and co-actions, that we speak of electric symmetries and magnetic \textit{co}-symmetries as dual. Indeed, in the subcategory of discrete co-actions on $\X$, that is to say, the subcategory of discrete objects $\textbf{H}_{\X /, \mathrm{disc}}$ of the co-slice category (Definition~\ref{app:def:terminal-co-slice}) $\textbf{H}_{\X /}$, the object $(\shape \X, \eta^{\smallshape}_{\X})\in \mathrm{ob}\, \textbf{H}_{\X /, \mathrm{disc}}$ is the \textit{initial} object. For this reason, we will use the symbol $\corho$ for both the map $\shape \X \to \mathsf{Y}$ and $\X \to \mathsf{Y}$ unless more precision is needed. Magnetic symmetries proper are obtained from co-symmetries by a second, contravariant, hom-dualization in Section~\ref{ssec:dualization-against-tft}. See Table~\ref{tab:em-cosym} and Remark~\ref{rmk:variance-magnetic}. 

\begin{definition}[Postnikov magnetic co-symmetries]\label{def:postnikov-magnetic-co-symmetries}
A crucial and, in a sense, canonical example of magnetic co-symmetries is given by the \textit{Postnikov tower} of the discrete object $\shape \X$. This is a filtration induced by the $n$-truncations $\tau_{\leq n}\shape \X$ of $\shape \X$ (see Appendix~\ref{app:sec:structures}, cf. Definition~\ref{app:def:n-truncation})
\begin{equation}\label{eq:n-truncation-filtration}
    \shape \X \to \cdots \to \mathsf{X}_{i+1}\to \mathsf{X}_{i} \to \mathsf{X}_{i-1}\to \cdots \to \mathsf{X}_1 \to \mathsf{X}_0 = \pi_0(\shape \X) \to \mathsf{X}_{-1}=*,
\end{equation}
where we denote $\mathsf{X}_n = \tau_{\leq n} \shape \X$, and the morphisms $\shape \X \to \mathsf{X}_n$ are $(n+1)$-connected (Definition~\ref{app:def:n-truncated-connected-morphism}). Under the assumption of Remark~\ref{rmk:x-pointed-connected}, $\mathsf X_0=*$. These higher groupoids $\mathsf{X}_n$ are discrete. In our conventions, then, the system \eqref{eq:n-truncation-filtration} of higher groupoids describes a system of magnetic co-symmetries $\shape \X \to \mathsf{X}_n$ which we name \textit{Postnikov magnetic co-symmetries}\footnote{For the case of $\X = X$ a topological space, this filtration of magnetic charges appears in \cite{Sheckler:2025rlk}.}.
\end{definition}

An important fact concerning symmetries defined as in \eqref{eq:electric-g-symmetry-homomorphism} is that such a higher group homomorphism need not be injective, which means it may have a nontrivial \textit{kernel}. In category-theoretic terms, the kernel is the homotopy fiber of $\rho$ over the identity automorphism,
\begin{equation}
    \begin{tikzcd}
    \mathsf{K} \arrow[r, "\imath"] \arrow[d] & \mathsf{G} \arrow[d, "\rho"] \\
    * \arrow[r, "e"'] & \flat \mathrm{Aut}(\mathcal{X})
    \arrow["\lrcorner"{anchor=center, pos=0.125}, draw=none, from=1-1, to=2-2]
    \end{tikzcd},
\end{equation}
the homotopy \textit{equalizer} of $\rho$ and the constant map at $e$. This means that $\sf K$ is a \textit{trivially-acting subsymmetry.} Such setups play a crucial role in discussions of generalized symmetries, specifically related to \textit{decomposition} \cite{Pantev:2005rh,Pantev:2005wj,Pantev:2005zs,Hellerman:2006zs} (see \cite{Sharpe:2022ene} for an overview).

The correspondence between electric symmetries and magnetic co-symmetries then suggests a notion of \textit{trivially}-\textit{co}-acting quotient co-symmetry, described by the dual construction, a homotopy \textit{cofiber}.
\begin{definition}[Trivially-co-acting magnetic co-symmetry]\label{def:trivially-acting-magnetic-co-symmetry}
Let $\corho: (\shape\X,\eta_{\X}^{\smallshape}\circ x) \to (\mathsf{Y},\mathsf{y})$ be a magnetic co-symmetry. Its \textit{cokernel} is the homotopy pushout in $\textbf{H}_{*/}$
\begin{equation}
    \begin{tikzcd}
    \shape\X \arrow[r, "\corho"] \arrow[d] & \mathsf{Y} \arrow[d, "q"] \\
    * \arrow[r, "\mathsf{y}'"'] & \mathsf{C}:=\cofib(\corho)
    \arrow["\ulcorner"{anchor=center, pos=0.125, rotate=180}, draw=none, from=2-2, to=1-1]
    \end{tikzcd},
\end{equation}
which is again discrete, since $\mathrm{Dsc}$ preserves colimits. We call the composite
\begin{equation}
    q\circ\corho:\ \shape\X \longrightarrow \mathsf{C}
\end{equation}
the associated \textit{trivially-co-acting magnetic co-symmetry}. In particular, the pushout square gives a canonical null-homotopy $q\circ\corho\cong(\shape\X\to *\xrightarrow{\mathsf y'}\mathsf C)$, so that the co-action
\begin{equation}
    \X \xrightarrow{(\mathrm{id}_{\X}, \corho)} \X \times \mathsf{Y} \xrightarrow{\mathrm{id}_{\X}\times q} \X \times \mathsf{C}
\end{equation}
factors through $\X \times *$, canonically up to homotopy.
\end{definition}

{
We summarize the correspondence between electric symmetries and magnetic co-symmetries in Table~\ref{tab:em-cosym}. Its last column anticipates the magnetic symmetries of Section~\ref{ssec:dualization-against-tft}.
\begin{table}[htbp]
    \centering
    \small 
    \setlength{\tabcolsep}{4pt} 
    \renewcommand{\arraystretch}{1.3} 
    \begin{tabularx}{\textwidth}{|>{\raggedright\arraybackslash}p{2.3cm}|X|X|X|}
    \hline
    \textbf{Property} & \textbf{Electric symmetry} & \textbf{Magnetic co-symmetry} & \textbf{Magnetic symmetry} \\ 
    \hline
    Definition & 
    action $\mathsf{G}\times \X \to \X$ & 
    co-action $\X \to \X\times\mathsf{Y}$ & 
    module up to bulk TFT's (Remark~\ref{rmk:co-module-dualized-module}) \\ 
    \hline
    Universal object & 
    terminal $\flat {\rm Aut}(\X)$ & 
    initial $\shape \X$ & 
    $[\shape\X,\T]_{\T_0}$, receiving all $[\mathsf Y,\T]_{\T_0}$ (Remark~\ref{rmk:variance-magnetic}) \\ 
    \hline
    Equivalent def. & 
    group homomorphism $\mathsf{G} \to \flat {\rm Aut}(\X)$ & 
    pointed morphism $\shape \X \to \mathsf{Y}$ & 
    $[\corho,\T]_{\T_0}\colon[\mathsf Y,\T]_{\T_0} \to [\shape\X,\T]_{\T_0}$ \\ 
    \hline
    Triviality & 
    $\mathsf{K} \to * \to \flat {\rm Aut}(\X)$ & 
        $q\circ\corho\cong *$ for $q\colon \mathsf{Y}\to\mathsf{C}=\cofib(\corho)$  & 
    factors through $\T_0$ (Example~\ref{ex:trivially-coacting-backgrounds}) \\ 
    \hline
    \end{tabularx}
    \caption{Correspondence between electric symmetries, magnetic co-symmetries and magnetic symmetries. The first two columns are op-dual (Section~\ref{ssec:magnetic-cosymmetries}). The third is obtained from the second by the contravariant dualization $[-,\T]_{\T_0}$ (Section~\ref{ssec:dualization-against-tft}), which reverses variance a second time.}
    \label{tab:em-cosym}
\end{table}

The same correspondence indicates how gauging is to be treated on the magnetic side. Gauging
an electric symmetry replaces the action $\mathsf G\times\X\to\X$ by the homotopy colimit
$\X\dslash\mathcal G$ \eqref{eq:electric-gauged-fields} of the simplicial object it generates,
and dualizing the corresponding operation on the co-action $\corho:\shape\X\to\mathsf Y$ trades
this homotopy colimit for a homotopy limit, so that \textit{gaugings of magnetic co-symmetries are
constructed as homotopy limits}. We develop this in \cite{PerezLona:gauging}. In
the present article, by contrast, we only track the effect electric gauging has on magnetic
symmetries themselves (cf. the fibration theorem of
Section~\ref{ssec:magnetic-symmetries-of-electric-gauged-theories}).

\subsection{\texorpdfstring{Electric action on magnetic charges via $\infty$-mapping class group}{Electric action on magnetic charges via infinity-mapping class group}}\label{ssec:infinity-mapping-class-group}

So far, we have established a close correspondence between electric symmetries and their actions, and magnetic co-symmetries and their co-actions. In this section, we take a first step towards describing their interaction. We leave a more detailed unifying treatment for future work.

The property we would like to describe is the canonical action of electric symmetries on magnetic charges. It will suffice to work with $\flat{\rm Aut}(\X)$ and $\shape \X$ when describing this action, given the universal property each satisfies. The action is a special case of the following known result.

\begin{remark}[{$\shape$ preserves action groupoids \cite[Remark 4.3.4.ii]{Sati_Schreiber_2026}}]\label{rmk:shape-groupoid}
Let $\G\in\mathrm{ob}\,\Grp(\textbf{H})$ and let $\left((\X\dslash\G)_\bullet,p_\bullet\right)$ be a coherent action of $\G$
on $\X\in \mathrm{ob} \, \textbf{H}$. It can be shown that
\begin{enumerate}[nosep]
\item[(i)] $\shape \circ(\B\G)_\bullet$ is a group object in $\textbf{H}$ with underlying object $\shape\G$. We
denote it $\left(\B \shape\G\right)_\bullet:=\shape\circ(\B\G)_\bullet$,
\item[(ii)] $\left(\shape \circ(\X\dslash\G)_\bullet,\ \shape \ast p_\bullet\right)$ is a coherent action of
$\shape\G$ on $\shape\X$, where $\shape\ast p_\bullet$ is the whiskering of $p_\bullet$ by $\shape$. We denote it
$\left(\shape\X\dslash \shape\G\right)_\bullet:=\shape\circ(\X\dslash\G)_\bullet$,
\item[(iii)] $\left(\shape\X\dslash \shape\G\right)_n\cong \shape\X\times\left(\shape\G\right)^{n}$, compatibly
with the simplicial structure.
\end{enumerate}

Briefly, all the simplicial objects in question are obtained by postcomposition with $\shape$. The whiskering $\shape\ast
p_\bullet$ is a natural transformation between the simplicial objects, and no coherence data has to be
constructed. The cohesion axioms (which in particular imply that $\shape$ preserves finite products and the groupoid pullback squares \eqref{app:eq:segal}) are then used to prove that the resulting squares are pullbacks as required by the definition of each object in question. 
\end{remark}

\begin{definition}[{$\infty$-Mapping class group}]\label{def:infty-mcg}
    For $\X \in \mathrm{ob\,}\textbf{H}$, we call the discrete group object $\shape \mathrm{Aut}(\X)$ the $\infty$\textit{-mapping class group} of $\X$.

    The $\infty$-mapping class group $\shape {\rm Aut}(\X)$ canonically acts on the shape $\shape \X$ of $\X$. This follows from applying the result quoted in Remark~\ref{rmk:shape-groupoid} to the canonical evaluation action map $\mathbf{ev}: \mathrm{Aut}(\X) \times \X \to \X$ as
\begin{equation}
    \shape \mathbf{ev}: (\shape \mathrm{Aut}(\X)) \times (\shape \X) \to \shape \X. \label{eq:mcg-canonical-action}
\end{equation}
\end{definition}
\begin{remark}
    The name $\infty$\textit{-mapping class group} in Definition~\ref{def:infty-mcg} refers to the fact that $\shape \mathrm{Aut}(\X)$ is a generalization of the classical mapping class group (e.g. \cite{FarbMargalit2012}) of smooth manifolds. For $\X = X$ a compact smooth manifold, the $0$-truncation of $\shape \mathrm{Aut}(\X)$ recovers the mapping class group $\mathrm{MCG}(X)$ of $X$
    \begin{equation}
        \tau_{\leq 0}\shape \mathrm{Aut}(\X) = \pi_0{\rm Diff}(X) = \mathrm{MCG}(X).
    \end{equation}
    Here $\mathrm{Aut}(\X)$ is the diffeological diffeomorphism group, whose $U$-points are smooth
    $U$-families of diffeomorphisms with smooth inverses. Since $\pi_0$ of the shape of a sheaf on
    $\mathrm{CartSp}$ is its set of concordance classes \cite[Example~4.3.50]{Sati_Schreiber_2026}, $\tau_{\leq 0}\shape \mathrm{Aut}(\X)$ is the set of smooth isotopy classes
    of diffeomorphisms. For $X$ compact this agrees with $\pi_0\mathrm{Diff}(X)$ in the
    $C^\infty$ topology, since continuous isotopies can be smoothed. We use the convention
    $\mathrm{MCG}(X) = \pi_0\mathrm{Diff}(X)$ (no orientation constraint).
\end{remark}
\begin{corollary}[Electric symmetries act on magnetic charges]
 On an $\X$-valued $\sigma$-model, any electric symmetry \eqref{eq:electric-g-symmetry-homomorphism}
 \begin{equation}
     \rho: \mathsf{G} \to \flat \mathrm{Aut}(\X),
 \end{equation}
 and more generally any action
 \begin{equation}
     \hat{\rho}: \mathcal{G} \to \mathrm{Aut}(\X)
 \end{equation}
 of a group object (Definition~\ref{app:def:gp-action}), acts on the magnetic charges.
\end{corollary}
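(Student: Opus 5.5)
The plan is to reduce everything to the canonical action \eqref{eq:mcg-canonical-action} of the $\infty$-mapping class group from Definition~\ref{def:infty-mcg}. First I treat the general case of an action $\hat\rho\colon\mathcal G\to\mathrm{Aut}(\X)$, viewed as a coherent action groupoid $\left((\X\dslash\mathcal G)_\bullet,p_\bullet\right)$ (Definition~\ref{app:def:gp-action}). Remark~\ref{rmk:shape-groupoid} applies directly: postcomposing with $\shape$ gives a coherent action of the discrete group $\shape\mathcal G$ on $\shape\X$, with $n$-simplices $\shape\X\times(\shape\mathcal G)^n$. Equivalently, delooping $\hat\rho$ to $\mathbf B\mathcal G\to\mathbf B\mathrm{Aut}(\X)$ and applying $\shape$, which commutes with $\mathbf B$ by Remark~\ref{rmk:shape-groupoid}(i), yields a homomorphism $\shape\hat\rho\colon\shape\mathcal G\to\shape\mathrm{Aut}(\X)$. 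This factors the induced action through the canonical one. Restricting along the group homomorphism $\eta^{\smallshape}_{\mathcal G}\colon\mathcal G\to\shape\mathcal G$ gives an action of $\mathcal G$ itself on $\shape\X$. The unit is natural, so $\eta^{\smallshape}_{\X}$ is $\mathcal G$-equivariant.

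Next I pass to magnetic charges. Applying the internal hom $[\Sigma,-]$ to the simplicial objects preserves finite products and hence the Segal pullback conditions. This gives an action of $[\Sigma,\mathcal G]$, and so of the local symmetries, on $\textbf{MCh}(\Sigma,\X)=[\Sigma,\shape\X]$ (Definition~\ref{def:mch}). By construction it is compatible with \eqref{eq:fields-to-mch}. If one prefers the statement at the level of shapes of field stacks, one can invoke Proposition~\ref{prop:mch-shape-fields}: applying $\shape$ to the action \eqref{eq:sym-action} gives the same action via the identification \eqref{eq:mch-shape-fields}.

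The electric case is then a specialization. For $\rho\colon\mathsf G\to\flat\mathrm{Aut}(\X)$, compose with the counit $\epsilon^\flat\colon\flat\mathrm{Aut}(\X)\to\mathrm{Aut}(\X)$, which is a group homomorphism as noted in Section~\ref{ssec:review-electric}, and apply the general case. Because $\mathsf G$ is discrete, $\shape\mathsf G\cong\mathsf G$. Hence the resulting homomorphism $\mathsf G\to\shape\mathrm{Aut}(\X)$ lands directly in the $\infty$-mapping class group, and no restriction along a unit is needed.

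I expect the only real obstacle to be the coherence of the induced action, that is, checking that the shaped simplicial object still satisfies the groupoid pullback conditions. This is exactly what Remark~\ref{rmk:shape-groupoid} supplies, using that $\shape$ preserves finite products. The remaining step is to confirm that restriction along a group homomorphism preserves these conditions, which is routine.
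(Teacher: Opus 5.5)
Your proposal is correct and follows essentially the same route as the paper: both reduce to the canonical action \eqref{eq:mcg-canonical-action} of $\shape\mathrm{Aut}(\X)$ on $\shape\X$ through a group homomorphism into $\shape\mathrm{Aut}(\X)$, and then apply $[\Sigma,-]$. The only difference is cosmetic. The paper precomposes $\Omega\eta^{\smallshape}_{\mathbf B\mathrm{Aut}(\X)}$ (resp.\ the points-to-pieces transform) with $\hat\rho$ (resp.\ $\rho$), whereas you use $\shape\hat\rho\circ\eta^{\smallshape}_{\mathcal G}$, which is the same homomorphism by naturality of the $\shape$-unit.
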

\begin{proof}
    The $\shape$-unit of $\mathbf{B}\mathrm{Aut}(\X)$
    \begin{equation}
        \eta^{\smallshape}_{\mathbf{B}\mathrm{Aut}(\X)}:\mathbf{B}\mathrm{Aut}(\X) \to \shape \mathbf{B}\mathrm{Aut}(\X) \cong \mathbf{B}\shape \mathrm{Aut}(\X)
    \end{equation}
    encodes a group object homomorphism $\Omega\eta^{\smallshape}_{\mathbf{B}\mathrm{Aut}(\X)}: \mathrm{Aut}(\X) \to \shape \mathrm{Aut}(\X)$. The action of $\mathrm{Aut}(\X)$ on $\shape \X$ is obtained by composing this group object homomorphism with the canonical action \eqref{eq:mcg-canonical-action} of Definition~\ref{def:infty-mcg}.

    For $\flat \mathrm{Aut}(\X)$, we also have a group object homomorphism induced by the points-to-pieces transform (Definition~\ref{app:def:pts-to-piec})
    \begin{equation}
        \mathbf{B}\flat \mathrm{Aut}(\X)\cong  \flat \mathbf{B}\mathrm{Aut}(\X) \xrightarrow{\epsilon_{\mathbf{B}\mathrm{Aut}(\X)}^{\flat}} \mathbf{B}\mathrm{Aut}(\X) \xrightarrow{\eta^{\smallshape}_{\mathbf{B}\mathrm{Aut}(\X)}} \shape \mathbf{B}\mathrm{Aut}(\X) \cong \mathbf{B}\shape \mathrm{Aut}(\X).
    \end{equation}
    Applying the internal hom functor $[\Sigma,-]$ yields the actions
    \begin{equation}
        \textbf{Sym}_e(\Sigma,\X)\times \textbf{MCh}(\Sigma,\X) \to \textbf{MCh}(\Sigma,\X), \quad [\Sigma,\mathrm{Aut}(\X)] \times \textbf{MCh}(\Sigma,\X) \to \textbf{MCh}(\Sigma,\X).
    \end{equation}
    These in turn induce the claimed actions by further composing with $[\Sigma,\rho]$ and $[\Sigma,\hat{\rho}]$, respectively.
\end{proof}
\begin{example}[{Electric $\Z_2$-action on magnetic charges for $\X=S^1$}]\label{ex:z2-electric}
We continue with $\X = S^1$ from Example~\ref{ex:magnetic-charges-s1}. Usually, $S^1$ is equipped with a metric $g$, so that the electric symmetries are classified by
\begin{equation}
   \flat  {\rm Isom}(S^1,g)\leq \flat {\rm Aut}(\X), \quad \flat{\rm Isom}(S^1,g) \cong \flat U(1) \rtimes \Z_2,
\end{equation}
and the magnetic charges are valued in $\shape S^1 = \mathbf{B}\Z$.
The points-to-pieces transform is
\begin{equation}
    \flat U(1) \rtimes \Z_2 \to U(1) \rtimes \Z_2 \to \mathbf{B}\Z \rtimes \Z_2.
\end{equation}
Thus, the action of $\Z_2$ on $\mathbf{B}\Z$ is the delooping 
\begin{equation}
    \mathbf{B}r: \mathbf{B}\Z \to \mathbf{B}\Z
\end{equation}
of the $\Z_2$-automorphism of the integers
\begin{equation}
   r: \Z \to \Z, \quad r(1) = -1. 
\end{equation}
By contrast, $\flat U(1)$ acts trivially on the homotopy groups, but the induced action on $\shape S^1$ is still nontrivial. The map $\mathbf B\flat U(1)\to \mathbf B\mathrm{Aut}(\mathbf B\Z)$ is determined by $c_1$.
\end{example}
\begin{remark}[Mapping class group action]
In \cite[Section 3.4]{Sheckler:2025rlk}, it is observed, for the case of $\X = X$ a smooth manifold, that the mapping class group $\mathrm{MCG}(X)$ acts on the lowest nontrivial homotopy group $\pi_k(X)$ of $X$ by combining the canonical action of $\mathrm{MCG}(X)$ on the \textit{homology} groups $H_{\bullet}(X)$ and the Hurewicz isomorphism $\pi_k(X) \to H_k(X)$. One would hope this action extends to all homotopy groups. However, ${\rm MCG}(X) = \tau_{\leq 0}(\shape \mathrm{Aut}(X))$, so in general we only have the truncation morphism
\begin{equation}
    f: \shape \mathrm{Aut}(X) \to {\rm MCG}(X),
\end{equation}
which cannot be used to induce a \textit{coherent} action of ${\rm MCG}(X)$ on $\shape X$, namely an action on the homotopy groups $\pi_k(X)$ compatible with their extension data, unless there exists a section for $f$. Thus, the precise statement is that $\shape \mathrm{Aut}(X)$ acts on $\shape X$, not simply $\mathrm{MCG}(X)$ by itself.
\end{remark}
\begin{remark}[Non-exhaustive canonical action and smooth Oka principle]\label{rmk:oka-principle}
    The reader should note that what we have constructed in this section are actions induced by the canonical evaluation action
    \begin{equation}
        \mathbf{ev}:\mathrm{Aut}(\X) \times \X \to \X.
    \end{equation}
    This, however, should not be interpreted as saying that every action on the magnetic charges, which are in full generality described by group object homomorphisms $\mathcal{G} \to \mathrm{Aut}(\shape \X)$ valued in the automorphism group object of $\shape \X$, is describable in this form. For this to be the case, a necessary condition is that
    \begin{equation}\label{eq:oka}
        \shape [\X,\X] \overset{?}{\cong} [\shape \X, \shape \X],
    \end{equation}
    even before taking into account invertibility conditions. The equivalence \eqref{eq:oka} is called the \textit{smooth Oka principle}\footnote{We thank Urs Schreiber for bringing this to our attention.} \cite{Pavlov2014,BerwickEvans2024}, \cite[Theorem 4.3.54]{Sati_Schreiber_2026}, and can be shown to hold in general if one requires $\X = X$ to be a smooth manifold. It also underlies Proposition~\ref{prop:mch-shape-fields}. Dropping this condition allows one to construct counterexamples \cite[Section 5.2]{Sati_Schreiber_2026}.

    Even for $\X=X$ a smooth manifold, the sought statement
    \begin{equation}
        \shape \mathrm{Aut}(\X) \overset{?}{\cong} \mathrm{Aut}(\shape \X),
    \end{equation}
    is not true in general. For instance, for $n\geq 5$ one has $\pi_0\mathrm{Diff}^+(S^n) \cong \Theta_{n+1}$,
    the group of homotopy $(n+1)$-spheres, e.g. $\Z_{28}$ for $n=6$, whereas every orientation-preserving
    self-homotopy equivalence of $S^n$ is homotopic to the identity \cite{kervaire-milnor1963,Cerf1970}.
\end{remark}

We conclude with the following remark concerning a further relation between electric symmetries and magnetic charges.

\begin{remark}
    Given a choice of basepoint $\mathsf x: * \to \shape \X$, one can define a co-action of magnetic charges on global electric symmetries
    \begin{equation}
    \corho: \flat\mathrm{Aut}(\X)\cong \flat\mathrm{Aut}(\X) \times * \xrightarrow{\mathrm{id}_{\flat\mathrm{Aut}(\X)}\times \mathsf x} \flat \mathrm{Aut}(\X) \times \shape \X \xrightarrow{(\mathrm{pr}_{\flat\mathrm{Aut}(\X)}, \alpha)} \flat \mathrm{Aut}(\X) \times \shape \X,
\end{equation}
where $\alpha$ is the action of $\flat \mathrm{Aut}(\X)$ on $\shape \X$. In words, this corresponds to the orbits of $\mathsf x$ under $\flat \mathrm{Aut}(\X)$. One can similarly use the orbit of $\mathsf x$ to define a co-action on the full automorphism group
\begin{equation}
    \hat{\corho}: \mathrm{Aut}(\X) \to \mathrm{Aut}(\X) \times \shape \X.
\end{equation}
This co-action will not play an important role in this paper but we will explore it in future work concerning a unified treatment of electric and magnetic symmetries.
\end{remark}

\section{Dualization: magnetic symmetries}\label{sec:magnetic-symmetries}

Having introduced magnetic co-symmetries as the higher-geometric replacement for magnetic charges in Section~\ref{sec:symmetries-and-cosymmetries}, we now turn to the magnetic symmetries proper. We obtain them by \textit{dualizing} co-symmetries against a chosen dualizing object $\T$ of TFT's, in the same sense as forming a Pontryagin or a linear dual. This reverses variance a second time. Magnetic co-symmetries are maps out of $\shape\X$, whereas the resulting magnetic symmetry backgrounds are maps into $[\shape\X,\T]$, like electric symmetries (Remark~\ref{rmk:variance-magnetic}).

In Section~\ref{ssec:review-magnetic}, we review magnetic, or solitonic, symmetries as stacking with a TFT built from magnetic charges, and compare algebraic and geometric generalizations of this construction. In Section~\ref{ssec:dualization-against-tft}, we introduce our dualization framework for a moduli stack of TFT's $\T$ (Definition~\ref{def:magnetic-backgrounds}). Our main example is $\T=\Bnab{d}$, the moduli stack of higher $U(1)$-gerbes with connection, which underlies the higher Pontryagin magnetic symmetries of Section~\ref{sec:higher-pontryagin-symmetries}.

In Section~\ref{ssec:postnikov-towers}, we reconstruct magnetic symmetry backgrounds stage by stage along the Postnikov tower of $\shape\X$. Finally, in Section~\ref{ssec:magnetic-symmetries-of-electric-gauged-theories}, we prove a fibration theorem (Theorem~\ref{thm:descent-magnetic-symmetries}) that builds the magnetic symmetries of an electrically gauged theory out of those of the ungauged theory and of the gauge group, together with a homotopy-fixed-point spectral sequence with an explicit $E_1$ page. We illustrate it with ${\rm Pin}^{\pm}(2)$ gauge theory in Section~\ref{sssec:pin2}.

\subsection{Characterizing generalized magnetic symmetries}\label{ssec:review-magnetic}

In this section, we briefly review the concept of magnetic, or solitonic, symmetries, based on both the more standard treatment exemplified in e.g. \cite{Brennan:2023mmt} as well as the more particular treatments presented in \cite{Chen:2023czk,Sheckler:2025rlk}. Central to this is the notion of \textit{stacking} or \textit{tensor} operation of quantum field theories, as we will explain.

Let us first describe the traditional, topological picture. On a $\sigma$-model valued in a space $X$, the fields are given by the mapping space ${\rm Map}(\Sigma,X)$. For $\pi_n(X)$ the first nontrivial homotopy group of $X$ (here assumed to be abelian for simplicity), physically called a \textit{charge group}, there is a Postnikov map $X \to K(\pi_n(X),n)$ to an Eilenberg--MacLane space. This induces a map from the collection of fields to the mapping space of topological configurations:
\begin{equation}\label{eq:map-to-truncation}
    \mathrm{Map}(\Sigma,X) \to {\rm Map}\left(\Sigma, K(\pi_n(X),n)\right).
\end{equation}
Physically, this mapping space describes the collection of degree-$n$ topological configurations on the theory on $\Sigma$. Passing to homotopy equivalence classes gives the cohomological characterization
\begin{equation}
    {\rm Map}\left(\Sigma, K(\pi_n(X),n)\right)_{/\cong} = H^n_{\rm sing}(\Sigma, \pi_n(X)).
\end{equation}
Let $\widehat{\pi_n(X)}:= {\rm Hom}(\pi_n(X),U(1))$ denote the \textit{Pontryagin dual} of $\pi_n(X)$. One then defines classes of \textit{magnetic $(d-n-1)$-form symmetry backgrounds}, each of which is a cocycle
\begin{equation}\label{eq:a-topological}
    b: \Sigma \to K(\widehat{\pi_n(X)},d-n)\to {\rm Map}\left(K(\pi_n(X),n),K(U(1),d)\right),
\end{equation}
(where $K(\widehat{\pi_n(X)},d-n)\to {\rm Map}\left(K(\pi_n(X),n),K(U(1),d)\right)$ is adjoint to the cup-product pairing \eqref{eq:cup-product-stacks}), which defines a class $[b]\in H^{d-n}_{\rm sing}(\Sigma,\widehat{\pi_n(X)}) = {\rm Map}\left(\Sigma, K(\widehat{\pi_n(X)},d-n)\right)_{/\cong}$, where $\widehat{\pi_n(X)}$ is understood with its discrete topology. This is what makes the identification with singular cohomology hold. For $\pi_n(X)\cong\Z$, this is the circle as a \textit{discrete} group (a discrete torus
for $\pi_n(X)$ free of higher rank). This classical construction, forming $\mathrm{Hom}(-,U(1))$, motivates our use of \textit{dual} and \textit{dualization} for the more general mapping stacks introduced below. Now, by definition of Pontryagin duality, there is a non-degenerate bilinear evaluation pairing
\begin{equation}
    \mathrm{ev}:\widehat{\pi_n(X)}\times \pi_n(X) \to U(1),
\end{equation}
which induces, together with the \textit{cup product}, a \textit{cohomology operation}
\begin{equation}\label{eq:cohomology-cup-product}
     H^{d-n}_{\rm sing}(\Sigma,\widehat{\pi_n(X)}) \otimes H^n_{\rm sing}(\Sigma,\pi_n(X))\xrightarrow{\smallsmile} H^d_{\rm sing}(\Sigma, \widehat{\pi_n(X)}\otimes \pi_n(X)) \xrightarrow{\mathrm{ev}^*} H^d_{\rm sing}(\Sigma,U(1)).
\end{equation}
Altogether, we see that a class of magnetic symmetry backgrounds $[b]\in H^{d-n}_{\rm sing}(\Sigma,\widehat{\pi_n(X)})$ defines an equivalence class of maps which on the fields becomes
\begin{equation}\label{eq:coh-fields-top}
    [b\smallsmile -]: \mathrm{Map}(\Sigma,X) \to H^d_{\rm sing}(\Sigma,U(1)).
\end{equation}
Integrating the $U(1)$ class on $\Sigma$ uses the isomorphism $H^d_{\rm sing}(\Sigma,U(1))\cong U(1)$ for $\Sigma$ closed, oriented and connected, and yields a $U(1)$ phase, which is commonly thought of as the \textit{phase weight} of the partition function of a quantum field theory (QFT) on $\Sigma$. This is integration of \textit{flat} $U(1)$-cohomology, with the flatness built into the topological picture through its discrete coefficients. Furthermore, because this process is independent of any metric defined on $\Sigma$, this is more specifically interpreted as the phase contribution to the partition function of a topological field theory (TFT) on $\Sigma$. 

This helps us describe, at the level of partition functions, the concept of \textit{stacking}, or \textit{coupling}, a TFT with a QFT, at least when the latter admits a path integral formulation. This notion is key to understanding in what sense magnetic symmetries act on QFT's at all.

Since a path-integral QFT on $\Sigma$ has its own phase assignment to each field
\begin{equation}
    \mathfrak{D}: \mathrm{Map}(\Sigma,X) \to U(1),
\end{equation}
a magnetic symmetry background class $[b]$ can be used to define new phases
\begin{equation}\label{eq:dotimesa-partition}
    \mathfrak{D}\otimes [b\smallsmile-]: \mathrm{Map}(\Sigma,X) \xrightarrow{\left(\mathfrak{D},\ \int_{\Sigma}[b\smallsmile -]\right)} U(1) \times U(1) \xrightarrow{\times} U(1).
\end{equation}
Stacking, at the level of partition functions, is thereby observed as a twist of the QFT partition function contributions by the $U(1)$ topological phases. Because the modification we introduced is purely topological and defined everywhere on $\Sigma$, it is thought to be implementable by a spacetime-filling topological defect, postulated in \cite{Gaiotto:2014kfa} to be equivalent to an action of a symmetry operator.\footnote{Differential refinements of this topological picture have been discussed e.g. in \cite{Cordova:2019jnf,Cordova:2019uob}.}
 As the notation in Equation~\eqref{eq:dotimesa-partition} suggests, however, \textit{the stacking process is actually thought of as a tensoring operation of full quantum field theories}, therefore giving rise to modifications not only of partition functions but of Hilbert spaces and higher-codimension information. In the present case, for $n=1$ and $\pi_1(X)$ finite, for example, the full TFT that is tensored with the QFT is a $d$-dimensional (bulk) \textit{Dijkgraaf--Witten TFT} for $\pi_1(X)$ \cite{Dijkgraaf1990, Freed1993, Wakui1992}, whose twist is produced by the pairing operation with the magnetic symmetry background.

A common example exhibiting magnetic symmetries is pure $U(1)$ gauge theory in $d$ dimensions. At the topological level, this theory may be thought of as a $\sigma$-model valued in the classifying space $BU(1)\cong K(\Z,2)$. The only nontrivial homotopy group is $\pi_2(BU(1))=\pi_1(U(1))=\Z$, whose Pontryagin dual is $\widehat{\pi_2(BU(1))}=\mathrm{Hom}(\Z,U(1))=U(1)$, a $(d-3)$-form symmetry. Another familiar example is that of the \textit{simply-connected} Lie group $SU(n)$, where one has a $(d-5)$-form magnetic symmetry $\widehat{\pi_4(BSU(n))}\cong \widehat{\pi_3(SU(n))}\cong \mathrm{Hom}(\Z,U(1))=U(1)$ \cite{hatcher2002algebraic}. In $d=2$ and $d=4$, respectively, the resulting symmetries are associated with instanton topological configurations, and the corresponding TFT's are the $\theta$-terms $\theta\int c_1$ and $\theta\int c_2$, where $c_1$ and $c_2$ are the first and second Chern classes of the associated bundle, whose integrals count the instanton number. The actual Lagrangians which describe these symmetries come from Chern--Weil theory, which represents $c_1$ and $c_2$ by the curvature densities $\tfrac{i}{2\pi}\mathrm{Tr}(F(A))$ and $\tfrac{1}{8\pi^2}\mathrm{Tr}(F(A)\wedge F(A))$ of a connection $A$ on the gauge bundle, respectively. This differential-geometric expression requires refining the magnetic \textit{charges} rather than the dualizing target, which already classifies gerbes with connection (cf. Section~\ref{ssec:dualization-against-tft}). We treat this refinement in \cite{PerezLona:gauging}.

Now, the case of Pontryagin duality is a special one, given that it is constructed from $U(1)$-valued cocycles, which are \textit{abelian} and \textit{invertible}, and also because it only takes into account one homotopy group, namely the nontrivial group of lowest degree. Part of what makes generalized symmetries fruitful, however, is that they are not restricted to map to $U(1)$ nor to take into account topological defects of a single degree. Thus, to capture the full spectrum of generalized symmetries that drop these requirements, we must allow for stacking with general TFT's that depend on potentially all the existing magnetic charges.

With these caveats, here we identify the defining property of magnetic symmetries in broad generality. Succinctly, \textit{\textbf{magnetic symmetry backgrounds canonically combine with magnetic charges underlying the fields to produce TFT's. The stacking/tensoring/coupling of these TFT's with the original QFT is the magnetic symmetry action.}} 

Let us elaborate on the two-fold generalization of the single-degree Pontryagin scenario we described to obtain the defining property above. As noted in \cite{Chen:2022cyw}, the Pontryagin duality mechanism is sometimes assumed to work not only for the lowest nontrivial homotopy group $\pi_n(X)$, but for the higher homotopy groups as well (see references in op.\ cit.). Namely, it is assumed that degree-$k$ topological configurations of the $d$-dimensional $\sigma$-model are captured by maps $\Sigma \to K(\pi_k(X),k)$ and that these combine with maps $\Sigma \to K(\widehat{\pi_k(X)},d-k)$, the magnetic symmetry backgrounds for a magnetic $\widehat{\pi_k(X)}$ higher-form symmetry \cite{Brennan:2023mmt}, to yield $U(1)$ phase weights that twist the partition function and, at the level of full quantum field theories, a TFT that is stacked with the $\sigma$-model QFT. As is observed in the previous section (cf. \cite{Chen:2023czk,Sheckler:2025rlk}), however, the correct description of magnetic charges as a whole requires studying not the homotopy groups $\pi_k(X)$ in isolation but together with their extension data, i.e. the homotopy type of $X$. Hence, the topological defect/magnetic symmetry background pairing process that yields a TFT to stack the QFT with needs to be constructed in a way that is compatible with the structure of the homotopy type, not just the homotopy group(s).

That magnetic symmetries depend on more than the homotopy groups of the target in isolation has also been observed in \cite{Pace:2023mdo}, from a rather different direction. Working with singularity-free nonlinear $\sigma$-models with target $K$, the authors identify two obstructions to the existence of the naive invertible Pontryagin-dual higher group built from a Postnikov stage of $K$. These are a nontrivial action of $\pi_1(K)$ on the higher homotopy groups, illustrated by $K=\mathbb{RP}^2$, and a Postnikov invariant that is not a stable cohomology operation, illustrated by $K=S^2$ \cite[Section IV]{Pace:2023mdo}. For $K=S^2$, the resulting non-invertible hopfion symmetry was first found in \cite{Chen:2022cyw}, and its non-invertibility was traced to the Postnikov structure of $S^2$ in \cite[Section 5.2.2]{Chen:2023czk}. In either case, they conclude that the resulting symmetry ought to be non-invertible. Our approach is complementary to theirs in two respects.

First, their criteria indicate whether an invertible Pontryagin-dual symmetry exists at all, whereas our construction computes the invertible part that survives. In Section~\ref{sssec:pin2}, we dualize $\X^{\pm}=\mathbf{B}_{\nabla}\Pin^{\pm}(2)$ against $\Bnab{d}$. Both targets $K=B\Pin^{\pm}(2)$ realize the $\pi_1$-action obstruction, since $\pi_1(K)=\pi_0(\Pin^{\pm}(2))=\Z_2$ acts by inversion on $\pi_2(K)=\pi_1(\Pin^{\pm}(2))\cong \Z$ for either sign. Nevertheless, we find surviving magnetic $\Z_2$ symmetries for $\Pin^+(2)$ that are completely obstructed for $\Pin^-(2)$. A criterion phrased solely in terms of the $\pi_1$-action cannot see this distinction. It is also easy to overlook, since $\mathbb{RP}^2$ and $BO(2)=B\Pin^+(2)$ have the same $\pi_1$, $\pi_2$ and $\pi_1$-action, yet different $2$-types. Indeed, $H_2(\mathbb{RP}^2,\Z)=0$ while $H_2(BO(2),\Z)=\Z_2$ (Proposition~\ref{pin:Hplus}). In fact, $\mathbb{RP}^2\to B\Pin^-(2)$ is $3$-connected, so $\mathbb{RP}^2$ shares the $2$-type of $B\Pin^-(2)$ instead.

Second, for targets with infinite homotopy groups, the authors of \cite{Pace:2023mdo} leave open the mathematical structure of the relevant symmetry category, and even what plays the role of a symmetry topological order \cite[page~9]{Pace:2023mdo}. The cohesive higher-geometric setting of Section~\ref{sec:symmetries-and-cosymmetries} is built to accommodate exactly this regime, since nothing in the construction of magnetic co-symmetries and their dualizations requires finiteness or discreteness of any homotopy group.

The second level of generalization in our characterization is concerned with the use of Pontryagin duality as such. To understand how this may be generalized, let us work out here the translation of the topological description to operations in our ambient topos $\textbf{H}$. The fields of the theory are more precisely captured by the internal-hom stack
$\textbf{Fields}(\Sigma,\X) := [\Sigma,\X] \in \mathrm{ob}\,\mathbf H$, whose global points
recover the collection of $\X$-valued maps, $\textbf{Fields}(\Sigma,\X)(*) = \textbf{H}(\Sigma,\X)={\rm Map}(\Sigma,\X)$. The homotopy type of $\X$ is, as we have already described, its shape $\shape \X$. The refinement of the map \eqref{eq:map-to-truncation} is
\begin{equation}
    \textbf{Fields}(\Sigma, \X) \to [\Sigma, \mathbf{B}^n\pi_n(\shape \X)]
\end{equation}
provided by the first nontrivial truncation map $\X \to \shape \X \to \tau_{\leq n}\shape \X \cong \mathbf{B}^n\pi_n(\shape \X)$, that is, the $\shape$-unit \eqref{eq:fields-to-mch} followed by truncation.

The magnetic symmetry background \eqref{eq:a-topological} is refined to a class
\begin{equation}\label{eq:a-x-bu1-map}
    b \in \Maps{\Sigma}{\mathbf B^{d-n}\flat\widehat{\pi_n(\shape \X)}},
\end{equation}
where we take the canonical map, now at the level of internal-hom stacks
\begin{equation}\label{eq:internal-hom-inclusion}
    \mathbf B^{d-n}\flat\widehat{\pi_n(\shape \X)} \;\to\; \ihom{\mathbf B^n \pi_n(\shape \X)}{\Bnab{d}}.
\end{equation}
Notice that the target stack is $\Bnab{d}$, the moduli stack of bundle $(d-1)$-gerbes \textit{with connection}. Refining naively to the smooth $\mathbf B^dU(1)$ would lose the phases, and hence the stacking action, since $\pi_0[\Sigma,\mathbf B^dU(1)]=0$ on a closed $\Sigma$. Refining instead to $\Bnab{d}$ allows one to construct nontrivial phases via holonomy, and the discreteness of the coefficients established above reappears as flatness (Remark~\ref{rmk:why-connections}). By the adjunction of $\mathbf H$,
\begin{equation}\label{eq:currying}
    \Maps{\Sigma}{\ihom{\mathbf B^n\pi_n(\shape \X)}{\Bnab{d}}} \;\cong\; \Maps{\Sigma \times \mathbf B^n \pi_n(\shape \X)}{\Bnab{d}},
\end{equation}
so that composing \eqref{eq:a-x-bu1-map} with \eqref{eq:internal-hom-inclusion} and applying \eqref{eq:currying} lets us equivalently regard $b$ as a genuine morphism
\begin{equation}\label{eq:a-curried}
    b: \Sigma \times \mathbf{B}^n\pi_n(\shape \X) \longrightarrow \Bnab{d},
\end{equation}
or, equivalently, as a map into the mapping stack
\begin{equation}\label{eq:a-as-map-to-ihom}
    b: \Sigma \longrightarrow \ihom{\mathbf B^n \pi_n(\shape \X)}{\Bnab{d}},
\end{equation}
so that in great generality \textit{the internal-hom stack $\ihom{\mathbf B^n\pi_n(\shape \X)}{\Bnab{d}}$ serves as the target for magnetic symmetry backgrounds}. The pairing \eqref{eq:cohomology-cup-product} is refined at the level of stacks
\begin{equation}\label{eq:cup-product-stacks}
    \mathbf{B}^n\pi_n(\shape \X) \times \mathbf{B}^{d-n}\flat\widehat{\pi_n(\shape \X)} \to \Bnab{d},
\end{equation}
inducing the morphism on the fields
\begin{equation}
    \textbf{Fields}(\Sigma,\X) \to [\Sigma, \Bnab{d}],
\end{equation}
which refines \eqref{eq:coh-fields-top} (taking holonomy on closed $\Sigma$ recovers
the $U(1)$-valued map).

By the internal-hom adjunction \eqref{eq:currying}, applied with $\Sigma=*$ and $\shape\X$ in place of $\mathbf B^n\pi_n(\shape\X)$, a global point of $[\shape\X,\Bnab{d}]$ is equivalently a morphism
\begin{equation}\label{eq:universal-a}
    a:\shape\X\longrightarrow\Bnab{d}.
\end{equation}
We call such an $a$ a \textit{universal cocycle}, or a $\T$-valued $\infty$-local system (Remark~\ref{rmk:local-systems}), since it is defined once on the magnetic charges and every field $f:\Sigma\to\X$ pulls it back to spacetime. A universal cocycle gives a background $\Sigma\to*\xrightarrow{a}[\shape\X,\Bnab{d}]$ that is \textit{constant over $\Sigma$}. A general background is a $\Sigma$-parameterized family of universal cocycles, that is, a map $b:\Sigma\to[\shape\X,\Bnab{d}]$, or equivalently $\tilde b:\Sigma\times\shape\X\to\Bnab{d}$. For the pointed part of the mapping stack, such families are necessarily locally constant, although not necessarily constant (Theorem~\ref{thm:pontryagin-factorization}). Only the field-independent summand $\Bnab{d}$ of the unpointed stack can vary smoothly over $\Sigma$. \textit{It is this mapping stack, together with its evaluation morphism, that controls the resulting TFT's.}

We identify two ways of generalizing universal cocycles, which we briefly describe in the rest of this subsection. The first is algebraic and replaces $\Bnab{d}$ by a higher category of defects. The second is geometric and replaces $\Bnab{d}$ by a more general moduli stack $\T$ of TFT's. The latter is the approach we take in this article.

The first approach is more \textit{algebraic} in nature, and places defects at the
forefront. Briefly, it generalizes the TFT target in such a way that it is able to encode finite families of defects of different codimensions along with nonabelian and non-invertible fusion. Based on the modern characterization of topological field theories as symmetric monoidal functors \cite{Atiyah1988,baezHigherdimensionalAlgebraTopological1995,lurieClassificationTopologicalField2008} valued in some $(\infty,n)$-monoidal category $\mathcal{C}$, the authors of \cite{Chen:2023czk} introduced ``$d$-th solitonic
cohomology'' \cite[Definition 4.7, Proposition 4.9]{Chen:2023czk} as the refinement of
magnetic symmetries, here \textit{specialized to $\X = X$ a smooth manifold} (their construction applies to
any $d$-finite space):
\begin{equation}\label{eq:solitonic-cohomology}
\mathrm{Rep}^d( X):= \mathrm{Fun}(\shape X, \Sigma^{d-1}\mathrm{Vect}^{\mathrm{fd}}),
\end{equation}
generalizing the cocycle \eqref{eq:universal-a} as:
\begin{equation}\label{eq:alg-generalization}
    \text{Algebraic:}\quad \Maps{\shape X}{\Bnab{d}} \quad \xrightarrow{\text{generalizes to}} \quad \mathrm{Fun}\left(\shape X , \Sigma^{d-1}\mathrm{Vect}^{\mathrm{fd}} \right).
\end{equation}
Here $\mathrm{Fun}(\shape X,-)$ plays the role, in the setting of $d$-categories, that
$\Maps{\shape X}{-}$ plays in $\mathbf H$. For the conventional setting on the left, $\shape X$ and $\Bnab{d}$ are treated as
objects of $\mathbf H$ and $\Maps{\shape X}{\Bnab{d}}$ is the space of morphisms between
them. In the algebraically-generalized framework on the right, $\shape X$ is instead regarded as a higher category via its homotopy $d$-type
(a $d$-groupoid), and $\mathrm{Fun}(\shape X,\Sigma^{d-1}\mathrm{Vect}^{\mathrm{fd}})$ is the functor category into the $(d{-}1)$-fold categorical suspension (specifically, the iterated Karoubi completion of the delooping) of $\mathrm{Vect}^{\mathrm{fd}}$, the category of finite-dimensional vector spaces. This is the sense, made precise in \cite[Definition~4.7]{Chen:2023czk}, in which \eqref{eq:solitonic-cohomology} generalizes \eqref{eq:universal-a}. The target $\Sigma^{d-1}\mathrm{Vect}^{\mathrm{fd}}$ is the higher category of $d$-vector spaces, and physically encodes defects of all possible codimensions including, by virtue of Karoubi completeness, the so-called \textit{condensation defects} arising from gauging a symmetry at a submanifold $M \subset \Sigma$, a process known as \textit{higher gauging} \cite{Gaiotto:2019xmp,Roumpedakis:2022aik,Bah:2025oxi}.

The action on the original QFT via stacking is seen in a path integral formalism as follows. A functor $a: \shape X \to \Sigma^{d-1}\mathrm{Vect}^{\mathrm{fd}}$ represents a higher local system \cite{Freed:1994ad,pascaleff2025higher} on the target $X$, which is flat by construction\footnote{Strictly speaking, the assignment as constructed does not have any topology at all. Rather, we are interpreting this via a discrete functor $\mathrm{Dsc}$ (which nevertheless needs to be valued not in \textbf{H} but in some $(\infty,n)$-topos into which \textbf{H} includes if we want to preserve the non-invertible information).}, that can be pulled back to $\Sigma$ via any field $f \in \textbf{Fields}(\Sigma, X)(*)$, yielding a TFT amplitude $Z_{f^* a}(\Sigma) \in \Sigma^{d-1}\mathrm{Vect}^{\mathrm{fd}}$. The stacked theory is defined via a (fiberwise) tensor product of this topological weight with the original QFT assignment $Z_{\sigma}(f)$ as $Z_{\text{new}}(\Sigma) = \int_{\textbf{Fields}(\Sigma, X)(*)} [Df] \; Z_{\sigma}(f) \otimes Z_{f^* a}(\Sigma)$. Physically, this procedure inserts finite TFT-valued weights into the path integral measure, expected to encompass familiar examples such as background gauge fields, discrete torsion phases, and finite 't Hooft anomaly inflow terms, directly generalizing DW theory-like couplings. Non-invertible defects emerge since the TFT's coming from the magnetic symmetry backgrounds can evaluate to vector spaces of dimension greater than one, rather than restricting to $1$-dimensional state spaces and $U(1)$ phases. Assuming the symmetry/defect correspondence of generalized symmetries, the authors of \cite{Chen:2023czk} therefore describe ``solitonic symmetries'' as ``symmetries generated by (these) topological functionals'' \cite[Equation~2.6]{Chen:2023czk}. Crucially, however, as the authors emphasize \cite[Section 3.3]{Chen:2023czk}, the definition \eqref{eq:solitonic-cohomology} \textit{only works when $X$ is $d$-\textbf{finite}}, i.e.\ $\pi_0X$ and $\pi_qX$ are
finite for $q\leq d$, for which reason it is sufficient to consider (the higher categorifications of) \textit{finite-dimensional} vector spaces. \textit{This is one of the limitations we seek to address in the present work.}

The second approach is more \textit{geometric}. Rather than promoting the target to a higher category, at the cost of any notion of smoothness, we generalize it to a higher smooth stack $\T$. Since both sides are then objects of $\mathbf H$, we may generalize directly at the level of mapping stacks,
\begin{equation}\label{eq:geo-generalization}
    \text{Geometric:}\quad \left[\shape \X, \Bnab{d}\right] \quad \xrightarrow{\text{generalizes to}} \quad \left[\shape \X,\T\right],
\end{equation}
where $\T$ is subject to the conditions of Section~\ref{ssec:dualization-against-tft}. The magnetic charges remain topological, encoded by $\shape\X$. A universal cocycle is now a morphism $a:\shape\X\to\T$, and a field $f:\Sigma\to\X$ pulls it back to
\begin{equation}\label{eq:geo-pullback}
    f^*a:\Sigma\xrightarrow{\,f\,}\X\xrightarrow{\eta^{\smallshape}_{\X}}\shape\X\xrightarrow{\,a\,}\T.
\end{equation}
This makes explicit that the universal cocycle is defined on $\shape\X$, rather than directly on $\X$.

With the additional assumption that $\T$ is a grouplike $E_\infty$-object of $\mathbf H$ (for $\mathcal T=\Bnab{d}$ this structure comes from a strictly commutative, additive-Deligne-complex model under Dold--Kan, which is used in Section~\ref{sec:higher-pontryagin-symmetries}), this
is the pre-quantum field theory perspective advocated for in \cite{nuiten2013cohomological} (cf. \cite[Section 5]{Schreiber:2013pra}). In op.\ cit., the stack $\T$ is denoted as
$\mathbf{Phases}$, since it is thought to encode a higher abelian generalization of $U(1)$ path-integral phase weights. As we explain later in
Definition~\ref{def:magnetic-backgrounds}, however, we will require less strict conditions
on $\T$ based on different physical considerations (Remark~\ref{rmk:t-not-monoidal-gp}), but will return to this condition for both of our main theorems.

\begin{remark}\label{rmk:topological-families}
We emphasize that despite allowing for stackings with very general TFT's, we are still limiting the scope to topological theories, as opposed to considering dynamical metric-dependent general theories. This is a consequence of magnetic symmetries being constructed from the underlying topological charges. Sometimes, both kinds of stackings are considered side-by-side as ``families'' of QFT's (e.g. \cite{Cordova:2019jnf,Cordova:2019uob,McNamara:2020uza,Brennan:2026ira,Stockall:2025ngz}), where concepts familiar from symmetries such as mixed anomalies are considered. The reader will notice that a significant part of such discussions concerning general families of theories may be treated using the same methods used here, but the physical interpretation is different.
\end{remark}

\begin{remark}[Expected quantization]
In the pre-quantum field theory perspective \cite{nuiten2013cohomological} (cf. \cite[Section 5]{Schreiber:2013pra}), a cobordism $\Sigma: \Sigma_{\text{in}} \to \Sigma_{\text{out}}$ induces a span of field stacks
\begin{equation}\label{eq:pre-quantum-span}
    \mathbf{Fields}(\Sigma_{\text{in}},\X) \xleftarrow{\;(-)|_{\Sigma_{\text{in}}}\;} \mathbf{Fields}(\Sigma,\X) \xrightarrow{\;(-)|_{\Sigma_{\text{out}}}\;} \mathbf{Fields}(\Sigma_{\text{out}},\X),
\end{equation}
and a phase-valued cocycle turns it into a pre-quantum correspondence. ``Motivic quantization'' is the expected linearization of such spans, rigorous at present for finite gauge theories (Dijkgraaf--Witten theory \cite{Freed1993,Freed:2009qp}). We use this picture only as motivation for the language of Section~\ref{ssec:dualization-against-tft}, where the dynamics of a theory is encoded by a map $\mathfrak D: \mathbf{Fields}(\Sigma,\X) \to \mathcal D$ and magnetic symmetries act by stacking with the topological sector $[\Sigma,\T]$.
\end{remark}

\begin{table}
\centering
\resizebox{\textwidth}{!}{
\begin{tabularx}{\textwidth}{|>{\raggedright\arraybackslash}p{2.3cm}|X|X|}
\hline
\textbf{Feature} & \textbf{Algebraic approach} & \textbf{Geometric approach} \\
\hline
\textbf{TFT Target} & $\Sigma^{d-1}\mathrm{Vect}^{\mathrm{fd}}$ (higher category) & $\T$ (higher smooth stack) \\ \hline
\textbf{Target space} & Finite homotopy $d$-type (discrete finite groups) & Smooth $\infty$-groupoids \\\hline
\textbf{Invertibility} & Captures non-invertible finite symmetries at all codimensions & Possibly non-composable $(-1)$-form parameter spaces, invertible higher-form data \\\hline
\end{tabularx}
}
\caption{Comparison of algebraic and geometric approaches to generalized TFT targets.}
\label{tab:algebraic_vs_geometric}
\end{table}

We compare in a condensed manner the two approaches in Table~\ref{tab:algebraic_vs_geometric}. The $(-1)$-form entry in the geometric approach may be
non-composable, because $\T$ is not required to be monoidal. Thus
there need not be a composition law on the corresponding parameter
space, and invertibility need not even be a meaningful condition. In this article, we aim to highlight how the advantages proper to the geometric approach complement the numerous studies of symmetries via the algebraic one. In any event, the overarching theme of the discussion above is that both routes are thought to give rise to quantum topological field theories, and the stacking/coupling action of magnetic symmetries is therefore the tensor operation of QFT's.

We touch on generalizing the approach we take in this paper, that of smooth $\infty$-groupoids, in Section~\ref{sec:conclusion}.

\subsection{Dualization against TFT's}\label{ssec:dualization-against-tft}

Having defined magnetic co-symmetries in Section~\ref{sec:symmetries-and-cosymmetries}, we now want to \textit{dualize} with respect to a \textit{choice} of some moduli stack of TFT's $\mathcal{T}$. As explained in the previous subsection, the smooth $\infty$-groupoid of maps from magnetic charges to TFT's serves as the target space for magnetic symmetry \textit{backgrounds}. Abstractly, this is straightforward: it amounts to computing the internal hom objects. Just as Pontryagin duality forms $\mathrm{Hom}(-,U(1))$ and $k$-linear duality forms $\mathrm{Hom}(-,k)$, here we form the internal hom $[\shape \X,\T]$ out of the magnetic charges $\shape \X$ against a general \textit{dualizing object} $\T$, in place of $U(1)$ or $k$. This is encapsulated in the following definition.
\begin{definition}[Magnetic $\T$-symmetry backgrounds]\label{def:magnetic-backgrounds} Let $\Sigma \to \X$ be a $d$-dimensional $\sigma$-model. Let $\T\in {\rm ob\,}\textbf{H}$ be a \textit{choice} of \textit{dualizing object}, interpreted as a \textit{moduli stack of TFT's}, which we require to be
\begin{enumerate}[nosep]
    \item equipped with a reference object, $t: \T_0 \to \T$,
    \item $d$-truncated $\T \cong \tau_{\leq d} \T$.
\end{enumerate}
The higher stack of \textit{background fields} of the \textit{geometric magnetic $\T$-symmetries} is the mapping stack
\begin{equation}
    \mathbf{Bgnd}_m(\Sigma,\X):= [\Sigma,[\shape \X, \T]_{\T_0}],
\end{equation}
where $[\shape \X, \T]_{\T_0} \in {\rm ob\,}\textbf{H}$ denotes the stack of $\T_0$-\textit{relative} maps, defined as the homotopy pullback
\begin{equation}
    \begin{tikzcd}
	{[\shape \X, \T]_{\T_0}} & {[\shape \X, \T]} \\
	{\T_0} & {\T}
	\arrow[from=1-1, to=1-2]
	\arrow["{\pr_{\T_0}}"', from=1-1, to=2-1]
	\arrow["\lrcorner"{anchor=center, pos=0.125}, draw=none, from=1-1, to=2-2]
	\arrow["{\mathrm{ev}_{\sf x}}", from=1-2, to=2-2]
	\arrow["t"', from=2-1, to=2-2]
\end{tikzcd}
\end{equation}
of the reference object $t: \T_0 \to \T$ with the evaluation at $\mathsf{x}:* \to \shape \X$.
We denote this stack by
\begin{equation}\label{eq:def-mag}
    \Mag_{(\T,\T_0)}(\X,\mathsf x) := [\shape \X, \T]_{\T_0},
\end{equation}
and call it the \textit{stack for magnetic $\T$-symmetry backgrounds relative to $\T_0$} of $\X$,
so that $\mathbf{Bgnd}_m(\Sigma,\X)=[\Sigma,\Mag_{(\T,\T_0)}(\X,\mathsf x)]$. When $\T_0=*$ and $t$
picks out the basepoint of $\T$, we shall write $\Mag_{\T}(\X,\mathsf x):=[\shape \X, \T]_{*}$, and
we omit $\mathsf x$ whenever the basepoint is clear from context.

A magnetic co-symmetry $\corho: \shape \X \to \mathsf{Y}$ determines geometric magnetic symmetry backgrounds via their image under the pullback
\begin{equation}\label{eq:induced-background-corho}
  [\Sigma,[\corho,\T]_{\T_0}] : [\Sigma,[\mathsf{Y},\T]_{\T_0}] \to [\Sigma,[\shape \X, \T]_{\T_0}] = \mathbf{Bgnd}_m(\Sigma,\X).
\end{equation}

More generally, we call a map of smooth stacks $s:\mathcal{S} \to \Mag_{(\T,\T_0)}(\X)$ a
\textit{magnetic symmetry} of the $\sigma$-model, with $\mathcal S$ the target space for its
backgrounds, the case \eqref{eq:induced-background-corho} being that of $s$ factoring through
$[\corho,\T]_{\T_0}:[\mathsf{Y},\T]_{\T_0}\to\Mag_{(\T,\T_0)}(\X)$. The stack of
$\mathcal S$-background fields over $\Sigma$ is the mapping stack
$\left[\Sigma, \mathcal{S}\right] \to \mathbf{Bgnd}_m(\Sigma, \X)$. We do not require $s$ to be
injective, so that $\mathcal S$ need not act effectively (Remark~\ref{rmk:non-effective}).
\end{definition}
\begin{remark} Let us unpack the preceding definitions in the following several remarks.
\begin{remarkparts}
\item  \label{rmk:t-is-flat-abstractly}
    Dualization is computed with respect to a smooth stack $\T$, so that the resulting TFT's are generally not flat, even though $\sf Y$ is discrete. For the differential-geometric target $\Bnab{d}$ and connected discrete $\mathsf Y$, however, the resulting TFT's are always flat (Remark~\ref{rmk:why-connections}).

    Notably, while, as summarized in Section~\ref{ssec:review-magnetic}, the abstract interpretation of cohomology operations valued in $\T$ draws from the higher pre-quantum field theory perspective of \cite{Schreiber:2013pra}, we do not impose the abelian group condition that the author of op. cit. requires on $\T = \textbf{Phases}$. Briefly, this is because in the context of generalized symmetries, we are interested in non-invertible and nonabelian symmetries, for which it is necessary to drop said condition. Our approach thus draws both from the pre-quantum perspective and from the bulk TFT's-as-spacetime-filling defects interpretation used in practice in the physics literature.

    That being said, the dualizing stack $\T$ is not a general stack either. The required conditions are physically motivated, as we now explain.
    
\item \label{rmk:t-is-pointed}
    We require $\T$ to have a reference object $t: \T_0 \to \T$ so that we have a notion of \textit{trivial} TFT: a TFT $f:\Sigma \to \T$ is trivial if it factors through $\T_0$
    \begin{equation}
        f \cong t\circ f_0: \Sigma \to \T_0 \to \T.
    \end{equation}
    The reference object also allows us to define relative maps. Working with general maps that are not relative introduces a factor of $\T$ in the mapping stack that, if not dealt with appropriately, introduces spurious degrees of freedom, as can be seen from dualizing $\mathsf{Y}=*$, which ought to encode trivial information
    \begin{equation}
        \text{unpointed:}\quad \left[ * , \T \right] \cong \T, \quad \text{relative:}\quad \left[*, \T \right]_{\T_0} \cong \T_0,
    \end{equation}
    the latter of which is physically irrelevant by hypothesis.

    More structurally, however, the canonical projection $\pr_{\T_0}:[\shape \X,\T]_{\T_0}\to \T_0$ built into the pullback square above distinguishes the passage to $\T_0$-relative maps as a way of \textit{reducing} $\T$-valued TFT data to $\T_0$-valued TFT data, and nothing in this reduction requires $\T_0$ to be a physically negligible theory: it could equally well be some other reference sector against which the magnetic $\T$-symmetry backgrounds are to be measured. Here, however, we do take $\T_0$ in the strict sense of a trivial TFT, so that the reduction to $\T_0$-relative maps has the effect of subtracting exactly the trivial contributions to the TFT, as above.

    In many cases, such as those involving theories on closed manifolds, it will be sufficient to consider $\T_0 = *$, so that $\T$ is more precisely a \textit{pointed} stack. In other cases, when boundary and edge modes need to be taken into account, other choices are required, but we will not cover these in the present work.
    
\item The $\T$-dualization computed by the mapping stack $[\shape \X,  \T]_{\T_0}$ takes into account all possible (relative) morphisms without requiring any additional conditions. In other words, the target of magnetic symmetry backgrounds in great generality describes a \textbf{smooth $\infty$-groupoid of generalized relative nonabelian cohomology operations} \cite{Schreiber:2013pra,Fiorenza:2020dcp}.
    
    \item \label{rmk:-1-form-symmetry-background}
    The latter remark implies that the background magnetic symmetry fields considered in Definition~\ref{def:magnetic-backgrounds} are maps from $\Sigma$ to a smooth $\infty$-groupoid, which can be understood to encode a principal $\infty$-groupoid bundle. In other words, the magnetic backgrounds are \textit{generalized nonabelian cocycles}, much as in the case of electric symmetries \cite{Perez-Lona:2025xxm} covered in Section~\ref{ssec:review-electric}. As such, a single map
    \begin{equation}\label{eq:magnetic-symmetry-single-background}
        b: \Sigma \to \left[\shape \X,  \T\right]_{\T_0}
    \end{equation}
    in $\mathbf{Bgnd}_m(\Sigma,\X)$ encodes, as in Equation~\eqref{eq:symmetry-map-p} for electric symmetries, a background for a \textit{tower} of magnetic symmetries whose form-degree runs one lower than the electric tower's, starting at the \textit{$(-1)$-form symmetries} (see below) rather than at ordinary $0$-form symmetries (Remark~\ref{rmk:variance-magnetic}(ii)). We can depict this schematically as follows:
    \begin{equation}\label{eq:mag-symmetry-map-b}
    b: \left(  \begin{tikzcd}
\cdots \arrow[d, shift left] \arrow[d, shift left=3] \arrow[d, shift right] \arrow[d, shift right=3]          \\
{\coprod_{i,j,k \in {\cal I}} U_i\cap U_j\cap U_k} \arrow[d] \arrow[d, shift left=2] \arrow[d, shift right=2] \\
{\coprod_{i,j \in {\cal I}} U_i\cap U_j} \arrow[d, shift left] \arrow[d, shift right]                         \\
\coprod_{i \in {\cal I}} U_i                                                                                 
\end{tikzcd}\right) \to \left( \begin{tikzcd}
	\cdots \\
	{\{m:\eta'\circ\eta \Rrightarrow \eta''\}} \\
	{\{\eta:F \Rightarrow F'\}} \\
	{\{F: {\shape \X}\to \T\}}
	\arrow[shift left, from=1-1, to=2-1]
	\arrow[shift left=3, from=1-1, to=2-1]
	\arrow[shift right, from=1-1, to=2-1]
	\arrow[shift right=3, from=1-1, to=2-1]
	\arrow[from=2-1, to=3-1]
	\arrow[shift left=2, from=2-1, to=3-1]
	\arrow[shift right=2, from=2-1, to=3-1]
	\arrow[shift left, from=3-1, to=4-1]
	\arrow[shift right, from=3-1, to=4-1]
\end{tikzcd}\right).
\end{equation}
    We remark that the assigned morphisms on the right are morphisms of sheaves. In contrast with electric symmetries, these assignments on the patches are generally \textit{smooth}, not necessarily locally constant. For $\Bnab{d}$ and connected $\shape\X$, however, these assignments are always locally constant, by the mechanism of Remark~\ref{rmk:why-connections}, hence on an equal footing with the electric case. Depending on the choices of $(\T,\T_0)$, the magnetic symmetries may also encode connections themselves. These are \textit{refinements} beyond topological data that are often assumed in the physics literature for a heuristic (often, Lagrangian) interpretation in terms of locally defined differential forms.

    Physically, higher morphisms between morphisms are interpreted as non-genuine higher defects because they represent localized domain walls or interpolating interfaces that cannot exist in isolation, but must terminate on the boundaries of the genuine, standalone bulk cocycles they connect. In the Lagrangian framework, for example, these non-genuine defects and coboundary equivalences are exact (total derivative) terms added to the action, which integrate to zero on a closed, oriented bulk and thus yield negligible contributions to the path integral (the contributions stop being physically negligible to become observable localized edge modes if boundaries and higher corners are considered, but as previously stated we will not pursue this here).
    
    Part of the information in \eqref{eq:mag-symmetry-map-b} is given by functions on the patches
    \begin{equation}
        b_i: U_i \to \{F_i: \shape \X \to  \T \}
    \end{equation}
related by a $2$-cell defined on the intersections
\begin{equation}
    b_{ij}: U_{ij} \to \{\eta_{ij}:F_i \Rightarrow F_j\},
\end{equation}
which themselves, as already explained, are subject to higher coherence information. In particular, any global point of $[\shape \X, \T]_{\T_0}$ defines a
\textit{constant} assignment on $\Sigma$
    \begin{equation}
        b: \Sigma \to \{F: \shape \X \to  \T\}.
    \end{equation}
    In other words, we obtain a \textbf{symmetry background field given by a parameter $b$} (constant, or smoothly varying, depending on $(\T,\T_0)$), the hallmark of the so-called (magnetic) \textbf{$(-1)$-form symmetries.} These play an important role in much of the contemporary discussion of generalized symmetries.
    
\item \label{rmk:non-effective}
    The magnetic symmetries defined by the magnetic co-symmetry $\corho: \shape \X \to \mathsf{Y}$ need not act effectively nor do they need to cover all magnetic symmetries proper to the $\sigma$-model, in the sense that the map of magnetic symmetries
    \begin{equation}
        [\Sigma, [\mathsf{Y}, \T]_{\T_0}] \to \mathbf{Bgnd}_m(\Sigma,\X)
    \end{equation}
    may have a nontrivial kernel relative to the trivial TFT's $\T_0$, namely $[\cofib(\corho),\T]_{\T_0}$ (Example~\ref{ex:trivially-coacting-backgrounds}), and need not be surjective. This will be important when considering the \textit{gauging} of magnetic symmetries, which we address in the upcoming \cite{PerezLona:gauging}.
    
\item \label{rmk:t-d-truncated}
    The requirement of $\T$ being $d$-truncated (see Definition~\ref{app:def:n-truncated}) characterizes an assumption that the topological defect TFT's $\Sigma \to \T$, as bulk theories, depend only on the homotopical data of $\Sigma$ up to degree $d$.
    
    For example, for $d\geq2$, $\pi_n(S^d)\neq0$ for infinitely many $n\geq d$. For a target smooth $\infty$-groupoid $\X$, the adjunction
    \begin{equation}
        \textbf{H}(S^d, \flat\X) = \textbf{H}(\shape S^d, \X)
    \end{equation}
    shows that the $\X$-valued flat maps can probe the higher category-theoretic information of $\X$ beyond degree $d$. If $\X$ is $d$-truncated, however, then the maps factor through the $d$-truncation $\shape S^d \to \tau_{\leq d}\shape S^d$ (cf. Definition~\ref{app:def:n-truncation}).

    The assumption, however, is at least partly based on the fact that the homotopy data of degree $d$, when passing to the \textit{stable} picture of (integral) homology, becomes generated by the spacetime fundamental cycle. That being said, it has been remarked that there is a relation between \textit{unstable} cohomology operations (which need not survive to homology) and \textit{non-invertible symmetries} \cite{Chen:2022cyw,Chen:2023czk,Pace:2023mdo}. Exploring geometric magnetic symmetries depending on homotopy groups of degree beyond that of spacetime dimension is an interesting open direction.
    \end{remarkparts}
\end{remark}

\begin{remark}[Variance and universal property of magnetic symmetry backgrounds]\label{rmk:variance-magnetic}
The assignment $\mathsf Y\mapsto[\mathsf Y,\T]_{\T_0}$ is contravariant, so it reverses the variance of magnetic co-symmetries a second time.
\begin{enumerate}[label=\textup{(\roman*)}]
\item \textit{Universal property.} A co-symmetry is a map \textit{out of} $\shape\X$, which is initial in $\textbf{H}_{\X /, \mathrm{disc}}$. A magnetic symmetry background induced by it, via \eqref{eq:induced-background-corho}, is a map \textit{into} $[\shape\X,\T]_{\T_0}$. It thus plays for magnetic symmetries the role that $\flat\mathrm{Aut}(\X)$, terminal among discrete actions (Remark~\ref{rmk:terminal-electric}), plays for electric symmetries.
\item \textit{Degree.} A $k$-form symmetry is parameterized locally by $k$-form data, but its background is a $(k+1)$-form: this shift by one, from symmetry to background, holds equally on the electric and magnetic sides, and is not what distinguishes them. What does distinguish them is which symmetry degree a given background tower starts at. $[\shape\X,\T]_{\T_0}$ is the magnetic counterpart of the \textit{classifying} object $\B\flat\mathrm{Aut}(\X)$ of electric backgrounds, not of the group $\flat\mathrm{Aut}(\X)$ of electric transformations in \eqref{eq:symmetry-stack}. For instance, for $\X=S^1$, $\T=\Bnab{d}$ and $\T_0=*$, one has $[\shape S^1,\Bnab{d}]_*\cong\B^{d-1}\flat U(1)=\B(\B^{d-2}\flat U(1))$ (Example~\ref{ex:s1-target}), the background stack of a $(d-2)$-form $U(1)$ symmetry. Accordingly, the tower \eqref{eq:mag-symmetry-map-b} starts, in symmetry degree, at $-1$, one degree below the electric tower \eqref{eq:symmetry-map-p}, which starts at ordinary $0$-form symmetries (Remark~\ref{rmk:-1-form-symmetry-background}).
\end{enumerate}
\end{remark}

\begin{remark}[Magnetic symmetries as $\infty$-local systems]\label{rmk:local-systems}
A global point $a:\shape\X\to\T$ of $[\shape\X,\T]$ is a $\T$-valued $\infty$-local system on $\X$, that is, a flat $\infty$-connection \cite[Section 5.2.6]{Schreiber:2013pra}. By Proposition~\ref{prop:mch-shape-fields}, morphisms in $\mathbf{MCh}(\Sigma,\X)$ are \textit{large gauge transformations}. Since magnetic symmetry backgrounds act on magnetic charges through the evaluation of Remark~\ref{rmk:co-module-dualized-module}, they are $\T$-valued \textit{representations} of these large gauge transformations. For $\T=\Bnab{d}$, these are higher one-dimensional representations. Transgressing over a closed $(d-1)$-dimensional slice $\Sigma_{d-1}$ turns them into rank-one $\infty$-local systems on $\mathbf{MCh}(\Sigma_{d-1},\X)$ \cite[Section 5.4]{Fiorenza:2013jz}, which act on the topological quantum state spaces of \cite[Section 2.3.2]{Sati:2025vjw}.
\end{remark}

\begin{remark}[{$(-1)$-form symmetries as spaces}] \label{rmk:t-not-monoidal-gp} Notably, Definition~\ref{def:magnetic-backgrounds} \textit{does not require $\T$ to have a \textbf{composition law.}}\footnote{We thank Xingyang Yu for pointing out some of the references cited here.} While it is in practice often the case that $\T$ is a monoid object (and even a group object, as in the case of $\Bnab{d}$, which will play a central role in this article), this requirement would automatically make $(-1)$-form symmetries composable. In the literature, it is frequently argued that $(-1)$-form symmetries circumvent the traditional requirement of a composition law, in that they are associated with \textit{sets} or \textit{spaces} of \textit{coupling constants} or \textit{parameters}, see for example \cite{Cordova:2019jnf,Cordova:2019uob,Heidenreich:2020pkc,McNamara:2020uza,Yu:2020twi,Vandermeulen:2022edk,Yu:2024jtk,Aloni:2024jpb,Rudelius:2024vmc,Robbins:2025apg}. 

Nevertheless, this should not be taken as meaning that composition laws of parameter spaces are altogether irrelevant. On the contrary, in specific applications of $(-1)$-form symmetries, and especially when related to gauging, composition laws are in fact necessary. We show in forthcoming work \cite{PerezLona:applications} that neglecting this additional structure yields gauged fields with spurious degrees of freedom. Indeed, already within the present article, our two main structural results, the fibration theorem (Theorem~\ref{thm:descent-magnetic-symmetries}) and the factorization theorem (Theorem~\ref{thm:pontryagin-factorization}), require $\T$ to be not merely a monoid but a grouplike $E_\infty$-object (a condition $\Bnab{d}$ satisfies), since this is what identifies $[\shape\X,\T]$ with a connective spectrum and thereby grants access to the stable homotopy-theoretic tools these theorems rely on.
\end{remark}
Having fleshed out some of the general consequences of the abstract definition, the immediate question is which choices of $\cal T$ are sensible from a physical point of view. At the outset, we observe that since the end result is TFT's to stack the $\sigma$-model with, which is of a fixed dimension $d$, the choice of $\T$ must also depend on the dimension (that is, in a way beyond the $d$-truncation requirement). Before exploring some notable choices of $\T$, let us show how the tensoring/stacking action on the theory defined over $\Sigma$ becomes explicit in this language at the pre-quantum level. Given:
\begin{itemize}[nosep]
    \item a field $f:\Sigma \to \X$, 
    \item a magnetic co-symmetry $\corho:\shape \X \to {\sf Y}$, 
    \item and a magnetic symmetry background $m:\Sigma \to [\mathsf{Y},  \T]_{\T_0}$,
\end{itemize}
the canonical mapping space evaluation 
\begin{equation}\label{eq:magnetic-bare-evaluation}
    \mathrm{ev}: {\sf Y} \times \left[{\sf Y}, \T\right]_{\T_0} \to \T
\end{equation}
yields a $\T$-valued TFT on $\Sigma$, whose field-by-field evaluation is
\begin{equation}\label{eq:tfm-tft}
   \ell(f,m)\colon \Sigma\xrightarrow{(f,m)}\X\times[\mathsf Y,\T]_{\T_0}
\xrightarrow{(\corho\circ\eta^{\smallshape}_{\X})\times\mathrm{id}}\mathsf Y\times[\mathsf Y,\T]_{\T_0}
\longrightarrow \mathsf Y\times[\mathsf Y,\T]\xrightarrow{\ \mathrm{ev}\ }\T.
\end{equation}

Next, suppose the physical theory is equipped with a dynamical principle encoded by a morphism
\begin{equation}
    \mathfrak{D}:\textbf{Fields}(\Sigma,\X) \to \mathcal{D}.
\end{equation}
In the simplest case $\mathcal D = U(1)$ and $\mathfrak D = \exp(iS)$ is the exponentiated action functional of the theory on $\Sigma$, which may depend on a metric on $\Sigma$. For example, for Maxwell theory $\X = \mathbf{B}_{\nabla}U(1)$ on $(\Sigma,g)$, $S[A] = \frac{1}{2}\int_\Sigma F\wedge *F$. More refined choices of $\mathcal D$ include higher bundles with connection and generalized cohomology spectra \cite{nuiten2013cohomological,Schreiber:2013pra,Schreiber:2014xva,Stolz_Teichner_2004,ludewig2021framework,berwick2024elliptic}, which in particular support nontrivial higher-codimension information.

The global action of a fixed magnetic symmetry background $m: \Sigma \to [\mathsf{Y}, \T]_{\T_0}$ on the full space of fields is then formalized by the map $\Phi_m$
\begin{equation}\label{eq:m-fields-d-mch}
  \Phi_m:  \textbf{Fields}(\Sigma,\X)\xrightarrow{(\mathfrak{D}, \eta_{\X}^{\smallshape})} \mathcal{D}\times \mathbf{MCh}(\Sigma,\X) \xrightarrow{\mathrm{id}_{\mathcal{D}}\times m^\sharp} \mathcal{D}\times [\Sigma,\T],
\end{equation}
where $m^\sharp\colon\mathbf{MCh}(\Sigma,\X)=[\Sigma,\shape\X]\to[\Sigma,\T]$ sends a
magnetic charge $g$ to the composite
$\Sigma\xrightarrow{(\corho\circ g,\,m)}\mathsf Y\times[\mathsf Y,\T]_{\T_0}
\to\mathsf Y\times[\mathsf Y,\T]\xrightarrow{\mathrm{ev}}\T$, as in
\eqref{eq:tfm-tft}.
This Cartesian product realizes the stacking operation: it is the explicit coupling of the original dynamical theory ($\mathcal{D}$) to a new topological sector ($[\Sigma,\T]$). For $\mathcal D=U(1)$ and $\T=\mathbf B^d_\nabla U(1)$, composing with the holonomy $[\Sigma,\mathbf B^d_\nabla U(1)]\to U(1)$ on closed $\Sigma$ and multiplying recovers \eqref{eq:dotimesa-partition}.
\begin{remark}\label{rmk:t-d-module}
At this stage, it would seem natural to require $\mathcal{D}$ to be a $\T$-module (after $\Sigma$-fiber integration) $\triangleright: \T \times \mathcal{D} \to \mathcal{D}$, allowing the stacked theory to be compressed back into the original dynamical target space $\mathcal{D}$ as
\begin{equation}
   \Phi_m:  \textbf{Fields}(\Sigma,\X)\to  \mathcal{D}\times [\Sigma,\T] \xrightarrow{\triangleright\circ \left(\mathrm{id}_{\mathcal{D}} \times \int_{\Sigma}\right)} \mathcal{D}.
\end{equation}
This would return a theory described solely by its original kind of dynamics $\mathcal{D}$ (for example, as described below, for Lagrangian theories this is the tensoring operation of higher bundle gerbes with connection). While this is frequently the case, we do not impose this $\T$-module condition generically (Remark~\ref{rmk:t-not-monoidal-gp}).
\end{remark}

\begin{remark}[Dualizing the co-module structure]\label{rmk:co-module-dualized-module}
The crucial implication of dualizing magnetic co-symmetries is that it turns a co-module structure into a module structure, \textit{up to bulk TFT's}. This is the concrete form of the double reversal of variance: electric symmetries act on $\X$, magnetic co-symmetries co-act on $\X$, and magnetic symmetries act on $\mathbf{Fields}(\Sigma,\X)$ up to bulk TFT's.

The $\shape \X$-co-module structure on $\X$
\begin{equation}
    \X \xrightarrow{\Delta_{\X}} \X \times \X \xrightarrow{\eta^{\smallshape}_{\X} \times {\rm id}_{\X}} \left(\shape \X \right) \times \X,
\end{equation}
induces a co-module structure on $\textbf{Fields}(\Sigma,\X)$
\begin{equation}
    \textbf{Fields}(\Sigma,\X) \to \textbf{MCh}(\Sigma, \X) \times \textbf{Fields}(\Sigma,\X).
\end{equation}
By Proposition~\ref{prop:mch-shape-fields}, this is the field stack co-acted on by its own shape. The evaluation map
\begin{equation}
    \textbf{MCh}(\Sigma, \X) \times \textbf{Bgnd}_{m}(\Sigma, \X) \to \left[\Sigma, \T\right]
\end{equation}
yields a module structure up to bulk TFT's
\begin{equation}
   \textbf{Bgnd}_{m}(\Sigma, \X) \times \textbf{Fields}(\Sigma,\X) \to \textbf{Bgnd}_{m}(\Sigma, \X) \times  \textbf{MCh}(\Sigma, \X) \times \textbf{Fields}(\Sigma,\X) \to \left[\Sigma, \T \right] \times \textbf{Fields}(\Sigma,\X).
\end{equation}
This is reminiscent of the well-known algebraic result that if $M$ is a $C$-co-module, with coaction $\delta_M: M\to C\otimes_{\textbf{k}} M$, for $C$ a coalgebra over a field $\textbf{k}$, then $M$ is a $C^{\vee}$-module for $C^{\vee}:={\rm Hom}(C,\textbf{k})$, the dual algebra \cite[Chapter II]{sweedler1969hopf} 
\begin{equation}
    C^{\vee}\otimes_{\textbf{k}} M \xrightarrow{{\rm id}_{C^{\vee}}\otimes \delta_M} C^{\vee} \otimes_{\textbf{k}} C \otimes_{\textbf{k}} M \xrightarrow{{\rm ev} \otimes {\rm id}_M} \textbf{k} \otimes_{\textbf{k}} M \xrightarrow{\cong} M,
\end{equation}
and may be understood as its unstable geometric shadow. We comment on this in Section~\ref{sec:conclusion}.
\end{remark}

\begin{example}[Symmetries induced by a trivially-co-acting magnetic co-symmetry]\label{ex:trivially-coacting-backgrounds}
Let $q\circ\corho:\shape\X\to\mathsf C=\cofib(\corho)$ be the associated trivially-co-acting magnetic co-symmetry of Definition~\ref{def:trivially-acting-magnetic-co-symmetry}. Since $q\circ\corho$ factors through the terminal object $*$ up to homotopy, the induced magnetic symmetry backgrounds factor as
\begin{equation}
    [q\circ\corho, \T]_{\T_0}: [\mathsf{C}, \T]_{\T_0} \to [*, \T]_{\T_0} \cong \T_0 \to [\shape \X, \T]_{\T_0}.
\end{equation}
Thus, as expected, the induced magnetic symmetries act only via $\T_0$ theories. In particular, for $\T_0 = *$ there are no induced theories whatsoever.

In fact, more is true. Since $[-,\T]$ sends homotopy pushouts to homotopy pullbacks, the cofiber square of Definition~\ref{def:trivially-acting-magnetic-co-symmetry} gives
\begin{equation}
    [\mathsf C,\T]_{\T_0}\ \cong\ [\mathsf Y,\T]_{\T_0}\times_{[\smallshape\X,\T]_{\T_0}}\T_0,
\end{equation}
where $\T_0\to[\shape\X,\T]_{\T_0}$ is the inclusion of the trivial backgrounds. For $\T_0=*$, this is the fiber sequence
\begin{equation}
    [\mathsf C,\T]_*\longrightarrow[\mathsf Y,\T]_*\xrightarrow{[\corho,\T]_*}[\shape\X,\T]_*.
\end{equation}
In words, $[\mathsf C,\T]_{\T_0}$ is the \textit{kernel}, relative to the trivial theories $\T_0$, of the map of magnetic symmetry backgrounds induced by $\corho$. This means that the backgrounds for the $\mathsf C$-symmetry \textit{act trivially} on the $\sigma$-model, characterized universally. Hence, the cokernel of a co-symmetry dualizes to the kernel of the corresponding magnetic symmetry, exactly as the double reversal of variance (Remark~\ref{rmk:variance-magnetic}) predicts. This will play a crucial role in our discussion of gauging of magnetic symmetries in \cite{PerezLona:gauging}.
\end{example}

\paragraph{Choices of dualizing TFT.} We now highlight two notable choices of dualizing TFT's with a clear physical interpretation. In the present article, since it is more concerned with a general treatment of magnetic symmetries as a whole, we will only make contact with the first choice. Specific applications using other choices will be developed in the upcoming \cite{PerezLona:applications}, and the differential refinement of charges needed for non-flat backgrounds in \cite{PerezLona:gauging}.

\begin{enumerate}
    \item \textbf{Higher $U(1)$ phases with connection: higher Pontryagin magnetic symmetries.} The first choice corresponds to \textit{higher $U(1)$ phases with connection}, mathematically modeled by bundle $(d-1)$-gerbes with connection. For a $d$-dimensional $\sigma$-model, these are parameterized by data that depend purely on the magnetic co-symmetry data, independent of any spacetime metric defined on $\Sigma$. These metric-independent theories are known in full generality as \textbf{$\infty$-Chern--Simons theories} \cite{Fiorenza:2012ec,Fiorenza:2012tb,Fiorenza:2013jz}, characterized by the target space:
    \begin{equation}\label{eq:t-t0-local-lagrangians}
        \T := \Bnab{d}, \quad \T_0=*,
    \end{equation}
    meaning we take $\Bnab{d}$ as a pointed stack (with its essentially unique point). As we explained in Section~\ref{ssec:review-magnetic}, this choice refines the traditional Pontryagin dual magnetic symmetries by connections. Since now we are working not with standalone homotopy groups but with full homotopy types modeled as discrete $\infty$-groupoids, we will also refer to these symmetries as \textit{higher Pontryagin magnetic symmetries}.

    Schematically, the physics is as follows. Consider a $d$-dimensional $\sigma$-model valued in a stack $\X$, where the fields are governed by a dynamical law (such as a Lagrangian, modeled as a higher principal $U(1)$-bundle with connection) assigning data $\mathcal{L}(f)$ to each field $f$. Given a field $f: \Sigma\to \X$ and a magnetic co-symmetry $\corho: \shape \X \to \mathsf{Y}$, we construct the composition:
    \begin{equation}
        \mu(f): \Sigma \xrightarrow{f} \X \xrightarrow{\eta^{\smallshape}_{\X}} \shape \X \xrightarrow{\corho} \mathsf{Y}.
    \end{equation}
    A magnetic symmetry background is then a map
    \begin{equation}
        m: \Sigma \to \left[\mathsf{Y}, \Bnab{d}\right]_*,
    \end{equation}
    which, for each field, yields a bundle $(d-1)$-gerbe with connection via the evaluation:
    \begin{equation}
       \ell(f, m): \Sigma\xrightarrow{(\mu(f), m)}\mathsf{Y} \times \left[\mathsf{Y},\Bnab{d}\right]_*\xrightarrow{\mathrm{ev}} \Bnab{d}.
    \end{equation}
    This higher gerbe with connection is modeled by locally defined differential $d$-forms $C^{(d)}(f,m)$, the \textit{local Lagrangians} of $\infty$-Chern--Simons theory, which depend on $f$ and $m$. For $\mathsf Y=\shape \X$ connected, Theorem~\ref{thm:pontryagin-factorization} shows that $m$ is necessarily flat (Remark~\ref{rmk:flatness-from-connections}), and the phase weight is recovered as its holonomy
$\mathrm{hol}(\ell(f,m),\Sigma)\in U(1)$, matching \eqref{eq:dotimesa-partition}.
When the underlying gerbe is topologically trivial, so that $C^{(d)}(f,m)$ is a
globally defined $d$-form, this reads
    \begin{equation}
        \mathrm{hol}(\ell(f,m),\Sigma) = \exp\left(2\pi i\int_\Sigma C^{(d)}(f,m)\right),
    \end{equation}
with the normalization of the additive Deligne model $\uZ\hookrightarrow\Cinf(-,\mathbb R)$ of
Lemma~\ref{lem:deligne}. In general, for flat but
torsion-nontrivial backgrounds, the holonomy is computed from the full
\v{C}ech--Deligne cocycle, not from a single integral of local forms.

    Globally, stacking this background acts on the field theory by modifying the assigned higher $U(1)$ phases via cocycle multiplication (the tensoring of gerbes with connection). If we view the uncoupled dynamics as a principal higher $U(1)$ bundle with connection over the mapping space of fields,
    \begin{equation}
        \mathcal{L}: \textbf{Fields}(\Sigma,\X) \to \Bnab{d},
    \end{equation}
    which as a general QFT may depend on the metric on $\Sigma$, the stacking operation transforms the dynamics using the higher $U(1)$ phases coming from evaluating the magnetic background field
    \begin{equation}
        \ell(-,m):\textbf{Fields}(\Sigma,\X)\to\Bnab{d},
    \end{equation}
to give a new higher $U(1)$ phase assignment
\begin{equation}
    \ell(-,m)\otimes \mathcal{L}:\textbf{Fields}(\Sigma,\X)\xrightarrow{\left(\ell(-,m),\,\mathcal{L}\right)} \Bnab{d}\times \Bnab{d} \xrightarrow{\otimes} \Bnab{d}.
\end{equation}
 Notice that in this specific abelian case, the dynamics target space $\mathcal{D}=\Bnab{d}$ naturally admits a $ \mathcal{T}=\Bnab{d}$-module structure via the tensor product of gerbes with connection, as used at the level of partition functions after integrating over $\Sigma$ in Section~\ref{ssec:review-magnetic}, cf. Equation~\eqref{eq:dotimesa-partition}. This leverages the higher group composition law of $\Bnab{d}$. As anticipated in Remark~\ref{rmk:t-d-module}, this module structure allows the stacked theory to compress back into the original kind of dynamics.

If the morphism $\ell(f,m)$ factors through the point, then the bundle $(d-1)$-gerbe with connection is gauge-equivalent to the topologically-trivial gerbe with zero connection. Since the holonomy is gauge invariant, and the holonomy on $\Sigma$ closed is $1$, we always get trivial phases. More generally, the action is trivial since for Chern--Simons theory, a trivial gerbe contribution is only a boundary theory, which is irrelevant on closed $\Sigma$.

 Non-flat backgrounds, however, such as those coupling through Chern--Weil-type currents like the $\theta$-terms of Section~\ref{ssec:review-magnetic}, are not captured here. Capturing them requires refining the magnetic \textit{charges} themselves, replacing the topological co-symmetries out of the discrete $\shape\X$ by differential co-symmetries out of $\X$. We develop this refinement in \cite{PerezLona:gauging}.

    \item \textbf{Nonabelian TFT's.} The second kind of dualizing TFT we can consider is directly concerned with nonabelian symmetries. As we have summarized in Section~\ref{ssec:review-magnetic}, a recurrent choice of dualization is the higher analogue of the category of finite-dimensional complex vector spaces, whence we see, for example, defects associated with irreducible representations of finite groups, which need not be invertible under the tensor product. We explore an example of this kind in \cite{PerezLona:applications}.
\end{enumerate}

\begin{remark}[Why a connection-valued target]\label{rmk:why-connections}
On a closed, oriented $d$-manifold $\Sigma$, maps of stacks compute \textit{sheaf} cohomology, $\pi_0[\Sigma,\mathbf B^dU(1)]\cong H^d(\Sigma,\uU)$ \cite[Proposition 3.2.20]{FSS12}. By the exponential sequence \eqref{eq:exp} and Lemma~\ref{lem:soft} (Section~\ref{sec:higher-pontryagin-symmetries}), this is $H^{d+1}(\Sigma,\Z)$, which agrees with singular cohomology since $\Sigma$ is locally contractible, and hence vanishes for dimensional reasons. This means that bundle $(d-1)$-gerbes without connection carry no phase weights on closed $\Sigma$, so that if working with $\mathbf B^dU(1)$ as target, flatness would have to be imposed \textit{by hand}. With $\Bnab{d}$, by contrast, \textit{weights are instead holonomies}, and dualizing the discrete charges $\shape\X$ against it yields the flat backgrounds as a natural consequence, as expected on physical grounds (Theorem~\ref{thm:pontryagin-factorization}, also Remark~\ref{rmk:flatness-from-connections}). In brief, working with smooth stacks with connections, \textit{the flatness of magnetic symmetry backgrounds is not imposed but derived}.
\end{remark}

\subsection{Postnikov towers and the step-wise construction of magnetic symmetries}\label{ssec:postnikov-towers}

We have characterized magnetic symmetries in terms of mapping stacks $[\shape \X, \T]_{\T_0}$. Let us specialize to $\T_0=*$ in this section for simplicity. Usually, $\shape \X$ as a discrete $\infty$-groupoid is not truncated, so the calculation of the symmetries in general can be challenging. Fortunately, there is a simplifying tool at our disposal which allows us to compute the magnetic symmetries in a \textit{step-wise} manner, namely the Postnikov tower\footnote{The role of the Postnikov tower of topological spaces in the context of magnetic symmetries has been elaborated on in \cite{Pace:2023mdo,Chen:2023czk,Sheckler:2025rlk}.} of $\shape \X$. This is the system of \textit{Postnikov magnetic co-symmetries} $\shape \X \to \mathsf{X}_n$ \eqref{eq:n-truncation-filtration} (Definition~\ref{def:postnikov-magnetic-co-symmetries}):
\begin{equation}
    \shape \X \to \cdots \to \mathsf{X}_{i+1}\to \mathsf{X}_{i} \to \mathsf{X}_{i-1}\to \cdots \to \mathsf{X}_1 \to \mathsf{X}_0 = \pi_0(\shape \X) \to \mathsf{X}_{-1}=*,
\end{equation}
a tower of magnetic co-symmetries of increasing truncation degree. We can hence dualize stage by stage, giving rise to a system of magnetic symmetry backgrounds
\begin{equation}\label{eq:postnikov-magnetic-filtration}
     * \cong [\mathsf{X}_{-1},\T]_* \to [\mathsf{X}_0 = \pi_0(\shape \X),\T]_* \to [\mathsf{X}_1,  \T]_* \to \cdots \to [\shape \X,  \T]_* .
\end{equation}
The maps in \eqref{eq:postnikov-magnetic-filtration} are given by precomposition along the Postnikov projections $\mathsf X_{n+1}\to\mathsf X_n$. Since $[-,\T]_*$ turns cofiber sequences into fiber sequences, each map sits in a fiber sequence
\begin{equation}
    [C_n,\T]_*\to[\mathsf X_n,\T]_*\to[\mathsf X_{n+1},\T]_*,
\end{equation}
where $C_n$ is the cofiber of $\mathsf X_{n+1}\to\mathsf X_n$. In this way, we reconstruct the full mapping stack $[\shape \X, \T]_*$ from the bottom up, one Postnikov stage at a time, which substantially simplifies the computation of magnetic symmetries in practice.

Evidently, the final stage of the Postnikov tower, namely $\shape \X$ itself, encodes the full collection of magnetic co-symmetries. However, since the dualizing stack $\T$ is required to be $d$-truncated (Definition~\ref{def:magnetic-backgrounds}), from the perspective of the field theory defined on $\Sigma$ it is only the $d$th stage $\mathsf{X}_d$ of the Postnikov tower which is relevant:
\begin{equation}
\mathsf{X}_d:= \tau_{\leq d}\shape \X.
\end{equation}
Since $\T$ is $d$-truncated, the universal property of truncation gives $[\shape \X,\T]_*\cong[\tau_{\leq d}\shape \X,\T]_*$. A consistency check of this is obtained by dualizing the $k$-connected cover $\shape \X \langle k \rangle$ (cf. Definition~\ref{app:def:n-connected-cover}), for $k>d$. Since $\T$ is $d$-truncated, the mapping stack is trivial in the sense that
\begin{equation}\label{eq:n-connected-triviality}
    \left[\left(\shape \X \right) \langle k \rangle,\T\right]_* \cong *,
\end{equation}
which is of course equivalent to dualizing the terminal object $\left[*,\T \right]_* \cong *$.

\begin{remark}[Preservation of fiber sequences]\label{rmk:fiber-cofiber}
The Postnikov tower encodes distinguished \textit{fiber sequences}: each stage $\mathsf{X}_{n+1}$ sits in a fibration, where $X:=\vert\shape\X\vert$
\begin{equation}
    \mathbf{B}^{n+1}\pi_{n+1}(X) \to \mathsf{X}_{n+1} \to \mathsf{X}_n. \label{eq:postnikov-fiber-sequences}
\end{equation}
More generally, as used in \cite{Sheckler:2025rlk}, we have a sequence of fibrations of magnetic charges
\begin{center}
    \begin{tikzcd}
	&& {\mathbf{B}^d \pi_d(X)} && {\mathbf{B}^{n+2}\pi_{n+2}(X)} & {\mathbf{B}^{n+1}\pi_{n+1}(X)} & \\
	{\shape \X} & \cdots & {\mathsf{X}_{d}\cong \tau_{\leq d}\shape \X} & \cdots & {\mathsf{X}_{n+2}} & {\mathsf{X}_{n+1}} & {\mathsf{X}_n}
	\arrow[from=1-3, to=2-3]
	\arrow[from=1-5, to=2-5]
	\arrow[from=1-6, to=2-6]
	\arrow[from=2-1, to=2-2]
	\arrow[from=2-2, to=2-3]
	\arrow[from=2-3, to=2-4]
	\arrow[from=2-4, to=2-5]
	\arrow[from=2-5, to=2-6]
	\arrow[from=2-6, to=2-7]
\end{tikzcd},
\end{center}
which dualizes to the sequence of magnetic symmetry backgrounds
\begin{center}
    \begin{tikzcd}
	& {[\shape \X,\T]_*} \\
	& \cdots \\
	{[\mathbf{B}^d \pi_d(X),\T]_*} & {[\mathsf{X}_d,\T]_*} \\
	& \cdots \\
	{[\mathbf{B}^{n+2}\pi_{n+2}(X),\T]_*} & {[\mathsf{X}_{n+2},\T]_*} \\
	{[\mathbf{B}^{n+1}\pi_{n+1}(X),\T]_*} & {[\mathsf{X}_{n+1},\T]_*} \\
	& {[\mathsf{X}_n,\T]_*}
	\arrow[from=2-2, to=1-2]
	\arrow[from=3-2, to=2-2]
	\arrow[from=3-2, to=3-1]
	\arrow[from=4-2, to=3-2]
	\arrow[from=5-2, to=4-2]
	\arrow[from=5-2, to=5-1]
	\arrow[from=6-2, to=5-2]
	\arrow[from=6-2, to=6-1]
	\arrow[from=7-2, to=6-2]
\end{tikzcd}.
\end{center}
Notice, however, that $[-,\T]_*$ sends \textit{cofiber} sequences, and not fiber sequences, to fiber sequences. The Postnikov fiber sequences \eqref{eq:postnikov-fiber-sequences} are therefore not preserved. In a stable setting, fiber and cofiber sequences coincide. We discuss this further in Section~\ref{sec:conclusion}.
\end{remark}
\begin{remark}[Postnikov tower of magnetic symmetry backgrounds]\label{rmk:postnikov-backgrounds}
    While Remark~\ref{rmk:fiber-cofiber} shows that we cannot merely transfer all the fiber sequences of the Postnikov tower of $\shape \X$ to sequences of magnetic symmetries, we can nevertheless use the Postnikov decomposition of the mapping stack $[\mathsf{X}_d,\T]_*$ itself. This yields the fibrations
    \begin{equation}
        \mathbf{B}^k \pi_k\left([\mathsf{X}_d,\T]_*\right) \to \tau_{\leq k}\left([\mathsf{X}_d,\T]_*\right) \to \tau_{\leq k-1}\left([\mathsf{X}_d,\T]_*\right),
    \end{equation}
    and importantly a fiber inclusion
    \begin{equation}
        \mathbf{B}^{d-1} \pi_{d-1}\left([\mathsf{X}_d,\T]_*\right) \to [\mathsf{X}_d,\T]_* \to \tau_{\leq d-2}\left([\mathsf{X}_d,\T]_*\right),
    \end{equation}
    into the genuine stack for magnetic symmetry backgrounds. This uses that $[\mathsf{X}_d,\T]_*$ is $(d-1)$-truncated, and is in a sense an alternative reconstruction process to the cofiber approach above.
\end{remark}

\paragraph{First nontrivial Postnikov magnetic symmetry.}

Let us briefly describe the starting stages of these magnetic symmetries controlled by the Postnikov tower. The first level $\mathsf{X}_{-1}$ is the terminal object, so it is evidently uninteresting. The second level $\mathsf{X}_0 = \pi_0(\shape \X)$ is already nontrivial in general, but is often bypassed if one assumes, as we do in this paper, that $\pi_0(\shape \X) = \{ * \}$ (one exception is, for example, the WZW model of a Lie group with more than one path-connected component, closely related to the corresponding pure gauge theory explored in Section~\ref{sssec:pin2}). Under this assumption, we have that $\mathsf{X}_1 \cong \Pi_1 X \cong \mathbf{B} \pi_1(X)$, as a $1$-truncated discrete $\infty$-groupoid, where $X = \vert \shape \X \vert$ is the topological space corresponding to the geometric realization of the shape $\infty$-groupoid $\shape \X$ of $\X$. If this homotopy group is again trivial, we keep iterating until we reach the first nontrivial Postnikov stage, which is always of the form $\mathsf{X}_n = \mathbf{B}^n \pi_n(X)$. At this stage, we get the first nontrivial map of magnetic symmetry backgrounds
\begin{equation}\label{eq:first-postnikov-mag-map}
    [\mathbf{B}^n \pi_n(X),\T]_* \to [\shape \X, \T]_*.
\end{equation}
\begin{remark}[Physical interpretation of first nontrivial Postnikov magnetic symmetry]\label{rmk:physics-first-postnikov}
    A physics interpretation of the magnetic symmetries \eqref{eq:first-postnikov-mag-map} comes from observing that $\mathbf{B}^n \pi_n(X)$ is the target space for pure $\mathbf{B}^{n-1} \pi_n(X)$ higher discrete gauge theory. Given the equivalence $\shape \mathbf{B}^{n} \pi_n(X) \cong \mathbf{B}^{n} \pi_n(X)$, the mapping stack $[\mathbf{B}^{n} \pi_n(X), \T]_*$ already classifies its magnetic symmetries.
    
    Hence, universally, $\X$-valued $\sigma$-models receive, via
\eqref{eq:first-postnikov-mag-map}, the magnetic symmetries of
$\mathbf{B}^{n-1}\pi_n(X)$ gauge theory. This map, importantly, need \textit{not} be injective
(Remark~\ref{rmk:non-effective}). For instance, for $\T=\Bnab{d}$ with $d\geq3$ and
$\X=\mathbf B_\nabla\Pin^-(2)$, whose first nontrivial Postnikov stage is
$\mathbf B\Z_2$, the $\Z_2$ in homotopy degree $d-3$ of $[\mathbf B\Z_2,\T]_*$ does not survive
(Section~\ref{sssec:pin2}).
\end{remark}

\subsection{Magnetic symmetries of electrically gauged theories}\label{ssec:magnetic-symmetries-of-electric-gauged-theories}

To conclude this general section, we will connect back to gauged electric symmetries (Section~\ref{ssec:review-electric}). For this, we now specialize the dualizing stack $\T$ from the mild conditions of Definition~\ref{def:magnetic-backgrounds} to the stronger, and by now anticipated (Remarks~\ref{rmk:t-is-flat-abstractly}, \ref{rmk:t-not-monoidal-gp}, \ref{rmk:t-d-module}, \ref{rmk:topological-families}), requirement that $\T$ be a grouplike $E_\infty$-object of $\Hh$. We prove the following fibration theorem of magnetic symmetries.

Let $x:*\to\X$ be a chosen point. Its image in the quotient, after applying shape, is denoted by
$\bar x:*\longrightarrow\shape(\X\dslash{\mathcal H}).$
The canonical point of $\B\mathsf H$ is denoted by
$\mathsf b:*\longrightarrow\B\mathsf H.$
For any pointed object $(A,a)$, we write
$[A,\T]_{*,a}:=\fib\left(\ev_a:[A,\T]\to\T\right).$
When the point is clear, it will be omitted from the notation. In particular, in the notation
\eqref{eq:def-mag} with $\T_0=*$,
$\Mag_\T(\X\dslash{\mathcal H},\bar x)=[\shape(\X\dslash{\mathcal H}),\T]_{*,\bar x}$ and
$\Mag_\T(\X)=[\shape\X,\T]_{*,\smallshape x}$.

\begin{theorem}[Fibration theorem for magnetic symmetries]\label{thm:descent-magnetic-symmetries}
Let $\mathcal H$ be a group object of $\Hh$ acting on a pointed object $(\X,x)$, and let
\begin{equation}
    \mathsf H := \shape \mathcal H.
\end{equation}
Let $\T$ be a grouplike $E_\infty$-object of $\Hh$, not necessarily $d$-truncated. Then:
\begin{enumerate}[label=\textup{(\roman*)}]
    \item There are natural equivalences
    \begin{equation}
        \shape(\X\dslash{\mathcal H}) \cong (\shape\X)\dslash\mathsf H
        \quad\text{and}\quad
        [\shape(\X\dslash{\mathcal H}),\T] \cong [\shape\X,\T]^{h\mathsf H},
    \end{equation}
    where $[\shape\X,\T]^{h\mathsf H}$ stands for the homotopy fixed stack of the $\mathsf H$-action on the general (non-relative) mapping stack $[\shape\X,\T]$ induced by the $\mathsf H$-action on $\shape\X$.
    \item Let the cofiber
    \begin{equation}
        [\shape\X,\T]_{\red}
        := \cofib\left(\T \xrightarrow{\const} [\shape\X,\T]\right)
    \end{equation}
    be formed equivariantly in $\mathrm{Fun}(\mathbf B\mathsf{H},\mathrm{Grp}_{E_\infty}(\Hh))$
(Definition~\ref{def:reduced}). Then there is a fiber sequence
    \begin{equation}
        [\B\mathsf H,\T]_*
        \xrightarrow{\,p^*\,}
        \Mag_\T(\X\dslash{\mathcal H},\bar x)
        \longrightarrow
        \left([\shape\X,\T]_{\red}\right)^{h\mathsf H}.
    \end{equation}

    \item Suppose the $\mathsf H$-action on $\shape\X$ admits a homotopy fixed point
    \begin{equation}
        \mathsf y:*\longrightarrow\shape\X.
    \end{equation}
    Then there is an equivariant equivalence
    \begin{equation}
        [\shape\X,\T]_{\red}\cong [\shape\X,\T]_{*,\mathsf y},
    \end{equation}
    and a splitting
    \begin{equation}
        \Mag_\T(\X\dslash{\mathcal H},\bar x)
        \cong
        [\shape\X,\T]_{*,\mathsf y}^{h\mathsf H}\oplus [\B\mathsf H,\T]_*. 
    \end{equation}
    If the originally chosen point $x$ is itself equipped with a coherent $\mathcal H$-equivariant structure, then $\mathsf y:=\shape x$ is such a homotopy fixed point. In that case, this is
    \begin{equation}
        \Mag_\T(\X\dslash{\mathcal H})
        \cong
        \Mag_\T(\X)^{h\mathsf H}\oplus [\B\mathsf H,\T]_*. 
    \end{equation}

    \item Suppose in addition that $\T$ is $d$-truncated. Let
    \begin{equation}
        M=[\shape\X,\T]\in\mathrm{ob}\,\Grp_{E_\infty}(\Hh)
    \end{equation}
    and let $\mathbf M\in\mathrm{ob}\,\Spcn(\Hh)$ be the corresponding connective spectrum. There is a strongly convergent stable homotopy-fixed-point spectral sequence, arising from the Whitehead--Postnikov filtration of $\mathbf M$,
    \begin{equation}
        E_1^{s,t}=H^s(\mathsf H,\pi_t\mathbf M)
        \Longrightarrow
        \pi_{t-s}\left(\mathbf M_{\mathrm{st}}^{h\mathsf H}\right),
        \quad
        d_r:E_r^{s,t}\longrightarrow E_r^{s+r+1,\,t+r},
    \end{equation}
    where
    \begin{equation}
        \mathbf M_{\mathrm{st}}^{h\mathsf H}:=\mathrm{lim}^{\Sp(\Hh)}_{\B\mathsf H}\mathbf M.
    \end{equation}
    For $t-s\ge0$, the abutment is canonically identified with
    \begin{equation}
        \pi_{t-s}[\shape(\X\dslash{\mathcal H}),\T].
    \end{equation}
\end{enumerate}
\end{theorem}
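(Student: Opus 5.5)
For part (i) I will use Remark~\ref{rmk:shape-groupoid}. The stacky quotient $\X\dslash\mathcal H$ is the colimit of the simplicial object $(\X\dslash\mathcal H)_\bullet$. Since $\shape$ is a left adjoint, it preserves this colimit. By Remark~\ref{rmk:shape-groupoid}(ii),(iii), $\shape\circ(\X\dslash\mathcal H)_\bullet$ is the action groupoid of $\mathsf H=\shape\mathcal H$ on $\shape\X$, which gives $\shape(\X\dslash\mathcal H)\cong(\shape\X)\dslash\mathsf H$. The internal hom $[-,\T]$ sends colimits to limits, so $[\shape(\X\dslash\mathcal H),\T]$ is the totalization of the cosimplicial object $[\shape\X\times\mathsf H^{\bullet},\T]\cong[\mathsf H^\bullet,[\shape\X,\T]]$. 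Equivalently, writing $(\shape\X)\dslash\mathsf H$ as the colimit over $\B\mathsf H$ of the functor $\B\mathsf H\to\Hh$ encoding the action, this is $\lim_{\B\mathsf H}[\shape\X,\T]=[\shape\X,\T]^{h\mathsf H}$. Naturality is clear, since every step is functorial in the action.

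For (ii) I will work in $\mathrm{Fun}(\B\mathsf H,\Grp_{E_\infty}(\Hh))$. This category is additive (in fact prestable), so cofiber sequences there are fiber sequences. The constant map $\T\to[\shape\X,\T]$ is equivariant for the trivial action on $\T$, which gives an equivariant fiber/cofiber sequence $\T\to[\shape\X,\T]\to[\shape\X,\T]_{\red}$. Taking $(-)^{h\mathsf H}=\lim_{\B\mathsf H}$, which preserves fiber sequences, gives
\[
\T^{h\mathsf H}\to[\shape\X,\T]^{h\mathsf H}\to([\shape\X,\T]_{\red})^{h\mathsf H}.
\]
Here $\T^{h\mathsf H}=[\B\mathsf H,\T]$ for the trivial action, and the middle term is $[\shape(\X\dslash\mathcal H),\T]$ by (i). The first map is $p^*$ along $p:(\shape\X)\dslash\mathsf H\to\B\mathsf H$. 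I will then take fibers of evaluation at the basepoints $\mathsf b$ and $\bar x$. Since $p(\bar x)=\mathsf b$, the evaluations are compatible, and the third term receives the zero map from $\T$. Passing to fibers of $\ev$ (a $3\times3$ argument with the split sequence $\T\to\T\oplus[\B\mathsf H,\T]_*$) gives the claimed sequence $[\B\mathsf H,\T]_*\to\Mag_\T(\X\dslash\mathcal H,\bar x)\to([\shape\X,\T]_{\red})^{h\mathsf H}$.

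For (iii), a homotopy fixed point $\mathsf y$ makes $\ev_{\mathsf y}:[\shape\X,\T]\to\T$ equivariant, and it retracts $\const$. In the additive category this splits the sequence: $[\shape\X,\T]\cong\T\oplus[\shape\X,\T]_{*,\mathsf y}$ equivariantly, so $[\shape\X,\T]_{\red}\cong[\shape\X,\T]_{*,\mathsf y}$. The retraction also induces a retraction of $p^*$ after taking fixed points, which splits (ii) in $\Grp_{E_\infty}(\Hh)$. The special case $\mathsf y=\shape x$ follows because an $\mathcal H$-equivariant structure on $x$ maps under $\shape$ to an $\mathsf H$-fixed point by Remark~\ref{rmk:shape-groupoid}.

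For (iv), I will pass along the equivalence $\Grp_{E_\infty}(\Hh)\simeq\Spcn(\Hh)$ to $\mathbf M$. I will form the Whitehead tower $\tau_{\ge t}\mathbf M$, whose layers are $\Sigma^t H\pi_t\mathbf M$, and apply $\lim_{\B\mathsf H}$ in $\Sp(\Hh)$. This produces an exact couple whose first page is $\pi_{t-s}(\Sigma^tH\pi_t\mathbf M)^{h\mathsf H}=H^s(\mathsf H,\pi_t\mathbf M)$, where the cohomology is computed with local coefficients, using that $\mathsf H$ is discrete. The differentials have the stated bidegree from this indexing. Strong convergence holds because $d$-truncation makes the filtration finite: only $0\le t\le d$ contribute, and the tower is complete since $\lim$ commutes with limits. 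For $t-s\ge0$ I will identify the abutment with the unstable one. The limit of connective spectra computed in $\Sp$ has $\Omega^\infty$ equal to the limit in $\Grp_{E_\infty}$, and $\Omega^\infty$ preserves $\pi_{\ge0}$, so (i) gives the identification.

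I expect the main obstacle to be in (iv): making precise the homotopy sheaves $\pi_t$ and the identification $E_1=H^s(\mathsf H,\pi_t\mathbf M)$ internally in $\Hh$, that is, as sheaves of group cohomology. I would reduce this to stalks or to $U$-points, where it is the classical homotopy fixed point spectral sequence for a discrete group acting on a truncated spectrum.
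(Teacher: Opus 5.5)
Your arguments for (i), (ii) and (iv) follow the paper's proof essentially step for step: $\shape$ commutes with the bar construction, $[-,\T]$ turns $\colim_{\B\mathsf H}$ into $(-)^{h\mathsf H}$, and the equivariant cofiber sequence is a fiber sequence because the connective objects form a prestable category. You then pass to fibers of evaluation at $\bar x$ and $\mathsf b$. The spectral sequence comes from the finite Whitehead tower, with the abutment identified through $\tau_{\geq0}$ of the stable limit.

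The gap is in (iii), at ``the retraction also induces a retraction of $p^*$ after taking fixed points, which splits (ii).'' Taking fixed points of the equivariant retraction $\ev_{\mathsf y}$ gives $\sigma_{\mathsf y}^*$, precomposition with the section $\sigma_{\mathsf y}\colon\B\mathsf H\to(\shape\X)\dslash\mathsf H$ determined by $\mathsf y$. This retracts the \emph{unpointed} $p^*\colon[\B\mathsf H,\T]\to[\shape(\X\dslash\mathcal H),\T]$. It does not, however, carry $\Mag_\T(\X\dslash\mathcal H,\bar x)$ into $[\B\mathsf H,\T]_*$. The composite $\ev_{\mathsf b}\circ\sigma_{\mathsf y}^*$ is evaluation at $\bar{\mathsf y}:=\sigma_{\mathsf y}(\mathsf b)$, the image of $\mathsf y$. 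By contrast, $\Mag_\T(\X\dslash\mathcal H,\bar x)$ is the fiber of evaluation at $\bar x$, and in the general statement the fixed point $\mathsf y$ is unrelated to $x$. Concretely, write $[\shape(\X\dslash\mathcal H),\T]\cong[\B\mathsf H,\T]\oplus R^{h\mathsf H}$ with $R=[\shape\X,\T]_{*,\mathsf y}$. Under this decomposition, evaluation at $\bar x$ is $(a,r)\mapsto\ev_{\mathsf b}(a)+u(r)$ for an $E_\infty$-map $u\colon R^{h\mathsf H}\to\T$ that need not vanish. So the unpointed splitting does not simply restrict to the fibers.

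The missing step is the one in Lemma~\ref{lem:pointed-splitting}. There, the grouplike $E_\infty$ structure of $\T$ is used once more to shear by $(a,r)\mapsto(a-\const(u(r)),r)$. After this, evaluation at $\bar x$ only sees the first summand through $\ev_{\mathsf b}$, and its fiber is visibly $[\B\mathsf H,\T]_*\oplus R^{h\mathsf H}$. Equivalently, the retraction of the pointed $p^*$ you want is $f\mapsto\sigma_{\mathsf y}^*f-\const\bigl(\ev_{\bar{\mathsf y}}f\bigr)$. This does land in $[\B\mathsf H,\T]_*$ and restricts to the identity there.

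Even then, a retraction of the first map of a fiber sequence in $\Grp_{E_\infty}(\Hh)$ does not by itself identify the complement with $Q$. For that you also need $\pi_0\Mag_\T(\X\dslash\mathcal H,\bar x)\to\pi_0Q$ to be surjective, which again follows from this shear. In the special case $\mathsf y=\shape x$ with its induced equivariant structure, $\bar{\mathsf y}\cong\bar x$ and $u\simeq0$, so your argument is fine as written there.
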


We divide the proof into several lemmas.

\begin{lemma}[{Shape of the action quotient \cite[Proposition 9.1.4]{SatiSchreiber2026}}]\label{lem:shape-quotient}
Let $\mathcal H$ be a group object of $\Hh$ acting on a pointed object $\X$, and define
$\mathsf H:=\shape\mathcal H$.
The natural equivalence (cf.\ Remark~\ref{rmk:shape-groupoid})
\begin{equation}
    \shape(\X\dslash{\mathcal H})\cong(\shape\X)\dslash\mathsf H,
\end{equation}
is equivalently a colimit over a diagram
\begin{equation}
    \shape(\X\dslash{\mathcal H})\cong\colim_{\B\mathsf H}\shape\X.
\end{equation}
\end{lemma}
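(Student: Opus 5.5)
The plan is to present $\X\dslash\mathcal H$ as the geometric realization of its action groupoid and to use that $\shape$ is a left adjoint. By Definition~\ref{app:def:stacky-quotient}, $\X\dslash\mathcal H\cong\colim_{\Delta^{\mathrm{op}}}(\X\dslash\mathcal H)_\bullet$, with $(\X\dslash\mathcal H)_n\cong\X\times\mathcal H^{n}$. Since $\shape$ is a left adjoint (to $\flat$), it preserves all colimits. Hence
\begin{equation*}
\shape(\X\dslash\mathcal H)\cong\colim_{\Delta^{\mathrm{op}}}\shape\circ(\X\dslash\mathcal H)_\bullet .
\end{equation*}
Remark~\ref{rmk:shape-groupoid} now identifies the simplicial object $\shape\circ(\X\dslash\mathcal H)_\bullet$ with the action groupoid $(\shape\X\dslash\mathsf H)_\bullet$ of the induced $\mathsf H$-action on $\shape\X$. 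It also gives $(\shape\X\dslash\mathsf H)_n\cong\shape\X\times\mathsf H^n$, using that $\shape$ preserves finite products. Taking realizations gives the first equivalence $\shape(\X\dslash\mathcal H)\cong(\shape\X)\dslash\mathsf H$. It is natural because every step is functorial in the action data.

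For the second description, I would use the standard $\infty$-topos dictionary between group actions and objects over the delooping. An $\mathsf H$-action on $\shape\X$ is equivalently a functor $F:\B\mathsf H\to\Hh$ with $F(\mathsf b)\cong\shape\X$. More precisely, it is equivalently the fibration $(\shape\X)\dslash\mathsf H\to\B\mathsf H$ whose fiber is $\shape\X$, via straightening/unstraightening (Lurie's descent for $\infty$-topoi). Under this correspondence, the realization of the action groupoid equals $\colim_{\B\mathsf H}F$. This can be seen in two steps. First, $\B\mathsf H$ is the realization of the bar construction $(\B\mathsf H)_\bullet$. Second, the colimit of $F$ over $\B\mathsf H$ is computed as the left Kan extension along $\B\mathsf H\to *$, which by cofinality of $\Delta^{\mathrm{op}}\to\B\mathsf H$ (or by the two-sided bar formula) is the realization of $[n]\mapsto\shape\X\times\mathsf H^n$. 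Equivalently, $\colim_{\B\mathsf H}F$ is the total space of the unstraightened fibration, which is exactly $(\shape\X)\dslash\mathsf H$ by the fiber sequence \eqref{eq:electric-gauging-fibration}.

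The main obstacle is making the identification of the simplicial colimit with the $\B\mathsf H$-indexed colimit precise, and checking that the two $\mathsf H$-actions agree. The two actions are the one coming from Remark~\ref{rmk:shape-groupoid} and the one encoded by the functor $\B\mathsf H\to\Hh$. To settle this, I would work in $\Hh_{/\B\mathsf H}$. Applying $\shape$ to the fibration $\X\to\X\dslash\mathcal H\to\B\mathcal H$ gives a map $\shape(\X\dslash\mathcal H)\to\shape\B\mathcal H\cong\B\mathsf H$. Its fiber is $\shape\X$: this holds because $\shape$ preserves the groupoid pullback squares \eqref{app:eq:segal}, as used in Remark~\ref{rmk:shape-groupoid}, and because effective epimorphisms are stable under realization. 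Descent in $\Hh$ then identifies this object over $\B\mathsf H$ with $\colim_{\B\mathsf H}\shape\X$. This agrees with the cited \cite[Proposition 9.1.4]{SatiSchreiber2026}.
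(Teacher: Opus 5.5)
You follow the paper's route. The first half is exactly its argument: realize the action groupoid, use that $\shape$ is a left adjoint preserving finite products, and compare levelwise with $\shape\X\times\mathsf H^{n}$. For the second half you also rely on descent, as the paper does. The gap is that the step carrying the second equivalence is asserted rather than derived. The claim ``an $\mathsf H$-action on $\shape\X$ is equivalently a functor $F:\B\mathsf H\to\Hh$'' only makes sense because $\mathsf H=\shape\mathcal H$ is \emph{discrete}, and you never use this. For a non-discrete group object $\mathcal G$, $\B\mathcal G$ is not of the form $\colim_S *$ for any $\infty$-groupoid $S$. There is then nothing to index a diagram by, and descent gives no description of $\Hh_{/\B\mathcal G}$ as diagrams in $\Hh$.

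The missing line, which the paper supplies, runs as follows. First, $\mathsf H\cong\Disc(K)$ for an $\infty$-group $K$. Hence $\B\mathsf H\cong\Disc(BK)\cong\colim_{BK}*$, because $\Disc$ preserves colimits and the terminal object. Descent then gives $\Hh_{/\B\mathsf H}\simeq\lim_{BK}\Hh_{/*}\simeq\Fun(BK,\Hh)$. This sends $E\to\B\mathsf H$ to its diagram of fibers, with inverse $F\mapsto(\colim_{BK}F\to\colim_{BK}*)$. Applied to $E=(\shape\X)\dslash\mathsf H$, this is exactly the claimed colimit description. Your last paragraph then becomes a clean finish, once the fiber is correctly identified.

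Two of your justifications should also be replaced. First, $\Delta^{\mathrm{op}}\to BK$ is not cofinal unless $K\simeq *$. By Quillen's Theorem A, a functor into an $\infty$-groupoid is cofinal iff it is an equivalence after realization, and $|\Delta^{\mathrm{op}}|\simeq *$. If you want the bar-construction route, use $\colim_{BK}F\simeq\left|[n]\mapsto F(*)\times\Disc(K)^{n}\right|$. This follows from $BK\simeq\colim_{\Delta^{\mathrm{op}}}K^{\bullet}$ and universality of colimits. Alternatively, bypass it entirely with the total-space description above.

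Second, the fact that the fiber of $\shape(\X\dslash\mathcal H)\to\B\mathsf H$ is $\shape\X$ does not come from effective epimorphisms being stable under realization. It comes from $\shape\circ p_\bullet$ being a Cartesian transformation of simplicial objects. The squares of the action groupoid over $(\B\mathcal H)_\bullet$ are product/shear squares, which $\shape$ preserves since it preserves finite products (cf. Remark~\ref{rmk:shape-groupoid}). Combine this with descent: in an $\infty$-topos, the realization of a Cartesian transformation of simplicial objects is again Cartesian. So the fiber over the basepoint is the degree-zero term, $\shape\X$.
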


\begin{proof}
By definition, the homotopy quotient $\X \dslash \mathcal{H}$ is the geometric realization of the action groupoid
\begin{equation}
    \X\dslash{\mathcal H}=\left|\,[k]\longmapsto \X\times\mathcal H^{\times k}\,\right|.
\end{equation}
Since $\shape$ is a left adjoint, it preserves these geometric realizations. The object $\mathsf H=\shape\mathcal H$ is a group object in $\Hh$, and the cohesion axioms imply that $\shape$ preserves finite products. Therefore $\shape$ carries the action groupoid levelwise to
\begin{equation}
    [k]\longmapsto \shape\X\times(\shape\mathcal H)^{\times k}
    =\shape\X\times\mathsf H^{\times k}.
\end{equation}
Since compatibility with the face and degeneracy maps holds, we have
\begin{equation}\label{eq:shape-quotient-colim}
    \shape(\X\dslash{\mathcal H})\cong\left|\,[k]\longmapsto\shape\X\times\mathsf H^{\times k}\,\right|=(\shape\X)\dslash\mathsf H\cong\colim_{\B\mathsf H}\shape\X.
\end{equation}
Since $\mathsf H=\shape\mathcal H$ is discrete, $\mathsf H=\mathrm{Dsc}(K)$ for an $\infty$-group $K$, and $\mathbf B\mathsf H=\mathrm{Dsc}(BK)=\colim_{BK}*$. By descent (cf.\ \cite[Proposition~4.3.8]{Sati_Schreiber_2026}, \cite[Lemma~3.10]{BeardsleyPeroux2022}), $\Hh_{/\mathbf B\mathsf H}\cong\lim_{BK}\Hh_{/*}\cong\mathrm{Fun}(BK,\Hh)$, under which an $\mathsf H$-action on an object corresponds to a functor $BK\to\Hh$ and the homotopy quotient to its colimit. This identifies $(\shape\X)\dslash\mathsf H$ with $\colim_{\mathbf B\mathsf H}\shape\X$ in \eqref{eq:shape-quotient-colim}.
\end{proof}

\begin{lemma}[Mapping out of a colimit]\label{lem:mapping-colimit}
For every diagram $i\mapsto A_i$ in $\Hh$ and every $\T\in\mathrm{ob}\,\Hh$, there is a natural equivalence
\begin{equation}
    [\colim_i A_i,\T]\cong\lim_i[A_i,\T].
\end{equation}
In particular, we have that
\begin{equation}
    [\shape(\X\dslash{\mathcal H}),\T]\cong[\shape\X,\T]^{h\mathsf H}.
\end{equation}
\end{lemma}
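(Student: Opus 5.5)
The first claim is the standard fact that the internal hom of $\Hh$ turns colimits in its first variable into limits. I would check this on $U$-points, via the Yoneda lemma in $\Hh$. For any $U\in\mathrm{ob}\,\CartSp$, the internal-hom adjunction $U\times(-)\dashv[U,-]$ and the defining adjunction of $[-,\T]$ give
\begin{equation}
    [\colim_i A_i,\T](U)\cong\Hh\left(U\times\colim_i A_i,\T\right).
\end{equation}
Since $\Hh$ is an $\infty$-topos, colimits are universal, so $U\times(-)$ preserves colimits and $U\times\colim_i A_i\cong\colim_i(U\times A_i)$. Mapping spaces out of a colimit are limits of mapping spaces, so this is $\lim_i\Hh(U\times A_i,\T)\cong\lim_i[A_i,\T](U)$. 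Limits in $\Hh$ are computed objectwise on $U$-points, so the right-hand side is $\left(\lim_i[A_i,\T]\right)(U)$. All identifications are natural in $U$ and in the diagram, hence they assemble into the stated equivalence of stacks. Equivalently, $[-,\T]\colon\Hh^{\mathrm{op}}\to\Hh$ is a right adjoint, namely to itself up to op, via $\Hh(B,[A,\T])\cong\Hh(A,[B,\T])$. Right adjoints preserve limits, and limits in $\Hh^{\mathrm{op}}$ are colimits in $\Hh$.

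For the second claim, I would apply Lemma~\ref{lem:shape-quotient}, which gives $\shape(\X\dslash\mathcal H)\cong\colim_{\B\mathsf H}\shape\X$. Here the diagram is the functor $BK\to\Hh$ corresponding to the $\mathsf H$-action on $\shape\X$ under $\Hh_{/\B\mathsf H}\cong\mathrm{Fun}(BK,\Hh)$. The first part then gives $[\shape(\X\dslash\mathcal H),\T]\cong\lim_{BK}[\shape\X,\T]$. The diagram $BK\to\Hh^{\mathrm{op}}\to\Hh$ is obtained by postcomposing with $[-,\T]$. It is exactly the $\mathsf H$-action on $[\shape\X,\T]$ induced by precomposition with the action on $\shape\X$, with the inversion of the group supplied automatically by the contravariance on the one-object groupoid $BK$. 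By definition its limit over $\B\mathsf H$ is the homotopy fixed stack $[\shape\X,\T]^{h\mathsf H}$.

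The only point needing care is this last identification. The induced action on the mapping stack must be the one used in the statement of Theorem~\ref{thm:descent-magnetic-symmetries}, and the limit over the discrete $\B\mathsf H=\mathrm{Dsc}(BK)$ must be indexed by $BK$. I would settle this by taking that as the definition of the induced action and of $(-)^{h\mathsf H}$, which is consistent with the descent equivalence already invoked in Lemma~\ref{lem:shape-quotient}.
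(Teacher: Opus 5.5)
Your proposal is correct and follows essentially the paper's argument. The paper likewise uses Cartesian closedness, $B\times\colim_i A_i\cong\colim_i(B\times A_i)$ and Yoneda, testing against arbitrary $B\in\mathrm{ob}\,\Hh$ rather than only Cartesian spaces $U$, and then obtains the second claim from Lemma~\ref{lem:shape-quotient} by taking $\lim_{\B\mathsf H}$ as the definition of $(-)^{h\mathsf H}$; your remarks on the self-adjointness of $[-,\T]$ and on the inversion arising from contravariance over $BK$ are consistent refinements of that same route.
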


\begin{proof}
For every $B\in\mathrm{ob}\,\Hh$, Cartesian closedness gives
\begin{align}
    \Hh\left(B,[\colim_i A_i,\T]\right)
    &\cong \Hh\left(B\times\colim_i A_i,\T\right)
    \cong \Hh\left(\colim_i(B\times A_i),\T\right)\notag\\
    &\cong \lim_i\Hh(B\times A_i,\T)
    \cong \lim_i\Hh(B,[A_i,\T]),
\end{align}
which by Yoneda implies that $[\colim_i A_i,\T]\cong\lim_i[A_i,\T]$. Applying this to Lemma~\ref{lem:shape-quotient} gives
\begin{equation}
    [\shape(\X\dslash{\mathcal H}),\T]\cong\lim_{\B\mathsf H}[\shape\X,\T]=[\shape\X,\T]^{h\mathsf H}.
\end{equation}
\end{proof}

\begin{remark}[Additional structure on magnetic symmetry backgrounds]\label{rmk:limits-underlying}
    When $\T$ is a grouplike $E_\infty$-object, so is $[\shape\X,\T]$, and the
    limit above may be formed either in $\Hh$ or in
    $\Grp_{E_\infty}(\Hh)$. In particular, limits of $E_\infty$-objects are created by
    the forgetful functor to $\Hh$. The two readings of
    $[\shape\X,\T]^{h\mathsf H}$ therefore agree. In what follows, we will exploit this \textit{additional structure} on the mapping stack. In other words, we find that it is \textit{convenient} to take into account the composition structure on the mapping stack (and implicitly on $\Mag_\T(\X)$), though strictly speaking it is not \textit{necessary} for the general theory (see Remark~\ref{rmk:hypotheses-fibration}). That being said, in specific applications, an adequate composition structure is in fact \textit{necessary} to reproduce many results expected on physical grounds, as we discuss in the upcoming \cite{PerezLona:applications}.
\end{remark}

\begin{remark}\label{rmk:precomposition-action}
    The $\mathsf H$-action on $[\shape\X,\T]$ is induced by
    precomposition:
    \begin{equation}
        \gamma\longmapsto(\gamma^{-1})^*.
    \end{equation}
But notice that it is not necessarily an action by \textit{pointed} automorphisms, so in particular \textit{the basepoint of $\shape \X$ is not necessarily preserved}. This is a crucial point, as it implies that the sequence that defines the target space of magnetic symmetry backgrounds
\begin{equation}
    \Mag_\T(\X) \longrightarrow [\shape\X,\T] \longrightarrow \T
\end{equation}
is not $\mathsf H$-equivariant, and thus operations concerning the $\mathsf{H}$-action, notably the computation of fixed stacks we shall employ, need not respect the sequence. Therefore, we will use a different, \textit{equivariant} way to define the magnetic symmetry background target space. For this, denote
\begin{equation}
    A:=\shape\X,
    \quad
    M:=[A,\T].
\end{equation}
The terminal map $!:A\to *$ is automatically $\mathsf H$-equivariant, with $*$ equipped with the trivial $\mathsf{H}$-action. It therefore induces a canonical equivariant map
\begin{equation}
    s:=\const=[!,\T]:
    \T=[*,\T]\longrightarrow[A,\T]=M,
\end{equation}
where $\T$ carries the trivial $\mathsf H$-action.
\end{remark}

\begin{definition}\label{def:reduced}
    We define the \textit{reduced} object  in $\Fun(\B\mathsf H,\Grp_{E_\infty}(\Hh))$ as
    \begin{equation}
        M_{\red}:=[A,\T]_{\red}
        :=\cofib\left(\T\xrightarrow{s}M\right).
    \end{equation}
\end{definition}
Here and below, for $\mathsf H=\mathrm{Dsc}(K)$, $\Fun(\B\mathsf H,-)$ denotes
$\Fun(BK,-)$, identified with $\mathsf H$-equivariant objects by descent
(proof of Lemma~\ref{lem:shape-quotient}).

\begin{lemma}[Splitting of a retraction in a stable category]
    \label{lem:stable-retract}
    Let $\mathcal C$ be a stable $\infty$-category and let
    \begin{equation}
        r:M\to\T,
        \quad
        s:\T\to M
    \end{equation}
    be morphisms of $\mathcal C$ equipped with an equivalence $
        rs\cong\mathrm{id}_{\T}.$   For $F:=\fib(r)$, with fiber inclusion $j:F\to M$, we have that
    \begin{equation}
\varphi:=j\,\mathrm{pr}_F+s\,\mathrm{pr}_{\T}:
        F\oplus\T\longrightarrow M
    \end{equation}
    is an equivalence. Under $\varphi$, the map $s$ is the inclusion of
    the summand $\T$ and $r$ is the projection onto it, giving an equivalence
    \begin{equation}
        \cofib(s)\cong F=\fib(r).
    \end{equation}
    Since the construction uses only $\fib$, $\oplus$, and the given data, it
    is functorial in the quadruple $(M,\T,r,s)$ and applies verbatim in
    $\Fun(K,\mathcal C)$ for any simplicial set $K$ \cite[Proposition 1.1.3.1]{lurie2017higher}.
\end{lemma}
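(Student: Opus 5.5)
The statement to prove is Lemma~\ref{lem:stable-retract}, which is standard stable-category bookkeeping. The plan is to show that $\varphi$ induces an equivalence on the cofiber of the summand inclusions, and then conclude by the five-lemma-type property of cofiber sequences in a stable $\infty$-category.

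First, since $\mathcal C$ is stable it is additive. Hence the sum $j\,\mathrm{pr}_F+s\,\mathrm{pr}_{\T}$ makes sense and is well defined up to contractible choice, using the biproduct $F\oplus\T$ and the canonical $E_\infty$-structure on mapping spaces.

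Next, consider the map of fiber sequences
\begin{equation}
\begin{tikzcd}
F \arrow[r] \arrow[d,"\mathrm{id}"] & F\oplus\T \arrow[r,"\mathrm{pr}_{\T}"] \arrow[d,"\varphi"] & \T \arrow[d,"\mathrm{id}"] \\
F \arrow[r,"j"] & M \arrow[r,"r"] & \T
\end{tikzcd}
\end{equation}
The top row is the split fiber sequence of the biproduct, and the bottom row is the defining fiber sequence of $F$.

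The left square commutes because $\varphi$ restricted to $F$ is $j$. The right square commutes because $r\varphi = rj\,\mathrm{pr}_F + rs\,\mathrm{pr}_{\T}$. Here $rj\simeq 0$ via the canonical null-homotopy of the fiber, and $rs\simeq\mathrm{id}_{\T}$ by hypothesis. Together these give a homotopy $r\varphi\simeq\mathrm{pr}_{\T}$.

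The step needing care is that these homotopies assemble into a genuine morphism of fiber sequences, rather than two separately commuting squares. To handle this, I would construct $\varphi$ via the universal property of $M$ as a pullback. In a stable category, the fiber sequence $F\to M\to\T$ is a pushout as well as a pullback. A map $F\oplus\T\to M$ is therefore determined by a map to $\T$, here $\mathrm{pr}_{\T}$, together with a lift over it. The lift is built from the section $s$ on the summand $\T$ and from $j$ (with its null-homotopy) on the summand $F$. This produces $\varphi$ as a map over $\T$, so the induced map on fibers over $\T$ is $\mathrm{id}_F$.

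A map over $\T$ inducing an equivalence on fibers is an equivalence, since in a stable category $\fib$ detects equivalences of objects over a fixed base. Hence $\varphi$ is an equivalence.

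Under $\varphi$, $s$ is $\varphi$ composed with the summand inclusion of $\T$, and $r$ corresponds to $\mathrm{pr}_{\T}$ by the right square. The cofiber of the summand inclusion $\T\to F\oplus\T$ is $F$. Hence $\cofib(s)\cong F=\fib(r)$.

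For functoriality, every construction used (fibers, biproducts, sums of maps, universal maps into pullbacks) is given by limits and colimits, so it is functorial in $(M,\T,r,s)$. Since $\Fun(K,\mathcal C)$ is stable with pointwise limits and colimits \cite[Proposition 1.1.3.1]{lurie2017higher}, the same argument applies verbatim there. Alternatively, one can note that equivalences in $\Fun(K,\mathcal C)$ are detected pointwise.
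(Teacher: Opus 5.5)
Your proposal is correct and follows the paper's proof. Both use the same morphism of fiber sequences, with identities on $F$ and $\T$, and the same computation $r\varphi\simeq rj\,\mathrm{pr}_F+rs\,\mathrm{pr}_{\T}\simeq\mathrm{pr}_{\T}$. The paper then invokes the five lemma for the induced map of split distinguished triangles in $\mathrm{ho}\,\mathcal C$. There the squares need only commute up to homotopy, and the connecting maps vanish because both rows split, so the coherence issue you flag does not arise on that route.

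Your alternative closing step is also valid: you build $\varphi$ as a map in $\mathcal C_{/\T}$ and use that $\fib$ detects equivalences. However, the universal property doing the work there is that of $F\oplus\T$ as a coproduct in $\mathcal C_{/\T}$, not a pullback property of $M$.
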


\begin{proof}
    From $rs\cong\mathrm{id}_{\T}$ and $rj\cong0$ we get
    \begin{equation}
        r\varphi
        =
        rj\,\mathrm{pr}_F+rs\,\mathrm{pr}_{\T}
        \cong
        \mathrm{pr}_{\T}.
    \end{equation}
    Since moreover $\varphi|_F=j$, there is a commutative diagram of fiber sequences
    \begin{equation}
    \begin{tikzcd}
        F \arrow[r] \arrow[d, equal] & F\oplus\T \arrow[r, "\mathrm{pr}_{\T}"] \arrow[d, "\varphi"] & \T \arrow[d, equal] \\
        F \arrow[r, "j"'] & M \arrow[r, "r"'] & \T.
    \end{tikzcd}
    \end{equation}
    Both rows are split fiber sequences, so this is a map of
    distinguished triangles in the homotopy category of $\mathcal C$, and
    $\varphi$ is an equivalence by the five lemma. Under this equivalence $s$ corresponds to the inclusion of
    the $\T$-summand, whose cofiber is $F$. Since $\Fun(K,\mathcal C)$ is
    stable with pointwise fibers, cofibers, and biproducts, the same
    construction applies there.
\end{proof}

\begin{lemma}[Connectivity of the splitting]\label{lem:connectivity}
    Let $M,\T\in\mathrm{ob}\,\Spcn(\Hh)$ be connective spectrum objects in $\Hh$ (Definition~\ref{app:def:conn-spec}) and let $r,s$ be as in Lemma~\ref{lem:stable-retract}. Then:
    \begin{enumerate}[label=\textup{(\alph*)}]
        \item the cofiber of $s$ formed in $\Spcn(\Hh)$ agrees with the
        cofiber formed in $\Sp(\Hh)$, and so it is connective too,
        \item the fiber formed in $\Sp(\Hh)$ is connective, hence agrees
        with the fiber formed in $\Spcn(\Hh)$.
    \end{enumerate}
    Hence, the equivalences of Lemma~\ref{lem:stable-retract} hold
    in $\Spcn(\Hh)\cong\Grp_{E_\infty}(\Hh)$, and, pointwise, in
    $\Fun(K,\Spcn(\Hh))$.
\end{lemma}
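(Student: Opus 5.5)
The plan is to reduce both claims to the standard relation between the t-structure on $\Sp(\Hh)$ and its connective part. I would use the equivalence $\Spcn(\Hh)\cong\Grp_{E_\infty}(\Hh)$ (Definition~\ref{app:def:conn-spec}), under which $\Spcn(\Hh)$ is the connective half $\Sp(\Hh)_{\geq 0}$ of the Postnikov t-structure on the stable $\infty$-category $\Sp(\Hh)$. I would also use the facts that homotopy sheaves $\pi_n$ are computed objectwise, or after sheafification, and that they fit into long exact sequences for (co)fiber sequences in $\Sp(\Hh)$.

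For (a), the inclusion $\Spcn(\Hh)\hookrightarrow\Sp(\Hh)$ is a left adjoint, with right adjoint the connective cover $\tau_{\geq 0}$. It therefore preserves colimits, and in particular cofibers, provided the cofiber computed in $\Sp(\Hh)$ is connective. That holds here: the long exact sequence of homotopy sheaves for $\T\to M\to\cofib(s)$ gives $\pi_n\cofib(s)$ squeezed between $\pi_n M$ and $\pi_{n-1}\T$, and both vanish for $n<0$. Equivalently, $\Sp(\Hh)_{\geq0}$ is closed under colimits and extensions.

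For (b), I would compute $\pi_n F$ from the long exact sequence of $F\to M\xrightarrow{r}\T$. For $n<0$ we get $\pi_n F\cong\pi_{n+1}\T/\mathrm{im}(\pi_{n+1}M)$ in degree $-1$, and zero below. The key input is the retraction $rs\cong\mathrm{id}_\T$. It makes $\pi_0 r$, and indeed every $\pi_k r$, a split epimorphism of sheaves of abelian groups. Hence the connecting map $\pi_0\T\to\pi_{-1}F$ vanishes and $\pi_{-1}F=0$. So $F$ is connective. The fiber in $\Spcn(\Hh)$ is $\tau_{\geq0}$ of the fiber in $\Sp(\Hh)$, since the right adjoint $\tau_{\geq0}$ preserves limits, and the two therefore agree.

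Given (a) and (b), the equivalences $\varphi\colon F\oplus\T\simeq M$ and $\cofib(s)\cong F$ of Lemma~\ref{lem:stable-retract}, produced in $\Sp(\Hh)$, involve only objects lying in $\Spcn(\Hh)$. Finite biproducts of connective objects are connective. So these equivalences hold in $\Spcn(\Hh)$, hence in $\Grp_{E_\infty}(\Hh)$. For diagrams $\Fun(K,\Spcn(\Hh))$, fibers and cofibers are computed pointwise in $\Fun(K,\Sp(\Hh))$, and connectivity is a pointwise condition, so the same argument applies verbatim.

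The only delicate step is the degree $-1$ vanishing in (b). The proof must check that the splitting of $\pi_0 r$ holds as sheaves, not just on global sections. This follows because the homotopy $rs\cong\mathrm{id}$ is a homotopy of maps of sheaves of spectra, so it induces $\pi_0 r\circ\pi_0 s=\mathrm{id}$ on homotopy sheaves.
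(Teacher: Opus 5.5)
Your proposal is correct and follows essentially the same route as the paper. Part (a) uses that $\Spcn(\Hh)\hookrightarrow\Sp(\Hh)$ is left adjoint to $\tau_{\geq0}$; the proviso you attach is unnecessary, since left adjoints preserve colimits outright, but your extra long-exact-sequence check is harmless. Part (b) uses the long exact sequence to identify $\pi_{-1}\fib(r)$ with $\coker(\pi_0 r)$, which vanishes because $\pi_0 r\circ\pi_0 s=\mathrm{id}$ as sheaves, followed by the same pointwise extension to $\Fun(K,-)$.
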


\begin{proof}
    The inclusion $i:\Spcn(\Hh)\hookrightarrow\Sp(\Hh)$ is a left adjoint, with right adjoint $\tau_{\geq0}$ the truncation functor (Definition~\ref{app:def:truncation-functors-spec}). Hence, it preserves colimits. This gives (a), and the cofiber is a colimit
    of connective objects, hence connective.

    For (b), the long exact sequence of homotopy sheaves of
    \begin{equation}
        \fib(r)\to M\xrightarrow{r}\T
    \end{equation}
    gives $\pi_n\fib(r)=0$ for $n\leq-2$, since $\pi_nM$ and
    $\pi_{n+1}\T$ vanish there. Further, we have the equivalence $\pi_{-1}\fib(r)\cong\coker\left(\pi_0(r)\right)$ with the cokernel constructed in abelian group objects of $\Hh$. Now, since
    $\pi_0(r)\pi_0(s)=\mathrm{id}$, $\pi_0(r)$ is a split epimorphism
    of sheaves of abelian groups, hence an epimorphism, which gives that the cokernel
    vanishes. We conclude that $\fib(r)$ is connective and
    $\tau_{\geq0}\fib(r)=\fib(r)$.

    Finally, colimits, limits, and truncations in $\Fun(K,-)$ are computed
    pointwise, so both statements follow.
\end{proof}
\begin{corollary}[Pointed mapping objects and reduced objects]
    \label{cor:red-pointed}
    Let $A$ be pointed by $a: *\to A$ and let $\T$ be a grouplike
    $E_\infty$-object. Let
    \begin{equation}
        M=[A,\T],\quad
        s=\const:\T\to M,\quad
        r=\ev_a:M\to\T.
    \end{equation}
    Then $rs=\mathrm{id}_{\T}$, and there is an equivalence
    \begin{equation}
        \cofib(s)\cong\fib(r)=[A,\T]_{*,a}
    \end{equation}
    in $\Grp_{E_\infty}(\Hh)$ (i.e. an equivalence, but not necessarily equivariant, since only one of them carries an $\mathsf{H}$-action with the present hypotheses).
\end{corollary}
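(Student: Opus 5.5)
The plan is to reduce the statement to Lemma~\ref{lem:stable-retract} and Lemma~\ref{lem:connectivity}. This requires three checks: the composite $rs$ is the identity, both maps live in $\Grp_{E_\infty}(\Hh)$, and $\fib(r)$ is the pointed mapping object by definition.

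\emph{Step 1: $rs\cong\mathrm{id}_\T$.} The map $s=[!,\T]$ is precomposition with the terminal map $!:A\to *$. The map $r=\ev_a=[a,\T]$ is precomposition with $a:*\to A$, under $[*,\T]\cong\T$. By functoriality of the contravariant internal hom, $rs=[!\circ a,\T]$. Since $!\circ a=\mathrm{id}_*$, we get $rs=[\mathrm{id}_*,\T]=\mathrm{id}_\T$, which is an equality and not merely a homotopy.

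\emph{Step 2: the maps are $E_\infty$-homomorphisms.} Since $\T$ is a grouplike $E_\infty$-object and $[A,-]$ preserves finite products (it is a right adjoint), $M=[A,\T]$ inherits a grouplike $E_\infty$-structure pointwise. Precomposition maps $[f,\T]$ commute with the pointwise structure, so $r$ and $s$ are morphisms in $\Grp_{E_\infty}(\Hh)$. I would make this formal by noting that $[-,\T]:\Hh^{op}\to\Hh$ lifts to $\Grp_{E_\infty}(\Hh)$, because $[B,-]$ preserves products and hence commutative group objects. We then pass to $\Spcn(\Hh)\cong\Grp_{E_\infty}(\Hh)$ and regard everything inside the stable category $\Sp(\Hh)$.

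\emph{Step 3: apply the lemmas.} Lemma~\ref{lem:stable-retract}, applied in $\Sp(\Hh)$, gives $\cofib(s)\cong\fib(r)$. Lemma~\ref{lem:connectivity} ensures that both are connective and agree with the fiber and cofiber formed in $\Spcn(\Hh)$, hence in $\Grp_{E_\infty}(\Hh)$.

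\emph{Step 4: identify the fiber.} The fiber of $r$ in $\Grp_{E_\infty}(\Hh)$ has the same underlying object as the fiber in $\Hh$, because limits are created by the forgetful functor (Remark~\ref{rmk:limits-underlying}). That underlying fiber is $[A,\T]_{*,a}$ by definition.

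\emph{Step 5: non-equivariance.} The final parenthetical should be addressed by observation rather than proved. The object $\cofib(s)$ is built from the equivariant map $s$, whereas $r$ depends on $a$, which need not be fixed by $\mathsf H$ (Remark~\ref{rmk:precomposition-action}). So the equivalence is asserted only in $\Grp_{E_\infty}(\Hh)$.

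\emph{Main obstacle.} The only mildly delicate point is Step 2, namely the lifting of $r$ and $s$ to $E_\infty$-maps coherently rather than merely up to homotopy on $\pi_0$. I would handle it via the product-preserving functor argument above, as this avoids any explicit coherence bookkeeping.
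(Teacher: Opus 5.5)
Your proposal is correct and follows the paper's proof. The paper also obtains $rs=[\,!\circ a,\T]=[\mathrm{id}_*,\T]=\mathrm{id}_{\T}$ from the contravariance of $[-,\T]$, and then applies Lemmas~\ref{lem:stable-retract} and~\ref{lem:connectivity} because $M$ and $\T$ are connective. Your Steps 2 and 4, which lift $r$ and $s$ functorially to $E_\infty$-maps and identify the fiber using limits created by the forgetful functor, make explicit what the paper's two-line proof leaves implicit.
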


\begin{proof}
    The identity $rs=\mathrm{id}_{\T}$ follows from
    \begin{equation}
        rs=[\,!\circ a,\T]=[\mathrm{id}_*,\T]=\mathrm{id}_{\T},
    \end{equation}
    where the composite is indexed by $!\circ a=\mathrm{id}_*$ because
    $[-,\T]$ is contravariant. Both $M$ and $\T$ are connective, so
    Lemmas~\ref{lem:stable-retract} and \ref{lem:connectivity} apply.
\end{proof}

We have therefore constructed an object that is equivalent to the pointed mapping object, but which is defined in a way that is manifestly $\mathsf H$-equivariant. This allows us to take the homotopy fixed points of the $\mathsf H$-action on the reduced object, and thus compute the magnetic symmetry backgrounds of the gauged theory. Thus we arrive at the following sequence.

\begin{lemma}[The equivariant reduced fiber sequence]
    \label{lem:equivariant-reduced}
    There is a fiber sequence 
    \begin{equation}
        \T\xrightarrow{\const}
        [\shape\X,\T]
        \longrightarrow
        [\shape\X,\T]_{\red}
    \end{equation}
   in
$ \Fun(\B\mathsf H,\Grp_{E_\infty}(\Hh))$. No point of $\X$ is used. The sequence need not be equivariantly
    split.
\end{lemma}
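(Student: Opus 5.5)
The plan is to build the sequence objectwise in $\mathrm{Fun}(\B\mathsf H,\Grp_{E_\infty}(\Hh))$ and then pass through the stable envelope. First I will make sure that all three terms, and the map $s=\const$, really live in this functor category. The $\mathsf H$-action on $A=\shape\X$ is the functor $BK\to\Hh$ supplied by Lemma~\ref{lem:shape-quotient}. Composing it with the contravariant functor $[-,\T]\colon\Hh^{\mathrm{op}}\to\Grp_{E_\infty}(\Hh)$ gives an object $M=[A,\T]$ of $\mathrm{Fun}(BK^{\mathrm{op}},\Grp_{E_\infty}(\Hh))\cong\mathrm{Fun}(BK,\Grp_{E_\infty}(\Hh))$, using inversion on $K$; this is exactly the precomposition action $\gamma\mapsto(\gamma^{-1})^*$ of Remark~\ref{rmk:precomposition-action}. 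Here $[A,\T]$ is grouplike $E_\infty$ because $\T$ is and $[A,-]$ preserves finite products.

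The terminal map $!\colon A\to *$ is a natural transformation from the $\mathsf H$-diagram $A$ to the constant diagram $*$. Applying $[-,\T]$ to it yields the natural transformation $s\colon \underline{\T}\to M$ from the constant diagram at $\T$. This is where the claim that no point of $\X$ is used comes from: only $!$ enters, never the basepoint $\mathsf x$. Next I will form $M_{\red}=\cofib(s)$ as in Definition~\ref{def:reduced}. To do so I will identify $\Grp_{E_\infty}(\Hh)\cong\Spcn(\Hh)$ and compute the cofiber in $\mathrm{Fun}(BK,\Sp(\Hh))$, where it is pointwise. Lemma~\ref{lem:connectivity}(a) then ensures that this agrees with the cofiber in $\mathrm{Fun}(BK,\Spcn(\Hh))$, since the inclusion is a left adjoint.

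The main point is to upgrade ``cofiber sequence'' to ``fiber sequence'' inside the non-stable category $\Spcn(\Hh)$. I will argue pointwise. Forgetting the action, $A$ is pointed (Remark~\ref{rmk:x-pointed-connected}), so $\ev_{\mathsf x}$ retracts $s$. Corollary~\ref{cor:red-pointed} together with Lemma~\ref{lem:stable-retract} then give $M\cong\T\oplus F$ with $s$ the summand inclusion. Consequently the stable fiber of $M\to\cofib(s)$ is $\T$, and this fiber is connective, so it coincides with the fiber in $\Spcn(\Hh)$. Because fibers in $\mathrm{Fun}(BK,\Spcn(\Hh))$ are computed pointwise and equivalences are detected pointwise, the sequence is a fiber sequence equivariantly.

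It remains to justify the final claim that the sequence need not split equivariantly. The retraction $\ev_{\mathsf x}$ is not $\mathsf H$-equivariant unless $\mathsf x$ is a homotopy fixed point, and this is precisely the observation of Remark~\ref{rmk:precomposition-action}. Anticipating part (iii) of the theorem, an example such as $\shape S^1$ with a free rotation action would show that no equivariant retraction need exist. I would only remark on this rather than prove it. I expect the main obstacle to be the careful handling of connectivity and pointwise detection in the functor category.
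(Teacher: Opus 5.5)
Your proof is correct, and its skeleton is the paper's. In both, $s$ comes from the equivariant terminal map $!$, the cofiber is formed in $\Fun(\B\mathsf H,\Sp(\Hh))$ where it agrees with the connective one, and you return to $\Spcn(\Hh)$ because the connective fiber is $\tau_{\geq0}$ of the stable fiber.

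The one real difference is how you identify the stable fiber of $M\to\cofib(s)$ with $\T$. You use the non-equivariant retraction $\ev_{\mathsf x}$ and Lemma~\ref{lem:stable-retract}. The paper simply observes that $\Fun(\B\mathsf H,\Sp(\Hh))$ is stable, so the cofiber sequence is already a fiber sequence there, and then concludes with $\tau_{\geq0}\T=\T$. Your splitting is therefore redundant. The paper's argument works for an arbitrary equivariant map of connective objects and is genuinely point-free, which is exactly what the clause ``No point of $\X$ is used'' records. Your verification step reintroduces $\mathsf x$, which sits awkwardly with your own remark that only $!$ enters. This is harmless here, since $\shape\X$ is pointed by Remark~\ref{rmk:x-pointed-connected} and the check is pointwise, but it is a detour.

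On non-splitting, the paper points to $\Pin^-(2)$ in Section~\ref{sssec:pin2}. There the nonzero connecting map $\partial$ is incompatible with an equivariant splitting, which would split the sequence of Lemma~\ref{lem:pointed-fiber}.

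Your $S^1$ example also works, but as stated it does not yet prove anything. Neither the non-equivariance of $\ev_{\mathsf x}$ nor even the absence of a homotopy fixed point rules out \emph{some} other equivariant retraction. Lemma~\ref{lem:equivariant-pointed} only gives the implication in the other direction. To close it, take $\mathcal H=U(1)$ rotating $S^1$. An equivariant splitting would make $\T^{h\mathsf H}\cong[\B^2\Z,\T]$ a retract of $[\shape(S^1\dslash U(1)),\T]\cong\T$. This is impossible for $\T=\Bnab{d}$ with $d\geq3$, since $\pi_{d-2}[\B^2\Z,\T]\cong U(1)$ while $\pi_{d-2}\T=0$.
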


\begin{proof}
    By Definition~\ref{def:reduced}, the sequence is a cofiber
    sequence in $\Fun(\B\mathsf H,\Spcn(\Hh))$. Colimits in functor
    categories are pointwise and the inclusion
    $i:\Spcn(\Hh)\hookrightarrow\Sp(\Hh)$ preserves colimits, so it is
    also a cofiber sequence in $\Fun(\B\mathsf H,\Sp(\Hh))$. The latter is
    stable, so the sequence is there a fiber sequence as well.

    All three terms are connective. Since the fiber formed in
    $\Fun(\B\mathsf H,\Spcn(\Hh))$ is $\tau_{\geq0}$ of the stable fiber,
    and $\tau_{\geq0}\T=\T$, the sequence is a fiber sequence in
    $\Fun(\B\mathsf H,\Grp_{E_\infty}(\Hh))$.

    The non-equivariant splitting need not be equivariant, as we exemplify in Section~\ref{sssec:pin2}.
\end{proof}

The point of constructing this sequence is that it is now $\mathsf H$-equivariant, which will in turn allow us to take homotopy fixed stacks of sequences of stacks, component-wise. This is the content of the next lemma.

\begin{lemma}[Homotopy fixed points of the reduced sequence]
    \label{lem:hfp-reduced}
    Let
    \begin{equation}
        Q:=\left([\shape\X,\T]_{\red}\right)^{h\mathsf H}.
    \end{equation}
    Then there is a fiber sequence
    \begin{equation}
        [\B\mathsf H,\T]
        \xrightarrow{\,p^*\,}
        [\shape(\X\dslash{\mathcal H}),\T]
        \longrightarrow Q,
    \end{equation}
    where $p:\shape(\X\dslash{\mathcal H})\longrightarrow\B\mathsf H$
    is the canonical projection.
\end{lemma}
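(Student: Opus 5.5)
We will apply the homotopy-fixed-point functor $(-)^{h\mathsf H}=\lim_{\B\mathsf H}$ to the equivariant fiber sequence of Lemma~\ref{lem:equivariant-reduced},
\begin{equation}
    \T\xrightarrow{\const}[\shape\X,\T]\longrightarrow[\shape\X,\T]_{\red},
\end{equation}
which lives in $\Fun(\B\mathsf H,\Grp_{E_\infty}(\Hh))$. Since $\lim_{\B\mathsf H}$ is a limit, it preserves fiber sequences. Limits of grouplike $E_\infty$-objects are created by the forgetful functor to $\Hh$ (Remark~\ref{rmk:limits-underlying}), so we obtain a fiber sequence
\begin{equation}
    \T^{h\mathsf H}\longrightarrow[\shape\X,\T]^{h\mathsf H}\longrightarrow Q
\end{equation}
in $\Grp_{E_\infty}(\Hh)$, with underlying fiber sequence in $\Hh$. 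A fiber sequence in $\Grp_{E_\infty}(\Hh)\cong\Spcn(\Hh)$ is a fiber sequence of underlying objects, because the forgetful functor is a right adjoint. The middle term is identified with $[\shape(\X\dslash\mathcal H),\T]$ by Lemma~\ref{lem:mapping-colimit}.

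Next we identify the first term. The action of $\mathsf H$ on $\T$ is trivial, so $\T^{h\mathsf H}=\lim_{\B\mathsf H}\T$ is the limit of the constant diagram. Applying Lemma~\ref{lem:mapping-colimit} to the constant diagram at $*$ gives
\begin{equation}
    \T^{h\mathsf H}\cong[\colim_{\B\mathsf H}*,\T].
\end{equation}
By the descent description in the proof of Lemma~\ref{lem:shape-quotient}, $\colim_{BK}*=\mathrm{Dsc}(BK)=\B\mathsf H$, so $\T^{h\mathsf H}\cong[\B\mathsf H,\T]$.

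It remains to check that the composite map is $p^*$. The map $\const=[!,\T]$ is induced by the equivariant map $!:\shape\X\to *$. Taking colimits over $\B\mathsf H$ via Lemma~\ref{lem:shape-quotient} turns $!$ into
\begin{equation}
    \colim_{\B\mathsf H}(!):(\shape\X)\dslash\mathsf H\cong\shape(\X\dslash\mathcal H)\longrightarrow\colim_{\B\mathsf H}*=\B\mathsf H.
\end{equation}
This is exactly the canonical projection $p$, since both arise from the map of action groupoids $\X\times\mathcal H^{\times k}\to\mathcal H^{\times k}$ after applying $\shape$. The equivalence of Lemma~\ref{lem:mapping-colimit} is natural in the diagram. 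Hence $\lim_{\B\mathsf H}[!,\T]$ corresponds to $[p,\T]=p^*$.

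The main obstacle is this last naturality check. We must verify that the descent equivalence $\Hh_{/\B\mathsf H}\cong\Fun(BK,\Hh)$ sends the terminal equivariant map $\shape\X\to *$ to $p$, compatibly with the identification of Lemma~\ref{lem:shape-quotient}. We would handle it by noting that under descent the object $(\shape\X)\dslash\mathsf H\to\B\mathsf H$ is the image of the functor $BK\to\Hh$ classifying the action. The colimit of its map to the terminal functor is then the structure map to $\colim_{BK}*$, which is the canonical projection.
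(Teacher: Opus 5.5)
Your proposal is correct and follows essentially the same route as the paper. You apply $\lim_{\B\mathsf H}$ to the equivariant sequence of Lemma~\ref{lem:equivariant-reduced}, identify $\T^{h\mathsf H}\cong[\B\mathsf H,\T]$ and $[\shape\X,\T]^{h\mathsf H}\cong[\shape(\X\dslash\mathcal H),\T]$ via Lemma~\ref{lem:mapping-colimit}, and recognize $\const^{h\mathsf H}$ as $p^*$ by naturality of that equivalence in the diagram, together with $\colim(!)=p$ from Lemma~\ref{lem:shape-quotient}. The paper handles the naturality step you flag exactly as you suggest: it writes the equivalence as induced by the cone of precomposition maps $\iota_i^*$ and compares the commuting squares for $\shape\X\to *$ and $\shape(\X\dslash\mathcal H)\to\B\mathsf H$.
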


\begin{proof}
    We apply $(-)^{h\mathsf H}=\lim_{\B\mathsf H}$ to the fiber sequence of Lemma~\ref{lem:equivariant-reduced}. Since
    limits preserve fiber sequences, and limits in $\Grp_{E_\infty}(\Hh)$ are created in $\Hh$ (Remark~\ref{rmk:limits-underlying}), this gives a fiber sequence
    \begin{equation}
        \T^{h\mathsf H}
        \longrightarrow
        [\shape\X,\T]^{h\mathsf H}
        \longrightarrow Q.
    \end{equation}

    Let $D_{\X}$ denote the
    $\mathsf H$-diagram with value $\shape\X$, and let $D_*$ denote the
    constant diagram with value $*$. The terminal maps
$        \alpha_i=\,!:\shape\X\to *
$    assemble into a natural transformation $\alpha:D_{\X}\Rightarrow D_*$, and $\const=[\alpha_i,\T]$ is equivariant because $\alpha$ is. By Lemma~\ref{lem:shape-quotient}, natural in the $\mathcal H$-object $\X$, taking colimits gives
    \begin{equation}
        \colim D_{\X}=\shape(\X\dslash{\mathcal H}),
        \quad
        \colim D_* = \B\mathsf H,
    \end{equation}
    and $\colim\alpha=p$ is the canonical projection.

    The equivalence of Lemma~\ref{lem:mapping-colimit}, 
    $[\colim D,\T]\cong\lim[D_i,\T]$, is induced by the cone of precomposition maps $\iota_i^*$, where $\iota_i:D_i\to\colim D$ are the colimit inclusions, so that it is natural in $D\in\Fun(\B\mathsf H,\Hh)$, and for $\alpha:D_{\X}\Rightarrow D_*$ we obtain a commutative square as:
    \begin{equation}
\begin{tikzcd}
    \shape\X \arrow[r, "\alpha_i"] \arrow[d, "\iota_{i,\X}"'] & {*} \arrow[d, "\iota_{i,*}"] \\
    \shape(\X\dslash{\mathcal H}) \arrow[r, "p"'] & \B\mathsf H
\end{tikzcd}
\qquad\xrightarrow{\ [-,\T]\ }\qquad
\begin{tikzcd}
    {[\B\mathsf H,\T]} \arrow[r, "p^*"] \arrow[d, "\iota_{i,*}^*"'] & {[\shape(\X\dslash{\mathcal H}),\T]} \arrow[d, "\iota_{i,\X}^*"] \\
    {[*,\T]} \arrow[r, "{[\alpha_i,\T]}"'] & {[\shape\X,\T]}
\end{tikzcd}
\end{equation}
    These squares are natural in $i$. Since the $\mathsf{H}$-action on $\T$ is trivial, the vertical maps assemble into the equivalences of Lemma~\ref{lem:mapping-colimit},
    \begin{equation}
        [\B\mathsf H,\T]\cong\T^{h\mathsf H},
        \quad
        [\shape(\X\dslash{\mathcal H}),\T]\cong[\shape\X,\T]^{h\mathsf H},
    \end{equation}
    and the bottom map assembles into $\lim_{\B\mathsf H}[\alpha_i,\T]=\const^{h\mathsf H}$. Commutativity then says that, under these identifications, $\const^{h\mathsf H}$ is the precomposition $p^*:[\B\mathsf H,\T]\to[\shape(\X\dslash{\mathcal H}),\T]$, which gives the claimed fiber sequence.
\end{proof}

\begin{lemma}[Passage to pointed mapping objects]
    \label{lem:pointed-fiber}
    There is a fiber sequence
    \begin{equation}
        [\B\mathsf H,\T]_*
        \xrightarrow{\,p^*\,}
        \Mag_\T(\X\dslash{\mathcal H},\bar x)
        \longrightarrow
        \left([\shape\X,\T]_{\red}\right)^{h\mathsf H}.
    \end{equation}
\end{lemma}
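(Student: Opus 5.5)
The plan is to obtain the pointed sequence from the unpointed one of Lemma~\ref{lem:hfp-reduced} by taking fibers of evaluation at the basepoints. Consider the commutative square
\begin{equation}
\begin{tikzcd}
{[\B\mathsf H,\T]} \arrow[r, "p^*"] \arrow[d, "\ev_{\mathsf b}"'] & {[\shape(\X\dslash{\mathcal H}),\T]} \arrow[d, "\ev_{\bar x}"] \\
\T \arrow[r, equal] & \T
\end{tikzcd}
\end{equation}
It commutes because $p\circ\bar x\cong\mathsf b$: the projection $\X\dslash\mathcal H\to\B\mathcal H$ sends the image of $x$ to the canonical point, and $\B\mathsf H$ is connected, so this homotopy exists. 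The square also lives in $\Grp_{E_\infty}(\Hh)$, since all maps are precompositions.

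Extend the square to a map of fiber sequences, with the bottom row given by $\T\xrightarrow{\mathrm{id}}\T\to 0$. The top row is the fiber sequence of Lemma~\ref{lem:hfp-reduced},
\begin{equation}
[\B\mathsf H,\T]\xrightarrow{p^*}[\shape(\X\dslash\mathcal H),\T]\to Q .
\end{equation}
The vertical maps are $\ev_{\mathsf b}$, $\ev_{\bar x}$ and $Q\to 0$. The right square commutes trivially. Taking fibers of the vertical maps gives a sequence
\begin{equation}
[\B\mathsf H,\T]_{*}\xrightarrow{p^*}\Mag_\T(\X\dslash\mathcal H,\bar x)\to \fib(Q\to 0)=Q .
\end{equation}

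To see that this is again a fiber sequence, work in the stable category $\Sp(\Hh)$. There fibers commute with fibers, so the total fiber of the $3\times3$ diagram can be computed in either order. The fiber of the bottom row is $0$. Hence the fibers of the columns form a fiber sequence, at least stably.

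The remaining work is to pass from $\Sp(\Hh)$ back to $\Grp_{E_\infty}(\Hh)\cong\Spcn(\Hh)$, as in Lemma~\ref{lem:connectivity}(b). Both evaluation maps admit sections, namely $\const$, so each is split surjective on $\pi_0$. Therefore the stable fibers $[\B\mathsf H,\T]_*$ and $\Mag_\T(\X\dslash\mathcal H,\bar x)$ are connective, and they agree with the connective fibers. The object $Q$ is a limit of connective objects computed in $\Hh$ (Remark~\ref{rmk:limits-underlying}), so its connective and underlying readings agree as well. Since fiber sequences in $\Grp_{E_\infty}(\Hh)$ are created by the underlying objects, the sequence is a fiber sequence there too.

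I expect the main obstacle to be bookkeeping rather than any conceptual step. One must check that the homotopy $p\bar x\cong\mathsf b$ is compatible with the identification of Lemma~\ref{lem:hfp-reduced}, that is, that the square above is the one induced from the equivariant data. The simplest route is to choose $\mathsf b:=p\circ\bar x$ as the basepoint of $\B\mathsf H$. This is legitimate because $[\B\mathsf H,\T]_*$ is independent of the choice of point up to equivalence, as $\B\mathsf H$ is connected.
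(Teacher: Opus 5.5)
Your skeleton is the paper's proof: the square $\ev_{\bar x}\circ p^*=\ev_{p\bar x}\simeq\ev_{\mathsf b}$ is viewed as a map from the fiber sequence of Lemma~\ref{lem:hfp-reduced} to $\T\xrightarrow{\mathrm{id}}\T\to 0$, and you then take vertical fibers. Taking $\mathsf b:=p\circ\bar x$ is a clean way to fix the homotopy.

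The write-up goes wrong in the detour through $\Sp(\Hh)$. The top row of Lemma~\ref{lem:hfp-reduced} is a fiber sequence in $\Grp_{E_\infty}(\Hh)\simeq\Spcn(\Hh)$, where $Q$ is the connective fixed-point object, i.e.\ $\tau_{\geq0}$ of the stable one (Lemma~\ref{lem:stable-connective-hfp}). Fibers in $\Spcn(\Hh)$ are $\tau_{\geq0}$ of stable fibers. So this row is a stable fiber sequence only if $\pi_0[\shape(\X\dslash{\mathcal H}),\T]\to\pi_0Q$ is an epimorphism; otherwise the stable fiber acquires a $\pi_{-1}$ equal to that cokernel.

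This failure actually occurs. Take $\Pin^-(2)$ with $d=2$:
\begin{itemize}
\item Theorem~\ref{thm:pontryagin-factorization} and Proposition~\ref{pin:Hminus} give $\pi_0\Mag_\T(\B G_-)\cong\widehat{H_2(BG_-,\Z)}=0$.
\item On the other hand, $\pi_0Q\cong H^0(\Z_2,U(1)^-)\cong\Z_2$.
\end{itemize}
This is the nonzero connecting map of Lemma~\ref{pin:transgression} and Remark~\ref{pin:remd2}. Hence your intermediate claim that the column fibers form a fiber sequence ``at least stably'' is false there.

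Your last paragraph does not repair this. The connectivity of the first two fibers, via the sections $\const$, is correct. But the sentence about the ``connective and underlying readings'' of $Q$ concerns $\Grp_{E_\infty}(\Hh)$ versus $\Hh$. The discrepancy lives between $\Spcn(\Hh)$ and $\Sp(\Hh)$.

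The fix is to drop the detour entirely. That the vertical fibers of a map of fiber sequences again form a fiber sequence is just the commutation of finite limits with finite limits. It holds in any pointed $\infty$-category with finite limits, and stability plays no role. Run the argument directly in $\Grp_{E_\infty}(\Hh)$, or in pointed objects of $\Hh$, where these limits are computed by Remark~\ref{rmk:limits-underlying}. You then get the statement with no connectivity argument at all, and this is exactly the paper's proof.
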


\begin{proof}
    The chosen point $x: *\to \X$ induces
$        \bar x: *\to \shape(\X\dslash{\mathcal H}).
$    Evaluation at $\mathsf b$ and $\bar x$, together with the fact that $p\bar x = \mathsf b$, therefore gives a commutative diagram
    \begin{equation}
        \begin{array}{ccccc}
            [\B\mathsf H,\T]
            & \xrightarrow{p^*} &
            [\shape(\X\dslash{\mathcal H}),\T]
            & \longrightarrow & Q
            \\[2pt]
            \downarrow{\ev_{\mathsf b}}
            &&
            \downarrow{\ev_{\bar x}}
            &&
            \downarrow
            \\[2pt]
            \T
            & \xrightarrow{\mathrm{id}} &
            \T
            & \longrightarrow & *,
        \end{array}
    \end{equation}
    where we have $Q=([\shape\X,\T]_{\red})^{h\mathsf H}$. The top row is the fiber
    sequence of Lemma~\ref{lem:hfp-reduced}, and the bottom row is the
    constant fiber sequence
    \begin{equation}
        \T\xrightarrow{\mathrm{id}}\T\to *.
    \end{equation}

    The vertical fibers of a map of fiber sequences again form a fiber
    sequence. The three vertical fibers here are
    \begin{equation}
        [\B\mathsf H,\T]_*,
        \quad
        \Mag_\T(\X\dslash{\mathcal H},\bar x),
        \quad
        Q,
    \end{equation}
    respectively, which then gives the desired sequence.
\end{proof}
    Notice that no equivariance of the chosen point $x$ is used in Lemma~\ref{lem:pointed-fiber}, nor anywhere in Lemmas~\ref{lem:equivariant-reduced} and \ref{lem:hfp-reduced}. The basepoint is
    used only after passage to the quotient, in order to form the
    vertical fibers. This is why part \textup{(ii)} does not require a
    homotopy fixed point.
    By contrast, if we do have a homotopy fixed point, then there will be a splitting of the fiber sequence, as we now show.

\begin{lemma}[Equivariant identification of the reduced object]
    \label{lem:equivariant-pointed}
    Suppose the $\mathsf H$-action on $\shape\X$ admits a homotopy fixed point
    \begin{equation}
        \mathsf y: *\longrightarrow \shape\X.
    \end{equation}
    Then evaluation $\ev_{\mathsf y}:[\shape\X,\T]\longrightarrow\T$ at $\mathsf y$ is equivariant, and there is an \textit{equivariant} equivalence
    \begin{equation}
        [\shape\X,\T]_{\red}
        \cong
        [\shape\X,\T]_{*,\mathsf y}.
    \end{equation}
    Moreover, the equivariant fiber sequence
    \begin{equation}
        \T\longrightarrow[\shape\X,\T]
        \longrightarrow[\shape\X,\T]_{\red}
    \end{equation}
    splits.
\end{lemma}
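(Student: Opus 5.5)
The plan is to make the evaluation map $\ev_{\mathsf y}$ equivariant. Once that is done, Lemma~\ref{lem:stable-retract} does the rest in the stable $\infty$-category $\Fun(\B\mathsf H,\Sp(\Hh))$, and Lemma~\ref{lem:connectivity} brings the conclusion back to $\Fun(\B\mathsf H,\Grp_{E_\infty}(\Hh))$.

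First, recall the descent identification from the proof of Lemma~\ref{lem:shape-quotient}: $\Fun(\B\mathsf H,\Hh)\cong\Hh_{/\B\mathsf H}$. Under it, a homotopy fixed point $\mathsf y$ is a morphism $*\Rightarrow D_{\X}$ in $\Fun(\B\mathsf H,\Hh)$, where $*$ carries the trivial action. Equivalently, it is a section of $\shape(\X\dslash\mathcal H)\to\B\mathsf H$.

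Since $[-,\T]$ is a functor $\Fun(\B\mathsf H,\Hh)^{op}\to\Fun(\B\mathsf H,\Grp_{E_\infty}(\Hh))$, applying it pointwise to $\mathsf y$ gives a natural transformation
\begin{equation}
 r:=[\mathsf y,\T]=\ev_{\mathsf y}:[\shape\X,\T]\longrightarrow[*,\T]=\T .
\end{equation}
The subtle point is that the action on $[\shape\X,\T]$ is by $(\gamma^{-1})^*$ (Remark~\ref{rmk:precomposition-action}). So I need to confirm that this pointwise functoriality really is the diagram-level version of that precomposition action. I would settle this by checking that both descriptions are induced by the same internal-hom bifunctor applied to the $\B\mathsf H$-diagram $D_{\X}$.

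Next, composing with $s=\const=[!,\T]$ gives
\begin{equation}
 rs=[\,!\circ\mathsf y,\T]=[\mathrm{id}_*,\T]=\mathrm{id}_{\T}.
\end{equation}
Here $!\circ\mathsf y$ is the identity of the terminal diagram, since $*$ is terminal in $\Fun(\B\mathsf H,\Hh)$. The retraction identity therefore holds equivariantly and coherently, exactly as in the proof of Corollary~\ref{cor:red-pointed}, with the computation now carried out in the diagram category rather than in $\Hh$.

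Finally, apply Lemma~\ref{lem:stable-retract} with $\mathcal C=\Sp(\Hh)$ and $K=B K_{\mathsf H}$, where $\mathsf H=\mathrm{Dsc}(K_{\mathsf H})$. This yields an equivalence $\cofib(s)\cong\fib(r)$ in $\Fun(\B\mathsf H,\Sp(\Hh))$, together with $[\shape\X,\T]\cong\fib(r)\oplus\T$, under which $s$ is the summand inclusion. By Lemma~\ref{lem:connectivity}, applied pointwise, both sides are connective, and the cofiber and fiber agree with those formed in $\Spcn$. Hence the equivalence holds in $\Fun(\B\mathsf H,\Grp_{E_\infty}(\Hh))$. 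The left-hand side is $[\shape\X,\T]_{\red}$ by Definition~\ref{def:reduced}. The right-hand side is $[\shape\X,\T]_{*,\mathsf y}$, because fibers in functor categories are computed pointwise and so its underlying object is $\fib(\ev_{\mathsf y})$.

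The splitting of the equivariant sequence of Lemma~\ref{lem:equivariant-reduced} is then immediate: $r$ is an equivariant retraction of $s$, equivalently $[\shape\X,\T]\cong\T\oplus[\shape\X,\T]_{\red}$ equivariantly. I expect the only real obstacle to be the first step, namely identifying the coherent equivariant structure on $\ev_{\mathsf y}$ with the given action. Everything after that is formal.
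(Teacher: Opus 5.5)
Your proposal is correct and follows the paper's proof. You exhibit the homotopy fixed point as a map of $\B\mathsf H$-diagrams $*\to\shape\X$, so that $\ev_{\mathsf y}=[\mathsf y,\T]$ is an equivariant retraction of $\const$, and then apply Lemmas~\ref{lem:stable-retract} and~\ref{lem:connectivity} in $\Fun(\B\mathsf H,\Sp(\Hh))$ and $\Fun(\B\mathsf H,\Spcn(\Hh))$, exactly as the paper does. Your justification of $\ev_{\mathsf y}\circ\const\cong\mathrm{id}_{\T}$ through the terminality of $*$ in the diagram category is slightly more explicit than the paper's ``as before''.
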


\begin{proof}
    A homotopy fixed point is a point $\mathsf y$ equipped with coherent
    equivariance data. Equivalently, it is a section of
    \begin{equation}
        (\shape\X)\dslash{\mathsf H}\longrightarrow\B\mathsf H.
    \end{equation}
    The coherent equivariance of $\mathsf y$ implies that evaluation at
    $\mathsf y$ is a morphism of $\mathsf H$-objects.

    As before,
$        \ev_{\mathsf y}\circ\const=\mathrm{id}_{\T}.$    Lemmas~\ref{lem:stable-retract} and \ref{lem:connectivity}, applied
    in $\Fun(\B\mathsf H,\Sp(\Hh))$ and $\Fun(\B\mathsf H,\Spcn(\Hh))$
    respectively, therefore give
    \begin{equation}
        [\shape\X,\T]
        \cong
        \T\oplus\fib(\ev_{\mathsf y})
    \end{equation}
    \textit{equivariantly}, so that $[\shape\X,\T]_{\red}
        \cong
        \fib(\ev_{\mathsf y})
        =
        [\shape\X,\T]_{*,\mathsf y}$ as $\mathsf H$-objects.
\end{proof}

\begin{lemma}[Equivariant points give homotopy fixed points]
    \label{lem:equivariant-point}
    Suppose the point $x: *\to \X$ is equipped with a coherent
    $\mathcal H$-equivariant structure, that is, with a lift of $x$ to a section
    \begin{equation}
        \sigma:\B\mathcal H\longrightarrow\X\dslash{\mathcal H}
    \end{equation}
    of the canonical projection. Then
    \begin{equation}
        \mathsf y:=\shape x: *\longrightarrow \shape\X
    \end{equation}
    carries a canonical structure of homotopy fixed point for the
    $\mathsf H$-action on $\shape\X$.
\end{lemma}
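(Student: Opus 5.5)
The plan is to apply $\shape$ to the section $\sigma$ and transport the result along the equivalences of Lemma~\ref{lem:shape-quotient}. Since $\shape$ is a functor, $\sigma$ and the canonical projection $\pi:\X\dslash\mathcal H\to\B\mathcal H$ with $\pi\circ\sigma\cong\mathrm{id}_{\B\mathcal H}$ give
\begin{equation}
    \shape\sigma:\shape\B\mathcal H\longrightarrow\shape(\X\dslash\mathcal H),\qquad \shape\pi\circ\shape\sigma\cong\mathrm{id}_{\shape\B\mathcal H}.
\end{equation}
Next I would identify the source and target. Lemma~\ref{lem:shape-quotient}, with $\X=*$ and the trivial action, gives $\shape\B\mathcal H\cong\B\mathsf H$. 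Applied to $\X$ itself, it gives $\shape(\X\dslash\mathcal H)\cong(\shape\X)\dslash\mathsf H$. The key check is that, under these equivalences, $\shape\pi$ is the canonical projection $(\shape\X)\dslash\mathsf H\to\B\mathsf H$. This follows from naturality of Lemma~\ref{lem:shape-quotient} in the $\mathcal H$-object, applied to the equivariant map $\X\to *$; the same naturality is already used in the proof of Lemma~\ref{lem:hfp-reduced}. We therefore obtain a section of $(\shape\X)\dslash\mathsf H\to\B\mathsf H$. By the descent equivalence $\Hh_{/\B\mathsf H}\cong\Fun(BK,\Hh)$ recalled in the proof of Lemma~\ref{lem:shape-quotient}, this is exactly a homotopy fixed point, i.e.\ a point of $\lim_{\B\mathsf H}\shape\X$. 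This matches the characterization used in the proof of Lemma~\ref{lem:equivariant-pointed}.

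It remains to check that the underlying point is $\shape x$. The fiber inclusion $\X\to\X\dslash\mathcal H$ is the restriction along $*\to\B\mathcal H$, and $\sigma$ restricted along the basepoint of $\B\mathcal H$ is $x$ by assumption. Applying $\shape$ and using $\shape *\cong *$, the restriction of $\shape\sigma$ along $\mathsf b:*\to\B\mathsf H$ is $\shape x$ composed with the inclusion $\shape\X\to(\shape\X)\dslash\mathsf H$. This inclusion is identified with the fiber inclusion by the levelwise description of Lemma~\ref{lem:shape-quotient}: its degree-$0$ term is $\shape\X$.

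The main obstacle is a subtlety in this last identification. $\shape$ does not preserve pullbacks in general, so we cannot conclude directly that the fiber of $\shape\pi$ is $\shape\X$. I would circumvent this using the colimit presentation $(\shape\X)\dslash\mathsf H\cong\colim_{\B\mathsf H}\shape\X$ together with descent: since $\B\mathsf H$ is discrete, the fiber of the colimit over $\mathsf b$ is the value of the diagram, namely $\shape\X$. I therefore never need to compute a fiber of $\shape$ applied to a fiber sequence. All coherence is carried by the single map $\shape\sigma$ and the natural equivalences, so the homotopy fixed point structure is canonical.
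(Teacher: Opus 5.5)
Your proposal is correct and follows essentially the paper's argument. You apply $\shape$ to $\sigma$, identify source, target and projection via Lemma~\ref{lem:shape-quotient} (applied to $\X$ and to $*$), and read off the resulting section of $(\shape\X)\dslash\mathsf H\to\B\mathsf H$ as a homotopy fixed point. Your descent argument for identifying the fiber, rather than assuming $\shape$ preserves the fiber sequence, makes explicit a step the paper only asserts with ``its underlying point is $\shape x$''.
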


\begin{proof}
    Applying $\shape$ to $\sigma$ gives a section of
    \begin{equation}
        \shape(\X\dslash{\mathcal H})\longrightarrow\shape(\B\mathcal H).
    \end{equation}
    By Lemma~\ref{lem:shape-quotient} applied to $\X$ and to the
    terminal object,
    \begin{equation}
        \shape(\X\dslash{\mathcal H})\cong(\shape\X)\dslash{\mathsf H},
        \quad
        \shape(\B\mathcal H)=\shape(*\dslash{\mathcal H})\cong *\dslash{\mathsf H}=\B\mathsf H,
    \end{equation}
    compatibly with the projections. A section of
    $(\shape\X)\dslash{\mathsf H}\to\B\mathsf H$ is  a homotopy fixed point
    (cf.\ \cite[Example~4.2.80]{Sati_Schreiber_2026}), and its underlying point is $\shape x$.
\end{proof}

It is precisely such a point that yields the splitting, as shown in this last lemma.

\begin{lemma}[Splitting of the pointed quotient mapping object]
    \label{lem:pointed-splitting}
    Assume the hypotheses of Lemma~\ref{lem:equivariant-pointed}. Then there is
    an equivalence
    \begin{equation}
        \Mag_\T(\X\dslash{\mathcal H},\bar x)
        \cong
        [\shape\X,\T]_{*,\mathsf y}^{h\mathsf H}
        \oplus
        [\B\mathsf H,\T]_*.
    \end{equation}
\end{lemma}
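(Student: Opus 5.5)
We would prove Lemma~\ref{lem:pointed-splitting} by combining the fiber sequence of Lemma~\ref{lem:pointed-fiber} with the equivariant splitting of Lemma~\ref{lem:equivariant-pointed}. Everything takes place in the stable setting: we pass to $\Fun(\B\mathsf H,\Sp(\Hh))$ and then come back to connective objects with Lemma~\ref{lem:connectivity}.

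First, Lemma~\ref{lem:equivariant-pointed} supplies an equivariant decomposition
\begin{equation*}
[\shape\X,\T]\cong\T\oplus[\shape\X,\T]_{*,\mathsf y}.
\end{equation*}
In this decomposition $\const$ is the inclusion of the trivially acted summand $\T$, and $\ev_{\mathsf y}$ is the projection onto it. Applying $(-)^{h\mathsf H}=\lim_{\B\mathsf H}$, which preserves finite biproducts, gives
\begin{equation*}
[\shape\X,\T]^{h\mathsf H}\cong\T^{h\mathsf H}\oplus[\shape\X,\T]_{*,\mathsf y}^{h\mathsf H}.
\end{equation*}
By Lemma~\ref{lem:hfp-reduced} we have $\T^{h\mathsf H}\cong[\B\mathsf H,\T]$, with $\const^{h\mathsf H}$ corresponding to $p^*$, and $Q\cong[\shape\X,\T]_{*,\mathsf y}^{h\mathsf H}$. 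So the fiber sequence of Lemma~\ref{lem:hfp-reduced} splits, with a retraction
\begin{equation*}
\rho:=\ev_{\mathsf y}^{h\mathsf H}\colon[\shape(\X\dslash\mathcal H),\T]\to[\B\mathsf H,\T]
\end{equation*}
of $p^*$.

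Second, we pass to pointed objects. One option is to apply $\fib(\ev)$ termwise to the split sequence and use that fibers commute with biproducts. The delicate point is the compatibility of the evaluation maps. In Lemma~\ref{lem:pointed-fiber}, the middle term is the fiber of $\ev_{\bar x}$, and the left term is the fiber of $\ev_{\mathsf b}$. The retraction $\rho$ commutes with $p^*$ but need not intertwine $\ev_{\bar x}$ with $\ev_{\mathsf b}$, because $\mathsf y$ need not be $\shape x$.

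To avoid this, we would split the pointed sequence directly. It suffices to show that $p^*\colon[\B\mathsf H,\T]_*\to\Mag_\T(\X\dslash\mathcal H,\bar x)$ admits a retraction in $\Grp_{E_\infty}(\Hh)$. Lemma~\ref{lem:stable-retract}, together with Lemma~\ref{lem:connectivity} to stay connective, then gives
\begin{equation*}
\Mag_\T(\X\dslash\mathcal H,\bar x)\cong[\B\mathsf H,\T]_*\oplus Q,
\end{equation*}
and $Q\cong[\shape\X,\T]_{*,\mathsf y}^{h\mathsf H}$ by the above.

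For the retraction, we would use Corollary~\ref{cor:red-pointed}, which identifies $[\B\mathsf H,\T]_*$ with $\cofib(\T\to[\B\mathsf H,\T])$. The map $\rho$ sends constants to constants, since $\rho\circ\const=\const$, so it descends to a map on cofibers of $\const$. Likewise, Corollary~\ref{cor:red-pointed} applied to $(\shape(\X\dslash\mathcal H),\bar x)$ identifies $\Mag_\T(\X\dslash\mathcal H,\bar x)$ with $\cofib(\const)$. The induced map
\begin{equation*}
\bar\rho\colon\Mag_\T(\X\dslash\mathcal H,\bar x)\to[\B\mathsf H,\T]_*
\end{equation*}
satisfies $\bar\rho\,p^*\cong\mathrm{id}$, because $\rho\,p^*\cong\mathrm{id}$ and $p^*$ commutes with $\const$. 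Working with reduced objects in this way makes the comparison basepoint-independent, which is exactly why Definition~\ref{def:reduced} was introduced.

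The main obstacle is this basepoint bookkeeping. We have to check carefully that the cofiber identification of Corollary~\ref{cor:red-pointed} is compatible with the fiber sequence of Lemma~\ref{lem:pointed-fiber}: the map $p^*$ on pointed objects must correspond to the map induced on cofibers of $\const$. That compatibility follows from naturality of Lemma~\ref{lem:stable-retract} in the quadruple $(M,\T,r,s)$ applied to the map $p^*$, since $\ev_{\bar x}\circ p^*=\ev_{\mathsf b}$ because $p\bar x=\mathsf b$. The remaining steps are formal consequences of stability.
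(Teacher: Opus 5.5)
Your argument is correct in outline but takes a different route from the paper. Both start from the unpointed splitting $[\shape(\X\dslash\mathcal H),\T]\cong[\B\mathsf H,\T]\oplus R^{h\mathsf H}$, where $R=[\shape\X,\T]_{\red}\cong[\shape\X,\T]_{*,\mathsf y}$ and $p^*$ is the inclusion of the first summand.

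The paper then deals with the basepoint mismatch you identified head-on. Since $\oplus$ is a biproduct and $\ev_{\bar x}\circ p^*=\ev_{\mathsf b}$, one can write $\ev_{\bar x}=\ev_{\mathsf b}\circ\mathrm{pr}_1+u\circ\mathrm{pr}_2$ for some $u\colon R^{h\mathsf H}\to\T$. The shear automorphism $(a,r)\mapsto(a-\const(u(r)),r)$ turns $\ev_{\bar x}$ into $\ev_{\mathsf b}\circ\mathrm{pr}_1$, whose fiber is visibly $[\B\mathsf H,\T]_*\oplus R^{h\mathsf H}$.

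You instead replace fibers of evaluations by cofibers of $\const$ (Corollary~\ref{cor:red-pointed}), which is basepoint-free, and build a retraction $\bar\rho$ of the pointed $p^*$. Your compatibility checks are right: $\rho\circ\const=\const$, $p^*\circ\const=\const$, $\ev_{\bar x}\circ p^*=\ev_{\mathsf b}$, and naturality of Lemma~\ref{lem:stable-retract}. This route never needs the auxiliary map $u$.

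One step is not justified as written. Lemmas~\ref{lem:stable-retract} and \ref{lem:connectivity}, applied to $(s,r)=(p^*,\bar\rho)$, give $\Mag_\T(\X\dslash\mathcal H,\bar x)\cong[\B\mathsf H,\T]_*\oplus\cofib(p^*)$, not $[\B\mathsf H,\T]_*\oplus Q$. To replace $\cofib(p^*)$ by $Q$ using Lemma~\ref{lem:pointed-fiber}, that fiber sequence of grouplike $E_\infty$-objects must also be a cofiber sequence. Equivalently, $\pi_0\Mag_\T(\X\dslash\mathcal H,\bar x)\to\pi_0Q$ must be an epimorphism of sheaves. A split first map does not guarantee this: $0\to0\to\Z$ is a fiber sequence of grouplike $E_\infty$-objects with split first map, yet $\Z\not\cong\cofib(0\to0)$.

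The fix is short. Cofibers commute with cofibers, so apply $\cofib$ to the square formed by $\const_{\B\mathsf H}$, $\const$, $\mathrm{id}_\T$ and the unpointed $p^*$. This gives $\cofib(p^*_{\mathrm{pointed}})\cong\cofib(p^*_{\mathrm{unpointed}})\cong R^{h\mathsf H}$, the last step by your first paragraph.

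With this observation the retraction is in fact superfluous. The map $\const\colon\T\to[\shape(\X\dslash\mathcal H),\T]$ factors as $p^*\circ\const_{\B\mathsf H}$, i.e.\ it is $(\const_{\B\mathsf H},0)$ in the decomposition. Hence $\Mag_\T(\X\dslash\mathcal H,\bar x)\cong\cofib(\const)\cong\cofib(\const_{\B\mathsf H})\oplus R^{h\mathsf H}\cong[\B\mathsf H,\T]_*\oplus R^{h\mathsf H}$ in one line, which is arguably cleaner than the paper's shear.
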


\begin{proof}
    We denote
    \begin{equation}
        R:=[\shape\X,\T]_{\red}.
    \end{equation}
    Lemma~\ref{lem:equivariant-pointed} gives an equivariant splitting and an equivariant equivalence, respectively:
    \begin{equation}
        [\shape\X,\T]\cong\T\oplus R, \quad 
        R\cong[\shape\X,\T]_{*,\mathsf y}.
    \end{equation}
    Taking homotopy fixed points and using Lemma~\ref{lem:mapping-colimit} gives
    \begin{equation}
        [\shape(\X\dslash{\mathcal H}),\T]
        \cong
        [\B\mathsf H,\T]\oplus R^{h\mathsf H}.
    \end{equation}

    Under this decomposition, evaluation at $\bar x$ has the form
    \begin{equation}
        [\B\mathsf H,\T]\oplus R^{h\mathsf H}
        \longrightarrow\T,
        \quad
        (a,r)\longmapsto \ev_{\mathsf b}(a)+u(r)
    \end{equation}
    for some $E_{\infty}$-map
    \begin{equation}
        u:R^{h\mathsf H}\to\T.
    \end{equation}
    Since $\oplus$ is a biproduct, $\ev_{\bar x}$ is the sum of its
    restrictions to the two summands, and because $\T$ is a grouplike
    $E_\infty$-object, translation by constant maps gives an
    automorphism
    \begin{equation}
        (a,r)\longmapsto
        \left(a-\const(u(r)),r\right)
    \end{equation}
    of $[\B\mathsf H,\T]\oplus R^{h\mathsf H}$. Since $\ev_{\mathsf b}\circ\const=\mathrm{id}_{\T}$, after this change of
    coordinates evaluation at $\bar x$ becomes
    \begin{equation}
        (a,r)\longmapsto\ev_{\mathsf b}(a).
    \end{equation}
    Its fiber is therefore $[\B\mathsf H,\T]_*\oplus R^{h\mathsf H}$, and using $R\cong[\shape\X,\T]_{*,\mathsf y}$ results in
    \begin{equation}
        \Mag_\T(\X\dslash{\mathcal H},\bar x)
        \cong
        [\B\mathsf H,\T]_*
        \oplus
        [\shape\X,\T]_{*,\mathsf y}^{h\mathsf H}.
    \end{equation}
\end{proof}

Finally, we will describe a general way to obtain the sheaf homotopy groups of the magnetic symmetry background stack $\Mag_\T(\X\dslash{\mathcal H},\bar x)$, via the advertised spectral sequence. The first step is to relate the homotopy fixed points of the connective spectrum $\mathbf M$ to the homotopy fixed points of the corresponding grouplike $E_\infty$-object $M$.

\begin{lemma}[Stable versus connective homotopy fixed points]
    \label{lem:stable-connective-hfp}
    Let
    \begin{equation}
        M=[\shape\X,\T]\in\mathrm{ob}\,\Grp_{E_\infty}(\Hh)
        \cong\Spcn(\Hh)
    \end{equation}
    (see Remark~\ref{app:rmk:equivalence-einfty-connective}, cf. \cite[Example 1.23]{LurieDAGVII}, \cite[Theorem~5.2.6.10]{lurie2017higher} for the last equivalence) and let $\mathbf M\in\mathrm{ob}\,\Spcn(\Hh)$ be the corresponding connective
    spectrum under this equivalence. Define
    \begin{equation}
        \mathbf M_{\mathrm{st}}^{h\mathsf H}
        :=
        \mathrm{lim}_{\B\mathsf H}^{\Sp(\Hh)}\mathbf M.
    \end{equation}
    Then the connective spectrum corresponding to $M^{h\mathsf H}$ is $        \tau_{\geq0}\left(\mathbf M_{\mathrm{st}}^{h\mathsf H}\right)$. In particular, this means that for every $n\geq0$ there is an equivalence
    \begin{equation}
        \pi_n(M^{h\mathsf H})
        \cong
        \pi_n\left(\mathbf M_{\mathrm{st}}^{h\mathsf H}\right).
    \end{equation}
\end{lemma}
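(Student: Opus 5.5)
The plan is to show that the inclusion $i:\Spcn(\Hh)\hookrightarrow\Sp(\Hh)$ exchanges limits with the coreflector $\tau_{\geq0}$. By the equivalence $\Grp_{E_\infty}(\Hh)\cong\Spcn(\Hh)$, together with the fact that limits in $\Grp_{E_\infty}(\Hh)$ are created in $\Hh$ (Remark~\ref{rmk:limits-underlying}), the connective spectrum corresponding to $M^{h\mathsf H}=\lim^{\Hh}_{\B\mathsf H}M$ is the limit $\lim^{\Spcn(\Hh)}_{\B\mathsf H}\mathbf M$ formed inside $\Spcn(\Hh)$. The task is therefore to identify this connective limit with $\tau_{\geq0}$ of the stable limit.

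First, recall from Lemma~\ref{lem:connectivity} that $i$ is left adjoint to $\tau_{\geq0}:\Sp(\Hh)\to\Spcn(\Hh)$. Being a right adjoint, $\tau_{\geq0}$ preserves all limits. Apply it to the diagram $i\circ\mathbf M:\B\mathsf H\to\Sp(\Hh)$. Since $\tau_{\geq0}\circ i\cong\mathrm{id}$ (the counit is an equivalence on connective objects, $i$ being fully faithful), this gives
\begin{equation}
    \tau_{\geq0}\left(\lim^{\Sp(\Hh)}_{\B\mathsf H} i\mathbf M\right)\cong\lim^{\Spcn(\Hh)}_{\B\mathsf H}\tau_{\geq0}i\mathbf M\cong\lim^{\Spcn(\Hh)}_{\B\mathsf H}\mathbf M,
\end{equation}
and the right-hand side corresponds to $M^{h\mathsf H}$ by the preceding remark. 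This establishes the first claim.

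For the statement on homotopy sheaves, we use that $\tau_{\geq0}$ is the connective cover for the t-structure on $\Sp(\Hh)$. Concretely, the counit $i\tau_{\geq0}\mathbf N\to\mathbf N$ induces isomorphisms on $\pi_n$ for all $n\geq0$, computed as sheaves of abelian groups. Applying this with $\mathbf N=\mathbf M^{h\mathsf H}_{\mathrm{st}}$ gives $\pi_n(M^{h\mathsf H})\cong\pi_n(\tau_{\geq0}\mathbf M^{h\mathsf H}_{\mathrm{st}})\cong\pi_n(\mathbf M^{h\mathsf H}_{\mathrm{st}})$ for $n\geq0$. Here we also need that the homotopy sheaves of a grouplike $E_\infty$-object agree with those of its associated connective spectrum in nonnegative degrees; this is built into the equivalence of Remark~\ref{app:rmk:equivalence-einfty-connective}, since the underlying object is $\Omega^\infty$.

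The only step needing care is the identification of $\lim^{\Hh}$ of the underlying objects with $\Omega^\infty$ of $\lim^{\Spcn}$. I would handle it by noting that $\Omega^\infty:\Spcn(\Hh)\to\Hh$ is a right adjoint that factors through the limit-creating forgetful functor $\Grp_{E_\infty}(\Hh)\to\Hh$. All of this holds in the stated sheaf setting, with no truncation hypothesis needed.
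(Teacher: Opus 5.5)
Your proposal is correct and matches the paper's proof. The coreflection $i\dashv\tau_{\geq0}$ identifies the limit in $\Spcn(\Hh)$ with $\tau_{\geq0}$ of the stable limit, the fact that limits of grouplike $E_\infty$-objects are created in $\Hh$ identifies this with $M^{h\mathsf H}$, and $\tau_{\geq0}$ leaves $\pi_n$ unchanged for $n\geq0$. Your remarks on $\Omega^\infty$, and on matching the homotopy sheaves of the $E_\infty$-object with those of its connective spectrum, make explicit what the paper leaves implicit.
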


\begin{proof}
    The inclusion $ i:\Spcn(\Hh)\hookrightarrow\Sp(\Hh)$
    has $\tau_{\geq0}$ as its right adjoint. Limits in $\Spcn(\Hh)$ are
    therefore obtained by taking the corresponding stable limit and then
    applying connective truncation:
    \begin{equation}
        \mathrm{lim}_{\B\mathsf H}^{\Spcn(\Hh)}\mathbf M
        \cong
        \tau_{\geq0}
        \left(
        \mathrm{lim}_{\B\mathsf H}^{\Sp(\Hh)}\mathbf M
        \right).
    \end{equation}
    Under the equivalence $\Grp_{E_\infty}(\Hh)\cong\Spcn(\Hh)$, and using Remark~\ref{rmk:limits-underlying}, the left-hand side
    corresponds to $M^{h\mathsf H}$. Connective truncation does not change
    homotopy sheaves in nonnegative degrees, which gives the equivalence of homotopy sheaves.
\end{proof}

\begin{lemma}[Stable homotopy-fixed-point spectral sequence]
    \label{lem:hfp-ss}
    Suppose that $\T$ is $d$-truncated. Then
    \begin{equation}
        \pi_t\mathbf M=0
        \quad \text{for } t<0 \text{ or } d<t.
    \end{equation}
    The Whitehead--Postnikov filtration of $\mathbf M$ induces a
    strongly convergent spectral sequence (cf.\ \cite[Proposition 7.1.3.19]{lurie2017higher})
    \begin{equation}
        E_1^{s,t}
        =
        H^s(\mathsf H,\pi_t\mathbf M)
        \Longrightarrow
        \pi_{t-s}
        \left(\mathbf M_{\mathrm{st}}^{h\mathsf H}\right),
        \quad
        d_r:E_r^{s,t}\longrightarrow E_r^{s+r+1,\,t+r},
    \end{equation}
    where
    \begin{equation}
        \mathbf M_{\mathrm{st}}^{h\mathsf H}
        =
        \mathrm{lim}_{\B\mathsf H}^{\Sp(\Hh)}\mathbf M.
    \end{equation}
    Here, for an $\mathsf H$-equivariant sheaf of abelian groups $A$, we write
    $H^s(\mathsf H,A):=\pi_{-s}\left(\lim^{\Sp(\Hh)}_{\B\mathsf H}HA\right)$, with $HA$ the spectrum corresponding to $A$. When
    $\mathsf H$ is $0$-truncated and $A$ is constant, this is the group cohomology
    $H^s_{\rm gp}(\mathsf H,A)$.
\end{lemma}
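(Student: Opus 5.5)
The proof has two parts: the vanishing of the homotopy sheaves of $\mathbf M$, and the construction and convergence of the spectral sequence from its Whitehead--Postnikov filtration.

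\emph{Vanishing.} For $t<0$, $\pi_t\mathbf M=0$ because $\mathbf M$ is connective; it corresponds to the grouplike $E_\infty$-object $M=[\shape\X,\T]$ under $\Grp_{E_\infty}(\Hh)\cong\Spcn(\Hh)$. For $t>d$ we use that $\T$ is $d$-truncated. The internal hom $[A,\T]$ into a $d$-truncated object is $d$-truncated. To see this, note that for every $B$ the space $\Hh(B,[A,\T])\cong\Hh(B\times A,\T)$ is $d$-truncated, and $d$-truncatedness is detected by such mapping spaces. Hence $\pi_t M=0$ for $t>d$, and these are exactly the homotopy sheaves of $\mathbf M$ in nonnegative degrees.

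\emph{Filtration.} Consider the Postnikov tower $\mathbf M\to\cdots\to\tau_{\leq t}\mathbf M\to\tau_{\leq t-1}\mathbf M\to\cdots$, computed in $\Fun(\B\mathsf H,\Sp(\Hh))$. Truncations there are pointwise, so every stage and every layer is $\mathsf H$-equivariant. The layers are $\Sigma^t H\pi_t\mathbf M$. Since $\mathbf M$ has only finitely many nonzero homotopy sheaves, in degrees $0\le t\le d$, the tower is finite: $\tau_{\leq d}\mathbf M\cong\mathbf M$ and $\tau_{\leq -1}\mathbf M=0$.

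\emph{Applying fixed points.} Apply the exact functor $\lim^{\Sp(\Hh)}_{\B\mathsf H}$. It preserves fiber sequences and sends the tower to a finite tower with limit $\mathbf M^{h\mathsf H}_{\mathrm{st}}$ and layers $\lim_{\B\mathsf H}\Sigma^tH\pi_t\mathbf M$. By the definition of $H^s(\mathsf H,-)$ in the statement, the layers have $\pi_{t-s}\bigl(\lim_{\B\mathsf H}\Sigma^tH\pi_t\mathbf M\bigr)=\pi_{-s}\bigl(\lim_{\B\mathsf H} H\pi_t\mathbf M\bigr)=H^s(\mathsf H,\pi_t\mathbf M)$.

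\emph{Spectral sequence.} The tower of a filtered object in a stable $\infty$-category gives an exact couple, hence a spectral sequence, following \cite[Proposition 7.1.3.19]{lurie2017higher} or its dual for towers. The $E_1$ term is the homotopy of the layers, $E_1^{s,t}=H^s(\mathsf H,\pi_t\mathbf M)$. The indexing $d_r:E_r^{s,t}\to E_r^{s+r+1,t+r}$ comes from the connecting map $\tau_{\leq t}\to\Sigma\,\mathrm{fib}(\tau_{\leq t+r}\to\tau_{\leq t})$ as it passes $r$ filtration steps. Total degree $t-s$ decreases by one, as a differential must.

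\emph{Main obstacle.} The delicate step is strong convergence. The filtration is finite, with $t\in[0,d]$, so the relevant $\lim^1$ terms vanish, the spectral sequence degenerates at $E_{d+1}$, and it converges strongly to $\pi_*\mathbf M_{\mathrm{st}}^{h\mathsf H}$. The columns $s$ may still be unbounded. That is harmless: for each fixed total degree only finitely many $t$, hence finitely many $(s,t)$, contribute, so the filtration on each abutment group is finite and exhaustive.

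A further subtlety is that $\pi_t$ here means homotopy sheaves, so the identification $E_1^{s,t}=H^s(\mathsf H,\pi_t\mathbf M)$ must be made objectwise in the sheaf-theoretic sense. It holds by definition, since $H^s$ was defined as homotopy sheaves of the limit.

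Finally, for $t-s\ge0$, Lemma~\ref{lem:stable-connective-hfp} identifies the abutment with $\pi_{t-s}M^{h\mathsf H}$. Lemma~\ref{lem:mapping-colimit} then identifies this with $\pi_{t-s}[\shape(\X\dslash\mathcal H),\T]$.
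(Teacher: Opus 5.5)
Your proposal is correct and follows essentially the paper's argument: vanishing from connectivity together with $d$-truncation of the internal hom, a finite equivariant tower in $\Fun(\B\mathsf H,\Sp(\Hh))$ with layers $\Sigma^tH\pi_t\mathbf M$, exactness of stable homotopy fixed points, and strong convergence from the finiteness of the filtration. The differences are cosmetic. You use the Postnikov tower $\tau_{\le t}\mathbf M$ where the paper uses the Whitehead tower $\tau_{\ge t}\mathbf M$; in the stable setting these give the same exact couple. You also fold in the identification of the nonnegative abutment, which the paper states separately as Corollary~\ref{cor:nonnegative-abutment}.
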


\begin{proof}
    Since $\T$ is $d$-truncated, so is the internal mapping object $M=[\shape\X,\T]$. Since $M$ is a grouplike $E_\infty$-object, the
    corresponding spectrum $\mathbf M$ is connective. Therefore
    \begin{equation}
        \pi_t\mathbf M=0
        \quad\text{for }t<0\text{ and }t>d.
    \end{equation}

    Now, consider the Whitehead tower of $\mathbf M$ in
    $\Fun(\B\mathsf H,\Sp(\Hh))$,
    \begin{equation}
        0=\tau_{\geq d+1}\mathbf M
        \longrightarrow
        \tau_{\geq d}\mathbf M
        \longrightarrow
        \cdots
        \longrightarrow
        \tau_{\geq0}\mathbf M=\mathbf M,
    \end{equation}
    whose cofibers are
    \begin{equation}
        \cofib\left(\tau_{\geq t+1}\mathbf M\to\tau_{\geq t}\mathbf M\right)
        \cong
        \Sigma^tH(\pi_t\mathbf M),
        \quad 0\leq t\leq d.
    \end{equation}
    Since truncation functors are computed pointwise, this is a tower of
    $\mathsf H$-objects, where the cofiber sequences are furthermore equivariant. Stable homotopy fixed points are a limit and therefore preserve
    fiber sequences, which are equivalently cofiber sequences in a stable category.
    Applying $(-)^{h\mathsf H}_{\mathrm{st}}$ to the tower therefore yields
    a filtration of $\mathbf M_{\mathrm{st}}^{h\mathsf H}$ of finite length
    $d+1$, with subquotients
    \begin{equation}
        \left(\Sigma^tH(\pi_t\mathbf M)\right)^{h\mathsf H}_{\mathrm{st}},
        \quad
        \pi_{t-s}
        \left(\Sigma^tH(\pi_t\mathbf M)\right)^{h\mathsf H}_{\mathrm{st}}
        \cong
        H^s\left(\mathsf H,\pi_t\mathbf M\right).
    \end{equation}
    The associated exact couple gives a spectral sequence with the $E_1$-page and differentials as stated, converging to
    $\pi_{t-s}(\mathbf M_{\mathrm{st}}^{h\mathsf H})$. Since $\mathbf M$ is $d$-truncated, the filtration is finite, so the
    spectral sequence degenerates at a finite page and converges strongly.
\end{proof}

\begin{corollary}[Nonnegative-degree abutment]
    \label{cor:nonnegative-abutment}
    For every $n\geq0$,
    \begin{equation}
        \pi_n\left(\mathbf M_{\mathrm{st}}^{h\mathsf H}\right)
        \cong
        \pi_n[\shape(\X\dslash{\mathcal H}),\T]
    \end{equation}
    (not the stack of pointed maps). 
    Thus, in total degrees $t-s\geq0$, the spectral sequence of Lemma~\ref{lem:hfp-ss} computes the homotopy sheaves of $[\shape(\X\dslash{\mathcal H}),\T]$.    In negative total degrees, it computes the negative homotopy sheaves
    of the stable homotopy-fixed-point spectrum rather than homotopy
    sheaves of the grouplike $E_\infty$-object. 
\end{corollary}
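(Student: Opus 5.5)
The claim is Corollary~\ref{cor:nonnegative-abutment}. I would obtain it by chaining three equivalences already established, all at the level of homotopy sheaves in degrees $n\geq 0$.

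First, Lemma~\ref{lem:stable-connective-hfp} identifies the connective spectrum corresponding to $M^{h\mathsf H}$ with $\tau_{\geq0}(\mathbf M^{h\mathsf H}_{\mathrm{st}})$. Since $\tau_{\geq 0}$ does not change $\pi_n$ for $n\geq 0$, this gives $\pi_n(\mathbf M^{h\mathsf H}_{\mathrm{st}})\cong\pi_n(M^{h\mathsf H})$ for $n\geq0$.

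Second, I would relate $M^{h\mathsf H}$ to the mapping stack out of the quotient. Remark~\ref{rmk:limits-underlying} says the homotopy fixed points of $M=[\shape\X,\T]$ may be computed either in $\Grp_{E_\infty}(\Hh)$ or in $\Hh$, because the forgetful functor creates limits. In $\Hh$, Lemma~\ref{lem:mapping-colimit} together with Lemma~\ref{lem:shape-quotient} gives $M^{h\mathsf H}\cong[\shape(\X\dslash\mathcal H),\T]$.

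Third, I would check that the homotopy sheaves of a grouplike $E_\infty$-object agree with those of its associated connective spectrum in nonnegative degrees, computed at the unit. This holds under the equivalence $\Grp_{E_\infty}(\Hh)\cong\Spcn(\Hh)$ of Remark~\ref{app:rmk:equivalence-einfty-connective}, since $\Omega^\infty$ preserves $\pi_n$ for $n\geq0$. Composing the three steps yields $\pi_n(\mathbf M^{h\mathsf H}_{\mathrm{st}})\cong\pi_n[\shape(\X\dslash\mathcal H),\T]$. By Lemma~\ref{lem:hfp-ss}, the abutment in total degree $t-s=n\geq 0$ is then these sheaves.

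The main point of care is the parenthetical "not the stack of pointed maps". The identification in the second step is with the unpointed mapping stack. The pointed stack $\Mag_\T$ differs from it by the summand or fiber $[\B\mathsf H,\T]_*$, coming from Lemma~\ref{lem:pointed-fiber}. I would simply remark on this rather than prove anything extra.

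For negative total degrees I would note that $\pi_{-k}$ of $\mathbf M^{h\mathsf H}_{\mathrm{st}}$ is killed by $\tau_{\geq0}$. So it has no counterpart among the homotopy sheaves of the $E_\infty$-object, and the spectral sequence computes genuinely stable information there. No step looks hard; the only subtlety is keeping track of which category each limit is formed in.
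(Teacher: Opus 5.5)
Your proof is correct and is essentially the paper's: it identifies $M^{h\mathsf H}\cong[\shape(\X\dslash\mathcal H),\T]$ via Lemma~\ref{lem:mapping-colimit} and then applies Lemma~\ref{lem:stable-connective-hfp}, whose statement already contains your third step. One correction to your side remark: the unpointed stack $[\shape(\X\dslash\mathcal H),\T]$ differs from $\Mag_\T(\X\dslash\mathcal H,\bar x)$ by the constant summand $\T$ (Corollary~\ref{cor:red-pointed}, as used in the proof of Corollary~\ref{pin:route1}), whereas $[\B\mathsf H,\T]_*$ is the fiber of $\Mag_\T(\X\dslash\mathcal H,\bar x)\to([\shape\X,\T]_{\red})^{h\mathsf H}$ in Lemma~\ref{lem:pointed-fiber}.
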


\begin{proof}
    By Lemma~\ref{lem:mapping-colimit},
        $M^{h\mathsf H}
        \cong
        [\shape(\X\dslash{\mathcal H}),\T].$ Then apply Lemma~\ref{lem:stable-connective-hfp}.
\end{proof}

We are now ready to prove Theorem~\ref{thm:descent-magnetic-symmetries}.

\begin{proof}
    Part \textup{(i)} is the combination of Lemmas~\ref{lem:shape-quotient} and \ref{lem:mapping-colimit}.

    Part \textup{(ii)} is Lemma~\ref{lem:pointed-fiber}.

    For part \textup{(iii)}, a homotopy fixed point $\mathsf y$ gives the
    equivariant identification
    \begin{equation}
        [\shape\X,\T]_{\red}
        \cong
        [\shape\X,\T]_{*,\mathsf y}
    \end{equation}
    by Lemma~\ref{lem:equivariant-pointed}. The resulting splitting is
    Lemma~\ref{lem:pointed-splitting}. If $x$ carries a coherent
    $\mathcal H$-equivariant structure, Lemma~\ref{lem:equivariant-point} supplies
    the homotopy fixed point $\mathsf y=\shape x$, and the splitting then reads
    \begin{equation}
        \Mag_\T(\X\dslash{\mathcal H})
        \cong
        \Mag_\T(\X)^{h\mathsf H}
        \oplus
        [\B\mathsf H,\T]_*.
    \end{equation}

    Part \textup{(iv)} is Lemma~\ref{lem:hfp-ss}, together with the
    identification of the nonnegative-degree abutment in Corollary~\ref{cor:nonnegative-abutment}.
\end{proof}

\begin{remark}\label{rmk:hypotheses-fibration}
  It is worthwhile to comment on the role of the various hypotheses in Theorem~\ref{thm:descent-magnetic-symmetries}.
    \begin{itemize}
        \item The assumption that \textbf{$\T$ is a grouplike
        $E_\infty$-object} is used to form the reduced object
        $[\shape\X,\T]_{\red}$, to identify it with the pointed mapping
        object via the splitting of Lemma~\ref{lem:stable-retract}, and to
        perform the change of coordinates in Lemma~\ref{lem:pointed-splitting}. Only part (i) is independent of the grouplike $E_\infty$ structure.

        That being said, it would be interesting to explore the existence of a similar splitting for $\mathcal{T}$ an $E_n$-object, in particular whether there is a \textit{semidirect product} splitting.

        \item The \textbf{truncation} hypothesis on $\T$ is used only for
        the spectral-sequence statement in part \textup{(iv)}. Parts
        \textup{(i)}--\textup{(iii)} remain valid without it, but the filtration is then infinite, so convergence must be proven separately.

        \item  The existence of a \textbf{homotopy fixed point}
        $\mathsf y: *\to \shape\X$ is used only in part
        \textup{(iii)}. It allows one to identify the canonical reduced
        object with the pointed mapping object
        $[\shape\X,\T]_{*,\mathsf y}$ and hence to split
        $\Mag_\T(\X\dslash{\mathcal H},\bar x)$ as in Lemma~\ref{lem:pointed-splitting}. Without such a homotopy fixed point,
        the canonical fiber sequence of part \textup{(ii)} survives, but
        the splitting statement need not hold.
    \end{itemize}
\end{remark}

\begin{remark}[Independence from the dualizing stack, and specialization to $\Bnab{d}$]\label{rmk:t-independence}
    We remark that Theorem~\ref{thm:descent-magnetic-symmetries} uses no properties of the dualizing stack $\T$ beyond those explicitly stated in the hypotheses. In particular, the fiber sequence of part \textup{(ii)} and the splitting of part \textup{(iii)} hold verbatim for any such $\T$, regardless of any further geometric structure it may encode, such as connections, as in Section~\ref{ssec:factorization-connected-stacks}.

    Specializing to $\T=\Bnab{d}$ makes the resulting statements fully explicit. If, in addition, $\shape\X$ is connected ($\B\mathsf H$ always is), Theorem~\ref{thm:pontryagin-factorization} identifies both $\Mag_{\Bnab{d}}(\X)=[\shape\X,\Bnab{d}]_*$ and $[\B\mathsf H,\Bnab{d}]_*$ as explicit products of deloopings of Pontryagin duals of integral homology groups, now \textit{discrete}, with homotopy sheaves $\pi_k[\shape\X,\Bnab{d}]_*\cong\widehat{H_{d-k}(X,\Z)}$ for $0 \leq k \leq d-1$, and similarly for $\B\mathsf H$. Feeding these into the $E_1$-page $E_1^{s,t}=H^s(\mathsf H,\pi_t\mathbf M)$ of part \textup{(iv)} \textit{renders the spectral sequence for $\Mag_{\Bnab{d}}(\X \dslash \mathcal H)$
computable in terms of the group cohomology} of $\mathsf H$ valued in the Pontryagin
duals $\widehat{H_n(X,\Z)}$ and in $\pi_d\T=U(1)$, with one exception. The row $t=0$ of the unpointed mapping stack
$M=[\shape\X,\T]$ also contains the constant-summand contribution
\begin{equation}
    \pi_0\T\cong\Omega_{\mathrm{cl}}^{d+1},
\end{equation}
with the trivial $\mathsf H$-action. This contributes
$\Omega_{\mathrm{cl}}^{d+1}$ to $E_1^{0,0}=H^0(\mathsf H,\pi_0M)$, together with
\begin{equation}
    H^s\left(\mathsf H,\Omega_{\mathrm{cl}}^{d+1}\right)
\end{equation}
in negative stems. On each Cartesian probe $U$, the coefficient sheaf
is naturally identified with
$\Omega^d(U)/d\Omega^{d-1}(U)$ by Lemma~\ref{lem:deligne}. It comes from the constant summand $\T$, which drops out of
$\Mag_{\Bnab{d}}$ (cf. the proof of Corollary~\ref{pin:route1}). Alternatively, working with $[\shape\X,\T]_{\red}$ as in part \textup{(ii)} removes this row from the start, since $\pi_0[\shape\X,\T]_{\red}\cong\widehat{H_d(X,\Z)}$.
\end{remark}

\begin{example}[Consistency checks]
\leavevmode
    \begin{itemize}
   \item[(a)] Consider first the case $\X=*$, $\mathcal H=G$, and
    $\T=\Bnab{d}$. Then
    \begin{equation}
        \shape(*\dslash{\mathcal H})\cong\B\mathsf G,
        \quad
        [\shape *,\T]\cong\T,
        \quad
        \const\cong\mathrm{id}_{\T},
    \end{equation}
    where $\mathsf G=\shape G$. It follows that
    \begin{equation}
        [\shape *,\T]_{\red}\cong0,
    \end{equation}
    and part \textup{(ii)} reduces to the trivial fiber sequence
    \begin{equation}
        [\B\mathsf G,\T]_*
        \xrightarrow{\cong}
        [\B\mathsf G,\T]_*
        \longrightarrow0.
    \end{equation}

    \item[(b)] Let $\X=\mathbf B U(1)$ and $\mathcal H=\Z_2$, acting through either of the two $\infty$-actions whose homotopy quotients are $\mathbf B\Pin^\pm(2)$ (Appendix~\ref{app:sec:opmn}). Then $\shape\X=\mathbf B^2\Z$, and the two induced actions agree on $\pi_2=\Z^-$ and are distinguished by their Postnikov classes $\beta_\pm\in H^3(B\Z_2,\Z^-)=\Z_2$. Part \textup{(iii)} applies to $\Pin^+(2)$ only (Lemma~\ref{pin:beta}). For both signs the spectral sequence of part \textup{(iv)} has the same $E_1$ page, and in the stems of Corollary~\ref{pin:route1} the difference is carried by a single differential. We explore this in greater detail in Section~\ref{sssec:pin2}.
    \end{itemize}
\end{example}

\section{Higher Pontryagin magnetic symmetries}\label{sec:higher-pontryagin-symmetries}

In this section, we specialize to higher Pontryagin magnetic symmetries, captured for $d$-dimensional theories by the choice of moduli stack of TFT's
\begin{equation}
    \T = \Bnab{d} , \quad \T_0 = *.
\end{equation}
This choice is, in particular, a grouplike $E_\infty$-object of $\Hh$ (in fact, a strictly commutative one, modeled by the additive Deligne complex under the Dold--Kan correspondence, cf. Definition~\ref{app:def:dold-kan}), so it satisfies the hypotheses needed for Theorem~\ref{thm:descent-magnetic-symmetries}. Theorem~\ref{thm:pontryagin-factorization} below, by contrast, uses the specific form of this target.

In Section~\ref{ssec:factorization-connected-stacks}, we show, in great generality, that the higher Pontryagin magnetic symmetry background stacks of a pointed, connected $\sigma$-model target always factor into a product of deloopings of Pontryagin duals (in the classical sense $\widehat H=\mathrm{Hom}(H,U(1))$) of the homology groups of the target, and that these are necessarily flat, stated in full generality in Theorem~\ref{thm:pontryagin-factorization}. Briefly, we first compute the relevant mapping stacks objectwise as mapping $\infty$-groupoids via Lemma~\ref{lem:cotensor}, and construct, using the Dold--Kan correspondence, a splitting of the mapping stack out of a discrete stack into any abelian $\infty$-group stack modeled objectwise by a connective complex of sheaves (Proposition~\ref{prop:dk-splitting}). Specializing to $\T=\Bnab{d}$, we then combine this with Lemma~\ref{lem:deligne}, identifying the additive Deligne complex presenting $\Bnab{d}$, to show that the universal coefficient sequences yield the clean factorization of Theorem~\ref{thm:pontryagin-factorization}.
We complement this result with several corollaries, and examples in Section~\ref{ssec:pontryagin-examples} illustrating the theorem, recovering the higher Pontryagin magnetic symmetries of pure higher gauge theories, Maxwell theory and its higher generalizations, and an explicit computation for $\mathrm{Pin}^{\pm}(2)$ gauge theory that determines, and thereby separates, the surviving magnetic symmetry for each sign of the defining group extension.

\subsection{Magnetic symmetry backgrounds of connected stacks always factor}\label{ssec:factorization-connected-stacks}
The main result we show in this section is a complete computation of the dualizing mapping stack $[\Disc(X),\Bnab{d}]_*$, our first fully explicit instance of a dualization in the literal sense of Pontryagin duality: each factor below is exactly $\widehat{H_n}=\mathrm{Hom}(H_n,U(1))$ for an integral homology group $H_n$ of the discrete source $\Disc(X)$.

\begin{theorem}[Factorization theorem for higher Pontryagin magnetic symmetries]
\label{thm:pontryagin-factorization}
Let $X\in\mathrm{ob}\,\infty\Grpd$ be pointed and connected, let
$d\geq1$, and write
\begin{equation}
    H_n:=H_n(X,\mathbb Z),\qquad \widehat{H_n}:=\Hom_{\mathbb Z}(H_n,U(1))
\end{equation}
for the integral homology groups of $X$ and their Pontryagin duals as \textit{abstract} abelian groups. Then there is a non-canonical equivalence of abelian
$\infty$-group stacks
\begin{equation}\label{eq:box1}
    \left[\Disc(X),\Bnab d\right]_*
    \;\cong\;
    \prod_{n=1}^{d}
    \B^{d-n}\Disc\left(\widehat{H_n}\right).
\end{equation}

In particular, the mapping stack
$\left[\Disc(X),\Bnab d\right]_*$ lies in the essential image of the
discrete inclusion
\begin{equation}
    \Disc:\infty\Grpd\longrightarrow\Hh.
\end{equation}
As always in this article, ``discrete'' is meant in the cohesive sense, and so in particular the mapping stack
need \textit{not} be $0$-truncated. Its homotopy sheaves are
\begin{equation}\label{eq:pontryagin-homotopy-sheaves}
    \pi_k\left[\Disc(X),\Bnab d\right]_*
    \cong
    \widehat{H_{d-k}}
    \cong
    H^{d-k}(X,U(1)),
    \qquad 0\leq k\leq d-1,
\end{equation}
and vanish otherwise.

In differential-cohomological terms, every background classified by
$\left[\Disc(X),\Bnab d\right]_*$ has vanishing curvature. Thus the
higher Pontryagin magnetic symmetry backgrounds are flat as
differential cocycles.

If, in addition, every $H_n$ for $1\leq n\leq d$ is finitely
generated, so that we can write
\begin{equation}
    H_n\cong\mathbb Z^{r_n}\oplus T_n,
\end{equation}
where $T_n$ is finite, then we obtain
\begin{equation}\label{eq:box2}
    \left[\Disc(X),\Bnab d\right]_*
    \;\cong\;
    \prod_{n=1}^{d}
    \B^{d-n}
    \left(
        (\flat U(1))^{r_n}
        \times
        \Disc(\widehat{T_n})
    \right).
\end{equation}

The equivalences above depend on a choice of splitting of the reduced
integral chain complex of $X$ into two-term free resolutions of its
homology groups and, for \eqref{eq:box2}, additionally on choices of
splittings $H_n\cong\mathbb Z^{r_n}\oplus T_n$. They are therefore not
natural in $X$ in general. The homotopy sheaves
\eqref{eq:pontryagin-homotopy-sheaves} are canonical and natural in $X$.
\end{theorem}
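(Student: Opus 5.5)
The plan is to compute the mapping stack objectwise on Cartesian probes, using the additive Deligne model of $\Bnab{d}$ and the Dold--Kan correspondence, and then to recognise the result as a discrete object. The paper signposts Lemma~\ref{lem:cotensor}, Proposition~\ref{prop:dk-splitting} and Lemma~\ref{lem:deligne} for these three steps.

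\emph{Step 1 (objectwise cotensor).} Since $\Disc(X)$ is discrete and $\shape U\cong *$ for every Cartesian space $U$, we have $\Hh(U\times\Disc(X),\T)\cong\infty\Grpd(X,\T(U))$. So the $U$-points of $[\Disc(X),\T]_*$ are the pointed maps $X\to\T(U)$, i.e.\ the cotensor of $\T(U)$ by $X$. When $\T(U)$ is the Dold--Kan image of a connective chain complex $C_\bullet(U)$ of abelian groups, pointed maps out of $X$ are computed by the reduced cochain complex
\begin{equation*}
\Hom_{\Z}\left(\tilde C_\bullet(X,\Z),\,C_\bullet(U)\right),
\end{equation*}
truncated to nonnegative degrees. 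This should hold naturally in $U$.

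\emph{Step 2 (splitting, Proposition~\ref{prop:dk-splitting}).} The reduced singular complex $\tilde C_\bullet(X,\Z)$ is a complex of free abelian groups. Over $\Z$ such a complex is non-canonically quasi-isomorphic to $\bigoplus_n P_n[n]$, where $0\to F_n^1\to F_n^0\to H_n\to 0$ are two-term free resolutions. Fixing this choice gives
\begin{equation*}
[\Disc(X),\T]_*\cong\prod_n\left[\B^n H_n,\T\right]_*,
\end{equation*}
with each factor computed by $\mathrm{RHom}(H_n[n],C_\bullet(U))$. This splitting is the source of the non-naturality in $X$ claimed in the statement, while the homotopy sheaves stay canonical.

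\emph{Step 3 (the Deligne computation).} By Lemma~\ref{lem:deligne}, $\Bnab{d}$ is presented by the Deligne complex
\begin{equation*}
\uZ\to\Omega^0\to\cdots\to\Omega^d,
\end{equation*}
with $\uZ$ in degree $d$ and $\Omega^d$ in degree $0$. For each $n\ge1$, apply $\mathrm{RHom}(H_n[n],-)$ and look at homotopy in degrees $\ge0$. The form sheaves are $\R$-vector spaces, hence divisible and torsion-free. Therefore $\Ext^1(H_n,\Omega^j(U))=0$, and $\Hom(H_n,\Omega^j(U))$ enters only through the exact tail of the Deligne complex.

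The key claim is that, after shifting by $n\ge1$, the only surviving homotopy lies in degree $d-n$ and equals $\Hom(H_n,U(1))$, the same on every $U$. This would use the universal coefficient sequences together with the quasi-isomorphism between the truncation of the Deligne complex in degrees $\ge 1$ and $\R/\Z$-valued locally constant data. Concretely, in degrees $\geq 1$ the Deligne complex is quasi-isomorphic to $\flat\B^{d-1}U(1)$ plus curvature data, and the curvature part is killed because the source has $n\ge1$. I expect this bookkeeping to be the main obstacle. One has to show two things: the $\Omega^d$ degree-$0$ term, which would carry curvature, contributes nothing to pointed maps from a connected source; and the $\Ext(H_n,\Z)$ and $\Hom(H_n,\R)$ contributions assemble into $\widehat{H_n}$ via $0\to\Z\to\R\to U(1)\to0$, with $U(1)$ carrying the \emph{discrete} topology. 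I would check this first for $H_n=\Z$ and for $H_n=\Z_m$, and then pass to arbitrary $H_n$ using the free resolution. Since the resulting complexes are constant in $U$, each factor is $\B^{d-n}\Disc(\widehat{H_n})$, and $n>d$ contributes nothing by truncation. This gives \eqref{eq:box1}, discreteness, and \eqref{eq:pontryagin-homotopy-sheaves}; the canonical identification $\widehat{H_{d-k}}\cong H^{d-k}(X,U(1))$ is the universal coefficient theorem, using that $U(1)$ is divisible.

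\emph{Step 4 (flatness and finite generation).} Flatness follows from the factorization: curvature is the map $\Bnab{d}\to\Omega^{d+1}_{\mathrm{cl}}$. Composed with any discrete, connected family this map is null, because $\Omega^{d+1}_{\mathrm{cl}}$ is $0$-truncated and concretified, so any map from a discrete object into it factors through $\pi_0$; here $\pi_0$ of the family maps to constant forms, which vanish for $d+1\ge1$. For \eqref{eq:box2}, choose a splitting $H_n\cong\Z^{r_n}\oplus T_n$. Then $\widehat{H_n}\cong U(1)^{r_n}\times\widehat{T_n}$, and $\Disc(U(1))=\flat U(1)$.
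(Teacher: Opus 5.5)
Your strategy is the paper's: objectwise cotensor, the Dold--Kan formula \eqref{eq:DK}, a once-and-for-all splitting of $\widetilde C_\bullet(X,\Z)$ into two-term free resolutions, the additive Deligne complex with universal coefficients, and flatness from discreteness. However, two steps are wrong as written.

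First, the display $[\Disc(X),\T]_*\cong\prod_n[\B^{n}\Disc(H_n),\T]_*$ is false. Pointed maps out of the space $\B^{n}\Disc(H_n)$ are computed by $\Hom(\widetilde C_\bullet(K(H_n,n),\Z),C)$, which sees all of $H_*(K(H_n,n),\Z)$, not by $\mathbf R\Hom(H_n[n],C)$. Already for $H_2=\Z$ and $d\geq4$, \eqref{eq:maxwell-k1} gives a $\flat U(1)$ factor in every even degree $2j\leq d$, whereas $\mathbf R\Hom(\Z[2],C)$ yields only $\B^{d-2}\flat U(1)$. The factors are the $H\Z$-linear objects $\mathcal P^{(n)}=\DK(\tau_{\ge0}\Hom(R^{(n)},C))$ of Proposition~\ref{prop:dk-splitting}, not pointed mapping stacks out of $\B^{n}\Disc(H_n)$. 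Your sentence right after the display describes them correctly, so the display should simply go.

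Second, your grading is off by one. The complex $\uZ\to\Cinf(-,\R)\to\Omega^1\to\cdots\to\Omega^d$ has $d+2$ terms, so with $\Omega^d$ in degree $0$ the sheaf $\uZ$ sits in degree $d+1$. Then $\tau_{\ge1}C(U)\simeq U(1)[d]$, which corresponds to $\flat\B^{d}U(1)$, not $\flat\B^{d-1}U(1)$. With your stated grading the surviving homotopy would land in degree $d-n-1$, contradicting your (correct) target $d-n$.

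Step 3, which you leave as the main obstacle, becomes short once you compute $H_*(C(U))$ before dualizing, as in Lemma~\ref{lem:deligne}. Connectedness of $U$ and the Poincar\'e lemma give $U(1)$ in degree $d$, $\Omega^d(U)/d\Omega^{d-1}(U)$ in degree $0$, and zero otherwise. The universal coefficient sequence \eqref{eq:ucseq} then expresses $\pi_k\mathcal P^{(n)}(U)$ through $H_{k+n}$ and $H_{k+n+1}$ of $C(U)$. Since $k+n\geq1$, the curvature group never enters, and $\Ext^1(H_n,U(1))=0$ because $U(1)$ is divisible. This handles arbitrary $H_n$ at once, so the case-by-case check for $\Z$ and $\Z_m$ is unnecessary.

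Two smaller points. The complexes $\Hom(R^{(n)},C(U))$ are not constant in $U$; only their homology is. To get an equivalence of stacks you therefore need a zigzag natural in $U$, such as $K^{(n)}\leftarrow\tau_{\ge d-n}K^{(n)}\to\widehat{H_n}[d-n]$ with $K^{(n)}:=\tau_{\ge0}\Hom(R^{(n)},C)$, together with descent for each factor (Lemma~\ref{lem:factor}). Also, in Step 1 the relevant input is $\Disc\dashv\mathrm{Pnt}$ (equivalently $\Disc(X)\simeq\colim_X *$), not $\shape U\simeq *$.

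Your flatness argument agrees with the paper's: maps from a discrete object into $\Omega^{d+1}_{\mathrm{cl}}$ are determined by $\Omega^{d+1}_{\mathrm{cl}}(*)=0$. Connectedness and concreteness play no role there.
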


One of the core observations that allow us to prove the statements leading to this theorem is that internal hom calculations can be recast in terms of $\infty$-groupoid mappings, thanks to $\Disc(X)$ being discrete. This is the content of the following lemma.

\begin{lemma}\label{lem:cotensor}
		For $X\in\mathrm{ob\,}\infty\Grpd$ and $\mathcal Y\in\mathrm{ob}\,\Hh$, there is an equivalence
		$[\Disc(X),\mathcal Y]\cong\lim_{X}\mathcal Y \in \mathrm{ob}\, \textbf{H}$. In particular, for every
		$U\in\mathrm{ob}\,\CartSp$, we have that
		\begin{equation}
		\left[\Disc(X),\mathcal Y\right](U)\ \cong\ \Map\left(X,\ \mathcal Y(U)\right),
        \end{equation}
        where $\Map\left(X,\ \mathcal Y(U)\right) \in \mathrm{ob}\, \infty\mathrm{Grpd}$ is the internal hom in $\infty\mathrm{Grpd}$.
	\end{lemma}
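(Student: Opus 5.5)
We prove Lemma~\ref{lem:cotensor} in two steps. First we show that $\Disc(X)$ is a colimit of points, so that mapping out of it becomes a limit. Then we evaluate that limit on a Cartesian space.

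First, write $X\cong\colim_{X}*$ in $\infty\Grpd$. Any $\infty$-groupoid is the colimit of the constant diagram at the point indexed by itself, because the Grothendieck construction of the constant functor $X\to\infty\Grpd$ at $*$ is $X\to *$. By Definition~\ref{app:def:cohesive}, $\Disc$ is a left adjoint (to $\mathrm{Pnt}$), so it preserves colimits. It also preserves the terminal object, since in a cohesive $\infty$-topos $\Disc$ is moreover a right adjoint (to $\Pi=\shape$ followed by underlying $\infty$-groupoid) and hence preserves limits. Therefore
\begin{equation}
    \Disc(X)\cong\colim_{X}\Disc(*)\cong\colim_X *\ \in\mathrm{ob}\,\Hh .
\end{equation}
Applying Lemma~\ref{lem:mapping-colimit} with the diagram $X\to\Hh$ constant at $*$, we obtain
\begin{equation}
    [\Disc(X),\mathcal Y]\cong\lim_X[*,\mathcal Y]\cong\lim_X\mathcal Y .
\end{equation}
This is the cotensor of $\mathcal Y$ by $X$. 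It is natural in both variables, because the colimit presentation of $\Disc(X)$ is natural in $X$ and the equivalence of Lemma~\ref{lem:mapping-colimit} is induced by the cone of precomposition maps.

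Second, we compute the $U$-points. Limits in $\Hh=\mathrm{Sh}(\CartSp,\infty\Grpd)$ are computed objectwise, since sheaves are closed under limits in presheaves and limits of presheaves are pointwise. Hence
\begin{equation}
    (\lim_X\mathcal Y)(U)\cong\lim_X\mathcal Y(U).
\end{equation}
In $\infty\Grpd$, the limit of the constant $X$-diagram at an object $Z$ is $\Map(X,Z)$. This holds because $\infty\Grpd(W,\lim_X Z)\cong\lim_X\infty\Grpd(W,Z)\cong\infty\Grpd(\colim_X W,Z)\cong\infty\Grpd(W\times X,Z)$, where the last step uses that colimits are universal in $\infty\Grpd$. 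Setting $Z=\mathcal Y(U)$ gives $[\Disc(X),\mathcal Y](U)\cong\Map(X,\mathcal Y(U))$.

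As an independent check, one can argue via Yoneda and the adjunction $\Disc\dashv\mathrm{Pnt}$:
\begin{equation}
    [\Disc(X),\mathcal Y](U)\cong\Hh(U\times\Disc(X),\mathcal Y).
\end{equation}
One then uses $U\times\Disc(X)\cong\colim_X U$, again by universality of colimits in the topos. This reduces the right-hand side to $\lim_X\Hh(U,\mathcal Y)\cong\Map(X,\mathcal Y(U))$.

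The only point that requires care, and the one we expect to be the main obstacle, is justifying $\Disc(*)\cong *$ together with the colimit presentation $X\cong\colim_X *$ at the $\infty$-categorical level. For the first we cite the cohesion axioms: $\Disc$ is fully faithful and preserves finite products. For the second we cite the standard statement that every $\infty$-groupoid is the colimit of the constant point-valued functor on itself (Lurie, HTT). Everything else is formal.
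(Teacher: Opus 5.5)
Your proposal is correct and follows essentially the same route as the paper's proof. Both present $\Disc(X)\cong\colim_X *$ using that $\Disc$ preserves colimits and the terminal object, apply Lemma~\ref{lem:mapping-colimit}, and evaluate the resulting limit sectionwise as the cotensor $\Map(X,\mathcal Y(U))$; you additionally justify that last identification by a Yoneda argument, which the paper leaves implicit.

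Two minor points of phrasing, neither of which affects the argument:
\begin{itemize}
\item The unstraightening of the constant functor $X\to\infty\Grpd$ at $*$ is $\mathrm{id}_X\colon X\to X$, not $X\to *$. It is the total space $X$ of this fibration that computes the colimit.
\item In your independent check, the identification $[\Disc(X),\mathcal Y](U)\cong\Hh(U\times\Disc(X),\mathcal Y)$ comes from Cartesian closedness and Yoneda, not from the adjunction $\Disc\dashv\mathrm{Pnt}$.
\end{itemize}
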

	
	\begin{proof}
    The functor $\Disc$ preserves colimits and the terminal object, so using $X\cong\operatorname*{colim}_{X}\ast$ in $\infty\Grpd$, we get $\Disc(X)\cong\operatorname*{colim}_{X}\ast$ in $\Hh$. By Lemma~\ref{lem:mapping-colimit}, we have
    $[\Disc(X),\mathcal Y]\cong\lim_{X}[\ast,\mathcal Y]\cong\lim_{X}\mathcal Y$.
    Limits in $\Hh\subseteq\mathrm{PSh}_\infty(\CartSp)$ are computed sectionwise, and the limit of the constant diagram at $\mathcal Y(U)$ over $X$ is the cotensor $\mathcal Y(U)^X=\Map(X,\mathcal Y(U))$, so the claim follows.
\end{proof}
\begin{corollary}[Pointed version]\label{cor:cotensor-pointed}
    Let $X\in\mathrm{ob}\,\infty\Grpd$ be pointed by $x:\ast\to X$, and let $\mathcal Y\in\mathrm{ob}\,\Hh$ be pointed. Then for every $U\in\mathrm{ob}\,\CartSp$,
    \begin{equation}
        [\Disc(X),\mathcal Y]_{*,\Disc(x)}(U)\ \cong\ \Map_*\left(X,\mathcal Y(U)\right).
    \end{equation}
\end{corollary}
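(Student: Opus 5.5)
The plan is to deduce the pointed statement from the unpointed one in Lemma~\ref{lem:cotensor} by passing to fibers objectwise. By Definition~\ref{def:magnetic-backgrounds} with $\T_0=*$, the stack $[\Disc(X),\mathcal Y]_{*,\Disc(x)}$ is the homotopy pullback of $\ev_{\Disc(x)}:[\Disc(X),\mathcal Y]\to\mathcal Y$ along the basepoint $*\to\mathcal Y$. Limits in $\Hh\subseteq\mathrm{PSh}_\infty(\CartSp)$ are computed sectionwise, so evaluating at $U$ gives the fiber of
\begin{equation}
    \ev_{\Disc(x)}(U):[\Disc(X),\mathcal Y](U)\longrightarrow\mathcal Y(U)
\end{equation}
over the basepoint $*\to\mathcal Y(U)$.

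Next I identify this map under the equivalence $[\Disc(X),\mathcal Y](U)\cong\Map(X,\mathcal Y(U))$ of Lemma~\ref{lem:cotensor}. That equivalence comes from writing $\Disc(X)\cong\colim_X *$ and applying Lemma~\ref{lem:mapping-colimit}, so it is induced by the cone of restrictions along the colimit inclusions $\iota_p:*\to\Disc(X)$ for $p\in X$. The point $\Disc(x)$ is exactly the inclusion $\iota_x$, because $\Disc$ preserves colimits and the terminal object. Hence $\ev_{\Disc(x)}=\iota_x^*$ corresponds to the projection $\lim_X\mathcal Y\to\mathcal Y$ onto the $x$-component. Sectionwise, this is the cotensor projection $\mathcal Y(U)^X\to\mathcal Y(U)$, that is, evaluation at $x$ on $\Map(X,\mathcal Y(U))$.

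The main obstacle is making this compatibility honest rather than just plausible. One must check that the chain of equivalences in Lemma~\ref{lem:cotensor} (Yoneda, Cartesian closedness, sectionwise limits) is natural with respect to the $x$-component. My approach is to use naturality of Lemma~\ref{lem:mapping-colimit} in the diagram, as in the proof of Lemma~\ref{lem:hfp-reduced}. Apply it to the map of diagrams from the one-point diagram $\{x\}$ into $X$. Its colimit is $\Disc(x):*\to\Disc(X)$, and its image under $[-,\mathcal Y]$ is the restriction of limits $\lim_X\mathcal Y\to\lim_{\{x\}}\mathcal Y=\mathcal Y$. This gives a commutative square identifying the two evaluation maps.

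Finally, the fiber of $\ev_x:\Map(X,\mathcal Y(U))\to\mathcal Y(U)$ over the basepoint is by definition $\Map_*(X,\mathcal Y(U))$, which proves the stated equivalence. These identifications are natural in $U$, so they also assemble into an equivalence of stacks.
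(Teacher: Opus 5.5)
Your proposal is correct and follows the paper's proof: compute the fiber of $\ev_{\Disc(x)}$ sectionwise, identify this map via Lemma~\ref{lem:cotensor} with evaluation at $x$ on $\Map(X,\mathcal Y(U))$, and take its fiber over the basepoint to get $\Map_*(X,\mathcal Y(U))$. Your naturality argument, restricting along the one-point diagram $\{x\}\to X$, just spells out the compatibility of $\ev_{\Disc(x)}$ with the cotensor equivalence, which the paper asserts in a single line.
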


\begin{proof}
    By definition, $[\Disc(X),\mathcal Y]_{*,\Disc(x)}=\fib\left(\ev_{\Disc(x)}:[\Disc(X),\mathcal Y]\to\mathcal Y\right)$, the fiber over the basepoint of $\mathcal Y$. Fibers in $\Hh$ are computed sectionwise, and by Lemma~\ref{lem:cotensor}, $\ev_{\Disc(x)}(U)$ is the evaluation $\Map(X,\mathcal Y(U))\to\mathcal Y(U)$ at $x$, meaning precomposition with $x$. The fiber over the basepoint is $\Map_*(X,\mathcal Y(U))$.
\end{proof}

These preliminary results already yield an interesting factorization result, \textit{provided the moduli stack of TFT's $\mathcal T$ is an $H\Z$-module} (Remark~\ref{app:rmk:hz-modules}). To see this, we model $X$ by a reduced simplicial set and let
	$\widetilde C_\bullet=\widetilde C_\bullet(X,\Z)$ denote its reduced normalized chain complex, which is a
	non-negatively graded complex of free abelian groups with $\widetilde C_0=0$. Let
	$Z_n=\ker\partial_n$ and $B_n=\operatorname{im}\partial_{n+1}\subseteq Z_n$.
	
	Under the Dold--Kan correspondence (cf. Definition~\ref{app:def:dold-kan}), for any abelian group $M$, we have an equivalence of simplicial abelian groups natural in $M$:
	\begin{equation}\label{eq:DK}
		\Map_*\left(X,\DK(M[d])\right)\ \cong\ \DK\left(\tau_{\ge0}\Hom_\Z\left(\widetilde C_\bullet,M[d]\right)\right).
	\end{equation}
	In this equivalence,
	$\Hom_\Z(\widetilde C_\bullet,M[d])$ is regarded as a chain complex with
	$\Hom_\Z(\widetilde C_{d-k},M)$ in degree $k$, and $\tau_{\ge0}$ is the good
	truncation (e.g. \cite[Truncations 1.2.7]{Weibel_1994}). In particular, we have that $\pi_k=\widetilde H^{\,d-k}(X,M)$.
		
	\begin{lemma}[Domain splitting]\label{lem:split}
		There is an isomorphism of chain complexes of abelian groups
		\begin{equation}
		\widetilde C_\bullet\ \cong\ \bigoplus_{n\ge1}R^{(n)},\quad
		R^{(n)}=\left[\,B_n\hookrightarrow Z_n\,\right],
        \end{equation}
		where each $R^{(n)}$, concentrated in degrees $n+1,n$, is a length-one free resolution of $H_n=H_n(X,\Z)$ placed in
		degree $n$ (as familiar from the proof of the universal coefficient theorem, e.g. \cite[Section~3.1]{hatcher2002algebraic}).
	\end{lemma}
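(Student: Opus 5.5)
The plan is to build the splitting degree by degree, using only that $\widetilde C_\bullet$ is a bounded-below complex of free abelian groups with $\widetilde C_0=0$. For each $n\geq1$ we have the short exact sequence
\begin{equation}
    0\longrightarrow Z_{n+1}\longrightarrow \widetilde C_{n+1}\xrightarrow{\ \partial_{n+1}\ } B_n\longrightarrow 0 .
\end{equation}
Since $B_n\subseteq \widetilde C_n$ is a subgroup of a free abelian group, it is free, because subgroups of free abelian groups are free. Hence this sequence splits. We choose a section $\sigma_n:B_n\to\widetilde C_{n+1}$ of $\partial_{n+1}$, and set $B'_n:=\sigma_n(B_n)\subseteq\widetilde C_{n+1}$. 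This gives an internal direct sum decomposition
\begin{equation}
    \widetilde C_{n+1}=Z_{n+1}\oplus B'_n .
\end{equation}
In the lowest degree, $\widetilde C_1=Z_1$ because $\widetilde C_0=0$, so this decomposition covers every degree.

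Next we define $R^{(n)}$ internally as the two-term subcomplex with $B'_n$ in degree $n+1$ and $Z_n$ in degree $n$, with differential the restriction of $\partial_{n+1}$. We then check three things.
\begin{itemize}
    \item[(a)] Each $R^{(n)}$ is a subcomplex. The differential $\partial$ maps $B'_n$ into $B_n\subseteq Z_n$, and it kills $Z_n$.
    \item[(b)] The restricted map $\partial_{n+1}|_{B'_n}:B'_n\to B_n$ is an isomorphism, since $\sigma_n$ is inverse to it. Hence $R^{(n)}\cong[\,B_n\hookrightarrow Z_n\,]$.
    \item[(c)] In each degree $m$, the graded pieces $\widetilde C_m=Z_m\oplus B'_{m-1}$ are exactly the degree-$m$ components of $R^{(m)}$ and $R^{(m-1)}$. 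So the sum $\bigoplus_n R^{(n)}\to\widetilde C_\bullet$ is an isomorphism degreewise, and by (a) it commutes with differentials.
\end{itemize}

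Finally, $R^{(n)}$ is a length-one free resolution of $H_n$ in degree $n$. Both $Z_n$ and $B_n$ are free, again as subgroups of $\widetilde C_n$. The inclusion $B_n\hookrightarrow Z_n$ is injective, and its cokernel is $Z_n/B_n=H_n$.

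No step is a genuine obstacle. The only point needing care is the freeness of $B_n$, which is what makes $\partial_{n+1}$ split. That freeness uses that subgroups of free abelian groups are free, so it is specific to $\Z$ (a PID). The choice of the sections $\sigma_n$ is exactly the non-canonical choice of splitting flagged in Theorem~\ref{thm:pontryagin-factorization}.
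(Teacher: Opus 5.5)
Your proposal is correct and follows essentially the same route as the paper's proof. Both use the freeness of $B_n$ (as a subgroup of a free abelian group) to split $0\to Z_{n+1}\to\widetilde C_{n+1}\to B_n\to0$, and then regroup the summands into the two-term complexes $[B_n\hookrightarrow Z_n]$. Your version simply spells out more explicitly the subcomplex check, the compatibility with differentials, and the bottom degree $\widetilde C_1=Z_1$.
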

	
	\begin{proof}
		Subgroups of free abelian groups are free abelian in any rank, so $Z_n$ and
		$B_{n-1}$ are free. The latter gives that the following exact sequences always split
		\begin{equation}
        0\to Z_n\to\widetilde C_n\xrightarrow{\partial}B_{n-1}\to0.
        \end{equation}
        If we now \textit{choose} splittings
		\begin{equation}\label{eq:hc-splitting}
		    \widetilde C_n\cong Z_n\oplus B_{n-1},
		\end{equation}
         the map $\partial$ annihilates the summand
		$Z_n$ and carries the summand $B_{n-1}\subseteq\widetilde C_n$ onto
		$B_{n-1}\subseteq Z_{n-1}\subseteq \widetilde C_{n-1}$. Regrouping gives the claimed direct sum decomposition.
        
        Finally, note that each $R^{(n)}$ has $H_{n+1}=\ker(B_n\hookrightarrow Z_n)=0$ and
		$H_n=Z_n/B_n=H_n(X,\Z)$, with both terms free. The sum starts at $n=1$ because
		$\widetilde C_0=0$.
	\end{proof}
    \begin{remark}[Where non-naturality enters]
    The fact that the splitting of the theorem we are aiming to prove is not natural in $X$ stems exactly from having made a choice of splitting \eqref{eq:hc-splitting}.
    \end{remark}

    We now establish the following brief lemma that will allow us to identify the factors in the splittings as targets for magnetic symmetry backgrounds on their own.

    \begin{lemma}[Retracts of stacks are stacks]\label{lem:factor}
		Let $\mathcal P\cong\prod_{i\in I}\mathcal P_i$ be a product of presheaves of
		pointed spaces, with the product and projections preserving the chosen
		basepoints. If $\mathcal P$ satisfies descent, then every $\mathcal P_i$
		satisfies descent.
	\end{lemma}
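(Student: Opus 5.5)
The key point is that descent for a presheaf on $\CartSp$ is the condition that, for every good cover $\{U_\alpha\to U\}$, the canonical map $\mathcal P(U)\to\lim_{\Delta}\mathcal P(U_\bullet)$ into the limit over the \v{C}ech nerve is an equivalence. I will exploit that both sides of this comparison map are computed by limits, and limits commute with products. I will also use that $\mathcal P_i$ is a retract of $\mathcal P$ via the basepoints.

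First, I will construct the retraction explicitly. The inclusion $\iota_i:\mathcal P_i\to\mathcal P$ is the product map whose $i$-th component is $\mathrm{id}$ and whose $j$-th component, for $j\neq i$, is the constant map at the chosen basepoint of $\mathcal P_j$. Since basepoints are preserved, the composite $\pr_i\circ\iota_i\cong\mathrm{id}_{\mathcal P_i}$. This makes $\mathcal P_i$ a retract of $\mathcal P$ in $\mathrm{PSh}_\infty(\CartSp)_{*/}$, and it holds objectwise and naturally in $U$.

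Second, I will use the fact that for each cover, \v{C}ech descent assigns to a presheaf the comparison morphism $c_{\mathcal F}:\mathcal F(U)\to\check{\mathcal F}(U):=\lim_\Delta\mathcal F(U_\bullet)$, and this assignment is functorial in $\mathcal F$. Applying this functoriality to $\iota_i$ and $\pr_i$ exhibits $c_{\mathcal P_i}$ as a retract of $c_{\mathcal P}$ in the arrow category $\mathrm{Fun}(\Delta^1,\infty\Grpd)$. Equivalences are closed under retracts in any $\infty$-category: in the homotopy category, a retract of an isomorphism is an isomorphism, by the standard two-sided-inverse argument. Since $c_{\mathcal P}$ is an equivalence by hypothesis, so is $c_{\mathcal P_i}$, which gives descent for each $\mathcal P_i$.

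As an alternative that avoids retracts, I could argue directly that limits commute with products: $c_{\mathcal P}\cong\prod_j c_{\mathcal P_j}$, and an equivalence that splits as a product of maps of pointed (hence nonempty) spaces has each factor an equivalence, by comparing homotopy groups at every basepoint. Nonemptiness is exactly where pointedness enters; without it, an empty factor could make the product trivially an equivalence. I would mention this route only as a cross-check.

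The only step requiring care is the claim that retracts of equivalences are equivalences. I will handle it by passing to the homotopy category of $\infty\Grpd$, where it reduces to the classical statement for isomorphisms. Naturality of the retraction in $U$ and in the cover is automatic, because everything is built from maps of presheaves.
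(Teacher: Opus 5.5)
Your proposal is correct and follows the paper's argument. The paper likewise uses the basepoints of the other factors to exhibit $\mathcal P_i$ as a retract of $\mathcal P=\prod_j\mathcal P_j$, and then invokes closure of the descent condition under retracts. You go further by making that closure explicit, through naturality of the \v{C}ech comparison map and the fact that retracts of equivalences are equivalences, which the paper takes as standard; your product-of-pointed-spaces cross-check is also valid.
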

	\begin{proof}
		Using the basepoints in all factors other than $i$, the presheaf $\mathcal P_i$
		is a retract of $\mathcal P$:
        \begin{equation}
            \mathcal P_i \xrightarrow{\imath_{i}} \prod_{j\in I}\mathcal P_j \xrightarrow{\mathrm{pr}_i}\mathcal P_i.
        \end{equation}
        Since the full subcategory of presheaves satisfying
		descent is closed under retracts, $\mathcal P_i$ satisfies descent.
	\end{proof}

 \begin{proposition}[Splitting for objectwise Dold--Kan targets]\label{prop:dk-splitting}
Let $X$ be pointed and connected in $\infty\mathrm{Grpd}$, and $\mathcal F$ an abelian $\infty$-group stack in $\Hh$. More explicitly, $\mathcal F$ is taken to be a sheaf on $\CartSp$
valued in connective $H\Z$-modules, presented objectwise by a connective complex of
sheaves of abelian groups $C$, so that 
\begin{equation}
  \mathcal{F}(U)\;\cong\;\mathrm{DK}\left(C(U)\right)
\end{equation}
naturally in $U \in \mathrm{ob\,CartSp}$. (Here ``sheaf'' means that descent holds homotopically, not as a strict limit in $\mathrm{sAb}$.)

Let $H_n:=H_n(X,\mathbb{Z})$ and 
$R^{(n)}=\left(B_n\hookrightarrow Z_n\right)$, concentrated in degrees $n+1,n$, be the
free resolution of $H_n[n]$ from Lemma~\ref{lem:split}. Then there is an
equivalence of abelian $\infty$-group stacks
\begin{equation}\label{eq:dk-splitting}
  \left[\mathrm{Dsc}(X),\mathcal{F}\right]_*
  \;\cong\;
  \prod_{n\ge1}\mathcal{P}^{(n)},
  \quad
  \mathcal{P}^{(n)}:=\mathrm{DK}\left(\tau_{\ge0}\,\mathrm{Hom}_{\mathbb{Z}}\left(R^{(n)},C\right)\right),
\end{equation}
where $\mathcal{P}^{(n)}$ is obtained from the objectwise
complexes $U\mapsto\mathrm{Hom}_{\mathbb{Z}}\left(R^{(n)},C(U)\right)$, which compute
$U\mapsto\mathbf{R}\mathrm{Hom}_{\mathbb{Z}}\left(H_n,C(U)[-n]\right)$. For every $U\in\mathrm{ob}\,\CartSp$ and $k\geq0$, the homotopy presheaf of
$\mathcal P^{(n)}$ fits objectwise into the universal coefficient
sequence
\begin{equation}\label{eq:ucseq}
\begin{split}
0\longrightarrow&
\Ext^1_{\mathbb Z}\left(
    H_n,H_{k+n+1}\left(C(U)\right)
\right)
\longrightarrow
\pi_k\left(\mathcal P^{(n)}(U)\right)
\\
\longrightarrow&
\Hom_{\mathbb Z}\left(
    H_n,H_{k+n}\left(C(U)\right)
\right)
\longrightarrow0 .
\end{split}
\end{equation}
These sequences are natural in $U$ and hence assemble into an exact
sequence of presheaves. They split non-canonically after evaluation at
each $U$, but need not split as a sequence of presheaves or sheaves.
The homotopy sheaves of $\mathcal P^{(n)}$ are obtained by sheafifying
the corresponding homotopy presheaves.
\end{proposition}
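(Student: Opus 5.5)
The plan is to compute the pointed mapping stack objectwise, split the domain chain complex, and reassemble. By Corollary~\ref{cor:cotensor-pointed}, for each $U\in\mathrm{ob}\,\CartSp$ we have $[\Disc(X),\mathcal F]_*(U)\cong\Map_*(X,\mathcal F(U))\cong\Map_*(X,\DK(C(U)))$, naturally in $U$. Model $X$ by a reduced simplicial set. Since $\DK$ of a connective complex is a simplicial abelian group, the Dold--Kan adjunction extends \eqref{eq:DK} from a single shifted group $M[d]$ to an arbitrary connective complex. Concretely, pointed maps from $X$ correspond to maps out of the reduced free simplicial abelian group $\widetilde{\Z}[X]$, whose normalized complex is $\widetilde C_\bullet$. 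This gives an equivalence
\begin{equation}
\Map_*\left(X,\DK(C(U))\right)\cong\DK\left(\tau_{\ge0}\Hom_\Z\left(\widetilde C_\bullet,C(U)\right)\right),
\end{equation}
natural in $U$. The key point is that $\widetilde C_\bullet$ is a bounded-below complex of free abelian groups, so the internal $\Hom_\Z$ already computes $\mathbf R\Hom$. I would verify this first, at the level of simplicial abelian groups, so that naturality in $U$ holds on the nose.

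Next I would apply the splitting of Lemma~\ref{lem:split}. Fix once and for all, independently of $U$, an isomorphism $\widetilde C_\bullet\cong\bigoplus_{n\ge1}R^{(n)}$. Then $\Hom_\Z(\widetilde C_\bullet,C(U))\cong\prod_n\Hom_\Z(R^{(n)},C(U))$, naturally in $U$, since the choice does not depend on $U$. Good truncation $\tau_{\ge0}$ and $\DK$ both commute with products; for $\DK$ this holds because it is a right adjoint equivalence. Hence $[\Disc(X),\mathcal F]_*\cong\prod_n\mathcal P^{(n)}$ as presheaves of simplicial abelian groups, which is \eqref{eq:dk-splitting} at the presheaf level. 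The left side is a stack, being a mapping stack into a stack. Each factor is therefore a retract of a stack by Lemma~\ref{lem:factor}, with basepoints given by zero, and so is itself a stack. The product of stacks is again a stack, so the equivalence holds in $\Hh$, and it respects the abelian group structures because every map above is additive.

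For \eqref{eq:ucseq}, I would compute the homology of $\Hom_\Z(R^{(n)},C(U))$. Since $R^{(n)}$ is the two-term free resolution $B_n\hookrightarrow Z_n$ of $H_n[n]$, the complex $\Hom(R^{(n)},C(U))$ is the mapping cone or fiber of $\Hom(Z_n,C(U))\to\Hom(B_n,C(U))$. Each term is exact in $C(U)$ because $Z_n$ and $B_n$ are free. The associated long exact sequence, together with the identification $H_j\Hom(F,C)\cong\Hom(F,H_jC)$ for free $F$, yields the sequence. The cokernel of $\Hom(Z_n,H_{k+n+1})\to\Hom(B_n,H_{k+n+1})$ is $\Ext^1(H_n,H_{k+n+1})$, and the kernel of $\Hom(Z_n,H_{k+n})\to\Hom(B_n,H_{k+n})$ is $\Hom(H_n,H_{k+n})$. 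Alternatively, one can quote the Künneth/universal coefficient theorem for $\mathbf R\Hom(H_n,C(U)[-n])$ directly. Good truncation does not alter $\pi_k$ for $k\ge0$. Naturality in $U$ is immediate, since everything is functorial in $C(U)$.

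The sequences split at each $U$ because $\Ext^1$ classes of abelian groups split after choosing a splitting, for instance by choosing a splitting of $C(U)$ into its homology. That choice is not natural in $U$, so I would only assert the objectwise splitting. Since $\pi_k$ of a stack is computed as the sheafification of the presheaf $\pi_k$, this gives the last claim. The main obstacle is the first step: making the Dold--Kan identification hold as a natural equivalence of presheaves of simplicial abelian groups for a general complex $C(U)$, rather than for $M[d]$ only, while ensuring that the resulting object satisfies homotopical descent. I would handle this by working strictly with $\widetilde{\Z}[X]$ and the adjunction, and then using Lemma~\ref{lem:factor} as described.
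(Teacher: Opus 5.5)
Your proposal is correct and follows essentially the same route as the paper's proof. Both arguments identify the stack objectwise via Corollary~\ref{cor:cotensor-pointed}, apply the Dold--Kan adjunction for a general connective complex, insert the $U$-independent splitting of Lemma~\ref{lem:split}, obtain descent for each factor from Lemma~\ref{lem:factor}, and derive the universal coefficient sequence from the two-term free resolution $R^{(n)}$.

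Your explicit long exact sequence for $\Hom_\Z(R^{(n)},C(U))$ and your justification of the objectwise splitting add detail that the paper leaves implicit. For that splitting, phrase it as the quasi-isomorphism $C(U)\simeq\bigoplus_j H_j(C(U))[j]$, which holds because $\Z$ is hereditary and $\Hom_\Z(R^{(n)},-)$ preserves quasi-isomorphisms, rather than as a literal splitting of the complex $C(U)$.
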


\begin{proof}
Let $U\in \mathrm{ob}\,\mathrm{CartSp}$ be a probing space. By Corollary~\ref{cor:cotensor-pointed} and the hypothesis, we get that
\begin{equation}
  \left[\mathrm{Dsc}(X),\mathcal{F}\right]_*(U)
  \;\cong\;\mathrm{Map}_*\left(X,\mathrm{DK}(C(U))\right)
\end{equation}
naturally in $U$. For any connective complex of abelian groups $D$,
adjunction between the reduced normalized chain complex and Dold--Kan gives an equivalence of simplicial abelian groups natural in $D$:
\begin{equation}
  \mathrm{Map}_*\left(X,\mathrm{DK}(D)\right)
  \;\cong\;
  \mathrm{DK}\left(\tau_{\ge0}\,\mathrm{Hom}_{\mathbb{Z}}\left(\widetilde{C}_\bullet(X,\mathbb{Z}),D\right)\right).
\end{equation}
This is \eqref{eq:DK}
with $D=M[d]$ replaced by a general connective $D$, the proof being unchanged.
Since $\widetilde{C}_\bullet$ is levelwise free, $\mathrm{Hom}_{\mathbb{Z}}
(\widetilde{C}_\bullet,D)$ moreover computes
$\mathbf R\mathrm{Hom}_{\mathbb Z}(\widetilde{C}_\bullet,D)$, which is used below. Here
$\mathrm{Hom}_{\mathbb{Z}}(\widetilde{C}_\bullet,D)_k=\prod_j\mathrm{Hom}_{\mathbb{Z}}(\widetilde{C}_j,D_{j+k})$.

Applying this with $D=C(U)$ and inserting the decomposition
$\widetilde{C}_\bullet\cong\bigoplus_{n\ge1}R^{(n)}$ of
Lemma~\ref{lem:split} yields \eqref{eq:dk-splitting} objectwise. This follows from the facts that $\mathrm{Hom}_{\mathbb{Z}}(-,C(U))$ carries direct
sums to products, and that $\tau_{\ge0}$ commutes with products because products of abelian groups are exact. The splitting of
$\widetilde{C}_\bullet$ is chosen once and for all, before any coefficient object appears, and
in particular is independent of $U$. Thus, $\mathrm{Hom}_{\mathbb{Z}}(-,C)$ applied to it is natural in the coefficient sheaf and the decomposition is an equivalence of presheaves of pointed spaces. The left-hand side satisfies descent, being an internal hom into a stack, so by Lemma~\ref{lem:factor} \textit{each factor} does as well. Now, \eqref{eq:dk-splitting} is a natural weak equivalence of presheaves between two stacks, so it is an equivalence of stacks. But since all the maps above are additive, this is in fact an equivalence of stacks of connective $H\Z$-modules.

Finally, $R^{(n)}$ is a bounded complex of free abelian groups resolving
$H_n[n]$. Therefore, for every $U\in\mathrm{ob}\,\CartSp$,
\begin{equation}
    \Hom_{\mathbb Z}\left(R^{(n)},C(U)\right)
    \cong
    \mathbf R\Hom_{\mathbb Z}\left(H_n[n],C(U)\right)
    \cong
    \mathbf R\Hom_{\mathbb Z}\left(H_n,C(U)[-n]\right).
\end{equation}
Since $R^{(n)}$ is a length-one free resolution, its homology is
described by the objectwise universal coefficient sequence
\eqref{eq:ucseq}. Good truncation preserves homology in nonnegative
degrees. These objectwise sequences are moreover natural in $U$, so they form
an exact sequence of presheaves, and the corresponding homotopy sheaves
are their sheafifications.
\end{proof}

Now, we move on to specializing to $\T = \Bnab{d}$. Let us state how it is treated using Dold--Kan.

We first record a standard acyclicity property of the sheaves involved.

\begin{lemma}[{Softness (e.g. \cite[Theorems II.9.11, II.9.16]{Bredon1997})}]\label{lem:soft}
    The sheaves $\uR=\Cinf(-,\R)$ and $\Omega^j$, $j\geq1$, are modules over the sheaf of rings $\Cinf$, which admits smooth partitions of unity. They are therefore fine, hence soft. In particular, they are acyclic: $H^{k}(Y,\uR)=0$ and $H^{k}(Y,\Omega^j)=0$ for all $k\ge1$ and all paracompact smooth manifolds $Y$.
\end{lemma}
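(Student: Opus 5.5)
The plan is to reduce to classical sheaf theory on the underlying topological space of $Y$ and then invoke the standard chain of implications
\begin{equation}
    \text{module over a sheaf of rings with partitions of unity}\ \Longrightarrow\ \text{fine}\ \Longrightarrow\ \text{soft}\ \Longrightarrow\ \text{acyclic on paracompact spaces}.
\end{equation}

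\textbf{Step 1: restriction to $Y$.} Fix a paracompact smooth manifold $Y$. Restrict $\uR=\Cinf(-,\R)$ and $\Omega^j$ to the small site of open subsets of $Y$. This gives the usual sheaves $\Cinf_Y$ and $\Omega^j_Y$ of smooth functions and smooth $j$-forms. Pointwise multiplication makes $\Omega^j_Y$ a sheaf of $\Cinf_Y$-modules, and $\Cinf_Y$ is trivially a module over itself. All of this is natural in opens, so the module structures are sheaf maps.

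\textbf{Step 2: fineness and softness.} Since $Y$ is paracompact Hausdorff, every locally finite open cover $\{V_\alpha\}$ admits a smooth partition of unity $\{\phi_\alpha\}$ subordinate to it. Multiplication by $\phi_\alpha$ defines endomorphisms of any $\Cinf_Y$-module $\mathcal M$, supported in $\overline{V_\alpha}$ and summing to the identity. This is exactly the definition of a fine sheaf, so $\Cinf_Y$ and $\Omega^j_Y$ are fine. Fine sheaves on a paracompact space are soft \cite[Theorem II.9.16]{Bredon1997}. Directly, one extends a section from a closed set by gluing local extensions with a partition of unity.

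\textbf{Step 3: acyclicity.} Soft sheaves on a paracompact space are $\Gamma$-acyclic \cite[Theorem II.9.11]{Bredon1997}. Hence $H^k(Y,\uR)=0$ and $H^k(Y,\Omega^j)=0$ for $k\geq1$.

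\textbf{Main obstacle: agreement of cohomologies.} The one point needing care is that, elsewhere in the paper, $H^k(Y,-)$ is used as the cohomology computed by maps of stacks on $\CartSp$, e.g. $\pi_0[Y,\mathbf B^k\uR]$, rather than classical sheaf cohomology on $Y$. I would address this by choosing a good open cover of $Y$. All finite intersections are then Cartesian spaces, on which the sheaves are acyclic by Step 3 applied to $\R^n$. Consequently both theories are computed by the same \v{C}ech complex of this cover, by Leray's theorem on the classical side and by descent for the \v{C}ech nerve $U(Y)$ of \eqref{eq:cech-groupoid} on the stacky side. So the two cohomologies coincide and the vanishing transfers.
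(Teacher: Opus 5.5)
Your proposal is correct and follows the same route as the paper. The paper simply records the chain ``$\Cinf$-module with smooth partitions of unity $\Rightarrow$ fine $\Rightarrow$ soft $\Rightarrow$ acyclic on paracompact $Y$'' and cites Bredon for the last two steps, exactly as you do. Your final paragraph, comparing with the stacky cohomology $\pi_0[Y,\mathbf B^k\uR]$, goes beyond what the lemma asserts, since the lemma concerns only classical sheaf cohomology on $Y$. The good-cover argument you sketch there is essentially the one the paper itself uses in the proof of Lemma~\ref{lem:deligne} and, via \cite{FSS12}, in Remark~\ref{rmk:why-connections}.
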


We will also use the exponential sequence of sheaves on $\CartSp$:
\begin{equation}\label{eq:exp}
    0\longrightarrow\uZ\longrightarrow\uR\xrightarrow{\ \exp(2\pi i\,\cdot\,)\ }\uU\longrightarrow0,
\end{equation}
exact because $\exp$ admits local smooth logarithms. (On $\CartSp$ every object
is connected, so $\uZ$ is the constant sheaf $\Z$.)

\begin{lemma}[{Additive Deligne complex (cf. \cite[Definition 3.2.21]{FSS12})}]\label{lem:deligne}
    Let $d\geq1$, and let
    \begin{equation}
        \Ad\ :=\ \left[\ \uZ\ \hookrightarrow\ \Cinf(-,\mathbb R)\ \xrightarrow{\ d\ }\ \Omega^1\ \to\ \cdots\ \to\ \Omega^d\ \right],
    \end{equation}
    reindexed as a connective chain complex $C$ of sheaves on $\CartSp$, with the sheaf $\uZ$ in degree $d+1$, $\Cinf(-,\mathbb R)$ in degree $d$, and $\Omega^j$ in degree $d-j$ for $1\le j\le d$, so that for any probing space $U\in\mathrm{ob}\,\CartSp$ we have that
    \begin{equation}
        \Bnab{d}(U)\ \cong\ \DK\left(C(U)\right).
    \end{equation}
    The objectwise Dold--Kan presheaf $U\mapsto\DK(C(U))$ already satisfies descent on $\CartSp$ (for \v{C}ech covers, and also for hypercovers by the same argument), so it agrees with its stackification.
In particular, for every $U$, its objectwise homotopy
groups are the ordinary homology groups of the complex $C(U)$,
\begin{equation}
    \pi_k\left(\Bnab d(U)\right)
    \cong H_k\left(C(U)\right),
\end{equation}
and these are
\begin{align}
    H_{d+1}\left(C(U)\right)&=0,\\
    H_d\left(C(U)\right)&\cong U(1),\\
    H_j\left(C(U)\right)&=0,
      \qquad 0<j<d,\\
    H_0\left(C(U)\right)
      &=
      \Omega^d(U)/d\Omega^{d-1}(U).
\end{align}
The first three lines already define sheaves on $\CartSp$, so that, in
positive degrees, the homotopy sheaves are
\begin{equation}\label{eq:deligne-positive-homotopy-sheaves}
    \pi_d(\Bnab d)\cong\Disc(U(1)),
    \qquad
    \pi_j(\Bnab d)=0
    \quad (0<j<d).
\end{equation}
In degree zero,
\begin{equation}
    \pi_0(\Bnab d)\cong\Omega_{\mathrm{cl}}^{d+1}.
\end{equation}
\end{lemma}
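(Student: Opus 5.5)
The proof has two parts. First I show that the objectwise Dold--Kan presheaf $U\mapsto \DK(C(U))$ already satisfies descent, so that it models $\Bnab d$ without stackification. Then I compute the homology of $C(U)$ by hand on a Cartesian space.

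\emph{Descent.} I start from the standard presentation of $\Bnab d$ as the stackification of $U\mapsto\DK(C(U))$ for the Deligne complex (cf.\ \cite[Definition 3.2.21]{FSS12}). It suffices to show that for every cover $\{U_i\to U\}$ the comparison map
\[
C(U)\longrightarrow \mathrm{Tot}\,\check C^\bullet(\{U_i\},C)
\]
is a quasi-isomorphism, where the target is the Čech total complex. The analogous statement for hypercovers then follows verbatim, replacing the Čech nerve by the hypercover. I filter the total complex by the internal degree of $C$. This gives a spectral sequence whose input is the Čech cohomology of each term $\uZ$, $\uR$, $\Omega^j$ on the cover.
\begin{itemize}
\item For $\uR$ and each $\Omega^j$, Lemma~\ref{lem:soft} gives vanishing higher cohomology, and for these sheaves Čech cohomology agrees with sheaf cohomology.
\item For the constant sheaf $\uZ$ on the contractible $U$, higher cohomology vanishes after refining to a good cover.
\end{itemize}
Hence only $\check H^0$ survives, and it returns the terms of $C(U)$ itself. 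The edge map is therefore a quasi-isomorphism. I expect this step to be the main obstacle. The delicate points are the passage to good refinements (or to hypercovers), and ensuring that the argument is phrased homotopically in $\Hh$ rather than strictly in $\mathrm{sAb}$. I would handle this by using that $\CartSp$ carries the good-open-cover topology, so that Čech nerves of covering families are already good.

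\emph{Homology.} The identification $\pi_k\DK(C(U))\cong H_k(C(U))$ is part of the Dold--Kan correspondence. Each group is computed directly for a Cartesian space $U$.
\begin{itemize}
\item $H_{d+1}(C(U))=\ker(\uZ(U)\to \Cinf(U,\R))=0$, since the map is injective.
\item $H_d(C(U))$ is the group of locally constant real functions, which is $\R$ because $U$ is connected, modulo $\Z$. This gives $U(1)$.
\item For $0<j<d$, $H_j(C(U))$ is the group of closed $(d-j)$-forms modulo exact ones. Here $d-j\geq1$, so it vanishes by the Poincaré lemma. The case $j=d-1$ is closed $1$-forms modulo $d\Cinf$.
\item $H_0(C(U))=\Omega^d(U)/d\Omega^{d-1}(U)$, since the degree-$0$ term has no outgoing differential.
\end{itemize}

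\emph{Homotopy sheaves.} The first three presheaves are constant on the connected objects of $\CartSp$, namely $0$, $U(1)$ and $0$. They are therefore already the sheaves $\Disc(U(1))$ and $0$, which gives \eqref{eq:deligne-positive-homotopy-sheaves}. For degree zero, the de Rham differential induces a map
\[
\Omega^d(U)/d\Omega^{d-1}(U)\xrightarrow{\ d\ }\Omega^{d+1}_{\mathrm{cl}}(U),
\]
natural in $U$. It is injective because closed $d$-forms on $U$ are exact, and surjective because closed $(d+1)$-forms on $U$ are exact. Both facts are the Poincaré lemma on $U$. So this presheaf is isomorphic to the sheaf $\Omega^{d+1}_{\mathrm{cl}}$, and its sheafification is $\pi_0(\Bnab d)\cong\Omega^{d+1}_{\mathrm{cl}}$.
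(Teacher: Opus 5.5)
Your proposal is correct and follows the paper's proof essentially step for step. Descent comes from the acyclicity of $\uZ$, $\uR$ and the $\Omega^j$ on Cartesian spaces, combined with the fact that good open covers generate the topology on $\CartSp$ and that $C$ is bounded; this is followed by the direct homology computation and the Poincar\'e-lemma identification $\Omega^d(U)/d\Omega^{d-1}(U)\cong\Omega^{d+1}_{\mathrm{cl}}(U)$. The only difference is that you make the boundedness step explicit through the filtration by internal degree of $C$, which the paper leaves implicit.
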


\begin{proof}
We first show that the objectwise Dold--Kan presheaf $U\mapsto\DK(C(U))$
already satisfies descent on $\CartSp$. Covering families in $\CartSp$ may be
taken to be differentiably good open covers, whose nonempty finite
intersections are again Cartesian spaces. Each term of $C$ is acyclic on every
Cartesian space: $\Cinf(-,\mathbb R)$ and the $\Omega^j$ are fine, hence soft,
modules over $\Cinf$ (Lemma~\ref{lem:soft}), and $\uZ$ is constant with
$H^k(U,\uZ)=0$ for $k\geq1$ since $U$ is contractible. Since $C$ is bounded,
the \v{C}ech hypercohomology of $C$ on any such cover is computed by the
complex of sections, so the objectwise presheaf satisfies (hyper)descent, and
its values on $\CartSp$ agree with those of $\Bnab{d}$
\cite[Section 3.2.3]{FSS12}.

We now compute the homology of $C(U)$. The injectivity of
$\uZ(U)\hookrightarrow\Cinf(U,\mathbb R)$ gives
$H_{d+1}(C(U))=0$. Since $U$ is connected,
$
    \ker\left(d:\Cinf(U,\mathbb R)\to\Omega^1(U)\right)
    =\mathbb R,
$
so
$
    H_d(C(U))\cong\mathbb R/\mathbb Z\cong U(1).
$
For $0<j<d$, the relevant portion of $C(U)$ is exact by the
Poincar\'e lemma, hence $H_j(C(U))=0$. Finally,
$
    H_0(C(U))=\Omega^d(U)/d\Omega^{d-1}(U).
$

The first three homology groups already define sheaves on $\CartSp$,
giving \eqref{eq:deligne-positive-homotopy-sheaves}. In degree zero, the
map induced by the exterior derivative,
\[
    \Omega^d(U)/d\Omega^{d-1}(U)
    \xrightarrow{\;\cong\;}
    \Omega_{\mathrm{cl}}^{d+1}(U),
    \qquad
    [\omega]\longmapsto d\omega,
\]
is injective because every closed $d$-form on the contractible space
$U$ is exact, and it is surjective because every closed $(d+1)$-form
on $U$ is exact. This identifies $H_0(C(U))$ with $\Omega_{\mathrm{cl}}^{d+1}(U)$
naturally in $U$, giving $\pi_0(\Bnab d)\cong\Omega_{\mathrm{cl}}^{d+1}$.
\end{proof}

\begin{remark}[Flatness from connections]\label{rmk:flatness-from-connections}
    Lemma~\ref{lem:deligne} is where flatness enters. A pointed map out of the discrete, connected stack $\Disc(X)$ into $\Bnab{d}$ only sees the higher gauge transformations of the trivial bundle $(d-1)$-gerbe with zero connection (cf.\ Proposition~\ref{app:prop:counits}), and such gauge transformations are necessarily flat. Indeed, at level $k$ for $1\leq k\leq d$, they are
classified by $\Omega^k\Bnab{d}\cong\flat\mathbf B^{d-k}U(1)$. In particular, at the top level these are constant $U(1)$-valued functions. We are thus constructing higher gerbes with connection out of the discrete magnetic charges $\shape\X$, and the only connection data these can carry are flat.
\end{remark}

	We are now ready to prove the factorization theorem for higher Pontryagin magnetic symmetries.
    \begin{proof}[Proof of Theorem~\ref{thm:pontryagin-factorization}]
By Lemma~\ref{lem:deligne}, Proposition~\ref{prop:dk-splitting} applies with $\mathcal F=\Bnab{d}$ and $C$ the reindexed additive Deligne complex. The only nonzero homology groups of $C(U)$ are $H_0(C(U))$ and $H_d(C(U))\cong U(1)$, the latter independent of $U$. Since $n\geq1$ and $k\geq0$, $H_0(C(U))$ never enters \eqref{eq:ucseq}. The term $\Ext^1_\Z(H_n,U(1))$ vanishes because $U(1)$ is divisible, hence an injective $\Z$-module. Therefore the connective complex of presheaves
\begin{equation}
    K^{(n)}:=\tau_{\geq0}\Hom_{\mathbb Z}\left(R^{(n)},C\right),\qquad \mathcal P^{(n)}=\DK\left(K^{(n)}\right),
\end{equation}
is acyclic for $n>d$, so that $\mathcal P^{(n)}\cong*$. For $1\leq n\leq d$, it has a single nonzero homology presheaf, the constant presheaf $\widehat{H_n}$ in degree $d-n$. The good truncations then give a zigzag of objectwise quasi-isomorphisms, natural in $U$,
\begin{equation}
    K^{(n)}\xleftarrow{\ \cong\ }\tau_{\geq d-n}K^{(n)}\xrightarrow{\ \cong\ }\widehat{H_n}[d-n].
\end{equation}
Applying Dold--Kan, we obtain objectwise equivalences
\begin{equation}
    \mathcal P^{(n)}\cong\DK\left(\widehat{H_n}[d-n]\right)\cong\mathbf B^{d-n}\Disc\left(\widehat{H_n}\right)
\end{equation}
between presheaves satisfying descent, hence an equivalence of abelian $\infty$-group stacks. Taking the product over $1\leq n\leq d$ proves \eqref{eq:box1}. The homotopy sheaves follow, with $\widehat{H_m}\cong H^m(X,U(1))$ by the universal coefficient theorem and the injectivity of $U(1)$.

Since both $\Disc(X)$ and $[\Disc(X),\Bnab d]_*$ are discrete, the evaluation map $\Disc(X)\times[\Disc(X),\Bnab d]_*\to\Bnab d$ has discrete source, and hence factors through the counit $\epsilon_{\Bnab d}^{\flat}:\flat\Bnab d\to\Bnab d$ (Proposition~\ref{app:prop:counits}). But the composition of this counit with the curvature map $\flat\Bnab d\to\Bnab d \to \Omega_{\mathrm{cl}}^{d+1}$ is trivial, thus establishing that backgrounds always have vanishing curvature.

Finally, if $H_n\cong\mathbb Z^{r_n}\oplus T_n$, then $\widehat{H_n}\cong U(1)^{r_n}\times\widehat{T_n}$ as abstract abelian groups, and $\Disc(U(1))\cong\flat U(1)$. This proves \eqref{eq:box2}.
	\end{proof}

    \begin{remark}[Scope of higher Pontryagin symmetries]\label{rmk:scope-pontryagin}
The choice $\T=\Bnab{d}$ detects the invertible symmetries given by flat $U(1)$-cohomology of the target. These coincide with the bosonic invertible solitonic symmetries of \cite[Proposition 4.10]{Chen:2023czk}, which need not exhaust all bosonic invertible symmetries \cite[Section 4.3.2]{Chen:2023czk}. Nor does $\Bnab{d}$ see symmetries generated by invertible spin theories. For the $4$d $\mathbb{CP}^1$ model, the invertible part of the hopfion symmetry is $\Z_2$, generated by a spin bordism invariant \cite{Chen:2022cyw}. Theorem~\ref{thm:pontryagin-factorization} with $X=S^2$ and $d=4$ instead gives $[\shape S^2,\Bnab{4}]_*\cong\B^2\flat U(1)$, with no hopfion factor, while the bosonic parts agree. Capturing such symmetries requires a different dualizing object $\T$ (Section~\ref{sec:conclusion}).
    \end{remark}

We now specialize to pure gauge theory. For $G$ a \textit{Lie} group, let $\X = \mathbf{B}_{\nabla}G$ be the moduli stack of principal $G$-bundles with connection. It is the homotopy quotient $\mathbf B_{\nabla}G\cong\Omega^1(-,\mathfrak g)\dslash G$ of the $\shape$-contractible sheaf $\Omega^1(-,\mathfrak g)$ by gauge transformations. Since $\shape$ preserves colimits and finite products (Lemma~\ref{lem:shape-quotient}),
\begin{equation}\label{eq:shape-bgconn}
    \shape\mathbf B_{\nabla}G
    \cong
    \shape\mathbf BG
    \cong
    \mathbf B\shape G.
\end{equation}
Writing $\shape G\cong\Disc(G_0)$ for an $\infty$-group $G_0$ in $\infty\Grpd$, the higher Pontryagin magnetic symmetries of pure $G$ gauge theory are thus those of pure gauge theory for the \textit{discrete} $\infty$-group $\mathsf G=\Disc(G_0)$. This case is in fact fully general. Since we assume $\tau_{\leq 0}\shape \X = *$, every $\shape \X$ is of the form $\mathbf{B} \Disc(G_0)$ up to equivalence.

\begin{corollary}[Pure discrete gauge theory]\label{cor:BG}
    Let $\mathsf G=\Disc(G_0)$ be a discrete $\infty$-group, and let $d\ge1$. Then
    \begin{equation}
        \left[\B\mathsf G,\ \Bnab{d}\right]_*\ \cong\ \prod_{n=1}^{d}\B^{\,d-n}\,\Disc\left(\widehat{H_n(BG_0,\Z)}\right),
        \qquad \pi_k\cong\widehat{H_{d-k}(BG_0,\Z)}\quad (0\le k\le d-1).
    \end{equation}
    In particular, $\pi_{d-1}\cong\widehat{\pi_0(G_0)^{\mathrm{ab}}}$. If $G_0$ is furthermore $0$-truncated, the $H_n(BG_0,\Z)$ are the integral group homology groups $H_n(G_0,\Z)$ with trivial action on the coefficients, and $\pi_{d-1}\cong\widehat{G_0^{\mathrm{ab}}}$, the Pontryagin dual of the abelianization of $G_0$.
\end{corollary}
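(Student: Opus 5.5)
The plan is to apply Theorem~\ref{thm:pontryagin-factorization} with $X=BG_0$, which is pointed and connected in $\infty\Grpd$. First, identify $\B\mathsf G$ with $\Disc(BG_0)$. Since $\Disc$ preserves colimits and finite products, it carries the bar construction of $G_0$ to that of $\mathsf G=\Disc(G_0)$, so $\Disc(BG_0)=\Disc(\colim_{\Delta^{\mathrm{op}}}G_0^{\times\bullet})\cong\colim_{\Delta^{\mathrm{op}}}\mathsf G^{\times\bullet}=\B\mathsf G$. This is the same argument as in the proof of Lemma~\ref{lem:shape-quotient}. The basepoints match: the canonical point of $\B\mathsf G$ is $\Disc$ of the canonical point of $BG_0$, so the pointed mapping stacks coincide. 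Substituting into \eqref{eq:box1} and \eqref{eq:pontryagin-homotopy-sheaves} then gives both displayed statements immediately.

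For $\pi_{d-1}$, I would use \eqref{eq:pontryagin-homotopy-sheaves} with $k=d-1$, which gives $\widehat{H_1(BG_0,\Z)}$. Hurewicz for the connected space $BG_0$ gives $H_1(BG_0,\Z)\cong\pi_1(BG_0)^{\mathrm{ab}}$, and the loop-space identification gives $\pi_1(BG_0)\cong\pi_0(\Omega BG_0)\cong\pi_0(G_0)$ as groups, since $G_0$ is grouplike. Hence $\pi_{d-1}\cong\widehat{\pi_0(G_0)^{\mathrm{ab}}}$. When $d=1$ this is the $\pi_0$ statement, which is still covered since $0\le k\le d-1$.

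If $G_0$ is $0$-truncated, it is an ordinary discrete group, and $BG_0$ is the Eilenberg--MacLane space $K(G_0,1)$. Its singular homology is group homology with trivial coefficients, $H_n(BG_0,\Z)\cong H_n(G_0,\Z)$; this can be seen by computing it with the reduced normalized chains of the bar construction, which form the standard bar complex. In this case $\pi_0(G_0)=G_0$, so the last claim is immediate.

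I do not expect a real obstacle. The only step needing care is the identification $\B\mathsf G\cong\Disc(BG_0)$ as pointed objects, which depends on $\Disc$ being both a left and a right adjoint in the cohesive setting. The rest is direct specialization of the theorem together with standard algebraic topology.
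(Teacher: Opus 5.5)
Your proposal is correct and follows the paper's proof, which applies Theorem~\ref{thm:pontryagin-factorization} to $X=BG_0$ and uses Hurewicz to get $H_1(BG_0,\Z)\cong\pi_1(BG_0)^{\mathrm{ab}}=\pi_0(G_0)^{\mathrm{ab}}$. Your explicit check that $\B\mathsf G\cong\Disc(BG_0)$ as pointed objects, using that $\Disc$ preserves colimits and finite products, fills in a step the paper leaves implicit; your treatment of the $0$-truncated case via group homology of $K(G_0,1)$ is likewise fine.
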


\begin{proof}
    Apply Theorem~\ref{thm:pontryagin-factorization} to $X=BG_0$, and use $H_1(BG_0,\Z)\cong\pi_1(BG_0)^{\mathrm{ab}}=\pi_0(G_0)^{\mathrm{ab}}$ by the Hurewicz theorem.
\end{proof}

\begin{example}[Finite $1$-group gauge theory]
    For $G_0$ a \textit{finite} group, all $H_n(G_0,\Z)$ with $n\ge1$ are finite, and by the
    universal coefficient theorem
    $\widehat{H_n(G_0,\Z)}\cong H^{n}(G_0,U(1))\cong H^{n+1}(G_0,\Z)$ for $n\ge1$. Hence
    \begin{equation}
        \left[\B\mathsf G,\ \Bnab{d}\right]_*\ \cong\ \Disc\left(\prod_{n=1}^{d}B^{\,d-n}H^{n}(G_0,U(1))\right) \cong\ \prod_{n=1}^{d}\B^{d-n}\Disc\left(H^{n}(G_0,U(1))\right),
    \end{equation}
    a \textit{discrete} stack with no continuous moduli. These are the higher Pontryagin magnetic symmetries of finite gauge theory, e.g. Dijkgraaf--Witten theory in $d$ dimensions.
\end{example}

\begin{remark}[Higher group cohomology]\label{rmk:BG-group-cohomology}
    At the level of connected components, $\pi_0[\B\mathsf G,\Bnab{d}]_*\cong H^d(BG_0,U(1))$ is the (higher) group cohomology of $G_0$ with coefficients in $U(1)$ with its discrete topology. Evaluated at the point, the elements of $[\B\mathsf G,\Bnab{d}]_*$ are thus $U(1)$-valued group $d$-cocycles, along with their (higher) coboundaries. These form the flat part of differential cohomology (e.g. \cite{CheegerSimons1985,hopkins2005quadratic}).\footnote{Another place in the context of symmetries where cohomology groups of discrete groups coming from continuous groups are discussed is \cite{Jia:2025vrj,Jia:2025uun}, where the authors investigate a possible classification of 't Hooft topological anomalies of Lie groups. The discrete groups have however a different origin. There, \textit{very schematically}, the computation corresponds to classes of maps $[\flat G, \mathbf{B}^dU(1)]$.}
\end{remark}

\begin{remark}[First nontrivial Postnikov stage]\label{rmk:BG-postnikov}
The physical interpretation of Corollary~\ref{cor:BG} becomes clearer if we specialize to the first nontrivial Postnikov stage of $\shape \X$ which, as discussed in Section~\ref{ssec:postnikov-towers}, is always of the form $\mathbf{B}^k \pi_k(X)$ for $X:=\vert \shape \X \vert$ and some $k\geq 1$. There are two notable kinds of symmetries in this case:
        \begin{enumerate}
    \item A constant map $\Sigma\to[\mathbf B^k\pi_k(X),\Bnab{d}]_*$ at $\omega$. Evaluated at the point, $\omega$ is a $U(1)$-valued $d$-cocycle on $\mathbf B^k\pi_k(X)$, equivalently a group $d$-cocycle of the $(k-1)$-group $\mathbf B^{k-1}\pi_k(X)$. The cocycle is globally defined on $\Sigma$.

    Briefly specializing further to $k=1$, in terms of \v{C}ech data, the evaluation map pairs a \v{C}ech 1-cocycle $\{g_{ij} \in \pi_1(X)\}$ representing magnetic charge with the cocycle $\omega\in Z^d(\pi_1(X),U(1))$ representing the cohomology class, the same on every patch since the map is constant (and in any case locally constant by Theorem~\ref{thm:pontryagin-factorization}), yielding on $(d+1)$-fold overlaps $U_{i_0 \dots i_d}$ a $U(1)$-valued \v{C}ech $d$-cocycle given by evaluating the local group cocycle on the transition elements as
    \begin{equation}\label{eq:-1-form-discrete-torsion}
    c_{i_0 \dots i_d} = \omega(g_{i_0 i_1}, g_{i_1 i_2}, \dots, g_{i_{d-1} i_d}).
    \end{equation}
    
    Depending on the context and dimension, these are called \textit{discrete torsion} \cite{Vafa:1986wx, Sharpe:1999pv, Sharpe:1999xw, Sharpe:2000ki, Brunner:2014lua}, or \textit{(constant) theta angle term}\footnote{In the physics literature, this is often understood as the Chern--Weil representative of the instanton term, which requires differential form fluxes. That picture is, however, already assuming differential lifts of topological information. Here, by contrast, we are proceeding systematically, and will encounter expressions in terms of differential forms only after explicitly lifting from the strictly discrete to the generally cohesive realm. This will be covered in the upcoming \cite{PerezLona:gauging}.} \cite{Aharony:2013hda, Kapustin2014, Gaiotto:2014kfa}. Since these are background fields for symmetries, we understand a \textit{choice} of a globally defined $d$-cocycle as the background field of a $(-1)$-form symmetry, which couples the $\sigma$-model to a $d$-dimensional \textit{Dijkgraaf--Witten TFT} for $\pi_1(X)$ \cite{Dijkgraaf1990, Freed1993, Wakui1992} (and their higher generalizations for higher $k$, e.g. \cite{Yetter:1993dh,Porter:1996tz,doi:10.1142/S0218216597000145,Fiorenza:2013jz,Monnier:2015qxa}). A gauge transformation of this background field amounts to a shift by a coboundary, which contributes trivially on a closed, oriented $\Sigma$, showing that at this level of analysis, it is only the cohomology class of $\omega$ that is physical.
    
    \item On the other end, we have the factor $\mathbf{B}^{d-k}\Disc\left(\widehat{\pi_k(X)}\right) \to [\mathbf{B}^k\pi_k(X),\Bnab{d}]_*$, where $\widehat{\pi_k(X)}\cong H^1_{\rm gp}(\pi_k(X),U(1))$. We can recognize this to be the magnetic symmetry backgrounds for a $(d-k-1)$-form $\widehat{\pi_k(X)}$ symmetry, the \textbf{\textit{Pontryagin dual magnetic symmetry.}} Again, in terms of \v{C}ech data on a good open cover $\{U_i\}$ of $\Sigma$, the evaluation map as a cohomology operation is the \v{C}ech cup product that pairs a $k$-cocycle $\{g_{i_0 \dots i_k} \in \pi_k(X)\}$ representing the magnetic charge as a \textit{flat} higher principal $\mathbf{B}^{k-1}\pi_k(X)$-bundle with a $(d-k)$-cocycle $\{\chi_{i_0 \dots i_{d-k}} \in \widehat{\pi_k(X)}\}$, necessarily locally constant, representing the higher dual background field. On $(d+1)$-fold overlaps $U_{i_0 i_1 \dots i_d}$, this yields a $U(1)$-valued $d$-cocycle defined as
\begin{equation}\label{eq:topological-pontryagin-cocycle}
\omega_{i_0 \dots i_d} = \chi_{i_k \dots i_d}(g_{i_0 \dots i_k}) \in Z^d(\Sigma, U(1))
\end{equation}
whose integration over $\Sigma$ is the $U(1)$ phase of the coupled topological theory.
\end{enumerate}
The magnetic symmetries of other degrees correspond to the cohomology groups $H^m(K(\pi_k(X),k),U(1))$, encompassing a wide array of cohomology operations.
\end{remark}

\begin{remark}[On the cohomology groups $H^q_{\rm sing}(B^nK,A)$]

The cohomology groups of Eilenberg--MacLane spaces were studied in the influential papers \cite{HochschildSerre1953,Serre1953,EilenbergMacLane1953,EilenbergMacLane1954,SHC_1954-1955__7_1_A11_0}. These cohomology groups are in general quite involved, often requiring more specialized computational techniques (see e.g. \cite{Clement2002Integral}), so we will limit ourselves to a few highlights. 

First of all, we remark that the naive expectation concerning the relation between $H^d_{\rm sing}(B^nK,A)$ and $H^{d-n+1}_{\rm sing}(BK,A)\cong H^{d-n+1}_{\rm gp}(K,A)$ does not hold
\begin{equation}
    H^d_{\rm sing}(B^nK,A) \not\cong H^{d-n+1}_{\rm sing}(BK,A)\cong H_{\rm gp}^{d-n+1}(K,A).
\end{equation}
Instead, these groups are related via the canonical map $\Sigma B^nK \to B^{n+1}K$ inducing the suspension homomorphism (e.g. \cite{hatcher2002algebraic})
\begin{equation}
    \sigma^*: H^{q+1}_{\rm sing}(B^{n+1}K,A)\to H^{q+1}_{\rm sing}(\Sigma B^n K,A) \cong H^q_{\rm sing}(B^n K, A),
\end{equation}
which is an isomorphism for $q< 2n$ and a monomorphism for $q=2n$, but otherwise need not be injective or surjective. In particular, for the cohomology group of degree $k$, we let $q=n+k$, and we obtain an isomorphism when $k<n$.

This means that for sufficiently large $n$, the cohomology groups \textit{stabilize} into stable cohomology groups $H^k_{\rm stable}(K,A)$.

An exceptional case, used in Remark~\ref{rmk:BG-postnikov}, is that of $H^d_{\rm sing}(B^dK,A)$, where we do obtain a genuine isomorphism with group cohomology groups
\begin{equation}
    H^d_{\rm sing}(B^dK,A) \cong H^1_{\rm sing}(BK,A) \cong H^1_{\rm gp}(K,A).
\end{equation}

Notably, these computations are highly nontrivial almost exclusively due to the \textit{torsional} part. If one works with rational coefficients, for example, then the computations simplify significantly (e.g. \cite{Serre1953}, \cite[Section 19]{EilenbergMacLane1954}). Since we are nevertheless presently interested in $U(1)$-valued Lagrangians, and not in $\mathbb{Q}$-valued TFT's, we will not pursue this route here.
\end{remark}

Let us now return to the Lie group $G$ we started from, and specialize to the familiar statement that pure $G$ gauge theory features a magnetic $(d-3)$-form symmetry $\widehat{\pi_1(G)}$ (e.g. \cite{Brennan:2023mmt,Stockall:2025ngz}). \textbf{\textit{Under the assumption that $\shape \X$ is $1$-connected,}} which according to Equation~\eqref{eq:shape-bgconn} means that $\pi_1(BG)\cong \pi_0(G) \cong *$, so that $\mathsf{X}_2 \cong  \mathbf{B}^2\pi_2(BG)\cong \mathbf{B}^2\pi_1(G)$, the second kind of symmetry above, with $k=2$, identifies the Pontryagin dual $\widehat{\pi_1(G)}$ of $\pi_1(G)$ as a $(d-k-1)=(d-3)$-form magnetic symmetry background, as expected.

Nevertheless, we remark that this assumes $\pi_0(G)=*$. This is a nontrivial property of $G$ that groups like $O(n),GL(n,\R), O(p,q)$, and $\mathrm{Pin}^{\pm}(N)$ do not satisfy, which is why it is often stated (e.g. \cite[Section 3.2]{Sheckler:2025rlk}) that the Lie group is required to be connected. For $G$ with more than one connected component, the expected symmetry is not there in general, as we illustrate in Section~\ref{sssec:pin2} using $G=\mathrm{Pin}^{\pm}(2)$ as gauge groups.

\begin{example}[{$S^1$-valued $\sigma$-models}]\label{ex:s1-target}
    For $\X = S^1$ the smooth circle, which serves as the target space for e.g. the \textit{two-dimensional compact boson}, we have $\shape \X = \mathbf{B}\Z = \Disc(B\Z)$. Therefore, $H_1=\Z$ and $H_{n\ge2}=0$, giving
	\begin{equation}
    [\B\Z,\Bnab{d}]_*\cong\B^{d-1}\flat U(1).
    \end{equation}
    This classifies the magnetic symmetry backgrounds for the so-called \textit{winding} $U(1)$ \textit{symmetry}. We study this in greater detail in forthcoming work \cite{PerezLona:applications}.
\end{example}

\subsection{Examples}\label{ssec:pontryagin-examples}

The two examples collected in this section are deliberately simple: their purpose is to make the machinery developed in Section~\ref{ssec:factorization-connected-stacks} explicit, rather than to extract new physics. We work through the higher Pontryagin magnetic symmetries of (higher) Maxwell theory (Section~\ref{sssec:maxwell}) and of pure $\mathrm{Pin}^{\pm}(N)$ gauge theory (Section~\ref{sssec:pin2}), in each case computing the relevant (co)homology directly by hand. Physically relevant applications of the present framework are instead developed in \cite{PerezLona:applications}.

\subsubsection{{(Higher) Maxwell theory}}\label{sssec:maxwell}

Here, we compute the higher Pontryagin magnetic symmetries of Maxwell theory and its higher generalizations. We take the target space to be
\begin{equation}
    \X = \mathbf{B}^k_{\nabla}U(1),
\end{equation}
the moduli stack of bundle $(k-1)$-gerbes with connection. The Maxwell case corresponds to $k=1$, namely principal $U(1)$-bundles with connection, the latter of which is locally modeled by a differential $1$-form. The $k=2,3$ cases are usually taken to describe as a first approximation the string theoretic B field \cite{doi:10.1142/S0129055X02001557} (locally a $2$-form), and the M-theoretic C field \cite{Diaconescu_Freed_Moore_2007} (locally a $3$-form).

In this case, the magnetic co-symmetries are straightforward:
\begin{equation}
    \shape \X = \shape \mathbf{B}^k_{\nabla}U(1) = \shape \mathbf{B}^kU(1) = \mathbf{B}^{k+1}\Z.
\end{equation}

Specializing Theorem~\ref{thm:pontryagin-factorization} to $X=\vert\mathbf{B}^{k+1}\Z\vert=K(\Z,k+1)$, and writing $H_n:=H_n(K(\Z,k+1),\Z)$, we have
\begin{equation}
    \left[\mathbf{B}^{k+1}\Z,\ \Bnab{d}\right]_*\ \cong\ \prod_{n=1}^{d}\ \mathbf{B}^{\,d-n}\,\Disc\left(\widehat{H_n}\right),
\end{equation}
where, since $K(\Z,k+1)$ is $k$-connected, $H_n=0$ for $n\leq k$ and the product effectively starts at $n=k+1$, so that the first possible contribution comes from the fundamental
degree-$(k+1)$ magnetic charge. Higher contributions arise from the classes dual to its cup powers,
when these are nonzero, together with the integral
torsion classes generated by the associated stable and unstable
cohomology operations.

For $k=1$, $K(\Z,2)\cong\mathbb{CP}^{\infty}$ is torsion-free, with $H_{2j}\cong\Z$ for $j\geq1$ and $H_n=0$ for $n$ odd, dual to the polynomial ring $H^*(\mathbb{CP}^{\infty},\Z)=\Z[c]$, $|c|=2$. Each such $H_n$ is free of rank one, so $\widehat{H_{2j}}\cong U(1)$ and
\begin{equation}\label{eq:maxwell-k1}
    \left[\mathbf{B}^2\Z,\ \Bnab{d}\right]_*\ \cong\ \prod_{j=1}^{\lfloor d/2\rfloor}\mathbf{B}^{\,d-2j}\flat U(1),
\end{equation}
the topmost factor being $\flat U(1)$, dual to $c^{d/2}$, for $d$ even, and $\mathbf{B}\flat U(1)$ for $d$ odd.

For $k\geq2$, let $m=k+1\geq3$. \textit{Rationally}, $H_*(K(\Z,m),\mathbb{Q})$ is an exterior algebra on a single generator of degree $m$ for $m$ odd, and a polynomial algebra on such a generator for $m$ even, so the free rank of $H_n$ is nonzero, and equal to one, exactly at $n=m$ for $m$ odd, or at every positive multiple $n\in m\Z_{>0}$, for $m$ even. The torsion is instead controlled, in a range of degrees set by connectivity, by stability: since $K(\Z,m)$ is $(m-1)$-connected, $H_{m+i}(K(\Z,m),\Z)\cong H_i(H\Z,\Z)$ for $i<m$, and in low degree $H_i(H\Z,\Z)$ is $\Z,0,\Z_2,0,\Z_6$ for $i=0,\dots,4$, the order-two class being dual to the integral Bockstein of $\mathrm{Sq}^2$ on the fundamental class, and the order-three part of the order-six class dual to the integral Bockstein of the first mod-$3$ Steenrod power $\mathcal{P}^1$. Feeding these values of $H_n$ into Theorem~\ref{thm:pontryagin-factorization}, and using the abstract isomorphisms $\widehat{\Z_2}\cong\Z_2$ and $\widehat{\Z_6}\cong\Z_6$ gives, for $k\geq2$,
\begin{equation}\label{eq:maxwell-kgeq2}
    \left[\mathbf{B}^{k+1}\Z,\ \Bnab{d}\right]_*\ \cong\
    \begin{cases}
        \ast, & d\leq k,\\
        \flat U(1), & d=k+1,\\
        \mathbf{B}\flat U(1), & d=k+2,\\
        \mathbf{B}^2\flat U(1)\times\Disc(\Z_2), & d=k+3,\\
        \mathbf{B}^3\flat U(1)\times\mathbf{B}\Disc(\Z_2), & d=k+4,\ k\geq3,\\
        \mathbf{B}^4\flat U(1)\times\mathbf{B}^2\Disc(\Z_2)\times\Disc(\Z_6), & d=k+5,\ k\geq4.
    \end{cases}
\end{equation}
Here the restrictions on $k$ ensure that every homology group entering
the answer is in the stable range. For
$m=k+1$, the stabilization isomorphism
\begin{equation}
    H_{m+i}\left(K(\mathbb Z,m),\mathbb Z\right)
    \cong
    H_i(H\mathbb Z,\mathbb Z)
\end{equation}
is used only for $i<m$. Thus, the formula at $d=k+r=m+r-1$
requires
\begin{equation}
    r-1<m=k+1.
\end{equation}
In particular, the expression at $d=k+4$ is asserted only for
$k\geq3$, and the expression at $d=k+5$ only for $k\geq4$, so that all
terms through degree $d$ remain in the indicated stable range.
Outside this range, additional unstable integral homology groups may
occur and the answer must be obtained from the full homology of
$K(\mathbb Z,k+1)$.

The $\flat U(1)$-valued factors above sit exactly in the degrees $n$ where $H_n$ has nonzero free rank (that is, $n=k+1$ for $k$ even, and $n\in(k+1)\Z_{>0}$ for $k$ odd), each delooped $d-n$
times as dictated by Theorem~\ref{thm:pontryagin-factorization}. This accounts both for the single $\flat U(1)$ factor at $d=k+1$ in \eqref{eq:maxwell-kgeq2} and for the tower of $\flat U(1)$-factors in \eqref{eq:maxwell-k1}.

\begin{remark}[Free and torsion deloopings are both discrete]\label{rmk:torsion-deloopings}
    The torsion factors $\mathbf{B}^{d-n}\widehat{H_n}$ appearing above, such as $\mathbf{B}^{j}\Disc(\Z_2)$ and $\mathbf{B}^{j}\Disc(\Z_6)$, arise only for $k\geq2$, the homology of $\shape\X$ being torsion-free for $k=0,1$. Both the free and torsion factors are discrete. The distinction is that the free factors are copies of the divisible group $\flat U(1)$, whose continuous structure, the associated currents, only becomes visible once the charges are differentially refined (as we shall do explicitly in \cite{PerezLona:gauging}), while the torsion factors have no such structure to begin with. One might expect the torsion factors to descend from the free ones through the inclusions $\Z_p\hookrightarrow U(1)$, which induce maps $\mathbf{B}^{j}\Z_p\to\mathbf{B}^{j}\flat U(1)$. This is not what happens, since the torsion factors sit in degrees $n$ distinct from those of the free factors. This means that for $k\geq2$ the stack of pointed maps out of $\mathbf{B}^{k+1}\Z$ is genuinely not a product of deloopings of $\flat U(1)$ alone, as soon as $d\geq k+3$.
\end{remark}

\subsubsection{\texorpdfstring{Pure $\mathrm{Pin}^{\pm}(2)$ gauge theory}{Pure Pin(2) gauge theory}}\label{sssec:pin2}

We consider pure $G$ gauge theory for $G=\Pin^\pm(2)$ (cf. Appendix~\ref{app:sec:opmn}). Physically, pure $\Pin^+(2)\cong O(2)=U(1)\rtimes\Z_2$ gauge theory
is Maxwell theory with electric charge conjugation gauged (as computed explicitly in \cite{Perez-Lona:2025xxm}). Gauge theories with such disconnected gauge groups are by now a testing ground for non-invertible
symmetries
\cite{Heidenreich:2021xpr,Arias-Tamargo:2022nlf}, \cite[Sections 5,6]{Bhardwaj:2022yxj}.\footnote{The two signs also arise as
the IR gauge groups in Alice electrodynamics setups \cite{Schwarz:1982ec,Bucher:1991bc}. Briefly, a Higgs field in the spin-$2$ representation breaks $SO(3)$ to $O(2) \cong \mathrm{Pin}^+(2)$, and the double cover $SU(2)$ to $\Pin^-(2)$ \cite[Section 2]{Ben-Menahem:1992eiu}.} For $N\geq3$, $\pi_1(\mathrm{Spin}(N))=0$, so the second Postnikov stage of
$\shape \mathbf B\Pin^\pm(N)$ adds nothing to the first. The next nontrivial stage is
$\pi_4(B\Pin^\pm(N))=\pi_3(\mathrm{Spin}(N))\neq0$, which carries the instanton symmetry. We
therefore focus on $N=2$.

\paragraph{Setup.} Fix $d\geq3$ throughout (we return to $d=2$ in Remark~\ref{pin:remd2}), and take the
dualizing stack to be $\T=\Bnab{d}$, $\T_0=*$, as in the rest of this section. Let
$G_\pm=\Pin^\pm(2)$ \cite{Kirby_Taylor_1991} for the two double covers, sitting in the extension
\begin{equation}
1\to U(1)\to G_\pm\xrightarrow{D}\Z_2\to1.
\end{equation}
Conjugation by any
lift $e\in D^{-1}(1)$ of the nontrivial element acts on $U(1)=\mathrm{Spin}(2)$ by inversion, so that for
$z\in U(1)$,
\begin{equation}\label{pin:conjugation}
(ez)^2=eze z=e^2(e^{-1}ze)z=e^2z^{-1}z=e^2.
\end{equation}
Every element of the non-identity component therefore squares to the same element $e^2=\pm1$, which can be checked to be central.
 It is precisely this sign that distinguishes the two groups: for $\Pin^+$ one has $e^2=1$, and the
subgroup $\langle e\rangle$ splits the extension, giving $G_+\cong U(1)\rtimes\Z_2\cong O(2)$, whereas for
$\Pin^-$ every element of the non-identity component instead has order $4$, so that no such splitting
exists and $G_-\cong\Pin(2)=U(1)\sqcup jU(1)\subset SU(2)$.

At the level of shapes, $\shape\mathbf BG_\pm$ is a $2$-type with $\pi_1=\Z_2$ and
$\pi_2=\pi_1(U(1))=\Z^-$, the integers with the sign action, fitting into a fibration of discrete stacks
\begin{equation}\label{pin:fib}
\mathbf B^2\Z\longrightarrow\shape\mathbf B G_\pm\xrightarrow{\ p\ }\mathbf B\Z_2,\qquad
\beta_\pm\in H^3(B\Z_2,\Z^-)\cong\Z_2,
\end{equation}
classified by a Postnikov invariant $\beta_\pm$ that we compute below. In $\Hh$, the underlying fiber
sequence $\mathbf BU(1)\to\mathbf BG_\pm\to\mathbf B\Z_2$ exhibits, by descent, $\mathbf BG_\pm\cong \X\dslash\mathcal{H}$
for an $\infty$-action of the discrete group $\mathcal{H}=\Z_2$ on $\X=\mathbf BU(1)$, so that
Theorem~\ref{thm:descent-magnetic-symmetries} applies directly, with $\mathsf H=\shape \mathcal{H}=\Z_2$ and
$\shape\X=\mathbf B^2\Z$. The two signs give two a priori different $\Z_2$-actions on $\mathbf B^2\Z$. As
we now show, they agree on the underlying homotopy sheaf $\pi_2$, but differ through the class
$\beta_\pm$.

\begin{lemma}\label{pin:BPin2}
$\dim_{\mathbb F_2}H^n(BG_-,\mathbb F_2)=1$ for $n\equiv0,1,2\pmod4$ and vanishes for
$n\equiv3\pmod4$.
\end{lemma}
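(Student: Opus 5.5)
Since $G_-=\Pin(2)$ is the normalizer of the maximal torus in $SU(2)$, I would compute $H^*(BG_-,\mathbb F_2)$ from the fibration $SU(2)/G_-\to BG_-\to BSU(2)$. Here
\[
SU(2)/\Pin(2)\cong SO(3)/O(2)\cong\mathbb{RP}^2,
\]
and $BSU(2)\simeq\mathbb{HP}^\infty$ has $H^*(BSU(2),\mathbb F_2)=\mathbb F_2[c_2]$ with $|c_2|=4$.

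In the Serre spectral sequence the base is simply connected, so the coefficients are untwisted. The $E_2$ page is therefore
\[
\mathbb F_2[c_2]\otimes H^*(\mathbb{RP}^2,\mathbb F_2)=\mathbb F_2[c_2]\otimes\{1,w,w^2\},
\]
with the fiber part concentrated in degrees $0,1,2$. All differentials vanish for degree reasons. A differential $d_r$ has bidegree $(r,1-r)$, so it moves fiber degree down by $r-1\ge1$. Its source then has fiber degree $1$ or $2$, so it can only hit fiber degree $0$ or $1$. But the base degree changes by $r$, and $r\ge2$ while base classes only live in degrees divisible by $4$. A nonzero differential would therefore need $r\equiv0\pmod4$ and $r-1\le2$, which is impossible. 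Hence $E_2=E_\infty$.

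Reading off the total degrees, $E_\infty$ has exactly one class in each degree $4j$, $4j+1$ and $4j+2$, and nothing in degrees $4j+3$. Over $\mathbb F_2$ there are no extension problems for dimension counts, so this gives the claimed dimensions.

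The step I expect to be the main obstacle is the identification of the fiber. I need to check that $SU(2)/\Pin(2)$ is $\mathbb{RP}^2$ rather than $S^2$. This holds because $\Pin(2)$ contains $-1$, so the quotient is $(SU(2)/\pm1)/(\Pin(2)/\pm1)=SO(3)/O(2)$, the space of unoriented lines in $\mathbb R^3$.

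As an alternative cross-check, I would use $U(1)\to G_-\to\Z_2$ and the Serre spectral sequence of $\mathbf BU(1)\to BG_-\to B\Z_2$. That route needs a nonzero transgression $d_3$ of the degree-$2$ fiber class to $x^3$, which detects $e^2=-1$. I prefer the first route because it avoids having to determine that differential.
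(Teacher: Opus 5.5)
Your proposal is correct and follows the paper's proof: both use the fibration $\mathbb{RP}^2\to BG_-\to BSU(2)\simeq\mathbb{HP}^\infty$ and show the Serre spectral sequence collapses at $E_2$, since a nonzero $d_r$ would need $r\equiv0\pmod 4$ and fiber degree $q\geq r-1\geq3$. The only difference is cosmetic. You identify the fiber as $SO(3)/O(2)$, whereas the paper writes it as $S^2/\{\pm1\}$ with the Weyl group acting antipodally.
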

\begin{proof}
This result is known, see e.g. \cite[Section 2.1]{Manolescu:2013bha}.
Since $G_-\cong\Pin(2)=N_{SU(2)}(U(1))$ is the normalizer of a maximal torus, $SU(2)/G_-=(SU(2)/U(1))/\Z_2=S^2/\{\pm1\}\cong\mathbb{RP}^2$,
with the residual Weyl group acting freely by the antipodal map. This exhibits $BG_-$ as the
total space of a fibration $\mathbb{RP}^2\to BG_-\to BSU(2)\cong\mathbb{HP}^\infty$. The
base is simply connected with $H^*(\mathbb{HP}^\infty,\mathbb F_2)=\mathbb F_2[v]$, $|v|=4$, so the $E_2$
page $E_2^{p,q}=\mathbb F_2[v]\otimes H^*(\mathbb{RP}^2,\mathbb F_2)$ of the associated Serre spectral
sequence is concentrated in $p\equiv0\pmod4$ and $0\leq q\leq2$. Any differential
$d_r\colon E_r^{p,q}\to E_r^{p+r,q-r+1}$ between two nonzero groups would need $r\equiv0\pmod4$ and
$q\geq r-1\geq3$, which is impossible given $q\leq2$. Hence, the spectral sequence collapses at $E_2$,
and this gives the stated dimensions.
\end{proof}

\begin{lemma}\label{pin:beta}
We have that $\beta_+=0$ and $\beta_-\neq0$. Equivalently, the $\Z_2$-action on $\mathbf B^2\Z$ underlying~\eqref{pin:fib}
admits a homotopy fixed point for $\Pin^+(2)$, and none for $\Pin^-(2)$.
\end{lemma}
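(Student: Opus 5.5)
The plan is to treat the two signs separately: exhibit an explicit splitting for $\Pin^+(2)$, and obstruct one for $\Pin^-(2)$ using the mod-$2$ cohomology of Lemma~\ref{pin:BPin2}. First recall the standard fact linking the two formulations. A homotopy fixed point of the $\Z_2$-action on $\mathbf B^2\Z$ is the same as a section of $p\colon\shape\mathbf BG_\pm\to\mathbf B\Z_2$ (the identification used in Lemma~\ref{lem:equivariant-point}). For a fibration with fiber $K(\Z^-,2)$ over $B\Z_2$, the primary (and here only) obstruction to a section is the $k$-invariant $\beta_\pm\in H^3(B\Z_2,\Z^-)$. So the two formulations are equivalent, and it suffices to compute the existence of sections, or equivalently $\beta_\pm$.

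For $\Pin^+(2)\cong O(2)=U(1)\rtimes\Z_2$, the inclusion $\Z_2=\langle e\rangle\hookrightarrow O(2)$ is a group homomorphism splitting $D$. It therefore gives a section $\mathbf B\Z_2\to\mathbf BO(2)$ of $\mathbf BD$. Applying $\shape$ (using $\shape\mathbf B\Z_2\cong\mathbf B\Z_2$) yields a section of $p$, hence a homotopy fixed point, and so $\beta_+=0$. Equivalently, the fixed point $e$ of the $\Z_2$-action on $\mathbf BU(1)$ is coherently equivariant, and Lemma~\ref{lem:equivariant-point} applies.

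For $\Pin^-(2)$, argue by contradiction. Suppose $p$ admits a section $\sigma$. Then $\mathbf B\Z_2$ is a retract of $\shape\mathbf BG_-$, so $p^*\colon H^*(B\Z_2,\mathbb F_2)\to H^*(BG_-,\mathbb F_2)$ is injective. Since $H^3(B\Z_2,\mathbb F_2)=\mathbb F_2\langle w^3\rangle\neq0$, this would force $H^3(BG_-,\mathbb F_2)\neq0$. That contradicts Lemma~\ref{pin:BPin2}, which gives $H^n=0$ for $n\equiv 3\pmod 4$. Hence there is no section and $\beta_-\neq0$. Since $H^3(B\Z_2,\Z^-)\cong\Z_2$, this pins $\beta_-$ down as the generator.

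The main obstacle is the identification in the first step: one must check that the vanishing of $\beta_\pm$ is exactly equivalent to the existence of a section, and that this in turn matches the paper's notion of homotopy fixed point in the discrete setting. I would handle this by passing along $\Disc$ to ordinary spaces, where sections of $K(\pi,2)$-fibrations over $B\Z_2$ are classified by twisted cohomology. The value $H^3(B\Z_2,\Z^-)\cong\Z_2$ would be confirmed via the periodic resolution for $\Z_2$ with sign-twisted coefficients. The rest is a direct use of Lemma~\ref{pin:BPin2}.
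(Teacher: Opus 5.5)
Your proposal is correct and follows essentially the same route as the paper. The group-theoretic splitting of $\Pin^+(2)\cong O(2)=U(1)\rtimes\Z_2$ gives a section, hence $\beta_+=0$. For $\Pin^-(2)$, a section would make $p^*$ injective on $H^3(-,\mathbb F_2)$, contradicting $H^3(BG_-,\mathbb F_2)=0$ from Lemma~\ref{pin:BPin2}. The equivalence between homotopy fixed points, sections of $p$, and vanishing of the $k$-invariant is set up via obstruction theory, just as you describe.
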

\begin{proof}
By obstruction theory (see e.g.\ \cite[Section~4.3]{hatcher2002algebraic}), a fibration over $B\Z_2$ with fiber $K(\Z^-,2)$ admits a section exactly
when its $k$-invariant vanishes, and a section of $\shape(\X\dslash\mathcal{H})\to\mathbf B\Z_2$ is the same datum as
a homotopy fixed point, as follows from the proofs of Lemmas~\ref{lem:equivariant-pointed}
and~\ref{lem:equivariant-point}. For $G_+$, the group-theoretic splitting exhibited above directly gives
a section of $\mathbf BG_+\to\mathbf B\Z_2$, so $\beta_+=0$. For $G_-$, let
$u\in H^1(B\Z_2,\mathbb F_2)$ denote the generator. Then $u^3\neq0$, while
$p^*u^3\in H^3(BG_-,\mathbb F_2)=0$ by Lemma~\ref{pin:BPin2}. If we had a section $s$, naturality
would force $u^3=s^*p^*u^3=0$, which is a contradiction. Therefore, no such section exists and $\beta_-\neq0$.
\end{proof}

\paragraph{Route 1: magnetic symmetries via the fibration theorem.} Let $M=[\shape\X,\T]=[\mathbf B^2\Z,\Bnab{d}]$,
the \textit{unpointed} mapping stack whose homotopy fixed points compute the magnetic symmetries of the gauged
theory via Theorem~\ref{thm:descent-magnetic-symmetries}. Since $\mathbf B^2\Z$ is pointed and connected,
$M\cong\Bnab{d}\times[\mathbf B^2\Z,\Bnab{d}]_*$, and by Lemma~\ref{lem:deligne}, $\pi_d(\Bnab{d})=U(1)$,
constant. The pointed factor is given directly by Theorem~\ref{thm:pontryagin-factorization} applied to
$X=K(\Z,2)$: since $H_*(K(\Z,2),\Z)$ is free of rank one in each positive even degree and zero otherwise, this gives the constant
homotopy sheaves
\begin{equation}\label{pin:piM}
\pi_dM=U(1),\qquad \pi_{d-2j}M\cong\mathrm{Hom}(H_{2j}K(\Z,2),U(1))\cong U(1)\quad(1\leq j\leq\lfloor d/2\rfloor),
\end{equation}
with all other homotopy sheaves of $M$ in positive degrees vanishing.\footnote{In degree $0$, for every $d$, $\pi_0M$ also contains
$\pi_0(\Bnab{d})\cong\Omega^{d+1}_{\mathrm{cl}}$, with objectwise values $H_0(C(U))=\Omega^d(U)/d\Omega^{d-1}(U)$,
from the unpointed summand (Lemma~\ref{lem:deligne}), a non-constant sheaf.
It enters the range of the table below only for $d=3$, and we deal with this in the
proof of Corollary~\ref{pin:route1}.} Because a self-map of $K(\Z,2)$ is determined up to homotopy by
its effect on $\pi_2$, and both actions induce $-1$ there, the induced action on cohomology sends
$c\mapsto-c$ on $H^*(K(\Z,2),\Z)=\Z[c]$. From this we deduce that $\Z_2$ acts on $\pi_{d-2j}M$ by $(-1)^j$ and
trivially on $\pi_dM$ \textit{for both signs}. Only these canonical homotopy sheaves
enter the computation below.

By Lemma~\ref{lem:hfp-ss}, the homotopy-fixed-point spectral sequence has $E_1^{s,t}=H^s(\Z_2,\pi_tM)$,
now ordinary group cohomology of $\Z_2$ with constant coefficients in the rows $t>0$, with differentials
$d_r\colon E_r^{s,t}\to E_r^{s+r+1,t+r}$. Feeding
in the coefficient groups of Lemma~\ref{pin:coeff} (Appendix~\ref{app:sec:pin2-coeff}), the $E_1$ page
in the relevant range reads, identically for both signs,
\begin{center}
\begin{tabular}{c|cccc}
 & $s=0$ & $s=1$ & $s=2$ & $s=3$\\ \midrule
$t=d$ & $U(1)$ & $\Z_2$ & $0$ & $\Z_2$\\
$t=d-1$ & $0$ & $0$ & $0$ & $0$\\
$t=d-2$ & $\Z_2$ & $0$ & $\Z_2$ & $0$\\
$t=d-3$ & $0$ & $0$ & $0$ & $0$
\end{tabular}
\end{center}
For $d=3$, the row $t=0$ additionally contains the entry $E_1^{0,0}=H^0(\Z_2,\Omega_{\mathrm{cl}}^4)=\Omega_{\mathrm{cl}}^4$ coming from $\pi_0\T$, with the trivial $\Z_2$-action, while $E_1^{s,0}=0$ for $s\geq1$, because $\Omega_{\mathrm{cl}}^4$ is a sheaf of real vector spaces and is therefore uniquely $2$-divisible.
Reading off for total degree $t-s$, stem $d-1$ contains only the entry $(1,d)$, stem $d-2$ contains
$(0,d-2)=\Z_2$ and $(2,d)=0$, and stem $d-3$ contains $(3,d)=\Z_2$ and $(1,d-2)=0$. Since the rows
$t=d\pm1$ vanish and nothing survives above $t=d$, the only differential that can possibly be nonzero
among these entries is
\begin{equation}\label{pin:eqd2}
d_2\colon E_2^{0,d-2}=\Z_2\longrightarrow E_2^{3,d}=\Z_2,
\end{equation}
and the entire dependence of the answer on the sign of the extension is thus concentrated in a single datum, namely whether this map vanishes or not.
In the more common indexing of the homotopy-fixed-point spectral sequence,
$d_r\colon E_r^{s,t}\to E_r^{s+r,t+r-1}$, this differential is a $d_3$, matching
the transgression of Remark~\ref{pin:transgression-rmk}.

\begin{corollary}\label{pin:route1}
For both signs, $\pi_{d-1}\Mag_\T(\mathbf BG_\pm)\cong\Z_2$. Moreover,
$\pi_{d-2}\Mag_\T(\mathbf BG_\pm)\cong\ker(d_2)\subseteq\Z_2$ and
$\pi_{d-3}\Mag_\T(\mathbf BG_\pm)\cong\mathrm{coker}(d_2)$. In other words, the entire difference between the two signs,
which is confined to stems $d-2$ and $d-3$, is due to this single differential~\eqref{pin:eqd2}.
\end{corollary}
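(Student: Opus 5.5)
The plan is to read off the three stems from the spectral sequence of Theorem~\ref{thm:descent-magnetic-symmetries}(iv), and then pass from the unpointed mapping stack to the pointed one $\Mag_\T(\mathbf BG_\pm)$.

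First, I set up the identification. By Theorem~\ref{thm:descent-magnetic-symmetries}(i) and Corollary~\ref{cor:nonnegative-abutment}, for $n\geq0$ the abutment $\pi_n(\mathbf M^{h\Z_2}_{\mathrm{st}})$ is $\pi_n[\shape\mathbf BG_\pm,\T]$. Next, I use the fiber sequence $\Mag_\T(\mathbf BG_\pm)\to[\shape\mathbf BG_\pm,\T]\xrightarrow{\ev_{\bar x}}\T$. It is split by $\const$, since $\ev_{\bar x}\circ\const=\mathrm{id}_\T$ as in Corollary~\ref{cor:red-pointed}, so $[\shape\mathbf BG_\pm,\T]\cong\T\oplus\Mag_\T(\mathbf BG_\pm)$. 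By Lemma~\ref{lem:deligne}, $\pi_k\T=0$ for $0<k<d$. Hence for $1\leq n\leq d-1$ the homotopy sheaves of $\Mag_\T$ agree with those of the unpointed stack. The only exception is stem $0$ when $d=3$, where $\pi_0\T\cong\Omega^4_{\mathrm{cl}}$ splits off. That summand is exactly the entry $E_1^{0,0}=\Omega^4_{\mathrm{cl}}$, and $E_1^{s,0}=0$ for $s\geq1$. It comes from the constant summand with trivial action, so $\const^{h\Z_2}$ and $\ev$ realize it as a direct summand of the abutment. I will check that it is a permanent cycle not hit by differentials; one way is to use naturality of the spectral sequence for the equivariant retraction $\T\to M\to\T$. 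After removing it, the remaining analysis is uniform in $d\geq3$.

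Second, I run the degree bookkeeping on the $E_1$ page given above. It comes from Lemma~\ref{pin:coeff}, the actions $(-1)^j$ on $\pi_{d-2j}M$, and the trivial action on $\pi_dM$, and it is identical for both signs. All differentials have the form $d_r\colon E^{s,t}\to E^{s+r+1,t+r}$, there are no entries with $s<0$ or $t>d$, and the rows $t=d\pm1$ vanish. I then list the entries in each relevant stem.
\begin{itemize}
\item Stem $d-1$: the only nonzero entry is $(1,d)=\Z_2$. Every differential leaving it lands in $t>d$, and every differential entering it would start at $s<0$. So it survives, and $\pi_{d-1}\cong\Z_2$ for both signs.
\item Stem $d-2$: the entries are $(0,d-2)=\Z_2$ and $(2,d)=0$. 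From $(0,d-2)$, the differential $d_1$ lands in the zero row $t=d-1$, $d_2$ lands in $(3,d)$, and $d_{r\geq3}$ lands in $t>d$. Nothing enters it. This gives $\ker(d_2)$.
\item Stem $d-3$: the entries are $(3,d)=\Z_2$, $(1,d-2)=0$ and $(2,d-1)=0$. Entries $(s,t)$ in this stem with $t<d-3$ would need $s<0$. Nothing leaves $(3,d)$, and the only possible incoming differential is $d_2$ from $(0,d-2)$. This gives $\mathrm{coker}(d_2)$.
\end{itemize}

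Third, I settle convergence and extensions. Strong convergence holds by Lemma~\ref{lem:hfp-ss}, since the filtration is finite. In each of the three stems at most one $E_\infty$ entry is nonzero (after the $d=3$ splitting), so there are no extension problems and the groups are read off directly. The sign dependence then enters only through \eqref{pin:eqd2}.

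The main obstacle I expect is the $d=3$ stem-$0$ bookkeeping: showing rigorously that the non-constant sheaf $\Omega^4_{\mathrm{cl}}$ splits off compatibly with the filtration, rather than merely at the level of abutments. I would handle it through the equivariant splitting $M\cong\T\oplus M_{\red}$ of the remark following Theorem~\ref{thm:descent-magnetic-symmetries}. Alternatively, I would run the spectral sequence directly for $[\shape\X,\T]_{\red}$, using part (ii): part (ii) removes the $\T$ row from the start, while its other fiber term $[\mathbf B\Z_2,\T]_*$ must then be tracked through the sequence.
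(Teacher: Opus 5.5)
Your skeleton is the paper's. You identify the abutment via Corollary~\ref{cor:nonnegative-abutment}, use the non-equivariant splitting $[\shape\mathbf BG_\pm,\T]\cong\T\oplus\Mag_\T(\mathbf BG_\pm)$ of Corollary~\ref{cor:red-pointed}, note $\pi_k\T=0$ for $0<k<d$, and do the degree bookkeeping showing that only $d_2\colon E_2^{0,d-2}\to E_2^{3,d}$ can act in stems $d-1,d-2,d-3$. That bookkeeping is correct. The entry $(0,d-3)$ that you skip in stem $d-3$ vanishes for $d\geq4$, and for $d=3$ it is the $\Omega^4_{\mathrm{cl}}$ entry.

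The gap is in the step you flag yourself, the $d=3$, stem-$0$ entry. You propose to use an equivariant retraction $M\to\T$ of $\const$, equivalently an equivariant splitting $M\cong\T\oplus M_{\red}$. No such splitting is available here.
\begin{itemize}
\item Lemma~\ref{lem:equivariant-reduced} only gives an equivariant sequence that need not split.
\item Lemma~\ref{lem:equivariant-pointed} produces a splitting only from a homotopy fixed point, and Lemma~\ref{pin:beta} rules that out for $\Pin^-(2)$.
\item Worse, the tool proves too much. With an equivariant splitting, the Whitehead towers and spectral sequences split. The row $t=d$ lies in the $\T$-summand (since $\pi_dM_{\red}=0$), while $(0,d-2)$ lies in $M_{\red}$. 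This forces $d_2=0$ for both signs, which is exactly the $\Pin^+(2)$ argument and contradicts \eqref{pin:Gminus-answer}.
\end{itemize}
Your fallback, running the spectral sequence for $M_{\red}$ via part (ii) of Theorem~\ref{thm:descent-magnetic-symmetries}, does not give the statement in terms of $d_2$ either. It gives it in terms of the connecting map $\partial$ of Lemma~\ref{pin:transgression}.

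What works, as in the paper, uses only the equivariant map $\const$, with no retraction. The entry $E_1^{0,0}=\Omega^4_{\mathrm{cl}}$ is the image of the corresponding entry in the spectral sequence for $\T^{h\Z_2}$. Hence it consists of permanent cycles by naturality, and nothing can hit it because it sits in filtration $s=0$.

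To make ``this is exactly the removed $\T$-summand'' precise, use the edge map $\pi_0[\shape\mathbf BG_\pm,\T]\to E_1^{0,0}=\pi_0M$. This is $\pi_0$ of restriction along $\shape\X\to\shape(\X\dslash\mathcal H)$.
\begin{itemize}
\item On $\pi_0\T$ it is $\pi_0(\const)$, an isomorphism onto $\Omega^4_{\mathrm{cl}}$.
\item It kills $\pi_0\Mag_\T(\mathbf BG_\pm)$, because restricted pointed maps land in $\pi_0[\mathbf B^2\Z,\T]_*\cong\widehat{H_3(K(\Z,2),\Z)}=0$.
\end{itemize}
Hence $\pi_0\Mag_\T(\mathbf BG_\pm)$ is the filtration-$\geq1$ part of the abutment. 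Since $E^{1,1}=E^{2,2}=0$, this is $E_\infty^{3,3}=\mathrm{coker}(d_2)$.
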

\begin{proof}
We can apply Corollary~\ref{cor:nonnegative-abutment}, since the
relevant stems are nonnegative for $d\geq3$, together with the
non-equivariant splitting
\begin{equation}
    [\shape(\X\dslash\mathcal H),\T]
    \cong
    \T\oplus\Mag_\T(\X\dslash\mathcal H)
\end{equation}
induced by evaluation at the chosen basepoint
(Corollary~\ref{cor:red-pointed}).

By Lemma~\ref{lem:deligne}, the only nonzero homotopy sheaves of $\T$
are
\begin{equation}
    \pi_d\T\cong\Disc(U(1)),
    \qquad
    \pi_0\T\cong\Omega_{\mathrm{cl}}^{d+1}.
\end{equation}
The degree-zero sheaf meets the stems under consideration only when
$d=3$, in stem $d-3=0$, through the entry $E_1^{0,0}=\Omega_{\mathrm{cl}}^4$. This entry is in the image of the spectral sequence for
$\T^{h\Z_2}$ under the equivariant constant-map inclusion
\begin{equation}
    \const:\T\longrightarrow M.
\end{equation}
By naturality, it consists of permanent cycles. Moreover, no
differential can hit it because it lies in filtration $s=0$. But this is
exactly the contribution of the constant $\T$-summand that is removed when
passing from the unpointed mapping stack to $\Mag_\T$. Therefore, it
does not affect the calculation of
$\pi_{d-3}\Mag_\T(\mathbf BG_\pm)$. The remaining entries in stems $d-1$, $d-2$, and $d-3$ are those above. The only possibly nonzero differential among them is
\eqref{pin:eqd2}, so the kernel and cokernel claims follow.
\end{proof}

Let us now determine the differential $d_2$ for each sign.

\subparagraph{$\Pin^+(2)$.} By Lemma~\ref{pin:beta} a homotopy fixed point $y$ exists, so
Lemma~\ref{lem:equivariant-pointed} gives an equivariant splitting $M\cong\T\oplus[\mathbf B^2\Z,\T]_{*,y}$,
with $\T$ carrying the trivial action. Thus, the Whitehead towers and the associated spectral
sequences split. The entry $(3,d)$ then lies entirely in the $\T$-summand, while $(0,d-2)$
lies in the pointed summand, which has no $t=d$ layer since pointed maps from a connected object
into a $d$-truncated one automatically form a $(d-1)$-truncated object. This gives $d_2=0$, and
\begin{equation}\label{pin:Gplus-answer}
\pi_{d-1}\Mag_\T(\mathbf BG_+)=\Z_2,\qquad \pi_{d-2}\Mag_\T(\mathbf BG_+)=\Z_2,\qquad
\pi_{d-3}\Mag_\T(\mathbf BG_+)=\Z_2.
\end{equation}
This can be cross-checked directly against the splitting 
\begin{equation}
  \Mag_\T(\mathbf BG_+)\cong([\mathbf B^2\Z,\T]_{*,y})^{h\Z_2}\oplus[\mathbf B\Z_2,\T]_*  
\end{equation}
given by
Theorem~\ref{thm:descent-magnetic-symmetries}(iii): by Equation~\eqref{pin:BZ2} (Appendix~\ref{app:sec:pin2-coeff}), the second summand contributes $\Z_2,0,\Z_2$ in
stems $d-1,d-2,d-3$, while the first contributes $H^0(\Z_2,U(1)^-)=\Z_2$ in stem $d-2$ alone.

\subparagraph{$\Pin^-(2)$.} Here, in contrast with the previous case, Lemma~\ref{pin:beta} rules out applying part (iii) of
Theorem~\ref{thm:descent-magnetic-symmetries}, and we must instead work with the general fiber sequence
\begin{equation}
    [\mathbf B\Z_2,\T]_*\to\Mag_\T(\mathbf BG_-)\to Q
\end{equation}
of part (ii) (Lemma~\ref{lem:pointed-fiber}), where
$Q=([\mathbf B^2\Z,\T]_{\mathrm{red}})^{h\Z_2}$. Non-equivariantly, the layers of the reduced object
$M_{\mathrm{red}}$ agree with those of $M$ as $\Z_2$-modules in degrees $0<t<d$ (Corollary~\ref{cor:red-pointed}), while
$\pi_dM_{\mathrm{red}}=0$ and $\pi_0M_{\mathrm{red}}=\pi_0M/\pi_0\T$ is the pointed
contribution $\widehat{H_d(K(\Z,2))}$, which vanishes for $d$ odd (in particular for $d=3$). Hence its
homotopy-fixed-point spectral sequence has no differentials in this range, and
\begin{equation}\label{pin:Q-answer}
\pi_dQ=\pi_{d-1}Q=0,\qquad \pi_{d-2}Q=H^0(\Z_2,U(1)^-)=\Z_2,\qquad \pi_{d-3}Q=H^1(\Z_2,U(1)^-)=0.
\end{equation}
Combining this with Equation~\eqref{pin:BZ2}, the long exact sequence of homotopy sheaves associated with the
fiber sequence reads
\begin{align}
0=\pi_d Q &\to \pi_{d-1}[\mathbf{B}\Z_2, \T]_*=\Z_2 \to \pi_{d-1}\Mag_{\T}(\mathbf{B}G_-) \to \pi_{d-1} Q = 0, \\
0=\pi_{d-2}[\mathbf{B}\Z_2, \T]_* &\to \pi_{d-2}\Mag_{\T}(\mathbf{B}G_-) \to \pi_{d-2} Q = \Z_2 \notag \\
  &\xrightarrow{\partial} \pi_{d-3}[\mathbf{B}\Z_2, \T]_* = \Z_2 \to \pi_{d-3}\Mag_{\T}(\mathbf{B}G_-) \to \pi_{d-3} Q = 0,
\end{align}
so that everything now depends on the single connecting map $\partial$. 

The connecting map $\partial$ is controlled by the Postnikov invariant $\beta_\pm$ of the
fibration~\eqref{pin:fib}, which is the one ingredient of this computation not internal
to Theorem~\ref{thm:descent-magnetic-symmetries}.

\begin{lemma}[Connecting map]\label{pin:transgression}
Let $A=U(1)$, with its discrete topology, and let $f_-\colon\Z^-\to A^+$,
$a\mapsto(-1)^a$, which generates
$\pi_{d-2}Q\cong H^0(\Z_2,U(1)^-)\cong\mathrm{Hom}_{\Z_2}(\Z^-,A^+)\cong\Z_2$
(see~\eqref{pin:Q-answer}). Then the connecting map
$\partial\colon\pi_{d-2}Q\to\pi_{d-3}[\mathbf B\Z_2,\T]_*\cong H^3(\Z_2,A)\cong\Z_2$
satisfies $\partial(f_-)=(f_-)_*(\beta_\pm)$. In particular, $\partial=0$ for
$\Pin^+(2)$, and $\partial$ is an isomorphism for $\Pin^-(2)$.
\end{lemma}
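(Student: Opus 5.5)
The plan is to reduce everything to classical obstruction theory for the fibration $K(\Z^-,2)\to BG_\pm\to B\Z_2$ underlying \eqref{pin:fib}. All the stacks involved are discrete. By Lemma~\ref{lem:cotensor} and Corollary~\ref{cor:cotensor-pointed}, the fiber sequence $[\mathbf B\Z_2,\T]_*\to\Mag_\T(\mathbf BG_\pm)\to Q$ of Lemma~\ref{lem:pointed-fiber} can therefore be evaluated on global points and read through Theorem~\ref{thm:pontryagin-factorization}. There one has $\pi_k\Mag_\T(\mathbf BG_\pm)\cong \widetilde H^{d-k}(BG_\pm,A)$ and $\pi_k[\mathbf B\Z_2,\T]_*\cong\widetilde H^{d-k}(B\Z_2,A)$, and $\pi_kQ$ is the equivariant cohomology of the fiber computed by the spectral sequence of Lemma~\ref{lem:hfp-ss}. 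In this language the segment of the long exact sequence in the statement should become the low-degree exact sequence of the Serre spectral sequence with local coefficients,
\begin{equation*}
H^2(B\Z_2,A)\to H^2(BG_\pm,A)\to H^0\bigl(\Z_2,H^2(K(\Z^-,2),A)\bigr)=\mathrm{Hom}_{\Z_2}(\Z^-,A^+)\xrightarrow{\ \tau\ }H^3(B\Z_2,A)\to H^3(BG_\pm,A).
\end{equation*}
This sequence is exact because $H^1(K(\Z,2),A)=0$, and $\tau$ is the transgression. The first step is to verify that $\partial$ agrees with $\tau$ up to sign. I would do this by showing that both arise from applying $[-,\mathbf B^d A]$ to the cofiber sequence $\shape\mathbf BG_\pm\to\mathbf B\Z_2\to\cofib(p)$, and by comparing the Whitehead filtration of Lemma~\ref{lem:hfp-ss} with the Serre filtration.

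The second step is the classical fact that the transgression of the fundamental class $\iota\in H^2(K(\Z^-,2),\Z^-)$, as a local-coefficient class, is the $k$-invariant $\beta_\pm$. Naturality in the coefficients along the equivariant map $f_-\colon\Z^-\to A^+$, which sends $\iota$ to the class $f_-\in\mathrm{Hom}_{\Z_2}(\Z^-,A^+)$, then gives $\partial(f_-)=\tau(f_-)=(f_-)_*(\beta_\pm)$.

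The third step is the evaluation. For $\Pin^+(2)$, Lemma~\ref{pin:beta} gives $\beta_+=0$, hence $\partial=0$. For $\Pin^-(2)$, I would factor $f_-$ as $\Z^-\to\mathbb F_2\hookrightarrow U(1)$, with $1\mapsto 1/2$. The map $H^3(\Z_2,\mathbb F_2)\to H^3(\Z_2,U(1))$ is injective, and hence an isomorphism of copies of $\Z_2$. This follows from the sequence $0\to\mathbb F_2\to U(1)\xrightarrow{2}U(1)\to0$ and the vanishing $H^2(\Z_2,U(1))=0$. It then remains to show that the mod-$2$ reduction $\bar\beta_-$ is nonzero. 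For that I would rerun the argument of Lemma~\ref{pin:beta} in the mod-$2$ Serre spectral sequence: $u^3$ must die in $H^3(BG_-,\mathbb F_2)=0$ (Lemma~\ref{pin:BPin2}). The only differential that can reach $E^{3,0}$ is the transgression of $\bar\iota$, and that transgression equals $\bar\beta_-$. Hence $\bar\beta_-=u^3\neq0$, and $\partial$ is an isomorphism.

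I expect the first step to be the main obstacle: the identification of the paper's stack-theoretic connecting map with the topological transgression. The reduced object $M_{\red}$ and the $\Z_2$-equivariant structures are defined abstractly, and one has to check that the identification of $\pi_{d-2}Q$ with $\mathrm{Hom}_{\Z_2}(\Z^-,A^+)$ is compatible with the Serre edge map. I would try to settle this by naturality. I would first verify the claim for the universal case, namely the fibration classified by $\beta$ over $K(\Z^-,3)_{h\Z_2}$, and then pull it back.
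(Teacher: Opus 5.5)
Your plan is sound, but it takes a more roundabout route than the paper. You would identify the stack-theoretic connecting map $\partial$ with the Serre transgression $\tau$, and then invoke the classical fact that $\tau$ sends the twisted fundamental class to $\beta_\pm$.

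The paper never mentions $\tau$ (cf.\ Remark~\ref{pin:transgression-rmk}). It uses only two things: exactness of the long exact sequence at $\pi_{d-3}[\mathbf B\Z_2,\T]_*$, and naturality in $Y$ of the Dold--Kan identification $\pi_{d-n}[Y,\T]_*\cong\widetilde H^n(Y,A)$, taken before any splitting is chosen. Together these give $\mathrm{im}(\partial)=\ker\bigl(p^*\colon H^3(B\Z_2,A)\to H^3(BG_\pm,A)\bigr)$. The paper then computes this kernel directly:
\begin{itemize}
\item For $\Pin^+(2)$ the kernel vanishes, because the section gives $s^*p^*=\mathrm{id}$.
\item For $\Pin^-(2)$ the kernel contains $(f_-)_*\beta_-$. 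This holds because $p^*\beta_-=0$ (the total space is the homotopy pullback of the zero section along the $k$-invariant) and change of coefficients is natural. By Lemmas~\ref{pin:beta} and~\ref{pin:fminus}, $(f_-)_*\beta_-$ generates $H^3(\Z_2,A)$.
\end{itemize}
Since the source and target of $\partial$ are both $\Z_2$, the image determines $\partial$, and $\partial(f_-)=(f_-)_*\beta_\pm$ is checked sign by sign.

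The same observation removes the step you flag as the main obstacle. Your Serre sequence ends with the same map $p^*$, so $\mathrm{im}(\tau)=\ker(p^*)=\mathrm{im}(\partial)$. Two homomorphisms $\Z_2\to\Z_2$ with the same image coincide, so $\partial=\tau$ follows immediately. You do not need to compare the Whitehead and Serre filtrations or pass to a universal example over $K(\Z^-,3)_{h\Z_2}$.

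Your route buys a conceptual, sign-independent identification of $\partial$ with the transgression. The paper's route buys economy: it needs neither the transgression-equals-$k$-invariant theorem nor any filtration comparison.

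Your mod-$2$ Serre argument for $\bar\beta_-\neq0$ is correct but redundant. Lemma~\ref{pin:fminus} already shows that reduction mod $2$ is an isomorphism $H^3(\Z_2,\Z^-)\to H^3(\Z_2,\Z_2)$, and Lemma~\ref{pin:beta} gives $\beta_-\neq0$.
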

\begin{proof}
Let $E=\shape\mathbf BG_\pm$ and $B=\mathbf B\Z_2$. For $Y$ pointed and connected,
the proof of Theorem~\ref{thm:pontryagin-factorization} identifies
$\pi_{d-n}[Y,\T]_*\cong\widetilde H^n(Y,A)$ for $1\leq n\leq d$. This identification
is natural in $Y$: it comes from~\eqref{eq:DK}, before any splitting is chosen. The
first map of the fiber sequence of Lemma~\ref{lem:pointed-fiber} is precomposition
with $p$, so on $\pi_{d-n}$ it is $p^*\colon H^n(B,A)\to H^n(E,A)$. Exactness of the
associated long exact sequence at $\pi_{d-3}[B,\T]_*$ ($d\geq3$) then gives
\begin{equation}
  \mathrm{im}(\partial)=\ker\left(p^*\colon H^3(B,A)\to H^3(E,A)\right).
\end{equation}

For $\Pin^+(2)$, the section $s$ of $p$ from Lemma~\ref{pin:beta} gives
$s^*p^*=\mathrm{id}$, so $p^*$ is injective and $\partial=0=(f_-)_*(\beta_+)$.

For $\Pin^-(2)$, recall that $\beta_-\in H^3(B,\Z^-)$ is, by definition, the class of
the section $k$ of the $\Z_2$-twisted Eilenberg--MacLane fibration over $B$ with fiber
$K(\Z^-,3)$ that classifies~\eqref{pin:fib}, so that $E$ is the homotopy pullback of
the zero section along $k$. Hence $k\circ p$ is homotopic over $B$ to the zero
section, i.e.\ $p^*\beta_-=0$ in $H^3(E,p^*\Z^-)$. Naturality of the change of
coefficients along the equivariant map $f_-$ gives
$p^*\left((f_-)_*\beta_-\right)=(f_-)_*(p^*\beta_-)=0$, so
$(f_-)_*\beta_-\in\mathrm{im}(\partial)$. By Lemmas~\ref{pin:beta}
and~\ref{pin:fminus}, $(f_-)_*\beta_-$ generates $H^3(\Z_2,A)\cong\Z_2$. Hence
$\partial\colon\Z_2\to\Z_2$ is surjective, thus an isomorphism, and
$\partial(f_-)=(f_-)_*(\beta_-)$.
\end{proof}

\begin{remark}\label{pin:transgression-rmk}
Lemma~\ref{pin:transgression} can be read as the familiar statement that the
fundamental class of the fiber $K(\Z,2)$ transgresses to the Postnikov invariant
$\beta_\pm$. The proof above uses only exactness and naturality, and does not rely
on identifying $\partial$ with a differential of the Serre spectral sequence.
\end{remark}

By Lemma~\ref{pin:transgression}, $\partial$ is an isomorphism for $\Pin^-(2)$, so that
\begin{equation}\label{pin:Gminus-answer}
\pi_{d-1}\Mag_\T(\mathbf BG_-)=\Z_2,\qquad\pi_{d-2}\Mag_\T(\mathbf BG_-)=0,\qquad\pi_{d-3}\Mag_\T(\mathbf BG_-)=0,
\end{equation}
and, by Corollary~\ref{pin:route1}, that the differential $d_2$ of~\eqref{pin:eqd2} is likewise an
isomorphism for $\Pin^-(2)$.

\begin{remark}\label{pin:remd2}
For $d=2$, the stem $d-3=-1$ falls outside the connective range, and the argument above must be adapted
using the stable homotopy fixed points of Lemma~\ref{lem:stable-connective-hfp}. There, $\pi_0\Mag_\T$ is
the kernel of the analogous stable connecting map to $\pi_{-1}=H^3(\Z_2,U(1))$, while the conclusions for
the stems $d-1,d-2$ are unchanged.
\end{remark}

\paragraph{Route 2: independent check via the factorization theorem.} As a check on the computation
above, we now recover the same three stems by an entirely different route, using only classical
algebraic topology rather than the machinery of Theorem~\ref{thm:descent-magnetic-symmetries}. By
Theorem~\ref{thm:pontryagin-factorization}, $\Mag_\T(\mathbf BG_\pm)\cong\prod_{n=1}^d\mathbf B^{d-n}\Disc(\widehat{H_n})$
with $\widehat{H_n}=\mathrm{Hom}(H_n(BG_\pm,\Z),U(1))$, so it suffices to determine the integral
homology groups $H_n(BG_\pm,\Z)$ for $n\leq4$. The argument below uses only mod-$2$ cohomology
rings, the Bockstein spectral sequence, and a transfer argument. Notably, \textit{it makes no use of the
fibration~\eqref{pin:fib}, its $k$-invariant $\beta_\pm$, or Lemma~\ref{pin:transgression}.}

\begin{lemma}[Transfer]\label{pin:transfer}
$H_n(BG_\pm,\Z[\tfrac12])\cong H_n(BU(1),\Z[\tfrac12])_{\Z_2}$, the coinvariants under
$c\mapsto-c$, which for $n\leq4$ evaluate to $0,0,0,\Z[\tfrac12]$ for $n=1,2,3,4$. It follows that
$H_n(BG_\pm,\Z)$ is a finite $2$-group for $1\leq n\leq3$, while $H_4$ has rank $1$ and no odd
torsion.
\end{lemma}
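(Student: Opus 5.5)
The plan is to use the finite covering $p\colon BU(1)\to BG_\pm$. It is the double cover associated with $D\colon G_\pm\to\Z_2$, since $BU(1)\simeq EG_\pm/U(1)$ and $G_\pm/U(1)\cong\Z_2$. Write $R:=\Z[\tfrac12]$. For any finite-sheeted covering we have the transfer $\mathrm{tr}\colon H_n(BG_\pm,R)\to H_n(BU(1),R)$, and the two composites are
\begin{equation}
p_*\circ\mathrm{tr}=2,\qquad \mathrm{tr}\circ p_*=\textstyle\sum_{g\in\Z_2}g_*.
\end{equation}
Since $2$ is invertible in $R$, $p_*$ is surjective. Moreover, $p_*$ factors through the coinvariants $H_n(BU(1),R)_{\Z_2}$, and the induced map out of the coinvariants is injective because $\tfrac12\mathrm{tr}$ splits it. This gives $H_n(BG_\pm,R)\cong H_n(BU(1),R)_{\Z_2}$. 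Equivalently, one can run the Cartan--Leray (Lyndon--Hochschild--Serre) spectral sequence $H_s(\Z_2,H_t(BU(1),R))\Rightarrow H_{s+t}(BG_\pm,R)$: all groups with $s>0$ are $R$-modules killed by $2$, hence zero, so only the $s=0$ column of coinvariants survives. I would present the transfer version and mention this as a cross-check.

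Next I need the deck action on $H_*(BU(1),R)$. Conjugation by $e$ inverts $U(1)$ (see \eqref{pin:conjugation}), so the deck transformation is $B(z\mapsto z^{-1})$. This acts as $-1$ on $\pi_2=H_2$, and hence as $c\mapsto -c$ on $H^*(BU(1),\Z)=\Z[c]$. Dually it acts as $(-1)^j$ on $H_{2j}(BU(1),\Z)\cong\Z$. This is the same input already used for \eqref{pin:piM}, and it holds for both signs, since the action only depends on the conjugation action on the identity component. The odd homology vanishes. So $H_2=R$ with the sign action has coinvariants $R/2R=0$, while $H_4=R$ with the trivial action has coinvariants $R$. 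This gives the list $0,0,0,R$ for $n=1,\dots,4$.

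To deduce the integral statement, I would argue as follows. $BG_\pm$ has the homotopy type of a CW complex of finite type, because $G_\pm$ is a compact Lie group; concretely, the covering exhibits $H_*(BG_\pm,\Z)$ as finitely generated via the Cartan--Leray spectral sequence with finitely generated entries. Hence each $H_n(BG_\pm,\Z)$ is finitely generated. Since $R$ is flat over $\Z$, universal coefficients give $H_n(BG_\pm,R)\cong H_n(BG_\pm,\Z)\otimes R$. For $n=1,2,3$ this tensor product vanishes, so the free rank is zero and there is no odd torsion: $H_n$ is a finite $2$-group. For $n=4$, the tensor product $H_4\otimes R\cong R$ forces rank one and no odd torsion, because odd torsion would survive tensoring with $R$.

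The only point needing care is the identification of the deck involution with $c\mapsto -c$ for both signs. I would justify it through the conjugation computation together with the fact that self-maps of $K(\Z,2)$ are detected on $\pi_2$. Everything else is formal transfer bookkeeping.
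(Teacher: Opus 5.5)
Your proposal is correct and follows the same route as the paper: the transfer for the double cover $BU(1)\to BG_\pm$ identifies $H_n(BG_\pm,\Z[\tfrac12])$ with the coinvariants of the deck involution, which acts by $(-1)^j$ on $H_{2j}(BU(1))$, and finite type of $BG_\pm$ then gives the integral statement. You spell out the steps the paper leaves implicit: surjectivity of $p_*$, injectivity on coinvariants via $\tfrac12$ of the transfer followed by the quotient map, and the flat universal-coefficient step.
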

\begin{proof}
This is the usual transfer argument for finite coverings (e.g.\ \cite[Section~3.G]{hatcher2002algebraic}). The map $BU(1)\to BG_\pm$ is a $2$-sheeted covering, and the composite of transfer
followed by projection is multiplication by $2$, so that after inverting $2$ the homology of
$BG_\pm$ is identified with the image of the projection map, namely the coinvariants of the
deck involution, which acts on $H_{2j}(BU(1))$ by $(-1)^j$. Finiteness away from this rational
part holds because $BG$ is of finite type for any compact Lie group $G$.
\end{proof}

We use throughout the universal coefficient count
$\dim H^n(-,\mathbb F_2)=\dim(H_n\otimes\mathbb F_2)+\dim\mathrm{Tor}(H_{n-1},\mathbb F_2)$, together with
$H_1(BG_\pm)=\pi_0(G_\pm)=\Z_2$.

\begin{proposition}[$\Pin^-(2)$]\label{pin:Hminus}
$H_1=\Z_2$, $H_2=0$, $H_3=0$, $H_4=\Z$.
\end{proposition}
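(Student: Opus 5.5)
The plan is to read off the integral homology of $BG_-$ in degrees $n\leq4$ from two earlier inputs. The first is the mod-$2$ cohomology dimensions of Lemma~\ref{pin:BPin2}:
\begin{equation}
\dim H^n(BG_-,\mathbb F_2)=1,1,1,0,1\qquad(n=0,1,2,3,4).
\end{equation}
The second is the structural information of Lemma~\ref{pin:transfer}: $H_1,H_2,H_3$ are finite $2$-groups, and $H_4$ has rank one and no odd torsion. These are linked by the universal coefficient count
\begin{equation}
\dim H^n(-,\mathbb F_2)=\dim(H_n\otimes\mathbb F_2)+\dim\mathrm{Tor}(H_{n-1},\mathbb F_2),
\end{equation}
applied inductively in $n$, starting from $H_0=\Z$ and $H_1=\pi_0(G_-)=\Z_2$. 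Since $BG_-$ is of finite type, every $H_n$ is finitely generated. Hence a finite $2$-group $T$ vanishes exactly when $T\otimes\mathbb F_2=0$, and for such $T$ we have $\dim\mathrm{Tor}(T,\mathbb F_2)=\dim(T\otimes\mathbb F_2)$. Each step therefore reduces to a dimension count.

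The steps, in order, are as follows.

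\begin{enumerate}
\item For $n=1$ the count reads $1=\dim(\Z_2\otimes\mathbb F_2)+0$, which is consistent with $H_1=\Z_2$.
\item For $n=2$ the count reads $1=\dim(H_2\otimes\mathbb F_2)+\dim\mathrm{Tor}(\Z_2,\mathbb F_2)=\dim(H_2\otimes\mathbb F_2)+1$. So $H_2\otimes\mathbb F_2=0$, and since $H_2$ is a finite $2$-group, $H_2=0$.
\item For $n=3$ the count reads $0=\dim(H_3\otimes\mathbb F_2)+\dim\mathrm{Tor}(H_2,\mathbb F_2)$. This forces $H_3\otimes\mathbb F_2=0$, hence $H_3=0$ by the same argument.
\item For $n=4$ the Tor term vanishes because $H_3=0$, so $\dim(H_4\otimes\mathbb F_2)=1$. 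Write $H_4\cong\Z\oplus T$, where the rank is one and $T$ is a finite $2$-group by Lemma~\ref{pin:transfer}. The free summand already contributes one dimension to $H_4\otimes\mathbb F_2$, so $T\otimes\mathbb F_2=0$. Thus $T=0$ and $H_4=\Z$.
\end{enumerate}

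No Bockstein spectral sequence argument should be needed for the minus sign, because the vanishing of $H^3(BG_-,\mathbb F_2)$ does all the work. This is in contrast with the $\Pin^+(2)$ case, where $H^3\neq0$ and one must determine torsion more carefully.

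The only point requiring care, and the step I expect to be the main obstacle, is justifying the inputs rather than the arithmetic. Two things must be confirmed: that Lemma~\ref{pin:transfer} genuinely gives no odd torsion and rank exactly one in degree $4$, and that the finite-type hypothesis holds. The rank statement comes from the coinvariants of $H_4(BU(1),\Z[\tfrac12])=\Z[\tfrac12]$ under the involution, which acts by $(-1)^2=+1$. Finite type holds because $G_-$ is a compact Lie group. I would state these explicitly before the degree-by-degree count.
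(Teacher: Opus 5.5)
Your proposal is correct and follows the paper's proof essentially verbatim. Both use the mod-$2$ Betti numbers $1,1,1,0,1$ from Lemma~\ref{pin:BPin2}, the universal coefficient count, and Lemma~\ref{pin:transfer} to turn $H_n\otimes\mathbb F_2=0$ into $H_n=0$ and to force $T=0$ in $H_4=\Z\oplus T$. Your degree-$3$ step is slightly more careful than the paper's: the paper says $H_3=0$ follows ``directly'' from $H_3\otimes\mathbb F_2=0$, but ruling out odd torsion there does need the transfer lemma, as you note.
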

\begin{proof}
By Lemma~\ref{pin:BPin2}, the mod-$2$ Betti numbers are $1,1,1,0,1$ in degrees $0,\dots,4$. In degree
$2$, $1=\dim(H_2\otimes\mathbb F_2)+1$ forces $H_2\otimes\mathbb F_2=0$, hence $H_2=0$ by
Lemma~\ref{pin:transfer}. In degree $3$, $0=\dim(H_3\otimes\mathbb F_2)$ gives $H_3=0$ directly. In
degree $4$, $1=\dim(H_4\otimes\mathbb F_2)+0$. Writing $H_4=\Z\oplus T$ with $T$ a finite $2$-group by
Lemma~\ref{pin:transfer}, the vanishing $T\otimes\mathbb F_2=0$ forces $T=0$.
\end{proof}

\begin{proposition}[$\Pin^+(2)=O(2)$]\label{pin:Hplus}
$H_1=\Z_2$, $H_2=\Z_2$, $H_3=\Z_2$, $H_4=\Z\oplus\Z_2$.
\end{proposition}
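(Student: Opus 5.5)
We follow the same pattern as Proposition~\ref{pin:Hminus}: determine the mod-$2$ cohomology of $BO(2)$ in degrees $\leq4$ and combine it with the universal coefficient count and the transfer constraints of Lemma~\ref{pin:transfer}. The input is the classical description $H^*(BO(2),\mathbb F_2)\cong\mathbb F_2[w_1,w_2]$, with $|w_i|=i$. This follows from the splitting principle, or from the Serre spectral sequence of $BO(1)^2\to BO(2)$ via invariants. Counting monomials gives mod-$2$ Betti numbers $1,1,2,2,3$ in degrees $0,\dots,4$.

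We first pin down $H_2$, $H_3$, $H_4\otimes\mathbb F_2$ using the count
\begin{equation}
\dim H^n=\dim(H_n\otimes\mathbb F_2)+\dim\mathrm{Tor}(H_{n-1},\mathbb F_2),
\end{equation}
starting from $H_1=\Z_2$.
\begin{itemize}
\item In degree $2$, $2=\dim(H_2\otimes\mathbb F_2)+1$, so $H_2\otimes\mathbb F_2=\mathbb F_2$. Since $H_2$ is a finite $2$-group by Lemma~\ref{pin:transfer}, it is cyclic, $H_2=\Z_{2^a}$ with $a\geq1$.
\item In degree $3$, $2=\dim(H_3\otimes\mathbb F_2)+1$, so $H_3=\Z_{2^b}$ with $b\geq1$.
\item In degree $4$, $3=\dim(H_4\otimes\mathbb F_2)+1$. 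With $H_4=\Z\oplus T$, where $T$ is a finite $2$-group, the free summand accounts for one dimension, so $T$ is cyclic and nonzero, $T=\Z_{2^c}$ with $c\geq1$.
\end{itemize}

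The remaining task is to show that all torsion is of order exactly $2$. We would do this with the Bockstein spectral sequence. The first Bockstein is $\mathrm{Sq}^1$, and the Wu formula gives
\begin{equation}
\mathrm{Sq}^1w_1=w_1^2,\qquad \mathrm{Sq}^1w_2=w_1w_2.
\end{equation}
Hence $\mathrm{Sq}^1(w_1^3)=w_1^4$, $\mathrm{Sq}^1(w_1w_2)=w_1^2w_2$ (using $w_1w_1w_2+w_1w_1w_2=0$ from $\mathrm{Sq}^1w_1\cdot w_2+w_1\cdot\mathrm{Sq}^1 w_2$), and $\mathrm{Sq}^1(w_1^2)=0$. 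Computing the $\mathrm{Sq}^1$-cohomology through degree $4$:
\begin{itemize}
\item degree $0$ gives $1$;
\item degrees $1$, $2$, $3$ are killed in pairs: $w_1\mapsto w_1^2$, $w_2\mapsto w_1w_2$, and $w_1^3\mapsto w_1^4$;
\item degree $4$: $w_2^2$ is a cycle, since $\mathrm{Sq}^1(w_2^2)=2w_2\,\mathrm{Sq}^1w_2=0$, and it is not a boundary, because $\mathrm{im}(\mathrm{Sq}^1)$ in degree $4$ is spanned by $w_1^4$ and $w_1^2w_2$.
\end{itemize}
So $E_2$ of the Bockstein spectral sequence is $\mathbb F_2$ in degrees $0$ and $4$, and zero in degrees $1,2,3$ (up to the degree-$5$ check described below). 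This matches the rational answer from Lemma~\ref{pin:transfer}: rank $1$ in degree $4$ and $0$ below. The spectral sequence therefore collapses at $E_2$ in this range, and all $2$-torsion in $H^*(BO(2),\Z)$ of degree $\leq5$ has order exactly $2$. By the universal coefficient theorem, the torsion of $H_n$ is the torsion of $H^{n+1}$, which gives $a=b=c=1$.

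The main obstacle is bookkeeping at the top of the range. $T\subset H_4$ corresponds to torsion in $H^5(BO(2),\Z)$, so the $\mathrm{Sq}^1$ analysis must also be run in degree $5$. There, the $\mathrm{Sq}^1$-image of degree $4$ has to be checked against the kernel of $\mathrm{Sq}^1$ in degree $5$ to confirm that the $\mathrm{Sq}^1$-cohomology vanishes; this is where we expect to be most careful.

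As a cross-check, we would use $O(2)\cong U(1)\rtimes\Z_2$ with the section $\mathbf B\Z_2\to\mathbf BO(2)$ from Lemma~\ref{pin:beta}. The section splits $H_n(B\Z_2,\Z)$ off $H_n(BO(2),\Z)$ as a direct summand, and $H_n(B\Z_2,\Z)$ is $\Z_2,0,\Z_2,0$ for $n=1,2,3,4$. This independently forces $\Z_2$ summands in $H_1$ and $H_3$, consistent with the result.
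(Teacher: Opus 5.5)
Your route is the paper's. You use the mod-$2$ Betti numbers $1,1,2,2,3$ of $\mathbb F_2[w_1,w_2]$, the Wu formula for $\mathrm{Sq}^1$, a Bockstein $E_2$ page vanishing in degrees $1,2,3$ and equal to $\mathbb F_2\langle w_2^2\rangle$ in degree $4$, the rank from Lemma~\ref{pin:transfer}, and the universal coefficient count. Doing the count before fixing exponents is harmless, since the $\mathrm{Tor}$ terms do not depend on the exponents.

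Your $\mathrm{Sq}^1$ table is wrong at one point. $\mathrm{Sq}^1(w_1w_2)=0$, not $w_1^2w_2$. Your own parenthetical computes $0$, and it must vanish because $w_1w_2=\mathrm{Sq}^1w_2$ and $\mathrm{Sq}^1\mathrm{Sq}^1=0$; as written your table is not even a complex. Consequently the degree-$4$ boundaries are only $\langle w_1^4\rangle$, not $\langle w_1^4,w_1^2w_2\rangle$. The correct reason $E_2^4$ is exactly one-dimensional is that $\mathrm{Sq}^1(w_1^2w_2)=w_1^3w_2\neq0$, so the degree-$4$ cycles are $\langle w_1^4,w_2^2\rangle$. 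As written, two errors cancel. This matters because the upper bound $\dim E_2^4\le 1$, not just the survival of $w_2^2$, is what rules out higher torsion in $H_3$ and $H_4$.

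The degree-$5$ check you flag as the main obstacle is unnecessary. A summand $\Z/2^k$ with $k\geq2$ in $H^5(BO(2),\Z)$ would contribute a class to $E_2^4$ as well as to $E_2^5$. But $E_2^4$ is already used up by the rank-one free part of $H^4$. This is exactly how the paper concludes for all $n\le4$ at once. If you run degree $5$ anyway, the cycles $\langle w_1^3w_2\rangle$ coincide with the boundaries, so $E_2^5=0$.

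Your section argument is more than a cross-check. Your count already shows $H_3$ is cyclic, so the direct summand $H_3(B\Z_2,\Z)=\Z_2$ forces $H_3=\Z_2$ with no Bockstein input.
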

\begin{proof}
Here $H^*(BO(2),\mathbb F_2)=\mathbb F_2[w_1,w_2]$ \cite[Theorem 7.1]{MilnorStasheff}, with Betti numbers $1,1,2,2,3$
in degrees $0,\dots,4$. The Wu formula \cite[Problem 8-A]{MilnorStasheff} gives $\mathrm{Sq}^1w_2=w_1w_2+w_3=w_1w_2$ and
$\mathrm{Sq}^1w_1=w_1^2$, so that the mod-$2$ Bockstein acts on bases as $w_1\mapsto w_1^2$ in degree
$1$, $w_1^2\mapsto0$ and $w_2\mapsto w_1w_2$ in degree $2$, $w_1^3\mapsto w_1^4$ and $w_1w_2\mapsto0$ in degree
$3$, and $w_1^4\mapsto0$, $w_1^2w_2\mapsto w_1^3w_2$ and $w_2^2\mapsto0$ in degree $4$. The resulting $E_2$
page of the Bockstein spectral sequence therefore vanishes in degrees $1,2,3$ and equals
$\mathbb F_2\langle w_2^2\rangle$ in degree $4$. Since $E_2^n$ simultaneously detects
$\mathrm{rank}\,H^n(-,\Z)$ and the $\Z/2^{k\geq2}$ summands of $H^n(-,\Z)$ and $H^{n+1}(-,\Z)$, and since
$\mathrm{rank}\,H^4=1$, there are no such higher-order summands in $H_n$ for $n\leq4$. Counting
dimensions then gives, in degree $2$, $2=\dim(H_2\otimes\mathbb F_2)+1$, so $H_2=\Z_2$. In degree $3$,
$2=\dim(H_3\otimes\mathbb F_2)+1$, so $H_3=\Z_2$. In degree $4$, $3=\dim(H_4\otimes\mathbb F_2)+1$,
so $H_4=\Z\oplus\Z_2$, of rank $1$ and with no odd torsion by Lemma~\ref{pin:transfer}, in agreement with the known integral cohomology of $BO(n)$ \cite[Theorem 1.6]{Brown:1982BO} \cite[Theorem 2]{Feshbach:1983BO}.
\end{proof}

Since $\widehat\Z=\Hom(\Z,U(1))\cong U(1)$ and $\widehat{\Z_2}=\Hom(\Z_2,U(1))\cong\Z_2$ abstractly,
Theorem~\ref{thm:pontryagin-factorization} translates these homology computations directly into
Pontryagin duals:
\begin{center}
\begin{tabular}{cccc}
\toprule
$n$ & $\widehat{H_n}(\Pin^+(2))$ & $\widehat{H_n}(\Pin^-(2))$ & symmetry degree $(d-n-1)$\\ \midrule
$1$ & $\Z_2$ & $\Z_2$ & $(d-2)$-form\\
$2$ & $\Z_2$ & $0$ & $(d-3)$-form\\
$3$ & $\Z_2$ & $0$ & $(d-4)$-form\\
$4$ & $U(1)\times\Z_2$ & $U(1)$ & $(d-5)$-form\\ \bottomrule
\end{tabular}
\end{center}
For $d=3,4$, the magnetic symmetry
background stacks are therefore as follows\footnote{We note a discrepancy with \cite[Equation 5.17]{Chen:2023czk}. The latter misses a nontrivial $\Z_2$ factor for $\mathrm{Pin}^+(2)$. This may alter the non-invertible symmetries constructed therein for this particular example.}. For
$G_+=\mathrm{Pin}^+(2)=O(2)$,
\begin{equation}\label{pin:final-answer1}
    \Mag_\T\left(\mathbf B_{\nabla}O(2)\right)
    \cong
    \begin{cases}
        \mathbf B^2\Disc(\Z_2)\times\mathbf B\Disc(\Z_2)\times\Disc(\Z_2), & d=3,\\
        \mathbf B^3\Disc(\Z_2)\times\mathbf B^2\Disc(\Z_2)\times\mathbf B\Disc(\Z_2)\times\flat U(1)\times\Disc(\Z_2), & d=4.
    \end{cases}
\end{equation}
For $G_-=\Pin^-(2)$,
\begin{equation}\label{pin:final-answer2}
    \Mag_\T\left(\mathbf B_{\nabla}\Pin^-(2)\right)
    \cong
    \begin{cases}
        \mathbf B^2\Disc(\Z_2), & d=3,\\
        \mathbf B^3\Disc(\Z_2)\times\flat U(1), & d=4.
    \end{cases}
\end{equation}

\paragraph{Comparison.} The two routes agree completely. Route~1 gives the stems $d-1,d-2,d-3$ as
$(\Z_2,\Z_2,\Z_2)$ for $\Pin^+$ and $(\Z_2,0,0)$ for $\Pin^-$, while Route~2 gives $\pi_{d-n}\cong\widehat{H_n}$
for $n=1,2,3$, the very same list.
To be precise, while both routes rely on Theorem~\ref{thm:pontryagin-factorization}, they are
nevertheless independent in that Route~2 uses neither Theorem~\ref{thm:descent-magnetic-symmetries}
nor the fibration~\eqref{pin:fib} and its invariant $\beta_\pm$.
The agreement in stem $d-3$ is a genuinely nontrivial check on the
two independent computations: for $\Pin^+$ this stem arises as the group $H^3(\Z_2,U(1))$ of
$[\mathbf B\Z_2,\T]_*$, whereas for $\Pin^-$ it is killed by exactly the same connecting map $\partial$
that already kills stem $d-2$.

Altogether, for both signs $\pi_0(G_\pm)=\Z_2$ acts on $\pi_1(G_\pm)=\Z$ by inversion, and both give rise to the
same $E_1$ page in Theorem~\ref{thm:descent-magnetic-symmetries}(iv). In the stems $d-1,d-2,d-3$ computed by Route~1, the extension class which
distinguishes the spaces enters only through the single differential~\eqref{pin:eqd2}.
The difference in form degree $d-5$ is detected by Route~2, through the torsion summand of
$H_4(BO(2),\Z)=\Z\oplus\Z_2$. Physically, then,
$\Pin^+(2)\cong O(2)$ exhibits invertible magnetic $\Z_2$ symmetries in degrees $d-2$,
$d-3$, $d-4$ and $d-5$, while $\Pin^-(2)$ retains only the $(d-2)$-form $\Z_2$. Both carry the
instanton $(d-5)$-form $U(1)$ symmetry. The surviving $(d-2)$-form $\Z_2$, shared by both signs, matches the $\Z_2$
$(d-2)$-form symmetry already found for disconnected gauge groups incorporating charge conjugation in
\cite[Section 5]{Heidenreich:2021xpr}.

\section{Future directions}\label{sec:conclusion}

In this article, we have extended our framework based on higher geometry for the description of generalized global symmetries in (pre-)quantum field theory. Starting from the treatment of higher-form electric symmetries given in \cite{Perez-Lona:2025xxm}, we have provided an account of a complementary kind of symmetry called magnetic, or solitonic. Our approach is, at its core, based on interpreting the relation between electric and magnetic symmetries in category-theoretic terms, via the composite of two kinds of dualities. The first is an op-duality between electric symmetries and magnetic co-symmetries, rooted in the adjunction $\shape\dashv\flat$ and the relation between actions and co-actions. The second is a hom-dualization of co-symmetries against a moduli stack of TFT's, arising as a generalization of the familiar Pontryagin-type dualization. Since the two dualities reverse variance twice, familiar constructions that apply to electric symmetries, such as actions and universal classifying objects, have direct counterparts for magnetic symmetries. Physically, the framework relies on dualizing topological data known as magnetic charges with respect to choices of topological field theories, whose action on the theory of interest is understood as a stacking, or tensoring, process of field theories. Cohesive modalities, especially the discrete (flat, and shape) operations of the ambient topos $\textbf{H}$, played a central role throughout, allowing us to isolate the discrete, topological, and continuous facets of a given global symmetry within a single formalism.

Concretely, we characterized magnetic co-symmetries as co-modules over the co-monoid $\shape \X$, op-dual, in the sense of the adjunction $\shape\dashv\flat$, to the automorphism bundles that control electric symmetries, and constructed the $\infty$-mapping class group $\shape\mathrm{Aut}(\X)$ (Definition~\ref{def:infty-mcg}) through which electric symmetries (global or otherwise) act canonically on these magnetic charges. By precomposition, this also induces an action on mapping spaces of
magnetic co-symmetries. Building on this, we introduced a dualization framework (Definition~\ref{def:magnetic-backgrounds}), so named because it generalizes the passage from a group to its Pontryagin dual, in which magnetic symmetry backgrounds are obtained by mapping magnetic co-symmetries into a chosen moduli stack of TFT's. In particular, we did not require this target to carry a composition law so as to accommodate non-composable magnetic $(-1)$-form symmetry parameter spaces. Specializing this dualizing stack to a grouplike $E_\infty$-object of $\Hh$, we proved a fibration theorem (Theorem~\ref{thm:descent-magnetic-symmetries}) that reconstructs the magnetic symmetries of an electrically gauged theory from those of the ungauged theory and of the gauge group itself, together with a homotopy-fixed-point spectral sequence with an explicit $E_1$ page. Specializing the dualizing target to higher moduli stacks of $U(1)$-gerbes with connection, physically thought of as local topological bulk Lagrangians defining in particular phase weights for fields in a path-integral formalism, we proved a factorization theorem (Theorem~\ref{thm:pontryagin-factorization}) that non-canonically describes, as an abelian $\infty$-group stack, the higher Pontryagin magnetic symmetry background of any connected target as a product of contributions from its integral homology groups. Notably, the common assumption of flatness on the magnetic symmetry backgrounds was obtained as a consequence of dualizing against a connection-valued target rather than imposed by hand. We then illustrated this result with (higher) Maxwell theory, and pure $\Pin^\pm(2)$ gauge theory: for both signs $\pi_0=\Z_2$ acts on $\pi_1=\Z$ by inversion and the homotopy-fixed-point spectral sequences have the same first page, the defining extension class entering, in the stems computed, through a single differential, so that invertible magnetic $\Z_2$ symmetries in form degrees $d-3$ and $d-4$ survive for $\Pin^+(2)=O(2)$ and are absent for $\Pin^-(2)$. A second route that uses neither the fibration theorem nor the Postnikov invariant confirms this and finds a further such $\Z_2$ in form degree $d-5$.

We conclude this article with some remarks about interesting lines of inquiry we hope to pursue in later work.

One such direction concerns $(-1)$-form symmetries, and a more systematic investigation of the structures that ought to be considered on the spaces that classify them, as well as how these relate to more specific analyses of mixed anomalies of families of field theories as in e.g. \cite{Seiberg:2018ntt,Cordova:2019jnf,Cordova:2019uob,Brennan:2026ira}. Another extension is to go beyond purely bulk theories to describe boundary theories, corner theories with edge modes, and more generally defect theories. Indeed, recently the latter case was argued to amount to more exotic kinds of symmetries \cite{Banerjee:2026ywy}. From the structure of our description, it seems natural to us to use different versions of relative cohomology \cite{tomDieck2008,FomenkoFuchs2016} and, more generally, cohomology in correspondence categories \cite[Section 3.2]{lurieClassificationTopologicalField2008}~\cite{Haugseng2018}, given the relationship between correspondences and defects \cite{Fuchs:2007fw,Carqueville:2012dk,Arvanitakis:2023sho,Xu:2026qcm}.

Let us elaborate slightly more on the following pathways we consider particularly noteworthy.

\begin{enumerate}[label=(\roman*)]
\item \textit{Algebraic vs.\ geometric targets.} The algebraic target $\Sigma^{d-1}\mathrm{Vect}^{\mathrm{fd}}$ and the geometric target $\T$ considered throughout this article should both arise as special cases of a common notion of smooth $(\infty,d)$-categorical target, unifying the two approaches to generalized TFT's compared in Section~\ref{ssec:review-magnetic}. Such categorical targets are also mentioned in \cite{Stockall:2025ngz}, although not developed there into a framework of this kind.
\item \textit{Unstable operations.} Our main theorems dualize against grouplike $E_\infty$ targets, which produce only invertible backgrounds. They therefore cannot see the non-invertible symmetries that \cite{Pace:2023mdo} associates with Postnikov invariants of the target that are not stable cohomology operations, as for $S^2$ in $d\geq4$, even though the co-symmetries $\shape\X\to\mathsf Y$ themselves may encode such invariants. We are particularly interested in understanding how these re-enter the framework, for instance through the non-monoidal targets allowed by Definition~\ref{def:magnetic-backgrounds}.
\item \textit{Pointed co-actions and stable coalgebras.} In $\Hh$, every object is a co-monoid in a unique way, which is why co-actions reduce to maps (Section~\ref{ssec:magnetic-cosymmetries}). In $(\Hh_{*/},\wedge)$ this is no longer true (Remark~\ref{rmk:pointed-co-monoids}). The natural pointed co-monoid is instead $\shape\X_+$, whose stabilization $\Sigma^\infty_+\shape\X$ is an $E_\infty$-coalgebra. Remark~\ref{rmk:co-module-dualized-module} suggests that the module-up-to-bulk-TFT's structure of magnetic symmetries should be the unstable shadow of the classical duality between co-modules over this coalgebra and modules over its $\T$-dual, and we expect this to be the correct setting in which to dualize co-module structures directly.
\end{enumerate}

A recurring theme of this article, already flagged in Remarks~\ref{rmk:fiber-cofiber} and \ref{rmk:co-module-dualized-module}, is that passing from $\Hh$ to its associated stable $\infty$-category of spectra would place several of our constructions on a more uniform footing, since grouplike $E_\infty$-objects of $\Hh$ are exactly the connective objects of that stable category, and every object of a stable $\infty$-category is automatically grouplike. Stabilization in this sense may also help shed light on other features we have noted in this article. One such feature is the dualization process, where the choice of dualization target depended on particular physical considerations. Now, in the stable picture there are much better-developed notions of duals. For example, there is the concept of Anderson duals \cite[Appendix B]{hopkins2005quadratic}, which play an important role in the fundamental result by Freed and Hopkins \cite{Freed:2016rqq} on the classification of invertible quantum field theories \cite{Yamashita:2021cao,Yamashita:2023sbr}. In particular, dualizing against Anderson duals of bordism theories would capture invertible symmetries generated by invertible spin theories, such as the hopfion $\Z_2$ of the fermionic $4$d $\mathbb{CP}^1$ model \cite{Chen:2022cyw,Chen:2023czk}, which lie outside the scope of $\T=\mathbf B^d_{\nabla}U(1)$ (Remark~\ref{rmk:scope-pontryagin}). It would be interesting to investigate this process further from a symmetries perspective. Another such feature is the distinction, drawn throughout the article (Remarks~\ref{rmk:t-is-flat-abstractly}, \ref{rmk:t-not-monoidal-gp}, \ref{rmk:t-d-module}, \ref{rmk:topological-families}, \ref{rmk:limits-underlying}), between the mild conditions on $\T$ that suffice for the general theory of Definition~\ref{def:magnetic-backgrounds} and the grouplike $E_\infty$ condition needed for our main structural theorems (Theorem~\ref{thm:descent-magnetic-symmetries}, Theorem~\ref{thm:pontryagin-factorization}). Stabilization would make the grouplike $E_\infty$ condition automatic, so that the stable toolkit applies uniformly rather than being invoked case by case. This comes at a cost, however: it excludes non-composable $(-1)$-form parameter spaces (Remark~\ref{rmk:t-not-monoidal-gp}), and it cannot by itself see the non-invertible phenomena associated with unstable data mentioned above. Understanding which of the present constructions survive stabilization, and how the unstable ones are to be recovered from it, is therefore part of the question.

Finally, we briefly refer to some upcoming work. A systematic treatment of gauging, extending the op-duality between electric symmetries and
magnetic co-symmetries (Section~\ref{ssec:magnetic-cosymmetries}) from homotopy colimits to
homotopy limits, will be given in \cite{PerezLona:gauging}.
The same companion article also addresses non-flat magnetic symmetry backgrounds, which
couple to currents in a theory and require refining the charges themselves. Hands-on physics applications are in turn described in \cite{PerezLona:applications}.

\acknowledgments

We thank Salvatore Pace, Hisham Sati and Yuya Tanizaki for helpful comments on the draft. We especially thank Urs Schreiber and Eric Sharpe for insightful corrections on a previous version of this article. We also thank Subham Roy and Xingyang Yu for helpful discussions on $(-1)$-form symmetries.

Proofreading and presentation refinement were assisted by Anthropic Claude. We retain full responsibility for the manuscript's contents.

\appendix

\section{\texorpdfstring{Structures on $\textbf{H}=\mathrm{Sh}_{\infty}(\mathrm{CartSp})$}{Structures on H = Sh(CartSp)}}\label{app:sec:structures}

In this appendix, we summarize the topos-theoretic structures we leverage in this paper on
\begin{equation}
    \textbf{H}:= {\rm Sh}\left({\rm CartSp}, \infty{\rm Grpd}\right),
\end{equation}
the cohesive $\infty$-topos of $\infty$-groupoid-valued sheaves on the site of Cartesian spaces.

\paragraph{External vs internal hom.} We start with the distinction between \textit{internal} and \textit{external} hom-objects.

\begin{definition}[{External and internal hom-objects \cite[Section 1.2.2]{lurie-htt}, \cite[Chapter 1]{maclane1992sheaves}, \cite{nlab:internal_hom}}]\label{app:eq:internal-external}

The \textit{external hom} is a bifunctor
\begin{equation}
   \textbf{H}\left(-,-\right): \textbf{H}^{\rm op}\times \textbf{H} \to \infty{\rm Grpd}
\end{equation}
which, for $X,Y \in {\rm ob\,}\textbf{H}$, computes the \textit{bare} $\infty$-groupoid of maps
\begin{equation}
    \textbf{H}(X,Y) \in \infty\rm Grpd.
\end{equation}

The \textit{internal hom} is a bifunctor
\begin{equation}
    \left[-,-\right]: \textbf{H}^{\rm op} \times \textbf{H} \to \textbf{H},
\end{equation}
which, for $X,Y \in {\rm ob\,}\textbf{H}$, computes the \textit{smooth} $\infty$-groupoid of maps
\begin{equation}
    \left[X,Y\right] \in \rm ob\, \textbf{H},
\end{equation}
which is again an object in \textbf{H}. It establishes that $\textbf{H}$ is Cartesian closed in that
\begin{equation}
    \left[X \times Y, Z\right] \cong \left[X, \left[Y, Z \right]\right].
\end{equation}

The internal and external hom-objects are related by evaluating the sheaf $\left[X,Y\right] \in \rm ob \, \textbf{H}$ at the point $*\in \rm ob\, CartSp$ \cite[Observation 5.2.1]{Schreiber:2013pra}:
\begin{equation}
    \left[X,Y\right](*) \cong \textbf{H}\left(X,Y\right),
\end{equation}
and more generally for a probing space $U \in {\rm ob\, CartSp}$
\begin{equation}
    \left[X,Y\right](U) \cong \textbf{H}\left(U \times X, Y\right).
\end{equation}
\end{definition}

\paragraph{(Co-)slice categories.}
\begin{definition}[{(Co-)slice categories \cite[Definition 1.2.203]{Schreiber:2013pra}}]\label{app:def:terminal-co-slice}

For $A \in {\rm ob}\,\textbf{H}$, the \textit{slice topos} $\textbf{H}_{/A}$ is the category whose objects are morphisms $\left(f:B \to A \right) \in 1{\rm Mor}\,\textbf{H}$. Its $1$-morphisms are commutative diagrams
\begin{equation}
    \begin{tikzcd}
	B && {B'} \\
	& A
	\arrow["g", from=1-1, to=1-3]
	\arrow[""{name=0, anchor=center, inner sep=0}, "f"', from=1-1, to=2-2]
	\arrow["{f'}", from=1-3, to=2-2]
	\arrow["\eta", Rightarrow, from=1-3, to=0]
\end{tikzcd},
\end{equation}
and higher morphisms are similar higher commutative diagrams.

Analogously, the co-slice category $\textbf{H}_{A/}$ is the category whose objects are morphisms $\left(f:A \to B \right) \in 1{\rm Mor}\,\textbf{H}$. Its $1$-morphisms are commutative diagrams
\begin{equation}
\begin{tikzcd}
	& A & \\
	\\
	B && {B'}
	\arrow["f"', from=1-2, to=3-1]
	\arrow[""{name=0, anchor=center, inner sep=0}, "{f'}", from=1-2, to=3-3]
	\arrow["g"', from=3-1, to=3-3]
	\arrow["\eta", Rightarrow, from=3-1, to=0]
\end{tikzcd},
\end{equation}
and so on.

The slice $\textbf{H}_{/A}$ is again an $\infty$-topos, with its own internal and external homs. The co-slice $\textbf{H}_{A/}$, by contrast, is not an $\infty$-topos in general.
\end{definition}

\begin{example}[(Co-)slice categories of the terminal object and pointed stacks]\label{app:ex:pointed-slice}
    For $A=*$ the terminal object, the slice topos is $\textbf{H}$ itself
\begin{equation}
    \textbf{H}_{/*} \cong \textbf{H}.
\end{equation}
Moreover, the co-slice
\begin{equation}
    \textbf{H}_{*/} \cong \textbf{H}_{\rm ptd}
\end{equation}
is the $\infty$-category of \textit{pointed} smooth $\infty$-groupoids, whose objects are pairs $\left( A \in {\rm ob\,}\textbf{H}, f:* \to A\right)$. It is closed symmetric monoidal with respect to the smash product $\wedge$.

In particular, the internal hom of $(\textbf{H}_{*/},\wedge)$ is the stack of \textit{pointed maps} $[X,Y]_* \in {\rm ob}\,\textbf{H}_{*/}$, namely maps $f:X \to Y$ which preserve the basepoints, $f(x)\cong y$, pointed by the constant map.

The stacks of unpointed maps $[X,Y]$ and pointed maps $[X,Y]_*$ are related by the evaluation map at the basepoint of $X$:
	\begin{equation}
		\mathrm{ev}_{x} \colon [X, Y] \to Y \quad \text{where} \quad f \mapsto f(x).
	\end{equation}
The pointed mapping stack is equivalent to the homotopy fiber of this evaluation map over the basepoint $y: *\to Y$:
	\begin{equation}
			[X,Y]_* \cong {\rm hofib}\left(\mathrm{ev}_{x} \colon [X, Y] \to Y \right).
            \end{equation}
\end{example}

\paragraph{Stacky group actions.} Moving on to stacky groups and their actions, these are captured by appropriate groupoid objects. We now recall the relevant definitions.
\begin{definition}[{Simplex category \cite[Section VII.5]{mac1998categories}}]
    The \textit{simplex category} $\boldsymbol{\Delta}$ has the finite ordinal numbers $[n] = \{0,1,\dots,n\}$ as objects, and weakly monotone functions $f\colon [n]\to[m]$ as morphisms
    \begin{equation}
        0\le i\le j\le n \Rightarrow f(i)\le f(j).
    \end{equation}
   We have morphisms
   \begin{enumerate}
       \item $\delta^n_i: [n] \to [n+1]$ for $i\in [n+1]$. The morphism includes $[n]$ into $[n+1]$ by omitting the $i$th entry. This gives rise to the diagram
       \begin{equation}
         \begin{tikzcd}
	{[0]} & {[1]} & {[2]} & \cdots
	\arrow[shift left=2, from=1-1, to=1-2]
	\arrow[shift right=2, from=1-1, to=1-2]
	\arrow[from=1-2, to=1-3]
	\arrow[shift left=4, from=1-2, to=1-3]
	\arrow[shift right=4, from=1-2, to=1-3]
	\arrow["\cdots"{description}, no head, from=1-3, to=1-4]
\end{tikzcd}.
       \end{equation}
       \item $\sigma^n_i: [n+1] \to [n]$ for $i \in [n]$ is a surjective function such that $\sigma_i(i) = \sigma_i(i+1)$. This defines a diagram
       \begin{equation}
          \begin{tikzcd}
	{[0]} & {[1]} & {[2]} & \cdots
	\arrow[from=1-2, to=1-1]
	\arrow[shift right=2, from=1-3, to=1-2]
	\arrow[shift left=2, from=1-3, to=1-2]
	\arrow["\cdots"{description}, no head, from=1-4, to=1-3]
\end{tikzcd}.
       \end{equation}
   \end{enumerate}
   The arrows $\{\delta_i^n,\sigma_i^n \}$ together generate the diagram
   \begin{equation}\label{eq:cosimplicial-diagram}
       \begin{tikzcd}
	{[0]} & {[1]} & {[2]} & \cdots
	\arrow[shift left=2, from=1-1, to=1-2]
	\arrow[shift right=2, from=1-1, to=1-2]
	\arrow[from=1-2, to=1-1]
	\arrow[from=1-2, to=1-3]
	\arrow[shift right=4, from=1-2, to=1-3]
	\arrow[shift left=4, from=1-2, to=1-3]
	\arrow[shift right=2, from=1-3, to=1-2]
	\arrow[shift left=2, from=1-3, to=1-2]
	\arrow["\cdots"{description}, no head, from=1-4, to=1-3]
\end{tikzcd}
   \end{equation}
\end{definition}
\begin{definition}[{Simplicial and co-simplicial objects \cite[Definition 2.1]{may1992simplicial}, \cite[Definition 2.2]{hirschhornDiagonalMulticosimplicialObject2017}}]\label{app:def:simplicial-cosimplicial-object}

A \textit{simplicial object} $\mathscr{X}_{\bullet}$ in $\textbf{H}$ is a functor
\begin{equation}
    \mathscr{X}_{\bullet}: \boldsymbol{\Delta}^{\rm op} \to \textbf{H}.
\end{equation}

A \textit{co-simplicial object} in $\textbf{H}$ is a functor
\begin{equation}
     \mathscr{X}_{\bullet}: \boldsymbol{\Delta} \to \textbf{H}.
\end{equation}
\end{definition}
\begin{definition}[{Groupoid object in $\textbf{H}$ \cite[Definition 4.2.7]{Sati_Schreiber_2026} \cite[Definition 6.1.2.7]{lurie-htt}}]\label{app:def:groupoid-object}
    A \textit{groupoid object} ${\mathscr G}_{\bullet}$ in \textbf{H} is a simplicial object ${\mathscr G}_{\bullet}: \boldsymbol{\Delta}^{\rm op}\to \textbf{H}$ such that for every $n\geq 0$ and every pair of subsets $S,S'\subseteq[n]$ with
    \begin{equation}
        S\cup S'=[n],\qquad S\cap S'=\{s\},
    \end{equation}
    the square
    \begin{equation}\label{app:eq:segal}
        \begin{tikzcd}
	{\mathscr G_{[n]}} && {\mathscr G_{S}} \\
	\\
	{\mathscr G_{S'}} && {\mathscr G_{\{s\}}}
	\arrow[from=1-1, to=1-3]
	\arrow[from=1-1, to=3-1]
	\arrow["\lrcorner"{anchor=center, pos=0.125}, draw=none, from=1-1, to=3-3]
	\arrow[from=1-3, to=3-3]
	\arrow[from=3-1, to=3-3]
\end{tikzcd}
    \end{equation}
    is Cartesian, where $\mathscr G_S$ denotes $\mathscr G_{[|S|-1]}$ via the unique order-preserving bijection $[|S|-1]\cong S$, and all maps are induced by the inclusions of subsets (cf.\ \cite[Proposition 6.1.2.6]{lurie-htt}).
    
    Imposing this only for initial and final segments $S=\{0,\dots,k\}$, $S'=\{k,\dots,n\}$ is the Segal condition, which characterizes category objects. The general condition additionally encodes invertibility. The $\infty$-category of groupoid objects in $\textbf{H}$ is denoted ${\rm Grpd}(\textbf{H})$.
\end{definition}
\begin{example}[Terminal groupoid object]
The functor
\begin{equation}
    {\rm Cnst}(*)_{\bullet}: \boldsymbol{\Delta}^{\rm op} \to \textbf{H}
\end{equation}
constant at $* \in {\rm ob\, }\textbf{H}$, the terminal object in $\textbf{H}$, is a groupoid object. It is furthermore the terminal object in ${\rm Grpd}(\textbf{H})$. 
\end{example}
\begin{definition}[{Group object in $\textbf{H}$ \cite[Definition 4.2.71]{Sati_Schreiber_2026}}]\label{app:def:gp-obj}
  A \textit{group object} $(\mathbf{B}\G)_{\bullet}$ in $\textbf{H}$ is a groupoid object $(\mathbf{B}\G)_{\bullet}$ that is pointed in ${\rm Grpd}(\textbf{H})$, meaning it is equipped with a morphism
  \begin{equation}
      {\rm pt}: {\rm Cnst}(*)_{\bullet} \to \left(\mathbf{B}\G\right)_{\bullet}
  \end{equation}
  such that the component of $[0]\in {\rm ob\,}\boldsymbol{\Delta}$ is an equivalence
  \begin{equation}
      (\mathbf{B}\G)_{0} \cong {\rm Cnst}(*)_{0} = *.
  \end{equation}
We denote the collection of group objects in \textbf{H} as
\begin{equation}
    {\rm Grp}(\textbf{H}) \to {\rm Grpd}(\textbf{H}).
\end{equation}
\end{definition}
\begin{remark}[Stacky groups]\label{app:rmk:stacky-gp}
In the main text, we speak of stacky groups as objects $\G$ in $\textbf{H}$. These are obtained from group objects as in Definition~\ref{app:def:gp-obj} by first computing the homotopy colimit of the diagram defined by the image of the reversed $\{\delta_i^n,\sigma_i^n\}$ maps in $\boldsymbol{\Delta}^{\rm op}$
\begin{equation}
    \mathbf{B}\G := {\rm hocolim}\left(\begin{tikzcd}
	{\left(\mathbf{B}\G\right)_0} & {\left(\mathbf{B}\G\right)_1} & {\left(\mathbf{B}\G\right)_2} & \cdots
	\arrow[from=1-1, to=1-2]
	\arrow[shift right=2, from=1-2, to=1-1]
	\arrow[shift left=2, from=1-2, to=1-1]
	\arrow[shift left=2, from=1-2, to=1-3]
	\arrow[shift right=2, from=1-2, to=1-3]
	\arrow[from=1-3, to=1-2]
	\arrow[shift left=4, from=1-3, to=1-2]
	\arrow[shift right=4, from=1-3, to=1-2]
	\arrow["\cdots"{description}, no head, from=1-3, to=1-4]
\end{tikzcd} \right),
\end{equation}
and then computing the loop space as the pullback
\begin{equation}
    \begin{tikzcd}
	\G && {*} \\
	\\
	{*} && {\mathbf{B}\G}
	\arrow[from=1-1, to=1-3]
	\arrow[from=1-1, to=3-1]
	\arrow["\lrcorner"{anchor=center, pos=0.125}, draw=none, from=1-1, to=3-3]
	\arrow[from=1-3, to=3-3]
	\arrow[from=3-1, to=3-3]
\end{tikzcd}.
\end{equation}    
\end{remark}
\begin{definition}[{Group actions in $\textbf{H}$ \cite[Definition 4.2.76]{Sati_Schreiber_2026}, \cite[Definition 3.1]{NSS14a}}]\label{app:def:gp-action}
For $X\in {\rm ob\,}\textbf{H}$ and $\G \in {\rm ob\,}({\rm Grp}(\textbf{H}))$, an action of $\G$ on $X$ is equivalently encoded by a $\G$-\textit{action groupoid}
\begin{equation}
(X \dslash \G)_{\bullet} \in {\rm ob}\,{\rm Grpd}(\textbf{H}), \quad (X \dslash \G)_{k} = X \times \G^{\times k},
\end{equation}
together with a natural transformation
\begin{equation}
    p_{\bullet}: (X \dslash \G)_{\bullet} \to (\mathbf{B}\G)_{\bullet}
\end{equation}
forming a diagram of Cartesian squares with $(\mathbf{B}\G)_{\bullet}$ as
\begin{equation}\label{app:eq:simplicial-action}
    \begin{tikzcd}
	\cdots &&& \cdots \\
	{X \times \G \times \G = (X \dslash \G)_2} &&& {\G \times  \G =(\mathbf{B} \G)_2} \\
	\\
	{X \times \G =(X \dslash \G)_1} &&& {\G=(\mathbf{B} \G)_1} \\
	\\
	{X =(X \dslash \G)_0} &&& {* = (\mathbf{B} \G)_0}
	\arrow["\cdots"{description}, from=1-1, to=2-1]
	\arrow["\cdots"{description}, from=1-4, to=2-4]
	\arrow["{p_2}", from=2-1, to=2-4]
	\arrow[from=2-1, to=4-1]
	\arrow[shift left=5, from=2-1, to=4-1]
	\arrow[shift right=5, from=2-1, to=4-1]
	\arrow["\lrcorner"{anchor=center, pos=0.125, rotate=45}, draw=none, from=2-1, to=4-4]
	\arrow[from=2-4, to=4-4]
	\arrow[shift left=5, from=2-4, to=4-4]
	\arrow[shift right=5, from=2-4, to=4-4]
	\arrow[shift left=3, from=4-1, to=2-1]
	\arrow[shift right=3, from=4-1, to=2-1]
	\arrow["{p_1}", from=4-1, to=4-4]
	\arrow[shift left=3, from=4-1, to=6-1]
	\arrow[shift right=3, from=4-1, to=6-1]
	\arrow["\lrcorner"{anchor=center, pos=0.125, rotate=45}, draw=none, from=4-1, to=6-4]
	\arrow[shift left=3, from=4-4, to=2-4]
	\arrow[shift right=3, from=4-4, to=2-4]
	\arrow[shift left=3, from=4-4, to=6-4]
	\arrow[shift right=3, from=4-4, to=6-4]
	\arrow[from=6-1, to=4-1]
	\arrow["{p_0}"', from=6-1, to=6-4]
	\arrow[from=6-4, to=4-4]
\end{tikzcd}.
\end{equation}
\end{definition}
\begin{definition}[{Homotopy quotients of group actions in $\textbf{H}$ \cite[Proposition 4.2.77]{Sati_Schreiber_2026} }]\label{app:def:stacky-quotient}
    For $(X \dslash \G)_{\bullet}$ a $\G$-action on $X$, the \textit{homotopy quotient} $X \dslash \G$ is the homotopy colimit of the diagram
    \begin{equation}
    X \dslash \G := {\rm hocolim}\left(\begin{tikzcd}
	{\left(X \dslash \G \right)_0} & {\left(X \dslash \G \right)_1} & {\left(X \dslash \G \right)_2} & \cdots
	\arrow[from=1-1, to=1-2]
	\arrow[shift right=2, from=1-2, to=1-1]
	\arrow[shift left=2, from=1-2, to=1-1]
	\arrow[shift left=2, from=1-2, to=1-3]
	\arrow[shift right=2, from=1-2, to=1-3]
	\arrow[from=1-3, to=1-2]
	\arrow[shift left=4, from=1-3, to=1-2]
	\arrow[shift right=4, from=1-3, to=1-2]
	\arrow["\cdots"{description}, no head, from=1-3, to=1-4]
\end{tikzcd} \right).
\end{equation}
As a consequence of \eqref{app:eq:simplicial-action}, the homotopy quotient $X \dslash \G$ comes equipped with a morphism $f: X \dslash \G \to \mathbf{B}\G$ to the classifying stack of principal $\G$-bundles, exhibiting $X \dslash \G$ as the base space of a bundle of such kind.
\end{definition}

\paragraph{Truncated and connected objects.} Now, we define truncations and connected covers, as familiar from Postnikov and Whitehead systems of topological spaces (e.g. \cite[Section 4.1]{hatcher2002algebraic}).

We start with the fact that there are inclusion functors
\begin{equation}\label{eq:inclusion-groupoids}
    \imath_n: n{\rm Grpd} \to \infty{\rm Grpd},
\end{equation}
which embed $n$-groupoids into $\infty$-groupoids by endowing them with trivial $d$-morphisms for $d>n$. This inclusion has a left adjoint, called the $n$-truncation
\begin{equation}\label{eq:truncation-groupoids}
    \tau_{\leq n}: \infty{\rm Grpd} \to n{\rm Grpd},
\end{equation}
which is the identity on $d<n$ morphisms, and takes equivalence classes of $n$-morphisms. This operation translates to truncations of smooth stacks. See \cite[Section 5.1.3]{Schreiber:2013pra} (cf. \cite[Section 5.5.6]{lurie-htt}).
\begin{definition}[$n$-truncation of stacks]\label{app:def:n-truncation}
 For $n\geq -2$, there exist \textit{$n$-truncation} functors
 \begin{equation}\label{eq:n-truncation-h}
     \tau_{\leq n}: \textbf{H} \to {\rm Sh}\left({\rm CartSp},n{\rm Grpd}\right) \hookrightarrow\textbf{H},
 \end{equation}
left adjoint to the inclusion of $n$-truncated objects (Definition~\ref{app:def:n-truncated}). Concretely, $\tau_{\leq n}X$ is obtained by applying the $n$-truncation \eqref{eq:truncation-groupoids} objectwise, $U\mapsto\tau_{\leq n}(X(U))$, followed by $\infty$-sheafification. The resulting evaluations $\tau_{\leq n}X(U)$ are $n$-groupoids.

The functors \eqref{eq:n-truncation-h} are reflective and preserve finite products \cite[Proposition 5.1.49]{Schreiber:2013pra}, \cite[Proposition 5.5.6.18, Lemma 6.5.1.2]{lurie-htt}.
\end{definition}

\begin{definition}[$n$-truncated smooth stack]\label{app:def:n-truncated}
   A smooth stack $X\in {\rm ob}\, \textbf{H}$ is said to be \textit{$n$-truncated} if all evaluations $X(U)$ are $n$-groupoids (included as $\infty$-groupoids with only trivial morphisms for $d>n$).

   The $n$-truncation $\tau_{\leq n}X$ of any stack is clearly an $n$-truncated object of \textbf{H}.
\end{definition}
In summary, $n$-truncated smooth stacks have trivial information above the categorical (or homotopical) degree $n$. There is, conversely, a dual notion, whereby it is the lower degrees, up to a given $n$, which are trivial, above which one may encounter nontrivial homotopy information. This leads to the following definitions.
\begin{definition}[{$n$-connected object}]\label{app:def:n-connected}
An object $X\in {\rm ob\,} \textbf{H}$ is called \textit{$n$-connected} if its $n$-truncation is contractible 
\begin{equation}
    \tau_{\leq n}X \cong *
\end{equation}
 e.g. \cite[Definition 5.1.104]{Schreiber:2013pra}, cf. \cite[Section 6.5.1]{lurie-htt}. 
\end{definition}
\begin{definition}[{$n$-connected cover}]\label{app:def:n-connected-cover}
For an object $X \in {\rm ob}\, \textbf{H}$ that is pointed, its \textit{$n$-connected cover} $X\langle n \rangle$ is defined as the homotopy fiber of the canonical projection map to its $n$-truncation $\tau_{\leq n}X$ \cite[Definition 5.2.22]{Schreiber:2013pra}
\begin{equation}
    X\langle n \rangle := {\rm hofib}\left( \eta: X \to \tau_{\leq n}X\right).
\end{equation}
By construction, the cover $X \langle n \rangle$ is an $n$-connected object in \textbf{H}.
\end{definition}
Beyond the definitions of $n$-truncated and $n$-connected objects, we have the more general notion of $n$-truncated and $n$-connected morphisms, which we recall now.
\begin{definition}[{$n$-truncated and $n$-connected morphism}]\label{app:def:n-truncated-connected-morphism}
A morphism $f: X \to Y$, regarded as an object of the slice topos $\textbf{H}_{/Y}$ (whose terminal object is $\mathrm{id}_Y$), is called:
\begin{enumerate}
    \item \textit{$n$-truncated} if it is $n$-truncated as an object of $\textbf{H}_{/Y}$, i.e.\ its homotopy fibers are $n$-truncated \cite[Definition 2.58]{Sati:2020ovu}, cf. \cite[Definition 5.5.6.8]{lurie-htt}
    \begin{equation}
        \tau_{\leq n}(f) \cong f \in {\rm ob\,}\textbf{H}_{/Y},
    \end{equation}
    \item \textit{$n$-connected} if its homotopy fibers are $(n-1)$-connected (Definition~\ref{app:def:n-connected}), i.e.
    \begin{equation}
        \tau_{\leq n-1}(f) \cong \mathrm{id}_Y \in {\rm ob\,}\textbf{H}_{/Y}.
    \end{equation}
    Here, we are using the convention (e.g.\ \cite[Section 4.1]{hatcher2002algebraic}) under which the Postnikov projection $\tau_{\leq n+1}X \to \tau_{\leq n}X$ is $(n+1)$-connected. In the indexing of \cite[Section 6.5.1]{lurie-htt}, an $n$-connected morphism is \textit{$n$-connective}.
\end{enumerate}
\end{definition}

\paragraph{Cohesive structure.} The \textit{cohesive structure} is what makes \textbf{H} internally encode differential topology-like constructions. Here, we are merely concerned with summarizing the structure for convenience. Precise definitions and constructions may be found in \cite{Schreiber:2013pra,Sati:2020ovu,Sati_Schreiber_2026,SatiSchreiber2026}, from which we draw the following recollections.

\begin{definition}[Cohesive structure on \textbf{H}]\label{app:def:cohesive} The
\textit{cohesive structure} on \textbf{H} consists of a collection of adjoint functors
\begin{equation}
\begin{tikzcd}
    \mathbf{H} \ar[rrr, "{\mathrm{Shp}}"{description, name=shp}, yshift=4.5ex]
    \ar[rrr, "{\mathrm{Pnt}}"{description, name=pnt}, yshift=-1.5ex]
     &&& \infty\mathrm{Grpd} \ar[lll, "{\mathrm{Dsc}}"{description, name=dsc}, yshift=1.5ex]
     \ar[lll, "{\mathrm{Cht}}"{description, name=cht}, yshift=-4.5ex]
     \arrow[phantom, from=shp, to=dsc, "\dashv" rotate=-90, no line]
     \arrow[phantom, from=dsc, to=pnt, "\dashv" rotate=-90, no line]
     \arrow[phantom, from=pnt, to=cht, "\dashv" rotate=-90, no line]
\end{tikzcd},
\end{equation}
which can be understood as follows:
\begin{enumerate}
    \item $\rm Shp: \textbf{H} \to \infty{\rm Grpd}$, the \textit{shape} functor, extracts the bare homotopy type of a smooth $\infty$-groupoid,
    \item $\rm Dsc: \infty{\rm Grpd}\to \textbf{H}$, the \textit{discrete} functor, embeds an $\infty$-groupoid as a smooth $\infty$-groupoid equipped with the \textit{discrete} topology,
    \item $\rm Pnt: \textbf{H} \to \infty{\rm Grpd}$, the \textit{points} functor, also called the \textit{global sections} functor, extracts the value of a smooth $\infty$-groupoid at the point,
    \item $\rm Cht: \infty{\rm Grpd}\to \textbf{H}$, the \textit{chaotic} functor, also called \textit{codiscrete}, embeds an $\infty$-groupoid as a smooth $\infty$-groupoid equipped with the \textit{co-discrete} topology.
\end{enumerate}

This set of adjoint functors gives rise to the \textit{shape}, \textit{flat}, and \textit{sharp adjoint modalities} on \textbf{H}
\begin{equation}
    \sharp \vdash \flat \vdash \shape: \textbf{H} \to \textbf{H},
\end{equation}
\begin{equation}
    \shape:= \mathrm{Dsc}\circ \mathrm{Shp}, \quad \flat:= \mathrm{Dsc} \circ \mathrm{Pnt}, \quad \sharp:= \mathrm{Cht}\circ \mathrm{Pnt},
\end{equation}
 respectively encoding the pure shape, pure discrete, and pure continuous (or pure cohesive) aspects of a higher smooth stack. These endofunctors are idempotent \cite[Proposition 8.1.45]{SatiSchreiber2026}.
\end{definition}
\begin{definition}[Discrete object]\label{app:def:discrete}
    A stack $X$ is called \textit{discrete} if it is in the essential image of the discrete functor ${\rm Dsc}:{\infty \rm Grpd}\to \textbf{H}$. 
    
    Therefore, for any stack $Y$, the objects $\shape Y, \flat Y$ are discrete stacks.
\end{definition}
\begin{remark}
The image $\shape X$ of a stack $X \in {\rm ob}(\textbf{H})$ under the $\shape$-modality is the discrete inclusion of the underlying homotopy type of $X$, which may be understood as the path $\infty$-groupoid $\Pi_{\infty}\vert X \vert$ of the geometric realization $\vert X \vert$ (a topological space) of $X$. One may furthermore ask how to construct the \textit{smooth} path $\infty$-groupoid $\mathcal{P}_{\infty}X$ of $X$. While the latter has a natural definition, one can show (e.g. \cite[Proposition 4.3.49]{Sati_Schreiber_2026}) that these two are equivalent as objects in \textbf{H}.
\end{remark}
\begin{proposition}[{(Co)unit-induced equivalences \cite[Theorem 3.4.15]{Schreiber:2013pra}}]\label{app:prop:counits}
Let \textbf{H} be a cohesive topos. The (co)unit maps
\begin{equation}\label{app:eq:counits}
    \eta^{\smallshape}_X: X \to \shape X, \quad \varepsilon^{\flat}_X:\flat X \to X
\end{equation}
induce the following equivalences of \textit{external} hom-$\infty$-groupoids, for $Y$ a \textit{discrete} object 
\begin{equation}
  \textbf{H}(\shape X, Y) \xrightarrow{\cong} \textbf{H}(X,Y), \quad \textbf{H}(Y,\flat X) \xrightarrow{\cong} \textbf{H}(Y,X).
\end{equation}
\end{proposition}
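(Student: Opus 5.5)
The plan is to derive both equivalences formally from the adjoint quadruple $\mathrm{Shp}\dashv\mathrm{Dsc}\dashv\mathrm{Pnt}\dashv\mathrm{Cht}$ of Definition~\ref{app:def:cohesive}, together with the full faithfulness of $\mathrm{Dsc}$. Since $Y$ is discrete, I write $Y\cong\mathrm{Dsc}(Y_0)$ for some $Y_0\in\infty\mathrm{Grpd}$.

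The first step is to record that $\mathrm{Dsc}$ is fully faithful. Equivalently, the unit $\mathrm{id}\to\mathrm{Pnt}\circ\mathrm{Dsc}$ (and dually the counit $\mathrm{Shp}\circ\mathrm{Dsc}\to\mathrm{id}$) is an equivalence. On $\mathrm{Sh}(\mathrm{CartSp},\infty\mathrm{Grpd})$ this is concrete:
\begin{itemize}
\item $\mathrm{Dsc}$ is the constant $\infty$-sheaf functor;
\item $\mathrm{Pnt}$ is evaluation at $\R^0=*$, which is the terminal object of $\mathrm{CartSp}$;
\item the constant presheaf already satisfies descent, because every Cartesian space is contractible and $\mathrm{CartSp}$ admits differentiably good covers;
\item hence $\mathrm{Pnt}\,\mathrm{Dsc}(Y_0)\cong Y_0$.
\end{itemize}
In any case this is one of the cohesion axioms, so I would cite it directly.

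The second step treats the shape statement by a chain of equivalences:
\[
\mathbf H(\shape X,Y)=\mathbf H(\mathrm{Dsc}\,\mathrm{Shp}X,\mathrm{Dsc}Y_0)\cong\infty\mathrm{Grpd}(\mathrm{Shp}X,Y_0)\cong\mathbf H(X,\mathrm{Dsc}Y_0).
\]
The first equivalence is full faithfulness of $\mathrm{Dsc}$, and the second is the adjunction $\mathrm{Shp}\dashv\mathrm{Dsc}$.

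The third step treats the flat statement dually:
\[
\mathbf H(Y,\flat X)=\mathbf H(\mathrm{Dsc}Y_0,\mathrm{Dsc}\,\mathrm{Pnt}X)\cong\infty\mathrm{Grpd}(Y_0,\mathrm{Pnt}X)\cong\mathbf H(\mathrm{Dsc}Y_0,X).
\]
Here the adjunction used is $\mathrm{Dsc}\dashv\mathrm{Pnt}$.

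The step needing care, and the expected main obstacle, is to show these abstract composites are precomposition with $\eta^{\smallshape}_X$ and postcomposition with $\varepsilon^\flat_X$ respectively, rather than some other equivalence.

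\emph{Shape case.} The unit of $\shape=\mathrm{Dsc}\,\mathrm{Shp}$ is by definition the unit of $\mathrm{Shp}\dashv\mathrm{Dsc}$. The adjunction equivalence sends $g\colon\mathrm{Shp}X\to Y_0$ to $\mathrm{Dsc}(g)\circ\eta^{\smallshape}_X$. Full faithfulness identifies maps $\mathrm{Dsc}\,\mathrm{Shp}X\to\mathrm{Dsc}Y_0$ with maps of the form $\mathrm{Dsc}(g)$. Hence the composite is $f\mapsto f\circ\eta^{\smallshape}_X$.

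\emph{Flat case.} The same argument with the counit of $\mathrm{Dsc}\dashv\mathrm{Pnt}$ gives $f\mapsto\varepsilon^\flat_X\circ f$.

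Both identifications follow from the triangle identities and the naturality of the adjunction equivalences. These hold in the $\infty$-categorical setting because adjunction equivalences are induced by (co)units. I would phrase this via the standard fact that for a reflective subcategory, here the discrete objects with reflector $\shape$ and coreflector $\flat$, precomposition with the reflection unit is an equivalence on maps into local objects.
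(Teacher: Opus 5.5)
Your proposal is correct. It is essentially the standard argument behind the cited result (the paper gives no proof of its own, only the citation): full faithfulness of $\mathrm{Dsc}$ combined with the adjunctions $\mathrm{Shp}\dashv\mathrm{Dsc}\dashv\mathrm{Pnt}$, i.e.\ the fact that discrete objects form a reflective and coreflective subcategory with reflector $\shape$ and coreflector $\flat$. You also handle the one point that needs checking, namely that the composite equivalences are precomposition with $\eta^{\smallshape}_X$ and postcomposition with $\varepsilon^{\flat}_X$, correctly via $g\mapsto\mathrm{Dsc}(g)\circ\eta^{\smallshape}_X$ and $h\mapsto\varepsilon^{\flat}_X\circ\mathrm{Dsc}(h)$.
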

\begin{definition}[Points-to-pieces transform]\label{app:def:pts-to-piec} For \textbf{H} as in Definition~\ref{app:def:cohesive}, the (co)unit transformations of $\flat,\shape:\textbf{H}\to \textbf{H}$ induce the \textit{points-to-pieces transform} of any object $X\in \mathrm{ob}(\textbf{H})$, defined as the following composition
\begin{equation}
    \flat X \xrightarrow{\varepsilon_X^{\flat}} X \xrightarrow{\eta^{\smallshape}_X} \shape X.
\end{equation}
\end{definition}

\paragraph{Stable categories and spectra.} We compile definitions of stable categories and spectra relevant for the main results in this article.

\begin{definition}[{Stable $\infty$-category \cite[Definition 1.1.1.9]{lurie2017higher}}]
An $\infty$-category $\mathcal{C}$ is called \textit{stable} if it satisfies the following conditions:
\begin{enumerate}[nosep]
    \item it has a zero object, meaning an object $0 \in \mathrm{ob}\,\mathcal{C}$ which is both initial and terminal (cf. \cite[Definition 1.1.1.1]{lurie2017higher}),
    \item every morphism admits a fiber and cofiber,
    \item a diagram $X \to Y \to Z$ is a fiber sequence iff it is a cofiber sequence.
\end{enumerate}
\end{definition}

\begin{definition}[{Category of spectrum objects \cite[Definitions 1.4.2.8, 1.4.3.1]{lurie2017higher}}]
For $\mathcal{C}$ an $\infty$-category with finite limits, let $\mathcal{C}_{*/}$ be its $\infty$-category of pointed objects, and $\Omega: \mathcal{C}_{*/} \to \mathcal{C}_{*/}$ the \textit{loop} functor, defined on objects as the pullback $\Omega((X,x)) = * \times_{X}*$. The \textit{$\infty$-category of spectrum objects} is the limit (cf. \cite[Remark 1.4.2.9, Proposition 1.4.2.24]{lurie2017higher})
\begin{equation}
    \mathrm{Sp}(\mathcal{C}) := \varprojlim \left( \cdots \xrightarrow{\Omega} \mathcal{C}_{*/} \xrightarrow{\Omega} \mathcal{C}_{*/} \xrightarrow{\Omega} \mathcal{C}_{*/} \right).
\end{equation}
In particular, a \textit{spectrum object} $E \in \mathrm{ob\,}\mathrm{Sp}(\mathcal{C})$ can be described as a collection of pointed objects $\{E_i\}_{i \in \mathbb{N}}$ together with equivalences
\begin{equation}
    E_n \xrightarrow{\sim} \Omega E_{n+1}.
\end{equation}
It is a stable $\infty$-category \cite[Proposition 1.4.2.11]{lurie2017higher}.
\end{definition}
\begin{remark}[{$\mathrm{Sp}(\textbf{H})$ \cite[Propositions 1.4.2.22, 1.4.3.4, 1.4.4.4]{lurie2017higher}}]
For $\textbf{H} = \mathrm{Sh}(\mathrm{CartSp}, \infty\mathrm{Grpd})$, its category of spectrum objects is equivalently
\begin{equation}
    \mathrm{Sp}(\textbf{H}) = \mathrm{Sh}(\mathrm{CartSp},\mathrm{Sp}),
\end{equation}
sheaves on the site of Cartesian spaces valued in stabilized spaces (spectra), since $\mathrm{Sp}(\textbf{H}) = \textbf{H}\otimes \mathrm{Sp}$, using the fact that \textbf{H} is presentable and that $\mathrm{Sh(CartSp,-)}$ is a left-exact localization of presheaves.
\end{remark}
\begin{remark}[{(De-)Suspension notation \cite[Notation 1.1.2.7]{lurie2017higher}}]
For $X \in \mathrm{ob\,Sp}(\mathcal{C})$, $X[n]$ denotes the image of $X$ under the suspension functor $\Sigma: \mathrm{Sp}(\mathcal{C})\to \mathrm{Sp}(\mathcal{C})$ applied $n$ times, inverse equivalence to the loop functor $\Omega: \mathrm{Sp}(\mathcal{C})\to \mathrm{Sp}(\mathcal{C})$. If $n<0$, this is instead the image under the iterated application of the loop functor. 
\end{remark}

The following requires the definitions of triangulated categories, which we do not recall here. We refer the reader to \cite[Section 1.1.2]{lurie2017higher}. It suffices to know that the homotopy category $\mathrm{ho}\mathrm{Sp}(\mathcal{C})$ of $\mathrm{Sp}(\mathcal{C})$ is triangulated.

\begin{definition}[{$\mathrm{t}$-structure \cite[Definition 1.2.1.1]{lurie2017higher}}]
Let $\mathcal{D}=\mathrm{ho}\mathrm{Sp}(\mathcal{C})$ be the triangulated homotopy category of $\mathrm{Sp}(\mathcal{C})$. A $\mathrm{t}$\textit{-structure} on $\mathrm{Sp}(\mathcal{C})$ is a $\mathrm{t}$-structure on $\mathcal{D}$, which consists of a pair of full subcategories ${\mathcal{D}}_{\leq 0}, \mathcal{D}_{\geq 0}$, each stable under isomorphism, such that
\begin{enumerate}[nosep]
    \item For objects $X$ in $\mathcal{D}_{\geq 0}$, $Y$ in $\mathcal{D}_{\leq 0}$, it holds that
    \begin{equation}
        \mathrm{Hom}_{\mathcal{D}}(X, Y[-1]) = 0.
    \end{equation}
    \item There are inclusions $\mathcal{D}_{\geq0}[1] \subset \mathcal{D}_{\geq 0}$, $\mathcal{D}_{\leq0}[-1] \subset \mathcal{D}_{\leq 0}$.

\item For any $X \in \mathrm{ob\,}\mathcal{D}$, there are objects $X' \in \mathrm{ob\,}\mathcal{D}_{\geq 0}$ and $X'' \in \mathrm{ob\,}\mathcal{D}_{\leq 0}[-1]$ fitting in a distinguished triangle
\begin{equation}
    X' \to X \to X''.
\end{equation}
\end{enumerate}
 This defines categories $\mathcal{D}_{\geq n} = \mathcal{D}_{\geq 0}[n]$ and $\mathcal{D}_{\leq n} = \mathcal{D}_{\leq 0}[n]$ for all $n$.
\end{definition}
\begin{remark}
    Roughly speaking, a $\mathrm{t}$-structure allows us to do homological algebra.
\end{remark}
\begin{definition}[{Truncation functors \cite[Notation 1.2.1.7]{lurie2017higher}}]\label{app:def:truncation-functors-spec}
There is a standard non-degenerate $\mathrm{t}$-structure on $\mathrm{Sp}(\textbf{H})$ defined through sheaf homotopy groups (\cite[Proposition 1.4.3.4]{lurie2017higher}, and using the hypercompleteness of \textbf{H} cf. \cite[Section 6.5]{lurie-htt}). This defines $\infty$-subcategories $\mathrm{Sp}(\textbf{H})_{\geq n}$, $\mathrm{Sp}(\textbf{H})_{\leq n}$, whose homotopy categories are the corresponding $\mathrm{ho}\mathrm{Sp}(\textbf{H})$-subcategories. The inclusions
\begin{equation}
    \mathrm{Sp}(\textbf{H})_{\geq n} \hookrightarrow \mathrm{Sp}(\textbf{H}),\quad \mathrm{Sp}(\textbf{H})_{\leq n} \hookrightarrow \mathrm{Sp}(\textbf{H}) 
\end{equation}
have right and left adjoint functors, respectively,
\begin{equation}
    \tau_{\geq n}:\mathrm{Sp}(\textbf{H}) \to \mathrm{Sp}(\textbf{H})_{\geq n}, \quad \tau_{\leq n}: \mathrm{Sp}(\textbf{H}) \to \mathrm{Sp}(\textbf{H})_{\leq n},
\end{equation}
called \textit{truncation functors}.
\end{definition}
\begin{definition}[{Connective spectra \cite[Proposition 1.4.3.6]{lurie2017higher}}]\label{app:def:conn-spec}
The subcategory $\mathrm{Sp}(\textbf{H})_{\geq 0}$ is the category of \textit{connective spectrum objects}, also denoted as $\mathrm{Sp}^{\rm cn}(\textbf{H})$, which have the property that their \textit{sheaf} homotopy groups vanish as follows:
\begin{equation}
    \underline{\pi_{-n}}(E) := \underline{\pi_0}(E_n)=0, \quad n>0.
\end{equation}
\end{definition}
\begin{remark}[{Equivalence between connective spectrum objects and grouplike $E_{\infty}$ objects in \textbf{H} \cite[Theorem 5.2.6.10]{lurie2017higher}}]\label{app:rmk:equivalence-einfty-connective}
Roughly speaking, a grouplike $E_{\infty}$ object in \textbf{H} is an object $E$ equipped with a composition law
\begin{equation}
    \otimes: E\times E \to E
\end{equation}
that is invertible and coherently commutative (technically, it is a grouplike algebra over the $E_{\infty}$-operad, see \cite[Section 5.2]{lurie2017higher}). These objects form a category $\mathrm{Grp}_{E_{\infty}}(\textbf{H})$. There is an equivalence
\begin{equation}
    \mathrm{Grp}_{E_{\infty}}(\textbf{H}) \cong \mathrm{Sp}^{\rm cn}(\textbf{H})
\end{equation}
with the category of connective spectrum objects in \textbf{H}. It is shown that grouplike $E_{\infty}$ objects admit infinitely many deloopings, which are the ones used to construct the corresponding connective spectrum object. This makes use of the bar construction, see \cite[Sections 5.2.2.1--5.2.2.7]{lurie2017higher}.
\end{remark}

\paragraph{Dold--Kan correspondence.}

\begin{definition}[{Classical Dold--Kan correspondence \cite[Section 1.2.4]{lurie2017higher} \cite[Section 8.4]{Weibel_1994}}]\label{app:def:dold-kan}
Let $\mathcal{C} = \mathrm{Sh}({\mathrm{CartSp,Ab}})$ be the category of sheaves on Cartesian spaces valued in abelian $1$-groups.

There is an equivalence
\begin{equation}
    \mathrm{Fun}(\boldsymbol{\Delta}^{\rm op}, \mathcal{C}) \cong \mathrm{Ch}_{\geq 0}(\mathcal{C})
\end{equation}
between the category of simplicial objects in $\cal C$ (cf. Definition~\ref{app:def:simplicial-cosimplicial-object}), and the category of chain complexes in $\mathcal{C}$. These equivalences are canonically realized by the \textit{normalization} $N$ and denormalization (\textit{Dold--Kan}) $\rm DK$ functors
\begin{equation}
    N:\mathrm{Fun}(\boldsymbol{\Delta}^{\rm op}, \mathcal{C}) \to \mathrm{Ch}_{\geq 0}(\mathcal{C}), \quad \mathrm{DK}: \mathrm{Ch}_{\geq 0}(\mathcal{C}) \to \mathrm{Fun}(\boldsymbol{\Delta}^{\rm op}, \mathcal{C}).
\end{equation}
Informally, $N$ removes degenerate simplices by intersecting kernels, and compresses all simplicial maps into a single map, stagewise, the differential, while $\mathrm{DK}$ generates back the degenerate simplices freely and constructs the boundary maps from the differential.

Applying $\mathrm{DK}$ objectwise and passing to underlying simplicial sets yields a functor
\begin{equation}
    \mathrm{Ch}_{\geq 0}\left(\mathrm{Sh}(\mathrm{CartSp},\mathrm{Ab})\right)\longrightarrow \mathrm{Fun}\left(\mathrm{CartSp}^{\rm op},\infty\mathrm{Grpd}\right),
\end{equation}
which sends objectwise quasi-isomorphisms to objectwise equivalences, and objectwise homology to objectwise homotopy groups, $\pi_k\,\mathrm{DK}(C)(U)\cong H_k(C(U))$. Its values are objects of $\textbf{H}$ (after sheafification if necessary).
\end{definition}
\begin{remark}[$H\Z$-modules {\cite[Theorem 7.1.2.13]{lurie2017higher}}]\label{app:rmk:hz-modules}
The integers $\mathbb{Z}$, as a discrete commutative ring object in $\mathbf{H}$, define an \textit{Eilenberg--MacLane ring spectrum} $H\mathbb{Z} \in \mathrm{CAlg}(\mathrm{Sp}(\mathbf{H}))_{\ge 0}$. The stable $\infty$-category of $H\mathbb{Z}$-module spectra $\mathrm{Mod}_{H\mathbb{Z}}(\mathrm{Sp}(\mathbf{H}))$ is equivalent to the unbounded derived $\infty$-category $\mathrm{Sh(CartSp},\mathcal{D}(\mathbb{Z}))$ of abelian sheaves in $\mathbf{H}$, under which connective $H\mathbb{Z}$-modules correspond to the connective $\mathrm{Sh(CartSp},\mathcal{D}(\mathbb{Z}))_{\geq 0}$. Technically, this uses the hypercompleteness of \textbf{H}.
\end{remark}

\section{\texorpdfstring{Review of ${\rm Pin}^{\pm}(N)$ double covering groups of $O(N)$}{Review of Pin(N) double covering groups of O(N)}}\label{app:sec:opmn}
In this appendix, we review from \cite{Kirby_Taylor_1991,lawson-spin} the basics of ${\rm Pin}^{\pm}(N)$, the double covering groups of $O(N)$. First, recall that the ${\rm Spin}(N)$ group is the double covering group of the special orthogonal group $SO(N)$. The latter fits into the split extension
\begin{equation}
    1 \to SO(N) \to O(N) \cong SO(N) \rtimes \Z_2 \xrightarrow{\rm det} \Z_2 \to 1,
\end{equation}
where $O(N)$ is the orthogonal group. The double covers of $O(N)$ restricting to ${\rm Spin}(N)\to SO(N)$ over the identity component are classified by $w_2$ and $w_2+w_1^2$ in $H^2(BO(N),\Z_2)=\langle w_1^2,w_2\rangle$. These are the groups ${\rm Pin}^{+}(N)$ and ${\rm Pin}^{-}(N)$, respectively \cite{ATIYAH19643,Kirby_Taylor_1991}, fitting into the commutative diagram
\begin{equation}
    \begin{tikzcd}
	& {\Z_2} &&& \\
	\\
	{{\rm Spin}(N)} && {{\rm Pin}^{\pm}(N)} \\
	&&&& {\Z_2} \\
	{SO(N)} && {O(N)}
	\arrow[hook, from=1-2, to=3-1]
	\arrow[hook', from=1-2, to=3-3]
	\arrow[hook, from=3-1, to=3-3]
	\arrow[two heads, from=3-1, to=5-1]
	\arrow["D", two heads, from=3-3, to=4-5]
	\arrow[two heads, from=3-3, to=5-3]
	\arrow[hook, from=5-1, to=5-3]
	\arrow["{\rm det}"', two heads, from=5-3, to=4-5]
\end{tikzcd}.
\end{equation}
The $\Z_2$ inclusions into ${\rm Pin}^{\pm}(N)$ are central, and the latter are themselves extensions of $\Z_2$ by ${\rm Spin}(N)$
\begin{equation}
   1 \to {\rm Spin}(N) \to {\rm Pin}^{\pm}(N) \xrightarrow{D} \Z_2 \to 1.
\end{equation}
In ${\rm Pin}^{\pm}(N)$, a unit vector $v$ satisfies $v^2=\pm1$. Hence, the preimage of the order-two subgroup generated by a reflection in $O(N)$ is (see \cite[Section 1]{Kirby_Taylor_1991})
\begin{align}
   {\rm Pin}^+(N) &: 1 \to \Z_2 \to \Z_2 \times \Z_2 \xrightarrow{D} \Z_2 \to 1,
   \\
   {\rm Pin}^-(N) &: 1 \to \Z_2 \to \Z_4 \xrightarrow{D} \Z_2 \to 1.
\end{align}
In particular, the extension $D$ for ${\rm Pin}^+(N)$ splits for every $N$, via the lift of any reflection. For ${\rm Pin}^-(N)$ with $N\geq3$, it also splits: for orthonormal $e_1,e_2,e_3$ one has $(e_1e_2e_3)^2=-e_2e_3e_2e_3=+1$, and $e_1e_2e_3$ lies in the non-identity component. For $N=1,2$, however, every element of the non-identity component of ${\rm Pin}^-(N)$ has order $4$, so the extension does \textit{not} split. The latter is the case relevant to Section~\ref{sssec:pin2}.

For $N=2$, every element of the non-identity component squares to the same element $e^2=\pm1$ (Section~\ref{sssec:pin2}), so $\Pin^+(2)=U(1)\rtimes\Z_2=O(2)$ and $\Pin^-(2)=\Pin(2)=U(1)\sqcup jU(1)$, a subgroup of $SU(2)$.

\subsection{\texorpdfstring{Coefficient computations for Section~\ref{sssec:pin2}}{Coefficient computations for the Pin(2) example}}\label{app:sec:pin2-coeff}
For $\varepsilon=\pm$, we write $\Z^\varepsilon$ and $A^\varepsilon=U(1)^\varepsilon$ for $\Z$ and for $A=U(1)$ with its discrete topology, with $\Z_2$ acting trivially ($\varepsilon=+$) or by inversion ($\varepsilon=-$). Recall that, for $s=0,1,2,\dots$,
\begin{equation}
    H^s(\Z_2,\Z^+)=\Z,0,\Z_2,0,\Z_2,\dots,\qquad H^s(\Z_2,\Z^-)=0,\Z_2,0,\Z_2,0,\dots.
\end{equation}

\begin{lemma}\label{pin:coeff}
For $0\leq s\leq4$, the group cohomology of $\Z_2$ with coefficients in $A^\pm$ is
\begin{center}
\begin{tabular}{c|ccccc}
$s$ & $0$ & $1$ & $2$ & $3$ & $4$\\ \midrule
$H^s(\Z_2,A^+)$ & $A$ & $\Z_2$ & $0$ & $\Z_2$ & $0$\\
$H^s(\Z_2,A^-)$ & $\Z_2$ & $0$ & $\Z_2$ & $0$ & $\Z_2$
\end{tabular}
\end{center}
\end{lemma}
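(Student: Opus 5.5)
We compute each row directly from the standard periodic resolution of $\Z_2=\langle g\rangle$. For any $\Z_2$-module $N$, the cochain complex is
\begin{equation}
N\xrightarrow{\,1-g\,}N\xrightarrow{\,1+g\,}N\xrightarrow{\,1-g\,}N\to\cdots.
\end{equation}
This gives
\begin{equation}
H^0=N^{\Z_2},\qquad H^{\mathrm{odd}}=\ker(1+g)/\mathrm{im}(1-g),\qquad H^{\mathrm{even}>0}=\ker(1-g)/\mathrm{im}(1+g).
\end{equation}
Since this description is valid for arbitrary (not necessarily finitely generated) abelian coefficients, it applies to $A=U(1)$ with its discrete topology.

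\emph{Case $A^+$.} Here $g$ acts trivially, so $1-g=0$ and $1+g=2$. We get:
\begin{itemize}
\item $H^0=A$;
\item $H^{\mathrm{odd}}=A[2]=\{\pm1\}\cong\Z_2$;
\item $H^{\mathrm{even}>0}=A/2A=0$, because $U(1)$ is divisible.
\end{itemize}

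\emph{Case $A^-$.} Here $g$ acts by inversion. In multiplicative notation, $(1-g)z=z^2$ and $(1+g)z=z\,z^{-1}=1$. We get:
\begin{itemize}
\item $H^0=\{z: z=z^{-1}\}=A[2]\cong\Z_2$;
\item $H^{\mathrm{odd}}=A/A^2=0$, again by divisibility;
\item $H^{\mathrm{even}>0}=A[2]/1\cong\Z_2$.
\end{itemize}
These values reproduce the table for $0\le s\le4$.

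As a cross-check, one could instead use the long exact sequence in group cohomology associated with $0\to\Z^\pm\to\R^\pm\to A^\pm\to0$. The groups $H^{s\ge1}(\Z_2,\R^\pm)$ vanish because $\R$ is uniquely $2$-divisible, so $H^s(\Z_2,A^\pm)\cong H^{s+1}(\Z_2,\Z^\pm)$ for $s\ge1$. The recalled values $H^s(\Z_2,\Z^+)=\Z,0,\Z_2,0,\Z_2,\dots$ and $H^s(\Z_2,\Z^-)=0,\Z_2,0,\Z_2,0,\dots$ then shift to the table. For $s=0$, the $\Z_2$ for $A^-$ is the image of $\R^{\Z_2}=0\to A^{\Z_2}\to H^1(\Z_2,\Z^-)=\Z_2$, and the result is $A$ for $A^+$. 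There is no real obstacle here; the only point needing care is the multiplicative sign bookkeeping in the $A^-$ case, so that $1+g$ acts as the trivial map and $1-g$ as squaring.
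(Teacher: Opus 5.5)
Your proof is correct, but your main argument differs from the paper's. The paper's proof is essentially your ``cross-check''. It uses the equivariant exponential sequence $0\to\Z^\varepsilon\to\R^\varepsilon\to A^\varepsilon\to0$ and notes that $H^{s\geq1}(\Z_2,\R^\varepsilon)$ and $H^0(\Z_2,\R^-)$ vanish because $\R$ is a $\mathbb{Q}$-vector space. It then reads off $H^s(\Z_2,A^\varepsilon)\cong H^{s+1}(\Z_2,\Z^\varepsilon)$ from the recalled integral values, and handles $s=0$ separately.

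Your direct computation with the periodic resolution buys several things. It is more elementary and self-contained, since it does not presuppose $H^*(\Z_2,\Z^\pm)$, which is normally computed from that same resolution anyway. It treats $s=0$ on the same footing as $s>0$ and gives all degrees at once by periodicity. It also produces explicit representatives: $A[2]=\{\pm1\}$ in odd degrees for $A^+$ and in all even degrees for $A^-$.

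Because the resolution is natural in the coefficients, this also makes Lemma~\ref{pin:fminus} transparent. On $H^3$, $f_-$ induces $\Z/2\Z=H^3(\Z_2,\Z^-)\to A[2]=H^3(\Z_2,A^+)$, $a\mapsto(-1)^a$, which is visibly an isomorphism. The paper's route instead presents the $U(1)$-coefficient groups as connecting-map images of integral ones, which fits its use of the exponential sequence \eqref{eq:exp} elsewhere.

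One wording point in your cross-check at $s=0$: the $\Z_2$ for $A^-$ is not ``the image of'' $\R^{\Z_2}$. It arises because the connecting map $(A^-)^{\Z_2}\to H^1(\Z_2,\Z^-)$ is an isomorphism, flanked by $(\R^-)^{\Z_2}=0$ and $H^1(\Z_2,\R^-)=0$.
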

\begin{proof}
The exponential sequence $0\to\Z^\varepsilon\to\R^\varepsilon\to A^\varepsilon\to0$ is exact and
equivariant. Since $\R$ is a $\mathbb{Q}$-vector space, $H^s(\Z_2,\R^\varepsilon)=0$ for $s\geq1$, and moreover $H^0(\Z_2,\R^-)=0$. The associated long exact sequence therefore gives $H^s(\Z_2,A^\varepsilon)\cong
H^{s+1}(\Z_2,\Z^\varepsilon)$ for $s\geq1$, and $H^0(\Z_2,A^-)\cong H^1(\Z_2,\Z^-)=\Z_2$. Concretely,
$H^0(\Z_2,A^-)=\{f\in U(1):f^2=1\}=\{\pm1\}$. Finally, $H^0(\Z_2,A^+)=A$.
\end{proof}

\begin{lemma}\label{pin:fminus}
The map $f_-\colon\Z^-\to A^+$, $a\mapsto(-1)^a$, is equivariant, and
$(f_-)_*\colon H^3(\Z_2,\Z^-)\to H^3(\Z_2,A^+)$ is an isomorphism $\Z_2\cong\Z_2$.
\end{lemma}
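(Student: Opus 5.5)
The plan is to verify equivariance directly and then compute $(f_-)_*$ on $H^3$ using the standard $2$-periodic resolution of $\Z$ over $\Z[\Z_2]$. Since this resolution is functorial in the coefficient module, the induced map can be read off on explicit cocycle representatives.

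Equivariance is immediate. Write $T$ for the generator of $\Z_2$. It acts on $\Z^-$ by $a\mapsto -a$ and trivially on $A^+$. Since $(-1)^{-a}=(-1)^a$, we get $f_-(Ta)=f_-(a)=T f_-(a)$. Additivity holds because $(-1)^{a+b}=(-1)^a(-1)^b$.

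For cohomology, I will use the resolution
\begin{equation}
    \cdots\xrightarrow{1+T}\Z[\Z_2]\xrightarrow{T-1}\Z[\Z_2]\xrightarrow{1+T}\Z[\Z_2]\xrightarrow{T-1}\Z[\Z_2]\to\Z\to 0 .
\end{equation}
For a $\Z_2$-module $M$, applying $\Hom_{\Z[\Z_2]}(-,M)$ gives the cochain complex
\begin{equation}
    M\xrightarrow{T-1}M\xrightarrow{1+T}M\xrightarrow{T-1}M\xrightarrow{1+T}M\to\cdots
\end{equation}
Hence
\begin{equation}
    H^3(\Z_2,M)=\ker(1+T)/\mathrm{im}(T-1).
\end{equation}
This identification is natural in $M$: an equivariant map $f\colon M\to N$ acts levelwise on these complexes by $f$ itself.

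I then evaluate both sides. For $M=\Z^-$ we have $T=-1$, so $1+T=0$ and $T-1=-2$. This gives $H^3(\Z_2,\Z^-)=\Z/2\Z$, generated by the class of $1$. For $M=A^+$, written multiplicatively, $T$ acts trivially. Thus $1+T$ becomes squaring, with kernel $\{\pm1\}$, and $T-1$ is trivial. This gives $H^3(\Z_2,A^+)=\{\pm1\}\cong\Z_2$. Both results are consistent with the table in Lemma~\ref{pin:coeff}.

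Finally, $(f_-)_*$ sends the class of $1$ to the class of $f_-(1)=-1$, which is the generator of $\{\pm1\}$. A homomorphism $\Z_2\to\Z_2$ sending the generator to the generator is an isomorphism, which proves the claim.

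The only step needing care is confirming that the periodic-resolution complexes are the ones that compute the group cohomology groups appearing in the statement, and that $f_-$ induces the levelwise map on them. Both facts are standard, so I do not expect a genuine obstacle here.
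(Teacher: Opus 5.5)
Your proof is correct, but it takes a different route from the paper.

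The paper never chooses a resolution. Instead it factors $f_-=j\circ r$ through the trivial module $\Z_2$, where $r\colon\Z^-\to\Z_2$ is reduction mod $2$ and $j\colon\Z_2=\{\pm1\}\hookrightarrow A^+$ is the inclusion. It then shows that $r_*$ and $j_*$ are isomorphisms on $H^3$ using the long exact sequences of $0\to\Z^-\xrightarrow{2}\Z^-\xrightarrow{r}\Z_2\to0$ and $0\to\Z_2\xrightarrow{j}A^+\xrightarrow{(\cdot)^2}A^+\to0$. These sequences are fed with previously computed inputs: the vanishing of $H^4(\Z_2,\Z^-)$ and $H^2(\Z_2,A^+)$, and the fact that both $H^3$ groups are killed by $2$. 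All of these come from the values recalled just before and in Lemma~\ref{pin:coeff}.

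Your computation with the $2$-periodic resolution buys the following:
\begin{itemize}
\item It is more elementary and self-contained, since it recomputes both groups from scratch instead of importing them.
\item It produces explicit generators: the class of $1$ in $\Z^-/2\Z^-$, and $-1\in\{\pm1\}$. This makes transparent the generator-to-generator statement that Lemma~\ref{pin:transgression} actually uses.
\item It gives the same conclusion in every odd degree at no extra cost.
\end{itemize}

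The paper's argument buys something different:
\begin{itemize}
\item It is resolution-free.
\item It explains the isomorphism conceptually, by routing both groups through $H^3(\Z_2,\Z_2)$ via mod-$2$ reduction on one side and the inclusion of the $2$-torsion of $U(1)$ on the other.
\end{itemize}
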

\begin{proof}
Equivariance holds since $(-1)^{-a}=(-1)^a$. We factor $f_-=j\circ r$, where $r\colon\Z^-\to\Z_2$ is reduction mod $2$, which is equivariant since $-1\equiv1\pmod2$, and $j\colon\Z_2=\{\pm1\}\hookrightarrow A^+$ is the inclusion.

The short exact sequence $0\to\Z^-\xrightarrow{2}\Z^-\xrightarrow{r}\Z_2\to0$ gives the exact sequence
\begin{equation}
    H^3(\Z_2,\Z^-)\xrightarrow{\,2\,}H^3(\Z_2,\Z^-)\xrightarrow{r_*}H^3(\Z_2,\Z_2)\to H^4(\Z_2,\Z^-)=0.
\end{equation}
Multiplication by $2$ vanishes on $H^3(\Z_2,\Z^-)\cong\Z_2$, so $r_*$ is an isomorphism.

Since $A$ is divisible, the sequence $0\to\Z_2\xrightarrow{j}A^+\xrightarrow{(\cdot)^2}A^+\to0$ is exact, and it gives the exact sequence
\begin{equation}
    0=H^2(\Z_2,A^+)\to H^3(\Z_2,\Z_2)\xrightarrow{j_*}H^3(\Z_2,A^+)\xrightarrow{\,2\,}H^3(\Z_2,A^+).
\end{equation}
Multiplication by $2$ vanishes on $H^3(\Z_2,A^+)\cong\Z_2$, so $j_*$ is an isomorphism.
\end{proof}

Finally, applying Theorem~\ref{thm:pontryagin-factorization} to $X=B\Z_2$, and using $\widehat{H_k(B\Z_2,\Z)}\cong H^k(\Z_2,U(1))$, we obtain
\begin{equation}\label{pin:BZ2}
    \pi_{d-k}[\mathbf B\Z_2,\T]_*\cong H^k(\Z_2,U(1))\cong
    \begin{cases}
        \Z_2, & k\text{ odd},\\
        0, & k\text{ even},
    \end{cases}
    \qquad 1\leq k\leq d.
\end{equation}
In particular, $\pi_{d-1}\cong\pi_{d-3}\cong\Z_2$ and $\pi_{d-2}\cong\pi_{d-4}\cong0$ whenever these degrees are nonnegative.

\bibliographystyle{utphys}
\bibliography{draft}

\end{document}